\definecolor{darkgreen}{rgb}{0,0.5,0}
\definecolor{darkblue}{rgb}{0,0.1,0.5}
\newtheorem{theorem}{Theorem}[section]
\newtheorem*{theorem*}{Theorem}
\newtheorem*{acknowledgement*}{Acknowledgement}
\newtheorem{conjecture}[theorem]{Conjecture}
\newtheorem{corollary}[theorem]{Corollary}
\newtheorem*{corollary*}{Corollary}
\newtheorem{example}[theorem]{Example}
\newtheorem{lemma}[theorem]{Lemma}
\newtheorem*{lemma*}{Lemma}
\newtheorem*{maintheorem*}{Main Theorem}
\newtheorem{question}[theorem]{Question}
\newtheorem{proposition}[theorem]{Proposition}
\newtheorem{remark}[theorem]{Remark}
\newtheorem*{remark*}{Remark}
\newcommand{\ad}{{\rm ad}}
\newcommand{\Rep}{{\rm Rep}}
\renewcommand{\dim}{{\rm dim}}
\newcommand{\cC}{\mathcal{C}}
\newcommand{\cM}{\mathcal{M}}
\newcommand{\V}{\mathcal{V}}
\newcommand{\cW}{\mathcal{W}}
\newcommand{\C}{\mathbb{C}}
\newcommand{\Z}{\mathbb{Z}}
\newcommand{\g}{\mathfrak{g}}
\renewcommand{\sl}{\mathfrak{sl}}
\newcommand{\SL}{\mathrm{SL}}
\newcommand{\ord}{\mathrm{ord}}
\newcommand{\Vir}{\mathrm{Vir}}
\newcommand{\sVir}{\mathrm{sVir}}
\newcommand{\osp}{\mathfrak{osp}}
\newcommand{\Y}{\mathcal{Y}}
\newcommand{\bpsi}{\bar{\psi}}
\renewcommand{\d}{\mathrm{d}}
\renewcommand{\i}{\mathrm{i}}
\newcommand{\tr}{\mathrm{tr}}
\newcommand{\rV}{\mathrm{V}} %universal affine VOA
\newcommand{\rW}{\mathrm{W}} %W-algebra
\newcommand{\bM}{\mathbb{M}} %affine Verma module
\newcommand{\bW}{\mathbb{W}} %affine Wakimoto module
\newcommand{\bV}{\mathbb{V}} %affine Irrep
\newcommand{\bL}{\mathbb{L}} %affine quotient...?
\newcommand{\LieL}{\mathcal{L}} %finite-dim Lie algbera modules
\renewcommand{\Re}{\mathfrak{Re}}
\newcommand{\CommentsForMe}[1]{$\quad$}
\NewDocumentCommand{\smartdiagramx}{r[] m}{%
 \StrCut{#1}{:}\diagramtype\option
\IfStrEq{\diagramtype}{connected constellation diagram}{% true-conn const diagram
   \begin{tikzpicture}[every node/.style={align=center,let hypenation}]
   \foreach \smitem [count=\xi] in {#2}{\global\let\maxsmitem\xi}
   \pgfmathtruncatemacro\actualnumitem{\maxsmitem-1}
   \foreach \smitem [count=\xi] in {#2}{% 
	   \ifnumequal{\xi}{1}{ %true
	   \node[planet](planet){\smitem};  
	   }{%false
	   \pgfmathtruncatemacro{\xj}{\xi-1}
	   \pgfmathtruncatemacro{\angle}{360/\actualnumitem*\xj+45}	
	   \edef\col{\@nameuse{color@\xj}}
	   \node[satellite] (satellite\xj)
        at (\angle:\sm@core@distanceplanetsatellite) {\smitem };
	   }%  
   }%
   \foreach \smitem [count=\xi] in {#2}{%
      \ifnumgreater{\xi}{1}{ %true
	      \pgfmathtruncatemacro{\xj}{\xi-1}
	      \edef\col{\@nameuse{color@\xj}} 
	      \pgfmathtruncatemacro{\xk}{mod(\xj,\actualnumitem) +1}
	      \path[connection planet satellite,-] 
         (satellite\xj) edge[bend right] (satellite\xk);  
	   }{}
   }%
   \end{tikzpicture}  
   }{}%end-connected constellation diagram
}
\begin{document}
\title{Vertex algebras with big center \\
and a Kazhdan-Lusztig Correspondence}
\author{}
\date{}
\maketitle

\newcommand{\A}{\mathcal{A}}
\newcommand{\HA}{\mathrm{H}^{\kappa^*\!\!}\A}
\newcommand{\HHA}{\mathrm{H}^{\kappa}\mathrm{H}^{\kappa^*\!\!}\A}

\vspace*{-.5cm}

\hspace{1.5cm}
\begin{minipage}{.4\textwidth}
Boris L. Feigin \newline

Higher School of Economy, \newline 
Moscow, Russian Federation\newline

The Hebrew University \newline Jerusalem, Israel
\end{minipage}
\begin{minipage}{0.4\textwidth}
Simon D. Lentner \newline

Universität Hamburg,  \newline
Hamburg, Germany\newline
\vspace{1.1cm}
\end{minipage}
\footnote{
Corresponding author: 
Bundesstra{\ss}e 55,
20146 Hamburg,
simon.lentner@uni-hamburg.de
}

\hspace*{4cm} 
\vspace{1cm}

\begin{abstract}
We study the semiclassical limit $\kappa\to \infty$ of the generalized quantum Langlands kernel associated to a Lie algebra $\g$ and an integer level $p$. This vertex algebra acquires a big center, containing the ring of functions over the space of $\g$-connections. We conjecture that the fiber over the zero connection is the Feigin-Tipunin vertex algebra, whose category of representations should be equivalent to the small quantum group, and that the other fibres are precisely its twisted modules, and that the entire category of representations is related to the quantum group with a big center. In this sense we present a generalized Kazhdan-Lusztig conjecture, involving deformations by any $\g$-connection. We prove our conjectures in small cases $(\g,1)$ and $(\sl_2,2)$ by explicitly computing all vertex algebras and categories involved.

\end{abstract}

\newpage
\setcounter{tocdepth}{2}
{\small \tableofcontents}

\section{Introduction}
\subsection{Background}

A vertex algebra~$\V$ is, roughly speaking, a complex-analytic version of a commutative algebra, where the multiplication map depends on a formal complex variable
$$\Y:\;\V\otimes_\C \V\to \V[[z,z^{-1}]],$$
Here, $\V[[z,z^{-1}]]$ denotes the Laurent series in a formal variable $z$  with coefficients in $\V$. The axioms of a vertex operator algebra include a version of commutativity or locality, which relates $\Y(a,z_1)\Y(b,z_2)$ and $\Y(b,z_2)\Y(a,z_1)$ up to formal delta-functions on the singularities $z_1,\;z_2,\;z_1-z_2=0$. Another requirement is an action of the Virasoro algebra on~$\V$ compatible with conformal transformations of the variable~$z$. Standard mathematical textbooks on vertex algebras include \cite{Kac98,FBZ04}. Vertex algebras have a strong relation to analysis, algebraic geometry and mathematical physics, in particular conformal field theory. A vertex algebra has a notion of representation, and under suitable conditions the category of representations has the structure of a  braided tensor category \cite{HLZ}.

 A fundamental example is the vertex algebra $\rV^\kappa(\g)$ associated to an affine Lie algebra $\hat{\g}_\kappa$ at some level $\kappa$, see Section \ref{sec_twistAffine}. The conformal structure implies that certain analytic functions attached to the vertex algebra
 %\footnote{$n$-point correlation functions of primary fields} 
 solve the Knizhnik-Zamolochikov differential equation. The solutions are multivalued around the singularities at $z_i=z_j$, and the monodromy gives an action of the braid group on the space of solutions. The Drinfeld-Kohno theorem \cite{Koh88,Drin90}, see e.g. \cite{Kas95} Chp. 19, states that this braid group action can be described by the braiding of the quantum group $U_q(\g)$, a deformation of the universal enveloping algebra $U(\g)$ of a finite-dimensional semisimple Lie algebra $\g$, which has been defined by Drinfeld for this very purpose. The Kazhdan-Lusztig correspondence states that for generic $\kappa$ the braided tensor category of suitable\footnote{The Kazhdan-Lusztig category consists of representations of $\hat{\g}$ that are smooth (for any $v$ exists $n$ with $\hat{\g}_{\geq n}v=0$) and whose $L_0$-eigenspaces are integrable with respect to the action of the horizontal algebra $\g$} representations of $\rV^\kappa(\g)$  is equivalent to the category of weight modules of the quantum group $U_q(\g)$ at $q=e^{\frac{\pi\i}{m(\kappa+h^\vee)}}$ with $m$ the lacing number of $\g$ \cite{KL1,KL2,KL3,KL4} and \cite{Zh08, Hu17}. Note that as abelian categories, it is not difficult so see that both categories are equivalent to the category of integrable $\g$-modules, independently of $\kappa$, as long as  it is generic. 
 
 A closely related vertex algebra is the principal $W$-algebra $\rW^\kappa(\g)$ obtained from $\rV^\kappa(\g)$ by a quantum Hamiltonian reduction, here with respect to the principal nilpotent orbit, see \cite{FBZ04} Chapter 15. In the smallest case $\g=\sl_2$ the $\rW$-algebra is the Virasoro algebra itself, see Section \ref{sec_Vir}. For results on the representation theory of $W$-algebras, see \cite{Ara07}. A fundamental property of $W$-algebras is the Feigin-Frenkel duality \cite{FF92}, which states an isomorphism 
$\rW^\kappa(\g)\cong \rW^{\kappa^*}(\g^\vee)$, where 
 $\g^\vee$ is the Langlands dual and $\kappa,\kappa^ *$ are are pair of dual levels fulfilling $(\kappa+h^\vee)({\kappa^*}+(h^\vee)^\vee)m=1$, with $(h^\vee)^\vee$ the dual Coxeter number of the dual root system. This result and its limit $\kappa\to -h^\vee,\,\kappa^*\to \infty$ is of particular importance for the geometric Langlands conjecture, which again draws from conformal field theory, as discussed in Frenkel's lecture \cite{Fr05}. 
 
 A more recent example that has gathered considerable interest, is the Feigin-Tipunin algebra $\cW_p(\g)$. It is defined as a kernel of screenings and has  a $\g^\vee$-action with $\g^\vee$-invariant subalgebra $\rW^\kappa(\g)$;  this is proven in \cite{Sug21a} for simply-laced $\g$, otherwise it is conjectured. The conjectural logarithmic Kazhdan-Lusztig correspondence \cite{FGST06,FT10,AM14, Len21, Sug21a, Sug21b} states that $\cW_p(\g)$ has a finite non-semisimple category of representation, which is equivalent as a braided tensor category to the representations of the small quantum group $u_q(\g),\;q^{2p}=1$, a non-semisimple finite-dimensional Hopf algebra quotient of $U_q(\g)$ for $q$ a root of unity defined by Lusztig \cite{Lusz90}, more precisely a quasi-Hopf algebra variant $\tilde{u}_q(\sl_2)$ \cite{CGR20,GLO18,Ne21}. The smallest case  $\cW_p(\sl_2)$ has been studied previously under the name triplet algebra, and in this case the conjecture has been finally proven by the work of many authors \cite{FGST06, AM08, NT09, TW13, CGR20, MY20, CLR21, GN21}. For the respective statement for the singlet algebra and a more systematic treatment of this and other cases see \cite{CLR23}. The non-semisimplicity brings new challenges, see e.g. the surveys on logarithmic conformal field theory \cite{CR} and the role of the coend \cite{FS16}.  
 
 Back to our example $\rV^\kappa(\g)$, it is intriguing to study limits where the parameter $\kappa$ tends to degenerate values. This depends of course on choices, more precisely a choice of an integral form, which in most of our cases have been studied by  \cite{CL19} under the name deformable family. For a certain integral form of $\rV^\kappa(\g)$  the limit  $\kappa\to \infty$, which physically corresponds to the semiclassical limit, leads to a commutative algebra, which can be identified with the ring of functions on the space of regular $\g$-valued connections, as we review in Section \ref{sec_LimitAff}. The limit of critical level $\kappa\to -h^\vee$ acquires a big center, which is the ring of functions on the space of $\g$-opers, see \cite{FBZ04} Proposition 16.8.4. By \cite{Ara11} this center is equal to the center of the reduction $\rW^\kappa(\g)$ at critical level. By the Feigin-Frenkel duality mentioned above, this is equal to the center of the semiclassical limit of $\rW^\kappa(\g^\vee)$. With $\sl_2^\vee=\sl_2$ this means that the semiclassical limit of the Virasoro algebra  is the ring of functions on the space of quadratic differentials resp. Sturm-Liouville operators, and many properties of the associated differential equation are reflected in the representation theory. We discuss this in more detail in Section \ref{sec_LimitVir} and \ref{sec_SturmLiouville}. \\

 \subsection{Goals}
 
 The idea of the present article follows previous ideas in joint work of the first author in \cite{FF92} and in \cite{BBFLT13, BFL16, ACF22} and the line of work \cite{CG17, CGL20,CDGG21,CN22}: Suppose we have a deformable family of vertex algebras $\V^\kappa$, which in the semiclassical limit $\kappa\to \infty$ acquires a big central subalgebra $\mathcal{Z}(\V^\infty)$, which coincides with the ring of functions $\mathcal{O}(X)$ of some space $X$. The reader should have in mind for $X$ a Lie group. Strictly speaking, in our case $X$ will be the space of $\g$-connections (resp. $\g$-opers) on a formal punctured disc, up to regular coordinate transformations $G((t))$, with composition law(s) given by $\g$-connections on a $3$-punctured sphere with prescribed expansions around the punctures. In the case of regular singularities such connections can be classified in terms of their monodromy, which is an element in the Lie group $G$, and up to conjugation in the Borel  subgroup $B$. The existence and essentially uniqueness of a $\g$-connection on a $3$-punctured sphere with prescribed monodromies reproduces the group law in $G$ and $B$.
 
 For such a vertex algebra $\V^\infty$ with big central subalgebra $\mathcal{O}(X)$, the vertex algebra itself and its modules can be considered to be fiber bundles over $X$, where the fibre $\V^\infty|_x$ at a point $x\in X$ is given by a quotient $\V^\infty/(Z-Z(x))_{Z\in \mathcal{Z}(\V^\infty)}$. The fibre at zero $\V^\infty|_0$ (resp. the identity in the Lie group) is again a vertex algebra, and all other fibers are modules over it.  However, it is an important observation that for a vertex algebra with a large central subalgebra, in contrast to an algebra with a large central subalgebra,  the other fibres $\V^\infty|_x$ are no proper modules over $\V^\infty|_0$, but rather twisted modules, meaning that their vertex operator contains multivalued functions in the complex variable $z$, due to the nontrivial braiding present. Spoken more categorically, the fibres should be modules in a $X$-graded tensor category with crossed braiding, see \cite{ENOM09}.
 
 A different, more familiar occurrence of $G$-crossed tensor categories are orbi\-folds: Let $\cW$ be a vertex algebra with a finite group or Lie group $G$ acting as automorphisms. Then, again in general conjecturally, there is a $G$-crossed tensor category $\bigoplus_{g\in G}\cC_g$ with $\cC_0$ the usual braided tensor category of representations of $\cW$ and with $\cC_g$ the $\cC_0$-bimodule category of $g$-twisted representations, which are representations of the vertex algebra $\cW$ involving multivalued analytic functions with monodromy controlled by $g$, see the initial paragraphs of Section \ref{sec_twisted}. The typical main application is that all of these representations restrict to proper representations over the invariant subalgebra (or orbifold) $\cW^G$, and in good cases the representation category of $\cW^G$ consists precisely of twisted modules decomposed over this subalgebra. Categorically the decomposition is described by $G$-equivariantization and again returns a proper braided tensor category.

 Thirdly, a natural Hopf algebra with big center is the infinite-dimensional quantum group $U_q^{KPdC}(\g)$ of Kac-Procesi-DeConcini \cite{DKP92a}. This Hopf algebra has a central subalgebra isomorphic to the ring of functions on the big cell or Poisson dual $G^*$ resp. $(G^\vee)^*$, and the quotient is the small quantum group $u_q(\g)$. This is close, but not quite right for our purposes: We now restrict ourselves to the Hopf algebra quotient with big center $\mathcal{O}(B^\vee)$, whose category of representations fibres over $X=B^\vee$, which is contained in the Poisson dual as well. This is the abelian category for which we prove explicit correspondences. In fact, it should be a $B^\vee$-crossed braided tensor category and the equivariantization should be given by the mixed quantum group \cite{Gait21}, see Question \ref{quest_quantumgroup}.  If we want to see a full $G^\vee$-graded category, we can note that any element in $G^\vee$ is conjugate to an element in $B^\vee$, and accordingly there should be a larger version of this quantum group $U_q^G(\g)$ with central subalgebra $\mathcal{O}(G^\vee)$. Note that this matches the fact that $\g$-connections with regular singularities have  associated monodromy in $G$, but up to conjugation in~$B$.
 
\enlargethispage{1cm}
 
  The goal of this article is to present two related classes of examples $\V^\kappa$  for each $(\g,p)$, where we can study the correspondence between fibres in the limit, twisted representations of the zero-fibre, and corresponding representations of the quantum group with a big center. In these examples, the general conjecture is  that the limit $\V^\infty$ acquires as center the ring of functions over $\g$-connections resp. $\g$-opers (for $\sl_2$ Sturm-Liouville operators), and such that the zero fibre $\cW=\V^\infty|_0$ is equal to  the Feigin-Tipunin algebra $\cW_p(\g)$ \cite{CN22}. On the other hand, the logarithmic Kazhdan-Lusztig conjecture relates this representation theory to the small quantum group. Our ultimate goal would thus be four descriptions of the same tensor category, extending the logarithmic Kazhdan-Lusztig correspondence:

 %https://tex.stackexchange.com/questions/344586/how-do-i-make-these-two-diagrams-in-beamer
 \begin{center}
  \scalebox{0.67}{
\smartdiagramset{
planet text width=4.6cm,
satellite text width=4.3cm,
distance planet-satellite=5.5cm
}
%\smartdiagram[connected constellation diagram]
%\smartdiagramx[bubble diagram:45]
\smartdiagramx[connected constellation diagram]
{{\bf A tensor category} \\ 
~\\
\hspace{-0.7cm}with $G^\vee$-grading\mbox{,} $G^\vee$-action $\hspace{-0.7cm}$ \\ \hspace{-0.5cm} and crossed braiding $\hspace{-.5cm}$\\ 
 ~\\
 $\displaystyle \mathlarger{{\mathlarger{\hphantom{\cC_g}
\bigoplus_{g\in G^\vee}\cC_g}}}$,
modules over the \\ vertex algebra with big center \\
~\\ $\displaystyle\mathlarger{\mathlarger{\HA^{(p)}[\g\mbox{,}\infty]}}$,
modules over a \\ quantum group with big center \\ ~ \\ $\displaystyle\mathlarger{\mathlarger{U_q^G(\g)}}$,
$G^\vee$-crossed extension\\ of modules over the  \\  small quantum group \\ ~ \\ $\displaystyle \mathlarger{\mathlarger{\tilde{u}_q(\g)}}$,
$g$-twisted modules \\over the Feigin-Tipunin vertex algebra  \\ ~ \\$\displaystyle\mathlarger{\mathlarger{\cW_p(\g)}}$}
}
\end{center}
 In the specific cases $(\g,1)$, where the Feigin-Tipunin algebra is merely an affine Lie algebra, and $(\sl_2,2)$, where the Feigin-Tipunin algebra coincides with symplectic fermions, all of these questions are explicitly approachable: First, most work on the first and second deformable family is done in \cite{ACL19, ACF22} and \cite{CGL20} respectively, and we have explicit models for $\V^\kappa$. In particular, we can explicitly determine that the zero-fibre is indeed  $\cW_1(\g)$ and $\cW_2(\sl_2)$ as conjectured and determine the fibres $\cC_g$ as abelian categories. Second, using the explicit models for the Feigin-Tipunin algebra, we can completely determine the categories of $g$-twisted module. Third, as discussed the Kazhdan-Lusztig conjecture is proven for $(\sl_2,p)$, and by construction it is true for $(\g,1)$. In the case $(\sl_2,p)$ and restricted to $H\subset G$ the Cartan subgroup, the twisted modules are familiar and the above correspondences follow from the correspondence between the singlet algebra and the unrolled quantum group \cite{CLR23}, although the structure as a $H$-crossed extension is not worked out in this case either. 
 
 \enlargethispage{1cm}
\begin{center}
\hspace{-0.8cm} \begin{tikzcd}[scale=0.4, color=blue]
    && 
    1^\pm
    \ar[dl, swap, "\psi_0", shift left=-0.5ex, bend right=20] \\
    & 
    1^\mp
    \ar[ur, swap, "\bpsi_0", shift right=-0.5ex, bend right=20] 
\end{tikzcd} 
\hspace{0.6cm}
\begin{tikzcd}[scale=0.4]
    & 
    v^\pm 
    \ar[dl,"\psi_0", shift right=-0.5ex]
    \ar[dr,swap,"\bpsi_0", shift left=-0.5ex] & \\
    v^\mp
    \ar[dr, "\bpsi_0", shift right=-0.5ex] 
    %\ar[ur, "\bpsi_1", shift right=-0.5ex] 
    && 
    1^\pm 
    \ar[dl, swap, "\bpsi_0", shift left=-0.5ex] \\
    %ar[ul, swap, "\psi_1", shift left=-0.5ex] \\
    &
    1^\pm
    %\ar[ul, "\psi_1", shift right=-0.5ex]
    %\ar[ur, swap, "\bpsi_1", shift left=-0.5ex]
\end{tikzcd}  
 \begin{tikzcd}[scale=0.4, color=darkgreen]
    & 
    v^\pm 
    \ar[dl,"\psi_0", shift right=-0.5ex, bend right=20]
    \ar[dr,swap,"\bpsi_0", shift left=-0.5ex] & \\
    v^\mp
    \ar[dr, "\bpsi_0", shift right=-0.5ex] 
    \ar[ur, shift right=+1.5ex, bend right=20] 
    %\ar[ur, "\psi_0", shift right=-0.5ex] 
    && 
    1^\pm
    \ar[dl, "\psi_0", shift left=-0.5ex, bend right=20] \\
    %\ar[dl, swap, "\psi_0", shift left=-0.5ex] \\
    %\ar[ul, swap, "\psi_1", shift left=-0.5ex] \\
    & 
    1^\mp
    %\ar[ul, "\psi_1", shift right=-0.5ex]
    \ar[ur, shift right=+1.5ex, bend right=20] 
    %\ar[ur, swap, "\psi_0", shift left=-0.5ex]
\end{tikzcd}  
\includegraphics[scale=1.15]{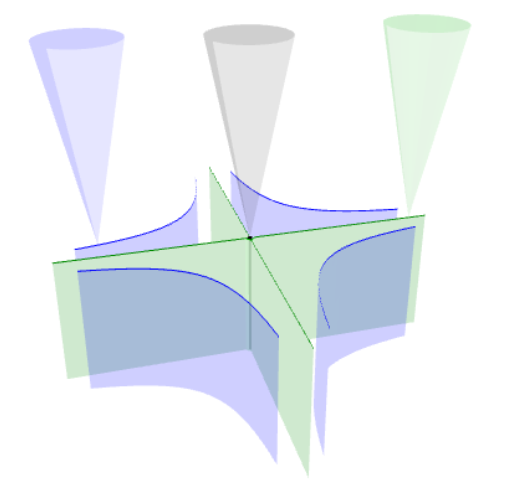}
    %$\hspace{-1cm}\substack{\mathrm{SL}_2(\C)\\\vspace{0cm}}$
\end{center}

\captionof{figure}{Case $(\sl_2,2)$: We show the projective objects in three category fibres $\cC_g$ over elements $\textcolor{blue}{\begin{psmallmatrix} a & 0 \\0 & a^{-1}\end{psmallmatrix}},\;\;\begin{psmallmatrix} 1 & 0 \\0 & 1\end{psmallmatrix},\;\;\textcolor{darkgreen}{\begin{psmallmatrix} 1 & t \\0 & 1\end{psmallmatrix}}$ representing the three types of conjugacy classes in~$\mathrm{SL}_2(\C)$.}\label{fig_fibre}

~\\
 
Let  us finally mention some further context: The role of the deformable family $\V^\kappa$ in this article is played by  $\A^{(n)}[\g,\kappa]$, which is an extension of $\hat{\g}_\kappa\times \hat{\g}_{\kappa^*}$ for $\frac{1}{\kappa+h^\vee}+\frac{1}{{\kappa^*}+h}=n,\;m|n$,
 and by their quantum Hamiltonian reductions in either or both factors, see Section~\ref{sec_couple}. These vertex algebras generalize  the algebra of chiral differential operators on a Lie group $G$ \cite{AG02}, they  play an important role for the geometric quantum Langlands correspondence and are called quantum geometric Langlands kernel. The geometric Langlands duality is recovered in the limit $\kappa\to \infty$ and for $n=0$. There is a 4D picture emerging from works by Kapustin and Witten \cite{KW07}, Creutzig and Gaiotto \cite{CG17}, Frenkel and Gaiotto \cite{FG18},  Costello, Dimofte, Linshaw and many others, for example \cite{CL19, CDG20, CL22, CGL20, CDGG21}. Here, the vertex algebras $\A^{(n)}[\g,\kappa]$ appear as 2D junctions between 3D boundary conditions in a 4D topological twist of a gauge theory attached to $G,\kappa$. They label the 2D junction between 3D boundary conditions $D_{p',q'}$ and $D_{p,q}$ related by a modular transformation $\begin{psmallmatrix} p'\\ q'\end{psmallmatrix}=ST^nS\begin{psmallmatrix} p\vphantom{'}\\ q\vphantom{'}\end{psmallmatrix}$, and their quantum Hamiltonian reductions in either factor label junctions between $N_{p,q}$ or $N_{p',q'}$. Here, the 3D boundary condition $D_{0,1}$ is labeled by the spin-ribbon category $D_{0,1}=\mathrm{Rep}(\mathcal{D}_\kappa)(\mathrm{Gr}_G)$ of twisted $\mathcal{D}$-modules over the affine Grassmannian and $N_{0,1}$ by the Kazhdan-Lusztig category associated to $G,\kappa$. For $\g=\sl_2,n=2$ we review in  Section \ref{sec_explicit_p2} from \cite{CGL20} the explicit realizations of $\A^{(2)}[\sl_2,\kappa]$ as $\sVir_{N=4}$ and of its one- and two-sided reductions as $\hat{\osp}(2|1)$ and $\sVir_{N=1}\otimes \mathcal{F}$.  The article \cite{CLNS22} shows how these vertex algebras appear as ''kernel VOA'' in certain dualities generalizing Feigin-Frenkel duality.\\\\

 \noindent
 The article has two parts, according to the two sides of the correspondence:

\subsection{Content Part 1}

The first part, which might be of independent interest, addresses the following question:\\

\begin{question}
An affine Lie algebra $\hat{\g}_\kappa$ and the associated vertex algebra  $\rV^\kappa(\g)$ have an action by the corresponding Lie group $G$. Similarly, the Feigin-Tipunin algebra $\cW_{p}(\g)$ has presumably an action by the corresponding dual Lie group $G^\vee$. What is the corresponding $G^\vee$-graded category of twisted modules $\bigoplus_{g\in G}\cC_g$, and how do the twisted modules and intertwining operators look explicitly? How do they decompose over the vertex subalgebra $\cW^G$? What are the relevant quantum group counterparts?
\end{question} 

For the definition of $g$-twisted modules with respect to non-semisimple actions (and accordingly logarithmic singularities) we refer to the works of Huang \cite{Hu10} and Bakalov \cite{Bak16, BS16}. For most statements it seems more conceptual to talk about $x$-twisted modules for an infinitesimal automorphism (resp. derivation) $x$. Although we do not need it in our explicit treatment, we want to point out that we seem to miss a sufficient treatment of the categorical notions in the case of $G$ a Lie group. A notion of sheaves of categories has been proposed in \cite{Gait13}, a notion of an action of an algebraic group on a tensor category is found in \cite{DEN17}, and for vertex algebra orbifolds for compact Lie groups see \cite{Mc20},

\begin{question}\label{quest_ENO}
    For $G$ an \emph{algebraic} group acting on the Drinfeld center of a tensor category $\cC_0$ resp. a modular tensor category $\cC_0$, is there a version of the results in \cite{ENOM09} that describe all $G$-graded resp. $G$-crossed extensions in terms of cohomological data? Such results would be invaluable for proving notoriously hard tensor category equivalences. \\
        
    We would also be very interested in a corresponding ''tangential'' categorical notion in the algebraic geometry setting, that is, a notion of a $\g$-crossed category, which should describe first-order deformation of a $G$-crossed category around the identity (for example by $\g\to \mathrm{Ext}^1(1,L)$ for $L$ the coend, together with structural data data along the lines of \cite{FGS22}). Accordingly, there should be an ENOM-correspondence between an action of $G$ and thus $\g$ on a Drinfeld center of $\cC_0$ resp. on $\cC_0$ and corresponding $\g$-crossed extensions. We would expect that the hard problem of determining explicitly the crossed extension from the action is much easier in the tangential setting. An interesting explicit example we have worked on is the $\mathrm{SO}_2(\mathbb{R})$-crossed extensions of $\mathrm{Vect}_{\mathbb{R}^2}$.
\end{question}
\CommentsForMe{The regular singularity should be higher categorical data}
\CommentsForMe{Maybe I see cases where there is a irreg singularity if $\cC_0$ has a singularity, as for the SO3-example if the category is wrongfully fixed to be vect}
\CommentsForMe{Maybe the irreg singularty is described by defects}

\noindent
We now discuss our concrete results: \\

In Section \ref{sec_twistAffine} we discuss the case of an affine Lie algebra, which also coincides with the Feigin-Tipunin algebra $\cW_1(\g)$ in the trivial case $p=1$. The explicit $g$-twisted mode algebra was in this case defined in \cite{BS16, Yang17}, so one can explicitly determine the vertex algebra modules. It seems a common occurrence that, because $g$ is an inner automorphisms, there exists an isomorphism between the $g$-twisted mode algebra and the untwisted mode algebra, so as abelian categories all $\cC_g$ are equivalent. However, the action of the Virasoro algebra is deformed by this isomorphism and the structure of the twisted module as a Virasoro module changes drastically. While a semisimple $g$ shifts the conformal weight by some non-integer, which is the  familiar case, we find for a unipotent  $g$ infinite towers of extensions and $L_0$ non-diagonalizable.
\begin{lemma*}[\ref{lm_twistedrepaffine}]
Consider the inner automorphism $g=e^{-2\pi\i y}$ of $\g$ with $y=s+x\in\g$ for $s$ semisimple and $x$ nilpotent. The image of  the twisted Virasoro action is as follows 
$$f(L_n^{\hat{\g}_{\kappa,g}})=L_n^{\hat{\g}_\kappa}
-\frac{2\kappa}{2(\kappa+h^\vee)}y_n+\delta_{n,0}\frac{\kappa^2}{2(\kappa+h^\vee)}\langle y,y\rangle$$
\end{lemma*}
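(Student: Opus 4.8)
The plan is to reduce everything to the Sugawara construction and the explicit description of the $g$-twisted mode algebra. Recall that for an inner automorphism $g = e^{-2\pi\i(s+x)}$ the twisted current algebra $\hat\g_{\kappa,g}$ is generated by modes $a_{n+\alpha}$ where $\alpha$ runs over the eigenvalues of $\ad(s)$ (plus unipotent corrections governed by $x$), and there is a well-known isomorphism $f$ from $\hat\g_{\kappa,g}$ to the untwisted algebra $\hat\g_\kappa$ obtained by ``spectral flow''/conjugation by the (formal) group element $z^{s+x} = e^{(s+x)\log z}$. Concretely, on a current $a \in \g$ one has $f(a(z)) = z^{\ad(s+x)}\big(a(z)\big)$ plus a central shift coming from the fact that the cocycle is not preserved by this substitution; the central term is the standard one, $z^{s+x} a(z) z^{-s-x}$ picking up $\langle s+x, a\rangle z^{-1}$-type contributions, but on $\g$ itself $x$ being nilpotent and the form being invariant only $s$ contributes to the scalar part. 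First I would write this isomorphism out at the level of modes and record how each basis current transforms.

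Next I would apply $f$ to the Sugawara vector. The twisted Virasoro field is $L^{\hat\g_{\kappa,g}}(z) = \frac{1}{2(\kappa+h^\vee)} \sum_a :\!a(z)a(z)\!:$ normal-ordered with respect to the twisted grading. Pushing this through $f$, i.e. computing $f\big(L^{\hat\g_{\kappa,g}}(z)\big) = \frac{1}{2(\kappa+h^\vee)} \sum_a : z^{\ad(s+x)}a(z)\, z^{\ad(s+x)}a(z) :$, the key point is that $\sum_a$ ranges over a basis and its dual basis, so the conjugation $z^{\ad(s+x)}\otimes z^{\ad(s+x)}$ acts trivially on the Casimir element $\sum_a a\otimes a^*$ — the quadratic part is invariant. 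What survives is (i) the untwisted Sugawara field $L^{\hat\g_\kappa}(z)$, (ii) a linear correction coming from the difference between the twisted normal ordering and the untwisted one — reordering $z^{s+x}a_n z^{-s-x}$ past the vacuum produces a term proportional to $(s+x)(z)$, i.e. the mode $(s+x)_n$, and (iii) a genuine scalar coming from the double contraction, which is the $\delta_{n,0}$ term. I would extract the coefficients of these three pieces: the linear term picks up a factor that I expect to organize as $\tfrac{2\kappa}{2(\kappa+h^\vee)}$ after using $\langle\cdot,\cdot\rangle$-duality and the value $\langle \theta,\theta\rangle$-normalization implicit in $h^\vee$; the scalar term is the ``double contraction'' $\tfrac{1}{2(\kappa+h^\vee)}\sum_a \langle s, a\rangle\langle s, a^*\rangle = \tfrac{1}{2(\kappa+h^\vee)}\langle s,s\rangle$, times the $\kappa^2$ that appears once one keeps careful track of which normal-ordering constant is $\kappa$- versus $h^\vee$-weighted. (It is worth sanity-checking the $x=0$ semisimple case against the standard spectral-flow formula $L_0 \mapsto L_0 - s_0 + \tfrac{1}{2}\langle s,s\rangle$ up to the level normalization, and against $s=0$, $x\neq 0$ where no scalar shift should appear, consistent with the $\langle s,s\rangle$ only.)

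The main obstacle I anticipate is bookkeeping the normal-ordering subtleties in the \emph{nilpotent} direction: unlike the semisimple case, $x$ contributes no eigenvalue shift to the grading, so $z^{\ad x}$ acts by a \emph{polynomial} in $\log z$, and one must check that the logarithmic terms cancel in the quadratic Casimir part (they do, by $\ad$-invariance of the form: $\langle z^{\ad x}a, z^{\ad x}a^*\rangle = \langle a,a^*\rangle$) and only reorganize the linear term into the honest mode $x_n$. One also has to be careful that ``$(s+x)_n$'' in the statement denotes the image under $f^{-1}$, or equivalently the appropriate mode of the current $s+x$ in the twisted algebra, so that the formula is an identity in the twisted mode algebra; I would state this convention explicitly before the computation. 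Everything else is a finite, if fiddly, residue calculation with $z^{\ad(s+x)}$, which I would present compactly rather than term by term.
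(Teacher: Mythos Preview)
Your approach is essentially the same as the paper's: both proofs compute the transport of the Sugawara element through the spectral-flow-type isomorphism $f:\hat\g_{\kappa,g}\to\hat\g_\kappa$ and isolate the linear and constant corrections. The paper differs only in packaging: it writes $f$ explicitly at the mode level as $f(a_n^g)=a_{n-r}-\langle s+x,a\rangle\,\delta_{n-r,0}\,K$ (rather than as conjugation by $z^{s+x}$), and then invokes Bakalov's formula for the twisted Sugawara, which already contains the normal-ordering corrections $-[(s_0+x)v^{(i)},w^{(i)}]_n^g-\kappa\,\delta_{n,0}\,\tr\binom{s_0}{2}$. This has the advantage that the $\log z$ issues you flag in the nilpotent direction never appear: one works purely with the mode Lie algebras, where there are no logarithms, and the normal-ordering discrepancy is handed to you by Bakalov rather than recomputed. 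Your geometric picture explains \emph{why} the quadratic part is invariant (Casimir is $\ad$-invariant), which the paper leaves implicit.

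One correction: in your final paragraph you say ``$(s+x)_n$ \ldots denotes the image under $f^{-1}$, or equivalently the appropriate mode \ldots in the twisted algebra''. This is backwards. The identity $f(L_n^{\hat\g_{\kappa,g}})=L_n^{\hat\g_\kappa}-\tfrac{\kappa}{\kappa+h^\vee}(s+x)_n+\cdots$ lives in the \emph{untwisted} algebra $\hat\g_\kappa$; here $(s+x)_n$ is the ordinary $n$-th mode of the element $s+x\in\g$, obtained in the paper via the dual-basis identity $\sum_i\langle s+x,v^{(i)}\rangle\,w^{(i)}=(s+x)$. Fix that, and your outline is a valid proof.
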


In Section \ref{sec_twistedTriplet} we turn to the Feigin-Tipunin algebra $\cW_p(\g)$ and explicitly treat the case $(\g,p)=(\sl_2,2)$, where it is equal to the even part of the symplectic fermions. The twisted mode algebra for $g=e^{-2\pi\i y}$ and $y=\begin{psmallmatrix} r & 0 \\2t & -r\end{psmallmatrix}$ can be calculated explicitly, following  \cite{BS16}:
\begin{corollary*}[\ref{cor_twistedSF}]
The algebra of twisted mode operators is generated by $\psi_{r+n},\bpsi_{-r+n}$ for $n\in \Z$  with the following relations
\begin{align*}
\{ \psi_{r+m}^g, \psi_{r+n}^g\}&= 2t\delta_{m,-n} \\
\{ \psi_{r+m}^g, \bpsi_{-r+n}^g\}&= (r+m)\delta_{m,-n} \\
\{ \bar{\psi}_{-r+m}^g, \bpsi_{-r+n}^g\}&=0 
\end{align*} 
\end{corollary*}
From this  we deduce the twisted representations of the symplectic fermions and of its even part, and then again compute the deformed Virasoro structure. 

We find that for generic $g$  there are two simple projective modules, comparable to the lattice vertex algebra modules (see below). The semisimple part of $g$ causes as expected a shift in conformal weight. The unipotent part of $g$ causes a screening operator to appear in the Virasoro action, which is not anymore compatible with the horizontal (i.e. lattice) grading. Instead of a direct sum of Fock/Wakimoto modules we get a massive tower of extensions and $L_0$-Jordan blocks left-to-right, which piece together irreducible modules in different Fock/Wakimoto  modules to Verma modules and Coverma modules of the Viasoro algebra.  The projective modules look similar to the triplet algebra modules (which corresponds to $g=1$), but in general only has two composition factors, matching Figure~\ref{fig_fibre}.

The $g$-twisted modules still possess an action of the centralizer of $g$ in $G$, and this is relevant if one finally wishes to decompose the $g$-twisted module as ordinary modules over $\cW^G$, which is the Virasoro algebra. For $g$ semisimple, which is the well-known case, this decomposes the direct sum of Virasoro modules with respect to the Cartan part into the single Virasoro modules with generic weights in some coset. In the unipotent case we prove in Theorem \ref{thm_tilting} that the resulting Virasoro module has filtrations in terms of Verma and Coverma modules given by kernel and cokernel of the long screening operator, so our unipotently twisted modules give a free field realization of tilting modules. On the other hand, the findings seem contrary to the expectation we have for semisimple finite $G$, where the decomposition of $g$-twisted modules each produces new irreducible $\cW^G$-modules.
\CommentsForMe{What about twisted Virasoro in the generic case?}
\\

In Section \ref{sec_quantumgroups} we match in the case  $(\sl_2,2)$ our findings about the structure of the abelian categories $\cC_g$ for the three types of conjugacy classes $g\in \mathrm{SL}_2(\C)$ with the infinite-dimensional quantum group of Kac-Procesi-DeConcini \cite{DKP92a}, more precisely the fibres over a fixed Borel subgroup. For the principal block of $\tilde{u}_i(\sl_2)$ and nilpotent $g$, the category equivalence is even directly visible as isomorphism to the zero-mode algebra generated by $\psi_0^g,\bpsi_0^g$ and the parity operator~$\pi$.

\begin{question}\label{quest_quantumgroup}
One should construct a $B$-crossed extension of the braided tensor category of representations of the small quantum group $u_q(\g)$, respectively for $q$ of even order a $B^\vee$-crossed extension of the category of representations of the quasi-Hopf algebra $\tilde{u}_q(\g)$. Their equivariantization should correspond to the mixed quantum group \cite{Gait21}, and starting with this in principle gives a construction of the crossed extension.\\

Moreover, we expect that there is a quantum group $U_q^G(\g)$ with central subalgebra the ring of functions on $G$ resp. $G^\vee$, whose category of representations is a respective crossed category, and such that the restriction on any Borel subgroup recovers the extension in the previous paragraph. 
\end{question}

%\begin{question}
%We expect for general $(\g,p)$ an equivalence of $G$-crossed tensor categories between fibres of the vertex algebra with big center, the categories of twisted modules, and the fibres of the quantum group with big center in some version. \\
%If the ENOM uniquess of $G$-crossed extensions were proven in our setting (Question \ref{quest_ENO}), then this would follow from the logarithmic Kazhdan-Lusztig conjecture and Conjecture~\ref{conj_CN} concerning the zero-fibre (both of which we have in the example $(\sl_2,2)$), together with some mild calculations on the Cartan part to fix the choice in  $\mathrm{H}^3(G,\C^\times)$ for the associator. 
%\end{question}

In Section \ref{sec_twistedFreeField} we  present an approach to construct $g$-twisted representation for general $\cW_p(\g)$, which we would call twisted free field realization. The main idea is that, since any  $g$ is conjugate to an element in the standard Borel subgroups, we have for any  $g$ a free-field realization of $\cW_p(\g)$ as subalgebra of a lattice vertex algebra (as kernel of short screenings), such that $g$ is also an automorphism of the lattice vertex algebra, which is now an inner automorphism. For $g$ unipotent, it is given by (exponentiated) long screening and all mode operators commute. We may hence apply the $\Delta$-deformation in \cite{Li97, AM09} to long screening operators to obtain $g$-twisted modules. These modules can now be restricted to obtain twisted modules of $\cW_p(\g)$. In the case $\cW_2(\sl_2)$ we can explicitly observe, that at least the simple $g$-twisted modules can be extended to a module over the lattice vertex algebra  (due to the trivial action of the opposite screening), and this procedure can be easily generalized. Conversely the twisted modules arise from decomposition. In general, we can only conjecture, that this produces all simple twisted modules. Technology to address this question has been developed in \cite{Yang17}.

\begin{question}\label{quest_Liouville}
The appearance of short screening operators $\psi_0^g$ with nonzero-eigenvalue  in the $g$-twisted module of $\cW_p(\sl_2)$, and probably similarly for general $\cW_p(\g)$, seems to suggests a connection to the representations of the $\cW_{-p}(\g)$ \cite{FF96}, where the lattice is negative definite. This negative definiteness causes the $p$-th power of the short screening to be nonzero, and proportional to the long screening, which can act on a module with arbitrary eigenvalue, see  \cite{Len21} Section 6.4. 
\end{question}

\subsection{Content Part 2}

In the second part we turn our attention to the vertex algebras with big center.\\

In Section \ref{sec_LimitAff} we briefly review how the semiclassical limit $\kappa\to \infty$ of an affine Lie algebra $\hat{\g}_\kappa$ is identified with the ring of functions on the space of $\g$-valued connections, see \cite{FBZ04} Section 16.\\

In Section \ref{sec_Vir} we briefly review the Virasoro algebra and its representations at generic central charge $\V_h$ and $\varphi_{m,n}$. We also recall the formula from \cite{BSA88} for the singular vectors of $\varphi_{1,n}$.\\

In Section \ref{sec_LimitVir} we discuss how the semiclassical limit of the Virasoro algebra can be identified with the ring of functions on the space of Sturm-Liouville operators $\frac{\d^2}{\d z^2}+q(z)$, and how for regular Sturm-Liouville operators an additional action of $\sl_2$ is visible. We also propose an interpretation of the limits of $\varphi_{1,n}$ as regular singular Sturm-Liouville operators with singular term $q(z)=-\frac{n^2-1}{4}z^{-2}+\cdots$ and with diagonalizable monodromy,~and of  $\bigoplus_m \varphi_{m,n}$ as ring of functions on such Sturm-Liouville operators with a fixed solution.\\
\CommentsForMe{source?}  

In Section \ref{sec_SturmLiouville} we briefly review some aspects of Sturm-Liouville differential equations, and discuss some examples. We are in particular interested in the monodromy of regular singular solution and in the nontrivial case of Fuchs' theorem: If the leading exponents $z^{s'},z^{s''}$ of the two solutions have integer difference $s'-s''=n$, which precisely correspond to the singular term above, then the monodromy can be a Jordan block, with one solution given by Frobenius' method and another solution of \emph{second kind} involving logarithms. For example, the Bessel functions for integer parameter $\alpha$ exhibit such behavior, while the Bessel functions for half-integer parameter $\alpha$ still has diagonalizable monodromy. In both cases $s'-s''$ is an integer, but the singular vectors in $\varphi_{1,2\alpha}$ predicts the diagonalizability. \\

In Section \ref{sec_couple} we presents the results \cite{CG17,Mor21,CN22}, which establish the existence of a remarkable family of vertex algebras
\begin{theorem*}[\ref{thm_couple}]
For each integer $p>0$ divisible by the lacing number $m$ there exists a  family of vertex algebras, which extends $\rV^\kappa(\g)\otimes \rV^{{\kappa^*}}(\g)$ and decomposes over it as follows
$$\A^{(p)}[\g,\kappa]
=\bigoplus_{\lambda\in Q^+} \bV_\lambda^\kappa(\g)\otimes \bV_{\lambda^*}^{{\kappa^*}}(\g) $$
for a pair of generic levels $\kappa,{\kappa^*}$ satisfying 
$$\frac{1}{\kappa+h^\vee}+\frac{1}{{\kappa^*}+(h^\vee)^\vee}=p$$
where $(h^\vee)^\vee$ denotes the dual Coxeter number of the Langlands dual root system. 
\end{theorem*}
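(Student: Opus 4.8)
The statement is really a construction-plus-structure result --- existence of a deformable family together with a prescribed decomposition --- and the plan is to produce $\A^{(p)}[\g,\kappa]$ as a vertex-algebra extension of $\rV^\kappa(\g)\otimes\rV^{\kappa^*}(\g)$ attached to a commutative algebra object in a braided tensor category built from two Kazhdan--Lusztig categories, and then to check that the whole construction is algebraic in $(\kappa+h^\vee)^{-1}$ so that it organizes into a deformable family in the sense of \cite{CL19}.

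First I would fix the ambient category. For generic $\kappa$ the Kazhdan--Lusztig category $\mathrm{KL}_\kappa(\g)$ of smooth finitely generated $\hat\g_\kappa$-modules with locally finite $\g$-action is a semisimple braided tensor category whose simple objects are the Weyl modules $\bV_\lambda^\kappa(\g)$, $\lambda\in P^+$, with fusion the $\g$-tensor rule $\bV_\lambda^\kappa\otimes\bV_\mu^\kappa\cong\bigoplus_\nu N_{\lambda\mu}^\nu\bV_\nu^\kappa$ and with braiding and associativity those of $U_q(\g)$, $q=e^{\pi\i/(r^\vee(\kappa+h^\vee))}$, by the Kazhdan--Lusztig correspondence; likewise for $\kappa^*$. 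Inside a direct-limit completion of $\cC:=\mathrm{KL}_\kappa(\g)\boxtimes\mathrm{KL}_{\kappa^*}(\g)$, restricted to the block generated by root-lattice weights (where all fusion is finite), the object
\[
\A\;:=\;\bigoplus_{\lambda\in Q^+}\bV_\lambda^\kappa(\g)\boxtimes\bV_{\lambda^*}^{\kappa^*}(\g),\qquad\text{$\bV_{\lambda^*}$ contragredient to $\bV_\lambda$, }\lambda^*=-w_0\lambda,
\]
is then legitimate, has $\lambda=0$ summand $\rV^\kappa(\g)\otimes\rV^{\kappa^*}(\g)$, and has every pairwise product a finite sum. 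On $\A$ one transports the Peter--Weyl multiplication of the function algebra $\mathcal O(G)=\bigoplus_\lambda V_\lambda\otimes V_\lambda^*$: the product $\bV_\lambda^\kappa\otimes\bV_\mu^\kappa\to\bV_\nu^\kappa$ paired against its dual $\bV_{\lambda^*}^{\kappa^*}\otimes\bV_{\mu^*}^{\kappa^*}\to\bV_{\nu^*}^{\kappa^*}$. The one structural fact to verify is that this product is associative, unital and commutative for the \emph{genuine} braiding of $\cC$, and this is where the level constraint enters: the ribbon twist on the summand $\bV_\lambda^\kappa\boxtimes\bV_{\lambda^*}^{\kappa^*}$ is $e^{2\pi\i h_\lambda}$ with
\[
h_\lambda\;=\;\Delta_\lambda^{(\kappa)}+\Delta_{\lambda^*}^{(\kappa^*)}\;=\;\tfrac12\langle\lambda,\lambda+2\rho\rangle\Bigl(\tfrac{1}{\kappa+h^\vee}+\tfrac{1}{\kappa^*+h^\vee}\Bigr)\;=\;\tfrac{p}{2}\langle\lambda,\lambda+2\rho\rangle ,
\]
and since $\langle\lambda,\lambda+2\rho\rangle\in\tfrac{2}{r^\vee}\Z_{\geq0}$ and $\langle\lambda,\mu\rangle\in\tfrac1{r^\vee}\Z$ for $\lambda,\mu\in Q$, the divisibility $r^\vee\mid p$ is exactly what forces all $h_\lambda$ and all $h_\nu-h_\lambda-h_\mu$ into $\Z_{\geq0}$: the ribbon twists of the two factors are reciprocal modulo $\Z$ on the root-lattice block, so the diagonal algebra is transparent and the Peter--Weyl multiplication makes it a haploid commutative algebra object, with conformal weights non-negative, integral, and tending to $+\infty$.

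Granting this, the standard dictionary between commutative, haploid, $\Z_{\geq0}$-graded algebra objects in $\mathrm{KL}$-type module categories and conformal vertex-algebra extensions (Huang--Kirillov--Lepowsky; Creutzig--Kanade--McRae) upgrades $\A$ to a vertex algebra $\A^{(p)}[\g,\kappa]$ containing $\rV^\kappa(\g)\otimes\rV^{\kappa^*}(\g)$ and decomposing over it exactly as stated, the OPE of two summands being implemented by the affine intertwining operators of type $\binom{\bV_\nu^\kappa}{\bV_\lambda^\kappa\ \bV_\mu^\kappa}$ tensored with their $\kappa^*$-counterparts. To obtain a \emph{deformable} family I would then observe that each $\bV_\lambda^\kappa(\g)$ carries its tautological $\C[\kappa]$-form from the integral current algebra, and that, once the intertwining operators are normalized appropriately, their OPE structure constants are rational functions of $(\kappa+h^\vee)^{-1}$ with no pole at $(\kappa+h^\vee)^{-1}=0$ for generic $\kappa$, being governed by finite Clebsch--Gordan data and Knizhnik--Zamolodchikov-type recursions; hence the $\C[\kappa^{-1}]$-span of a PBW-type basis of $\A$ is a vertex-algebra lattice, yielding the deformable family whose semiclassical limit is analyzed in the rest of the paper.

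I expect the genuinely laborious step to be the one phrased above as ``the product is associative, unital and commutative for the true braiding'': concretely, fixing the normalizations of the Clebsch--Gordan/intertwining maps so that these three properties hold \emph{simultaneously} --- i.e.\ proving that the twist reciprocity is not merely necessary but sufficient --- and carrying this out rigorously inside a completion of $\cC$ in which the monoidal structure and the infinite direct sum coexist, while also checking that the extension is a genuine (untwisted) vertex algebra rather than only a generalized one. The integrality bookkeeping isolating the condition $r^\vee\mid p$ is routine; the categorical work that converts ``the ribbon twists match'' into ``there is a vertex algebra with this decomposition'' is the substance of \cite{CG17,Mor21,CN22}, and the hands-on alternative --- realizing $\A^{(p)}[\g,\kappa]$ inside a $\kappa$-deformation of a lattice vertex algebra tensored with $\rV^\kappa(\g)\otimes\rV^{\kappa^*}(\g)$ and cutting out the extension by screening operators --- only trades it for an equally delicate screening/convergence analysis.
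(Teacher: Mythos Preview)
The paper does not give its own proof of this theorem: it is presented in Section~\ref{sec_couple} as a result quoted from the literature, with the sentence preceding it attributing the construction to \cite{CG17}, a rigorous form to \cite{Mor21} via quantum groups and the Kazhdan--Lusztig correspondence, and the deformable-family/integral-form version to \cite{CN22}. So there is nothing in the paper to compare your argument against line by line.

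That said, your sketch is exactly the strategy of those references, in particular \cite{Mor21}: build the commutative algebra object in (a completion of) $\mathrm{KL}_\kappa(\g)\boxtimes\mathrm{KL}_{\kappa^*}(\g)$ by transporting the Peter--Weyl product on $\mathcal O(G)$ through the Kazhdan--Lusztig equivalence, verify the ribbon/conformal-weight integrality condition that singles out $r^\vee\mid p$, and then invoke the Huang--Kirillov--Lepowsky / Creutzig--Kanade--McRae dictionary to obtain the vertex-algebra extension. You have also correctly located where the real work sits --- namely that twist reciprocity on the diagonal is necessary but not automatically sufficient for commutativity/associativity of the transported product, and that this is precisely what \cite{Mor21,CN22} establish --- and you flag the deformable-family step as a separate polynomiality check on the intertwining-operator structure constants, which is the content supplied by \cite{CN22}. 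Your integrality bookkeeping for $h_\lambda=\tfrac{p}{2}\langle\lambda,\lambda+2\rho\rangle$ is the right computation; just be a little careful with the blanket claim $\langle\lambda,\mu\rangle\in\tfrac{1}{r^\vee}\Z$ for all $\lambda,\mu\in Q$, since what you actually need (and what holds) is the integrality of $h_\lambda$ and of the differences $h_\nu-h_\lambda-h_\mu$ on the root-lattice block once $r^\vee\mid p$.
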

By performing a quantum Hamiltonian reduction on the right side we obtain an algebra $\HA^{(p)}[\g,\kappa]$ extending $\rV^\kappa\otimes \rW^{{\kappa^*}}$ and performing a quantum Hamiltonian reduction on both sides we obtain an algebra $\HHA^{(p)}[\g,\kappa]$ extending $\rW^\kappa\otimes \rW^{{\kappa^*}}$. Note that reversing the roles of $\kappa,\kappa^*$ gives the fourth possibility extending $\rW^\kappa\otimes \rV^{{\kappa^*}}$. Note that \cite{Mor21} shows the existence of the vertex algebra from categorical considerations on the quantum group side, which is structurally interesting but gives not sufficient analytical properties to have a deformable family, on the other hand \cite{CN22} construct the deformable family on the vertex algebra side for the version with Hamiltonian reduction on one side.  \\

 In Section \ref{sec_limit} we take the limit $\kappa\to \infty$ of this form, which follows straightforward from the previous results and the previously discussed limit of $\bV_\lambda^\kappa$ resp. $\bW_\lambda^\kappa$ as bundles over the space of $\g$-connections resp. $\g$-opers. We would expect that these two vertex algebra bundles are related by a classical Hamiltonian reduction. The remarkable conjecture raised in \cite{CN22} is that the zero-fibre of $\HA^{(p)}[\g,\infty]$ is the Feigin-Tipunin algebra $\cW_p(\g)$. On the other hand the limit of $\A^{(p)}[\g,\kappa]$ and of the fourth possibility should have as zero-fibre an interesting vertex algebra, whose Hamiltonian reduction is $\cW_p(\g)$. In the example $(\sl_2,2)$ it is the small N=4 superconformal algebra \cite{CGL20}, see Section \ref{sec_explicit_p2_N4}.

In Section \ref{sec_coset} we turn to the case $p=1$ and $\g$ simply-laced. There we have an explicit realization  $\HA^{(1)}[\g,\kappa]=\rV^{\kappa-1}(\g) \times \rV^1(\g)$. The defining property is based on the diagonal embedding $\hat{\g}_\kappa  \hookrightarrow \hat{\g}_{\kappa-1}\times \hat{\g}_1$ and the GKO-type coset realization of $\rW^{{\kappa^*}}(\g)$ in \cite{ACL19}
$$\rV^\kappa(\g)\otimes \rW^{{\kappa^*}}(\g)  
 \;\hookrightarrow\; 
\rV^{\kappa-1}(\g) \times \rV^1(\g)$$
The limit $\kappa\to \infty$ of this algebra and its modules give  bundles over the space of \text{$\g$-connections} $\d+A(z)$ with $A(z)=\sum_{n\in\Z}A_nz^{-n-1}$. For the algebra itself, the bundle is over regular $\g$-connection, the zero-fibre is $\rV^1(\g)$, which is  the lattice vertex algebra of the root lattice, and all other fibres are $\rV^1(\g)$ as vector spaces with a deformed  Virasoro action
\begin{align*}
L_n^{\d+A}
&=L_n
+\sum_{n'+n''=n} (A_{n''})_{n'}
-\frac{1}{2}\sum_{n'+n''=n}  \langle A_{n'},A_{n''}\rangle
\end{align*}
where $(A_{n''})_{n'}$ denotes the mode operator $n'$ of the element $A_{n''}\in\g$. For arbitrary modules we encounter this deformed Virasoro for all types of singularity of $\d+A(z)$ at $z=0$: For regular connections, the additional terms are only a mild deformation by nilpotent degree-lowering elements. For regular singular connections $A(z)=A_0z^{-1}+\cdots$ we encounter degree-preserving elements, in particular we find $L_0$-Jordan blocks according to the additional summand $A_0$. For irregular singularities, the effect becomes much wilder and we refrain from discussing them in the present article, see however Question \ref{quest_irregular} and the explicit Examples \ref{ex_Affirregular} and in particular Example \ref{ex_SFirregular}. We expect our findings in the case $p=1$ to be related to \cite{ILT15, GIL20,GMS20}.

In Section \ref{sec_W} and \ref{sec_decomposition} we review the quantum Hamiltonian reduction of the coset model and review decomposition formulae for the restriction of certain modules to the subalgebra $\rV^\kappa(\g)\otimes \rW^{{\kappa^*}}(\g)$, essentially following \cite{ACF22}. 

In Section \ref{sec_bun_bV}, \ref{sec_bun_varphi} and \ref{sec_bun_bW} we discuss the limit and fibres of different types of modules and compare the results to the twisted modules. We also find limitations, since the integral form does not transport to the obvious integral form in the decomposition. Recall that for regular singular connections we have proven already established that the fibre is not a direct sum anymore.

In Section \ref{sec_explicit_p1} we use the explicit realization of $\V^\kappa=\HHA^{(1)}[\sl_2,\kappa]$ by joint work of the first author \cite{BFL16} as $\Vir^b\otimes \V_{\sqrt{2}\Z}^{Urod}$ with a modified Virasoro action called Urod algebra, and two explicit commuting Virasoro actions $L_n^{(b_1)},L_n^{(b_2)}$ giving
$$\Vir_{b_1}\otimes \Vir_{b_2}\longrightarrow \Vir^b\otimes \V_{\sqrt{2}\Z}^{Urod}$$
This explicit example $(\sl_2,1)$ was the initial motivation for the study of  $\HHA^{(1)}[\g,\kappa]$  in \cite{ACF22} and also for our present article. We find by explicit calculation:

\begin{corollary*}[\ref{cor_expl_p1}]
The limit of $\V^\kappa$  fibres over the space of regular Sturm-Liouville operators $\frac{\d^2}{\d z^2}+q(z)$ with $q(z)=\sum_{n\in\Z} \ell_nz^{-2-n}$. The fibre $\V^\infty|_{q(z)}$ is nonzero for ${\ell}_n=0,n\geq -1$ and equal to the lattice vertex algebra $\V_{\sqrt{2}\Z}$ with deformed Virasoro action 
\begin{align*}
L_n^{(b_1)}
&=\frac{1}{2\epsilon}f_{n+1} %z^{-1-(n+1)}
+L_n^{\V_{\sqrt{2}\Z}} %\frac{1}{2}(\partial\varphi)^2_n %{z^{-2-n}}
-{\ell}_n
-2\epsilon\sum_{k\in \Z}{\ell}_k e_{n-k-1}  %z^{-2-k-1-(n-k-1)}
\end{align*}
\end{corollary*}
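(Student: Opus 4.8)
The plan is to bypass the abstract decomposition of Theorem~\ref{thm_couple} and compute directly inside the explicit model of \cite{BFL16}, where $\V^\kappa=\HHA^{(1)}[\sl_2,\kappa]$ is realised as $\Vir^b\otimes\V_{\sqrt2\Z}^{Urod}$: the two commuting conformal vectors $L^{(b_1)}(z),L^{(b_2)}(z)$ of $\rW^\kappa(\sl_2)\otimes\rW^{\kappa^*}(\sl_2)=\Vir_{b_1}\otimes\Vir_{b_2}$ are written there as explicit normally ordered expressions in the standalone Virasoro field $T^b(z)$ of $\Vir^b$ and the currents $h(z)=\partial\phi(z)$, $e(z)=e^{\sqrt2\phi(z)}$, $f(z)=e^{-\sqrt2\phi(z)}$ of the lattice algebra $\V_{\sqrt2\Z}$, the latter carrying the Urod-modified conformal vector $T^{Urod}$. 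The parameters $b,b_1,b_2$ are fixed rational functions of $\kappa$ determined by the level relation $\tfrac1{\kappa+2}+\tfrac1{\kappa^*+2}=1$ of Theorem~\ref{thm_couple}. The first step is to re-expand everything in a deformation parameter $\epsilon$ (a suitable power of $(\kappa+2)^{-1}$, with $\epsilon\to0$ as $\kappa\to\infty$), chosen so that $\Vir_{b_1}$ keeps finite central charge $c=1$ while $\Vir_{b_2}$---and with it $\Vir^b$---acquires unbounded central charge; one then rescales all modes following the semiclassical recipe of Section~\ref{sec_LimitVir} so that the operator product expansions of \cite{BFL16} close over $\C[\epsilon^{\pm1}]$, the limit $\V^\infty$ being the associated graded. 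Checking that this rescaling yields a flat integral form is a finite verification against the explicit OPEs there.

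Granting this, the commutative vertex subalgebra produced in the limit by the degenerating Virasoro factor is, by the identification of Section~\ref{sec_LimitVir}, the ring of functions on Sturm--Liouville operators $\tfrac{\d^2}{\d z^2}+q(z)$, $q(z)=\sum_n\ell_n z^{-2-n}$, where $\ell_n$ is the limit of the rescaled mode of $\Vir_{b_2}$; this gives $\cZ(\V^\infty)\cong\C[\ell_n]$ and presents $\V^\infty$ as a bundle over this space in the sense of the Goals section. The surviving conformal vector is governed by the conformal embedding identity $L^{(b_1)}+L^{(b_2)}=T^b+T^{Urod}$, i.e.\ $L^{(b_1)}=T^b+T^{Urod}-L^{(b_2)}$. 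Since $h,e,f$ are $\kappa$-independent they survive unchanged, and $T^{Urod}$ tends to the standard conformal vector $L^{\V_{\sqrt2\Z}}$ of the lattice algebra (the Urod modification being carried off into the center); hence the zero fibre $\V^\infty/(\ell_n)$ is $\V_{\sqrt2\Z}$ with its standard $c=1$ conformal structure, matching Conjecture~\ref{conj_CN} and consistent with the limit of $\HHA^{(1)}[\sl_2,\kappa]=\bigoplus_\lambda\bW_\lambda^\kappa\otimes\bW_{\lambda^*}^{\kappa^*}$ discussed in Sections~\ref{sec_couple}, \ref{sec_limit} and~\ref{sec_bun_bW}.

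It then remains to read off the limit of $L^{(b_1)}=T^b+T^{Urod}-L^{(b_2)}$ on a fibre. The rescaled modes of $L^{(b_2)}$ tend to $\ell_n$; those of $T^{Urod}$ tend to $L^{\V_{\sqrt2\Z}}_n$; and the rescaled modes of $T^b$, which also lie in $\C[\ell_n]$, must be expanded in $\epsilon$: substituting the \cite{BFL16} expression and keeping terms through order $\epsilon$ produces the leading $\tfrac1{2\epsilon}f_{n+1}$ and the correction $-2\epsilon\sum_k\ell_k e_{n-k-1}$, all further terms being proportional to central generators $\ell_m$ with $m\geq-1$. Adding the three contributions gives the stated formula, and the same observation yields the vanishing statement: the conformal grading on the $\C[\epsilon^{\pm1}]$-form forces each $\ell_m$ with $m\geq-1$ to act by zero on a fibre containing a vacuum vector, so $\V^\infty|_{q(z)}=0$ unless $q$ is regular at $z=0$. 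The delicate part throughout is this expansion of $T^b$: because the lattice factor carries the Urod conformal vector rather than the standard one, $L^{(b_1)}$ is a not-obviously-structured normally ordered combination, and pinning down the index shift $e_{n-k-1}$ and the coefficient $-2\epsilon$ requires careful bookkeeping of normal ordering and of the $\epsilon$-dependence of the Urod modification; establishing flatness of the integral form and the vanishing claim rely on the same explicit computation and are best carried out together.
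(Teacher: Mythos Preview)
Your overall strategy---work inside the explicit BFL model and take the $b\to\infty$ limit---is exactly what the paper does, but two of your key intermediate claims are wrong and would not survive a careful computation.

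First, the parameter $\epsilon$ is \emph{not} the deformation parameter tending to zero. In \cite{BFL16} and here, $\epsilon$ is the fixed auxiliary parameter of the Urod construction (it interpolates between reductions on the two factors); the limit is $b\to\infty$ with $\epsilon$ held constant. The appearance of $1/(2\epsilon)$ and $2\epsilon$ in the answer is not a Laurent expansion in a small parameter but simply the surviving finite coefficients from the BFL formula for $T^{(b_1)}$. Second, $T^{Urod}$ is $\kappa$-independent, so it does not ``tend to'' $L^{\V_{\sqrt2\Z}}$; and $T^b$ is the pure Virasoro field of the $\Vir^b$ factor, containing no lattice currents whatsoever, so expanding it cannot produce $f_{n+1}$ or $e_{n-k-1}$. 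Those terms are already present, with their $b$-dependent coefficients, in the explicit nine-term expression for $T^{(b_1)}$ quoted from \cite{BFL16} just above the Corollary.

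The paper's argument is therefore more direct than the route you sketch via $L^{(b_1)}=T^b+T^{Urod}-L^{(b_2)}$: one simply takes the displayed formula for $T^{(b_1)}$ and sends $b\to\infty$ term by term. The coefficient $\frac{b+b^{-1}}{2(b-b^{-1})\epsilon}$ in front of $e^{-\sqrt2\phi}=f$ tends to $\frac{1}{2\epsilon}$; the coefficient $\frac{b}{2(b-b^{-1})}$ of $(\partial\phi)^2$ tends to $\frac12$, giving $L_n^{\V_{\sqrt2\Z}}$; the term $-\frac{b^{-1}}{b-b^{-1}}T^b$ contributes $-\ell_n$ since $T^b/b^2\to\ell$; the term $-\frac{2\epsilon}{b^2-b^{-2}}T^b e^{\sqrt2\phi}$ contributes $-2\epsilon\,\ell\cdot e$; and the remaining four terms vanish. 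Your identity $L^{(b_1)}=T^b+T^{Urod}-L^{(b_2)}$ is of course correct and would lead to the same answer, but only after expanding $L^{(b_2)}$ via \emph{its} explicit BFL formula and observing a cancellation between its subleading terms and the Urod modification---so it is no shortcut.
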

 In particular the zero-fibre $\V^\infty|_0$ is isomorphic to $\V_{\sqrt{2}\Z}$ deformed by a connection $\d+\frac{1}{2\epsilon}fz^0$, which is what (in hindsight) we expect from classical Hamiltonian reduction.\\

In Section \ref{sec_explicit_p2} we turn to the case $(\sl_2, 2)$, where again joint work of the first author \cite{BBFLT13} gives an explicit description of $\HHA^{(2)}[\sl_2,\kappa]$ in terms of the N=1 superconformal algebra $\sVir_{N=1}$ times a free fermion and two explicit Virasoro actions. Recently \cite{CGL20} have established a remarkable description of  $\A^{(2)}[\g,\kappa]$ as the N=4 superconformal algebra and of $\HA^{(2)}[\sl_2,\kappa]$ as the affine vertex algebra $\hat{\osp}(1|2)_\kappa$. We start by the description in the second source, which allows us to proceed more systematically and understand the appearance of the $\osp(2|1)$ symmetry and the influence of the quantum Hamiltonian reduction. Then we turn to the first source, where everything can be checked explicitly and match our findings.
We determine the fibres:

\begin{lemma*}[\ref{lm_deformedModeAlgebra}]
The limit of $\HA^{(2)}[\sl_2,\kappa]$ and its modules fibres over the space of $\sl_2$-connections, with fibre over a point $\d+\sum_{n\in\Z} A_{-n-1}z^n$  given by the deformed mode algebra 
\begin{align*}
  %[(x'/\sqrt{a})_{-n-1},  (x'/\sqrt{a})_{-m-1} 
  %&=-(e'/a)_{-(n+m+1)-1}\\
  \{\psi_m^{\d+A}, \psi_n^{\d+A}\}
  &=-\langle A_{n+m},f'\rangle\\
  \{\bpsi_m^{\d+A}, \bpsi_n^{\d+A}\}
  &=\hphantom{-}\langle A_{n+m},e'\rangle\\
  %[(x'/\sqrt{a})_{-n-1},(y'/\sqrt{a})_m] 
  %&=-\frac{1}{2}(-n-1)1_{-(n+m+2)-0}+\frac{1}{2} (h'/a)_{-(n+m+1)-1}(w)\\
  \{\psi_m^{\d+A},\bpsi_n^{\d+A}\}&=\frac{1}{2}\langle A_{n+m},h'\rangle+
  m \delta_{m,-n}
\end{align*}
\end{lemma*}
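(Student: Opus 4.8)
The plan is to start from the known explicit realization of $\HA^{(2)}[\sl_2,\kappa]$ as the affine vertex algebra $\hat{\osp}(1|2)_\kappa$, as recalled in Section~\ref{sec_explicit_p2}, and then apply to it the general computation of the semiclassical limit of affine Lie algebras together with the twisted-mode description. Recall that $\osp(1|2)$ has even part $\sl_2$ spanned by $e',h',f'$ and odd part spanned by two odd generators, whose images in the vertex algebra give the fields $\psi$ and $\bpsi$; the even-even bracket reproduces $\sl_2$, the odd-odd bracket closes on $\sl_2$ (this is the $\{\psi,\bpsi\}\sim h'$, $\{\psi,\psi\}\sim f'$, $\{\bpsi,\bpsi\}\sim e'$ structure visible in the claim), and the even-odd brackets act on the odd part in the fundamental representation. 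First I would write the affine mode relations for $\hat{\osp}(1|2)_\kappa$ with the invariant form normalized as in Section~\ref{sec_couple}, so that the central terms come with the factor $\kappa$ (or $\kappa$ shifted by the relevant dual Coxeter-type number, but for $\osp(1|2)$ the relevant normalization makes the odd-odd central term exactly $m\delta_{m,-n}$ after rescaling).

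Next I would carry out the $\kappa\to\infty$ rescaling exactly as in Section~\ref{sec_LimitAff}: one rescales the generators so that in the limit the even part $\hat{\sl_2}$ becomes commutative, i.e.\ the $A_n$ become central and assemble into the generating series $A(z)=\sum A_{-n-1}z^n$ of a $\g$-connection $\d+A(z)$, while the odd generators $\psi,\bpsi$ survive with a nondegenerate pairing. The point is that after this rescaling the odd-odd anticommutators, which before the limit read schematically $\{\psi_m,\psi_n\}=(\text{coefficient of }f'\text{ in the bracket})=-(f')_{m+n}+(\text{central})$, become, upon specializing the central $\sl_2$ to its value $A(z)$, exactly $\{\psi_m^{\d+A},\psi_n^{\d+A}\}=-\langle A_{n+m},f'\rangle$, and similarly $\{\bpsi_m^{\d+A},\bpsi_n^{\d+A}\}=\langle A_{n+m},e'\rangle$; the residual central term from the odd-odd bracket together with the $h'$-contribution gives $\{\psi_m^{\d+A},\bpsi_n^{\d+A}\}=\tfrac12\langle A_{n+m},h'\rangle+m\delta_{m,-n}$. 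Concretely this is the statement that the fibre of the limit over a point $\d+A$ of the space of $\sl_2$-connections is the associative (super)algebra presented by these generators and relations, acting on the corresponding fibre of the module bundle; this follows because taking the fibre amounts to quotienting by $(Z-Z(x))_{Z\in\cZ}$, i.e.\ setting the central modes $(x')_n$ equal to the scalars $\langle A_n,x'\rangle$ for $x'\in\{e',h',f'\}$.

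The main obstacle I expect is \emph{bookkeeping of normalizations}: getting the invariant bilinear form on $\osp(1|2)$, the rescaling powers of $\kappa$ (or $\kappa+h^\vee$ type shifts), and the identification of $\langle A_n,-\rangle$ against $e',h',f'$ all mutually consistent, so that the central term lands as precisely $m\delta_{m,-n}$ with no stray factor of $2$ and the signs $-\langle A,f'\rangle$ versus $+\langle A,e'\rangle$ come out as stated. A secondary point requiring care is checking that the integral ($\C[\kappa^{-1}]$-)form used in Section~\ref{sec_couple} for $\A^{(2)}[\sl_2,\kappa]$ actually restricts, under the isomorphism with $\hat{\osp}(1|2)_\kappa$, to the naive integral form of the affine vertex algebra for which the limit computation is valid; this is the analogue of the caveat flagged in Section~\ref{sec_bun_bV}, and one should verify it (or invoke the explicit check promised via the \cite{BBFLT13} realization in Section~\ref{sec_explicit_p2}) rather than assume it. Once normalizations are fixed, the rest is the routine specialization of the general affine-limit computation, and I would present it by: (i) recalling the $\hat{\osp}(1|2)_\kappa$ mode relations; (ii) performing the rescaling and taking $\kappa\to\infty$ to get the central $\hat{\sl_2}$ and the surviving odd generators; (iii) passing to the fibre over $\d+A$ and reading off the three displayed anticommutators; (iv) noting the module-bundle statement follows fibrewise by the same specialization.
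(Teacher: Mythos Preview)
Your proposal is correct and matches the paper's approach: the paper too starts from the identification $\HA^{(2)}[\sl_2,\kappa]\cong \rV^\kappa(\osp(1|2))$ from \cite{CGL20}, takes the integral form generated by $e'/\kappa,h'/\kappa,f'/\kappa$ and $x'/\sqrt{\kappa},y'/\sqrt{\kappa}$, reads off the rescaled OPEs directly from \cite{CGL20} p.~16, and then translates the singular parts to mode anticommutators after specializing the central $\sl_2$-modes to $\langle A_n,-\rangle$. The only presentational difference is that the paper works with the explicit OPEs rather than abstract affine mode relations, which sidesteps most of the normalization bookkeeping you flag; your caution about checking compatibility of integral forms is more careful than what the paper actually does at this point.
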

The Virasoro action on all fibres is given by the action of $\sum_{i+j} :\bpsi_i\psi_j:$ and we find in Lemma \ref{lm_p2_translationconnection} that $[L_{-1},-]=\frac{\d}{\d z}+A(z)$ if we interpret $A(z)$ as a $\sl_2$-valued field acting on the vector-valued field with components $\psi(z),\bpsi(z)$. For example we hence find that indeed the zero-fibre indeed with symplectic fermions and the fibre over $\d+A_0z^{-1}$ corresponding to an (infinitesimal) automorphism $A_0$ indeed matches the twisted modules that we constructed in the first part.

We then repeat our computation explicitly in the Hamiltonian reduction originally given in the first authors joint work \cite{BBFLT13} as the product of an N=1 superconformal algebra $\sVir_{N=1}$ generated by $L_n,G_r$ and a free fermion $\mathcal{F}$ generated by $f$, and with two explicit commuting Virasoro actions giving the embedding of 
$$\Vir_{b_1}\otimes \Vir_{b_2}\longrightarrow \sVir_{N=1}\otimes \mathcal{F}$$
As in the case $\sl_2,p=1$ the fibration in the limit looks quite natural, namely it corresponds in the limit to the Virasoro algebra contained in  $\sVir_{N=1}$, leaving the generators $G_{-3/2}/b$ and the free fermion. The zero-fibre is again a respective deformation by $\d+fz^0 $ of the algebra of symplectic fermions. Note that in comparison to \cite{CGL20} we work with the  Hamiltonian reduction by $e/\kappa-1$ instead of $e-1$, which we feel has better limit behavior.

 \begin{question}
 The big center construction returns more than vertex algebra modules twisted by an (infinitesimal) automorphism $A_0\in\g$, but rather vertex algebra modules twisted by a $\g$-connection $\d+A(z)$ with $A(z)=\sum_{n\in \Z} A_n z^{-1-n}$, where $\d+A_0z^{-1}$ corresponding to usual twisted modules. Such a theory of twisted modules and intertwiners is certainly to be expected, but to our knowledge not worked out. Abstractly, it should be related to \cite{FBZ04} Section 7 and 9.5 for $\hat{\g}_\kappa\to \V$ 
 %$$\frac{\d}{\d z}\Y(a,z)=[L_{-1}+A(z),\Y(a,z)],\quad \Y(a,z)_{\widetilde{(n)}}\Y(b,z)=\Y(a_{(n)}b,z)$$
 %or 
 %$$\frac{\d}{\d z}\Y(a,z)=\Y(L_{-1}-x_0z^{-1}a,z)] %from d/dz+d/d\zeta
%\Y(a,z)_{\widetilde{(n)}}\Y(b,z)=\Y(a_{(n)}b,z)$$
  The modules we obtain in Section \ref{sec_explicit_p2} and~\ref{sec_explicit_p1} as well as a generalized $\Delta$-deformation as in Section \ref{sec_twistedFreeField} should be examples for such a notion. 

 On the categorical side, it should be clarified what the corresponding notion of a $G$-crossed tensor category is, in particular for a quantum group. In the case of regular singularities, one might expect that the additional freedom (different connections with same monodromy) corresponds to higher categorical structure, and possibly connections to higher cohomology, see \cite{CDGG21,LMSS20}. .
 \end{question}

\begin{question}\label{quest_irregular}
It would be very interesting to continue our discussion to the  case of irregular singularities. In this case, the deformed Virasoro action in Theorem \ref{thm_coset} has degree-raising terms and the modules we would have to discuss must be of Whittaker-type to accommodate $L_n/c \neq 0, n>0$. Such modules involving irregular singularities for the Virasoro algebra have been studied in \cite{GT12}. A theory of wild character varieties has been developed in \cite{Boa14} and versions of the Knizhnik-Zamolodchikov equations with irregular singularities seem to appear in \cite{FMTV00}. Our construction with the big center produces explicit examples for such modules in our cases $(\g,p)$ in Example \ref{ex_Affirregular} using spectral flow, and in our case $(\sl_2,2)$ in Example~\ref{ex_SFirregular}, where we also compute the representations of an irregularly twisted mode algebra and briefly discuss associated differential equations.
\end{question}

\begin{question}
All our discussion was aimed at vertex algebras, which correspond to a punctured disc. It would be interesting to compute globally  $\d+A$-twisted conformal blocks on a surface $\Sigma$ with respect to global $\g$-connection on $\Sigma$. Note that for irregular singularities, one should encounter the Stokes phenomenon, see Example~\ref{ex_SFirregular}.
\end{question}

\begin{acknowledgement*}
We are indebted to T. Creutzig and T. Dimofte for many helpful discussions and suggestions on the topics in this articles and their work. Thanks also for comments of A. Brochier on irregular singularities, of C. Schweigert on infinitesimal deformations of module categories, of J. Yang on  twisted representations and of A. Gainutdinov for suggesting observing the nilpotently deformed modules are tilting modules. We are thankful to the Humboldt foundation for awarding BF the research prize for an extended stay at the University of Hamburg, and for supporting SL with  the Feyodor Lynen program for an extended research stay at the University of Alberta, and we thank both universities and in particular T. Gannon, T. Creutzig and C. Schweigert for their hospitality.
\end{acknowledgement*}

\section{Twisted modules}\label{sec_twisted}

A vertex operator algebra $\V$ is, very roughly, a infinite-dimensional graded vector space with a multiplication map 
$$\Y(-,z): \V\otimes \V\to \V[[z,z^{-1}]]$$
depending on a formal complex variable $z$ as a Laurent series, and which is endowed with an action of the Virasoro algebra spanned by $L_n$ for , which must be compatible with $\Y$ and the action of $L_n$ on the complex variable via $-z^{n+1}\frac{\partial}{\partial z}$. Standard mathematical textbooks on the topic are \cite{FBZ04,Kac98}, to which we refer the reader for background. A module over a vertex operator algebra is similarly a vector space $\cM$ with a map 
$$\Y_\cM(-,z): \V\otimes \cM\to \cM[[z,z^{-1}]]$$
The Virasoro algebra acts also on $\cM$ via $\Y_\cM(L,z)=\sum_n L_n^\cM z^{-2-n}$. Let now $G$ be a group acting on $\V$ as automorphisms. Then a $g$-twisted module $\cM$ of $\V$ for $g\in G$ consists of a map with a more general dependency on $z$ as a multivalued functions (in more physical language, it is a non-local field), and where this multivaluedness is controlled by the action of $g$, namely
\begin{align}\label{twisted_monodromy}
\Y_\cM(ga,z)=e^{2\pi\i\; z\frac{\partial}{\partial z}}\Y_\cM(a,z)
\end{align}
Note that $e^{2\pi\i\; z\frac{\partial}{\partial z}}$ acts as analytic continuation counterclockwise around $z=0$, in particular we find explicitly from the exponential series
\begin{align*}
    e^{2\pi\i\; z\frac{\partial}{\partial z}}\;z^m &=e^{2\pi\i} z^m\\e^{2\pi\i\; z\frac{\partial}{\partial z}}\;\log(z) &=\log(z)+2\pi\i
\end{align*}
and thus $\{1,\log(z)\}$ give a Jordan block. The previous definition is fairly standard when the automorphism acts semisimply, in which case $\Y_\cM(a,z)$ for $a$ a $g$-eigenvector with eigenvalue $\lambda$ is a series of fractional powers $z^{m}$ with $\lambda=e^{2\pi\i m}$. In the nonsemisimple case one allows $\Y_\cM(a,-)$ to have a general regular singularity at $z=0$, which means it is a power series in $z^m$ and polynomially in $\log(z)$. The respective theory has been rigorously developed by Huang \cite{Hu10}, similar to the tensor product theory in the nonsemisimple case using intertwining operators \cite{HLZ}. A somewhat different view by Bakalov \cite{Bak16} has the advantage of being quite explicit and handy, and several examples are discussed.\\

The main application of twisted modules is that they are ordinary modules over the vertex subalgebra $\V^G$ fixed by $G$, called the orbifold. The general principle, which can be proven under certain assumptions, is that the $\V^G$-modules all come from decomposing the restriction of $g$-twisted modules of $\V$ for all $g\in G$. \\

On the categorical side, if $\cC$ is the braided tensor category of representations of $\V$, then the category $\cC_g$ of $g$-twisted modules is a module category over $\cC$ and conjecturally $\bigoplus_{g\in G} \cC_g$ has a $G$-action and a $G$-crossed braiding. The $G$-equivariantization is then a braided tensor category graded by $G$-conjugacy classes and should describe the $\V^G$-modules. For finite groups $G$ it is shown in \cite{ENOM09} that the $G$-crossed extensions are unique up to a choice in  $\mathrm{H}^{3}(G,\mathbb{C}^\times)$. The respective theory for Lie groups is not fully available, to our knowledge, but the second author has recently studied the twisted modules of the orbifold of the 2-dimensional Heisenberg algebra under the action of $\mathrm{SO}_3$.\\ %\cite{GLM22}

We now discuss the description of twisted modules more explicitly, following essentially \cite{BS16}. Assume we can write $g=g_sg_n$ for an automorphism $g_s$ acting semisimply and an automorphism $g_n$ acting locally nilpotently. We find it convenient below to describe both in terms of derivations $s,x$:
$$g=e^{-2\pi\i (s+x)}$$
Then \cite{BS16} Section 5 derives from his characterization the $g$-twisted commutator formula
\[[a_{m}^g,b_{n}^g]=\sum_{j=0}^\infty (\binom{m+x}{j}a)_{j}b)_{m+n-j}^g\]
with $m\in r+\Z,n\in r+\Z$ for $s$-eigenvectors $a,b$ with eigenvalues $r,t$, according to \ref{twisted_monodromy}. Here, $\binom{m+x}{j}$ is to be understood as a polynomial in the endomorphism $x$ and the mode operators $a_{n}$ define the vertex algebra via $\Y(a,z)b=:\sum_{n\in\Z} (a_{n}b)z^{-n-1}$  and  the twisted mode operators $a_{m}^g$ define the vertex module via $\Y_\cM(a,z)=\sum_{m\in r+\Z} a_m^g z^{-1-m}$ plus higher terms containing $\log(x)$. Note that once these terms not containing logarithmic terms are known, the logarithmic terms follow with the action of $x$ from \ref{twisted_monodromy}. 

\subsection{\texorpdfstring{$G$}{G}-twisted modules of \texorpdfstring{$\hat{\g}_\kappa$}{affine g}}\label{sec_twistAffine}

The affine Lie algebra $\hat{\g}_\kappa$ at level $\kappa$ has defining relations 
$$[a_m, b_n]=[a,b]_{m+n}+m\delta_{m,-n}\langle a,b\rangle \kappa$$
with $a,b\in \g$ and with the Killing form $\langle a,b\rangle$ in the normalization usual in vertex algebra theory, which is $\frac{1}{2 h^vee}$ times the trace in the adjoint representation. Modules over the universal vertex algebra are given by (suitable) representations of this Lie algebra, where the vertex operator for $a\in\g$ is 
$$\Y_\cM(a,z)=\sum_{n\in \Z} a_n z^{-n-1}$$
By \cite{Bak16} Section 6 we consider $g=e^{-2\pi\i (s+x)}$. The $g$-twisted modules of the universal affine vertex algebra is given by representations of the $g$-twisted mode algebra $\hat{\g}_{\kappa,g}$ with relations
\begin{align}\label{twisted_commutator}
[a_m^g, b_n^g]=[a,b]_{m+n}^g+\delta_{m,-n}\langle (m+x)a,b\rangle\kappa
\end{align}
and $m,n$ in $\Z$-cosets of the $s$-eigenvalues $r_a,r_b$ for the $s$-eigenvectors $a,b$. The twisted Virasoro action is derived from the action of the familiar Sugawara element $L$ on a twisted module, and we can express the the product  by \cite{Bak16} Section 6.2  in a twisted normally ordered product
\begin{align*}
    L^{\hat{\g}_{\kappa,g}}_n
&=\frac{1}{2(\kappa+h^\vee)} \Y_\cM(v^{(i)}_{(-1)}w^{(i)})_{z^{-n-2}}\\
&=\frac{1}{2(\kappa+h^\vee)}\sum_{m,i} :v^{(i)g}_m w^{(i)g}_{n-m}:-[(s_0+x) v^{(i)},w^{(i)}]_n^g-\kappa\delta_{n,0} \tr\binom{s_0}{2}1
\end{align*}
where $v^{(i)},w^{(i)}$ is a dual pair of $s$-homogeneous bases, and 
where $s_0$ is defined by $s_0a=r_0a$ with the choice of representative  $\Re(r_0)\in(-1,0]$ in the coset $r+\Z$ of the $s$-eigenvalue $r$.

\begin{example}
 We assume for simplicity $s=0$ and compute the commutator of a twisted translation operator with $a_{n}^g$ for any $a\in \g$
\begin{align*}
[L^{\hat{\g}_{\kappa,g}}_{-1},a_{n}^g]
&=\frac{1}{2(\kappa+h^\vee)}\big(\sum_{m\geq 0, i} 
2v^{(i)g}_{-m-1} [w^{(i)g}_{m},a_n^g]
+2[v^{(i)g}_{-1-m},a^g_n] w^{(i)g}_{m}
-[[x.v^{(i)},w^{(i)}]_{-1}^g,a_n^g]\big)\\
&=\frac{1}{2(\kappa+h^\vee)}\big(\sum_{m\geq 0, i} 
2v^{(i)g}_{-m-1} [w^{(i)},a]^g_{m+n}
+2[v^{(i)},a]^g_{-1-m+n}w^{(i)g}_{m}\big.\\
&\big.+2v^{(i)g}_{n-1} \langle (-n+x) w^{(i)},a\rangle\kappa
-[[x.v^{(i)},w^{(i)}],a]^g_{n-1}
-\delta_{-1,-n}(-1)\langle[x.v^{(i)},w^{(i)}],a\rangle\kappa\big)
\end{align*}
Here, the term $2v^{(i)g}_{n-1}\langle  (-n+x)w^{(i)},a\rangle\kappa$ comes from the first or second commutator respectively for $n$ positive or negative, and by the invariance of the form and then the defining property of the dual basis it evaluates to $2((-n-x)a)_{n-1}\kappa$. Assume  for simplicity $n<0$, then by invariance all terms $2v^{(i)g}_{-m-1} [w^{(i)},a]^g_{m+n}$ and $2[v^{(i)},a]^g_{-1-m'+n}w^{(i)g}_{m'}$ for~$m'=m-(-n)$ cancel, leaving only the first term for $m=0,\ldots (-n)-1$. By invariance and symmetry this can be rewritten as $(-n)/2$ identical commutators
\begin{align*}
%[L^{\hat{\g}_{\kappa,g}}_{-1},a_{n}]
\hspace{1.7cm}
&=\frac{1}{2(\kappa+h^\vee)}\big(\sum_{i} 
(-n)[v^{(i)}, [w^{(i)},a]]^g_{n-1}
+2((-n-x)a)_{n-1}\kappa
-[[x.v^{(i)},w^{(i)}],a]^g_{n-1}\big)
\end{align*}
The expression $[v^{(i)}, [w^{(i)},a]]$ is the action of the quadratic Casimir operator in the adjoint representation, which is the identity in the standard normalization and $2h^\vee$ in our normalization. Altogether 
\begin{align*}
[L^{\hat{\g}_{\kappa,g}}_{-1},a_{n}^g]
&=(-n)a_{n-1}^g-(x.a)_{n-1}^g
\end{align*}
We can interpret this result in the language of fields to make the connection visible
$$[L^{\hat{\g}_{\kappa,g}}_{-1},a(z)]=\left(\frac{\d}{\d z}-A(z)\right)a(z)$$
where $a(z)=\sum_n a_nz^{-n-1}$ and in our case $A(z)=xz^{-1}$. For an  arbitrary connection $\d+A(z)$ we recover in section \ref{sec_coset} a similar formula.
\end{example}
The representations of the twisted mode algebra can be induced as modules from the twisted horizontal Lie algebra, spanned by $a=a_0,\;a\in\g$ with relations 
$$[a^g,b^g]=[a,b]^g+\langle x.a,b\rangle$$
We are maybe a bit surprised at first to find that this has the same representation theory as $\g$. We now make explicit in our case the well-known statement that inner automorphisms do not change the category of representations (compare in the logarithmic case  \cite{BS16} Proposition 6.2 and \cite{Yang17}) and compute how the Virasoro action is modified by this:

\begin{lemma}\label{lm_twistedrepaffine}
Consider the inner automorphism $g=e^{-2\pi\i y}$ of $\g$ with $y=s+x$ for $s$ semisimple and $x$ nilpotent in $\g$.

\begin{enumerate}[a)]
\item We have a Lie algebra isomorphism  $f:\hat{\g}_{\kappa,g} \cong \hat{\g}_{\kappa}$ given explicitly by 
$$f:a_n^g\mapsto a_{n-r}-\eta(a)\delta_{n-r,0}$$ 
for any eigenvector $\ad_s(a)=r a$, with the $1$-cochain $\eta(a)=\langle y,a\rangle K$.
\item The image of the twisted Virasoro action is as follows 
$$f(L_n^{\hat{\g}_{\kappa,g}})=L_n^{\hat{\g}_\kappa}
-\frac{2\kappa}{2(\kappa+h^\vee)}y_n+\delta_{n,0}\frac{\kappa^2}{2(\kappa+h^\vee)}\langle y,y\rangle$$
\end{enumerate}
\end{lemma}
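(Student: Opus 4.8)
The plan is to verify part (a) directly and then obtain part (b) as a formal consequence, by pushing the Sugawara expression through the isomorphism $f$.

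\emph{Part (a).} First I would check that the prescribed map $f\colon a_n^g\mapsto a_{n-r}-\eta(a)\delta_{n-r,0}$ is a Lie algebra homomorphism by comparing both sides of the twisted commutator relation \eqref{twisted_commutator}. On the left, for $s$-eigenvectors $a,b$ with eigenvalues $r_a,r_b$, one has $f([a_m^g,b_n^g]) = f([a,b]_{m+n}^g + \delta_{m,-n}\langle(m+x)a,b\rangle\kappa)$; note $[a,b]$ is an $s$-eigenvector with eigenvalue $r_a+r_b$, so the index shift for that term is by $r_a+r_b$ and matches the shifts $m-r_a$, $n-r_b$ on the right. Computing $[f(a_m^g),f(b_n^g)] = [a_{m-r_a}-\eta(a)\delta_{m,r_a},\, b_{n-r_b}-\eta(b)\delta_{n,r_b}]$ in $\hat\g_\kappa$ produces $[a,b]_{m+n-r_a-r_b} + (m-r_a)\delta_{m+n,r_a+r_b}\langle a,b\rangle\kappa$ (the $\eta$-terms being central scalars drop out of the bracket). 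Matching the central terms then amounts to the scalar identity $(m-r_a)\langle a,b\rangle = \langle(m+x)a,b\rangle - \langle(\text{correction})\rangle$; here one uses that $\langle x a,b\rangle$ is antisymmetric in $a,b$ (since $x$ is a derivation of the form and $x$ is nilpotent, hence traceless, and more precisely $\langle xa,b\rangle=-\langle a,xb\rangle$), and that $r_a = \langle s+x,\,[\text{coroot data}]\rangle$ encodes the $s$-eigenvalue — so that $-\eta([a,b]) + \text{contributions from }\eta(a),\eta(b)$ assemble to exactly the discrepancy $\langle xa,b\rangle - r_a\langle a,b\rangle + \delta$. This is the one genuinely computational point, and it is where the specific choice $\eta(a)=\langle s+x,a\rangle\kappa$ is forced. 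Bijectivity of $f$ is then clear, since it is a shift of indices composed with a central translation, both invertible.

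\emph{Part (b).} With $f$ in hand, I would apply $f$ to the twisted Sugawara formula
$$L^{\hat\g_{\kappa,g}}_n = \frac{1}{2(\kappa+h^\vee)}\sum_{m,i} :v^{(i)g}_m w^{(i)g}_{n-m}: - [(s_0+x)v^{(i)},w^{(i)}]^g_n - \kappa\delta_{n,0}\tr\tbinom{s_0}{2}1$$
recalled in the excerpt (taking $v^{(i)},w^{(i)}$ an $s$-homogeneous dual basis). Applying $f$ termwise: the normally ordered product $\sum :v^{(i)g}_m w^{(i)g}_{n-m}:$ maps to $\sum :v^{(i)}_{m-r_i} w^{(i)}_{n-m+r_i}:$ plus cross terms coming from the $\eta$-corrections $-\eta(v^{(i)})$ and $-\eta(w^{(i)})$; reindexing $m'=m-r_i$ turns the leading piece into $\sum :v^{(i)}_{m'} w^{(i)}_{n-m'}:$, which is $2(\kappa+h^\vee)L_n^{\hat\g_\kappa}$, while the $\eta$-cross terms collect (using $\sum_i \langle s+x,v^{(i)}\rangle w^{(i)} = s+x$ by the dual-basis/invariance identity) into a term proportional to $(s+x)_n$ and a term $\propto \delta_{n,0}\langle s+x,s+x\rangle$. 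The bracket term $[(s_0+x)v^{(i)},w^{(i)}]^g_n$ maps similarly; one again uses $\sum_i [(s_0+x)v^{(i)},w^{(i)}] $ together with invariance to rewrite it, and the $s_0$-versus-$s$ bookkeeping (the representative $\Re(r_0)\in(-1,0]$) combines with the scalar $\kappa\delta_{n,0}\tr\tbinom{s_0}{2}$ so that all the purely-$s_0$ artifacts cancel, leaving only the intrinsic data $s,x$. Collecting coefficients, dividing by $2(\kappa+h^\vee)$, and simplifying $\langle s+x,s+x\rangle = \langle s,s\rangle$ (since $\langle s,x\rangle=0$ by orthogonality of semisimple and nilpotent parts under the invariant form, and $\langle x,x\rangle=0$ as $x$ is nilpotent) yields exactly
$$f(L_n^{\hat\g_{\kappa,g}}) = L_n^{\hat\g_\kappa} - \frac{2\kappa}{2(\kappa+h^\vee)}(s+x)_n + \delta_{n,0}\frac{\kappa^2}{2(\kappa+h^\vee)}\langle s,s\rangle.$$

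\emph{Main obstacle.} The routine-but-delicate part is the bookkeeping in (b): tracking how the $\eta$-shifts in $f$ interact with normal ordering (whether a cross term is $:v_{m}w_{n-m}:$ or needs reordering contributes a further central piece), and checking that the $s_0$-dependent corrections — the representative choice, the $\tr\binom{s_0}{2}$ term, and the $[s_0 v^{(i)},w^{(i)}]$ contribution — conspire to cancel, so that the final answer depends only on $s$ and $x$ and not on the chosen coset representatives. I expect no conceptual difficulty, only careful accounting; the scalar identity forcing $\eta(a)=\langle s+x,a\rangle\kappa$ in part (a) is the single place where something must be gotten exactly right.
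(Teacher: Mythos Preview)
Your proposal is correct and follows essentially the same route as the paper: a direct bracket comparison for (a), then pushing the twisted Sugawara expression through $f$ for (b), using the dual-basis identity $\sum_i\langle s+x,v^{(i)}\rangle w^{(i)}=s+x$ and $\langle s+x,s+x\rangle=\langle s,s\rangle$. The paper is in fact just as brief on the point you flag as the ``main obstacle'' --- it simply asserts that the $[(s_0+x)v^{(i)},w^{(i)}]^g_n$ and $\tr\binom{s_0}{2}$ corrections are reabsorbed into the untwisted $L_n^{\hat\g_\kappa}$ --- so your caution there is well placed but not a gap; one small clarification: in part (a) your aside about ``$r_a=\langle s+x,[\text{coroot data}]\rangle$'' is not the mechanism --- the match comes directly from $\eta([a,b])=\langle s+x,[a,b]\rangle\kappa=\langle[s+x,a],b\rangle\kappa$ together with $\langle a,b\rangle=0$ unless $r_a+r_b=0$, exactly as the paper writes it.
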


\begin{proof} 
a) We check that we have an isomorphism
	\begin{align*}
	&[f(a_m^g),f(b_n^g)]\\
	&=[a_{m-r}-\eta(a)\delta_{m-r_a,0},b_{n-t}-\eta(b)\delta_{n-r_b,0})]\\
	&=[a_{m-r_a},b_{n-r_b}]\\
	&=[a,b]_{m+n-(r_a+r_b)}+\delta_{m-r_a,-(n-r_b)}\langle(m-r_a) a,b\rangle\kappa\\
	&=[a,b]_{m+n-(r_a+r_b)}+\delta_{m-r_a,-(n-r_b)}( m\langle a,b\rangle\kappa-\langle[s,a],b\rangle\kappa)\\
	&f\left([a,b]_{n+m}^g+\delta_{m,-n}\langle(m+x) a,b\rangle \kappa\right)\\
	&=[a,b]_{m+n-(r+t)}-\eta([a,b])\delta_{m+n-(r_a+r_b)}
	+\delta_{m,-n}(m\langle a,b\rangle \kappa+\langle[x,a],b\rangle \kappa)\\
	\end{align*}
	The first summands agree, the summands involving $\langle a,b\rangle\kappa$ agree because the Killing forms vanishes unless $r_a+r_b=0$, and the remaining terms agree with the definition $\eta([a,b])=\langle s+x,[a,b]\rangle\kappa=\langle[s+x,a],b\rangle\kappa$.\\

b) This isomorphism $f:\hat{\g}_{\kappa,g}\to \hat{\g}_\kappa$ does not preserve the Sugawara vector giving the Virasoro action. We now compute the new Virasoro action on $\hat{\g}$:\\

 We take a dual pair of $s$-homogeneous bases $v^{(i)},w^{(i)}$ and consider the action of the Sugawara element $L$ on the twisted module, which \cite{BS16} Section 6.2 can be expressed using the twisted normally ordered product
\begin{align*}
    L^{\hat{\g}_{\kappa,g}}_n
&=\frac{1}{2(\kappa+h^\vee)} \Y_\cM(v^{(i)}_{(-1)}w^{(i)})_{z^{-n-2}}\\
&=\frac{1}{2(\kappa+h^\vee)}\sum_{m,i} :v^{(i)g}_m w^{(i)g}_{n-m}:-[(s_0+x) v^{(i)},w^{(i)}]_n^g-\kappa\delta_{n,0} \tr\binom{s_0}{2}1
\end{align*}
Transporting this formula to $\hat{\g}$ using $f$ we get 
\begin{align*}
 f(L_n^{\hat{\g}_{\kappa,g}})
&=L_n^{\hat{\g}_\kappa}
-\frac{2\kappa}{2(\kappa+h^\vee)} \eta(v^{(i)})w^{(i)}_n 
%m-r=0, dualbasis r+t=0, then n-m-t=n 
+\delta_{n,0}\frac{\kappa^2}{2(\kappa+h^\vee)}\eta(v^{(i)})\eta(w^{(i)})
%m-r=0, n-m-t=0, r+t=0
\end{align*}
where $f$ has no effect on the additional terms, so the additional terms can be reabsorbed into the definition of $L_n^{\hat{\g}_\kappa}$. Now: For $\eta(a)=\langle s+x,a\rangle$ we have by the defining property of the dual basis of the Killing form $\eta(v^{(i)})w^{(i)}_n=(s+x)_n$ and $\eta(v^{(i)})\eta(w^{(i)})=\eta(s+x)=\langle s+x,s+x\rangle=\langle s,s\rangle$, since $x$ acts nilpotently. This concludes the proof of the second assertion. 

\begin{comment}
We tried to explicitly write down mode operators, but the twisted normally ordered product makes this more complicated than the following sketch
\begin{align*}
2(\kappa+h^\vee)\; f(L_n^{\hat{\g}_g})
&=2(\kappa+h^\vee)\; L_n^{\hat{\g}}
-2\kappa \eta(v^{(i)})w^{(i)}_n 
%m-r=0, dualbasis r+t=0, then n-m-t=n 
+\delta_{n,0}\kappa^2\eta(v^{(i)})\eta(w^{(i)})\\
%m-r=0, n-m-t=0, r+t=0
&-[(s_0+x) v^{(i)},w^{(i)}]_n^g
+\delta_{n,0}\kappa\eta([(s_0+x) v^{(i)},w^{(i)}])\\
&-\delta_{n,0}\kappa\tr\binom{s_0}{2}\\
\end{align*}
For $\eta(a)=\langle s+x,a\rangle$ we have by the defining property of the dual basis of the Killing form $\eta(v^{(i)})w^{(i)}_n=(s+x)_n$ and $\eta(v^{(i)})\eta(w^{(i)})=\eta(s+x)=\langle s+x,s+x\rangle=\langle s,s\rangle$. Moreover we have $[[s+x,v^{(i)}],w^{(i)}]=(s+x)$, as can be seen from evaluating for arbitrary $a$ the Killing form $\langle a,[[s+x,v^{(i)}],w^{(i)}] \rangle=\langle [w^{(i)},a],[s+x,v^{(i)}] \rangle=\langle w^{(i)},[a,[s+x,v^{(i)}]] \rangle=\tr([a,[s+x,-]])=\langle a,s+x\rangle$. Hence altogether, with the additional assumption that $s=s_0$ and using that  $\tr(s)=0$ by symmetry, we have the asserted result.
\end{comment}
\end{proof}

\begin{comment}
\begin{example}
	We now discuss the case $\mathfrak{sl}_2$, which has basis $e,h,f$ and Killing form 
	$$(X,Y)=4\mathrm{Tr}(XY)=4\begin{pmatrix} 0 & 0 & 1 \\ 0 & 2 & 0 \\ 1 & 0 & 0\end{pmatrix}$$
	so $v^i\otimes v_i=\frac{1}{4}(e\otimes f+\frac{1}{2}h\otimes h+f\otimes e)$ and $tr(sh(sh-1))=\frac{2s(2s-1)}{2}+0+\frac{-2s(-2s-1)}{2}=4s^2$. Hence for $x=e$ we have $\V_{A_1}$ with 
	$$L_n^{\varphi} =L_n+t\cdot e_n$$
	
	Indeed we check that this expression defines a deformed Virasoro action for every field with self-commuting modes $e_n$: 
	We check relations for $L_n^{\varphi} =L_n+t\cdot e_{n+r}$
	\begin{align*}
	&[L_m^ \varphi,L_n^\varphi]-[L_m,L_n]\\
	&=t [L_m,e_{n+r}]-t [L_n,e_{m+r}]+t^2 [e_m,e_n] \\
	(selfcommuting)
	&=t [L_m,e_{n+r}]-t [L_n,e_{m+r}] \\
	% -q=-2-m, -p=-1-n-r
	&=t\sum_{l<0}{-(-2-m)-1\choose -l-1}Y(Y(T)_l e)_{(-2-m)+(-1-n-r)-l}\\
	&-t\sum_{l<0}{-(-2-n)-1\choose -l-1}Y(Y(T)_l e)_{(-2-n)+(-1-m-r)-l}\\
	(primary)
	&=t\left({-(-2-m)-1\choose 0}-{-(-2-n)-1\choose 0}\right) 
	Y(L_{-1}e)_{-2-m-n-r}\\ %(-1-m-n-r)Y(e)_{-1-m-n-r}
	&+t\left({-(-2-m)-1\choose 1}-{-(-2-n)-1\choose -1}\right) 
	Y(L_{0}e)_{-1-m-n-r}\\
	(weight\;one)&=t((-1-m-n-r)+(m+n+2))Y(e)_{-1-m-n-r}\\
	&=t(1-r)Y(e)_{-1-m-n-r}\\
	\end{align*} 
	For $r=0$ this is equal to $L_{m+n}^\varphi-L_{m+n}$.
\end{example}
\end{comment}

\newcommand{\cone}[4]{
	\draw[fill=black] (#1,#2) circle[radius=0.02cm];
	\draw (#1,#2)--(#1+0.3*#3,#2+#3);
	\draw (#1,#2)--(#1-0.3*#3,#2+#3);
	\draw (#1,#2) node[anchor=west]{\small #4};
}
\newcommand{\ncone}[4]{
	\draw[fill=black] (#1,#2) circle[radius=0.02cm];
	%\draw (#1,#2)--(#1+0.3*#3,#2+#3);
	%\draw (#1,#2)--(#1-0.3*#3,#2+#3);
	\draw (#1,#2) node[anchor=west]{\small #4};
}
\newcommand{\graycone}[4]{
	\draw[fill=black] (#1,#2) circle[radius=0.02cm];
	\draw[gray] (#1,#2)--(#1+0.3*#3,#2+#3);
	\draw[gray] (#1,#2)--(#1-0.3*#3,#2+#3);
	\draw (#1,#2) node[anchor=west]{\small #4};
}

\newcommand{\startendoff}{0.06}
\newcommand{\arrL}[6] {
	\draw [opacity=0.3, -stealth](#1-\startendoff*#1+\startendoff*#3,#2-\startendoff*#2+\startendoff*#4) -- (#3-\startendoff*#3+\startendoff*#1,#4-\startendoff*#4+\startendoff*#2);
	\draw (#6*#1+#3-#6*#3,#6*#2+#4-#6*#4) node[anchor=mid]{\tiny #5};
}
\newcommand{\arrLg}[6] {
	\draw [opacity=0.3,darkgreen, -stealth](#1-\startendoff*#1+\startendoff*#3,#2-\startendoff*#2+\startendoff*#4) -- (#3-\startendoff*#3+\startendoff*#1,#4-\startendoff*#4+\startendoff*#2);
	\draw (#6*#1+#3-#6*#3,#6*#2+#4-#6*#4) node[anchor=mid]{\tiny #5};
}
%Not used
\newcommand{\arrLb}[6] {
	\draw [opacity=0.3,blue, -stealth](#1-\startendoff*#1+\startendoff*#3,#2-\startendoff*#2+\startendoff*#4) -- (#3-\startendoff*#3+\startendoff*#1,#4-\startendoff*#4+\startendoff*#2);
	\draw (#6*#1+#3-#6*#3,#6*#2+#4-#6*#4) node[anchor=mid]{\tiny #5};
}
\newcommand{\arrLgd}[6] {
	\draw [opacity=0.6,green,dotted, -stealth](#1-\startendoff*#1+\startendoff*#3,#2-\startendoff*#2+\startendoff*#4) -- (#3-\startendoff*#3+\startendoff*#1,#4-\startendoff*#4+\startendoff*#2);
	\draw (#6*#1+#3-#6*#3,#6*#2+#4-#6*#4) node[anchor=mid]{\tiny #5};
}

\subsection{\texorpdfstring{$\SL_2$}{SL2}-twisted modules of the triplet algebra}\label{sec_twistedTriplet}

We now turn to the triplet algebra $\cW_{\sl_2,p}$ which is a $C_2$-cofinite vertex algebra with non-semisimple category of modules, which is as a braided tensor category equivalent to the category of modules over a certain quasi-variant $\tilde{u}_q(\sl_2)$ of the small quantum group $u_q(\sl_2)$ \cite{FGST06,CGR20,GLO18}. It has a conjectural action of $\SL_2$ in terms of long screening and divided power of short screening. The triple algebra is a vertex subalgebra of the lattice vertex algebra $\V_{\sqrt{2p}\Z}$ as kernel of a short screening operator. \\

For $p=2$ the triplet algebra is isomorphic to the even part of the vertex superalgebra of symplectic fermions, in this case there is an evident action of $\SL_2=\mathrm{Sp}_2$. \cite{Kau00, GR17}. The symplectic fermions are defined by generating fields  $\psi(z)=\sum_{n\in \Z} \psi_n z^{-1-n}$ and $\bpsi(z)=\sum_{n\in \Z} \bpsi_n z^{-1-n}$ with mode operators fulfilling 
$$\{ \psi_{m}, \bpsi_{n}\}=\delta_{m,-n} m$$
with $\{-,-\}$ denoting the anticommutator. The twisted mode algebra for a nilpotent automorphism is discussed in \cite{BS16} Section 3 and 4, and for general $g$ one can proceed similarly: Up to automorphisms resp. conjugation in $\SL_2$  we have $g=e^{-2\pi\i (s+x)}$ for 
$$s=\begin{pmatrix} r & 0 \\ 0 & -r \\\end{pmatrix},\qquad 
x=\begin{pmatrix} 0 & 0 \\ 2t & 0 \\\end{pmatrix},\qquad 
$$
in the basis $\psi,\bpsi$. From the twisted commutator formula (\ref{twisted_commutator}) we get 
\begin{corollary}\label{cor_twistedSF}
The algebra of twisted mode operators is generated by $\psi_{r+n},\bpsi_{-r+n}$ for $n\in \Z$ with the following relations
\begin{align*}
\{ \psi_{r+m}^g, \psi_{r+n}^g\}&= 2t\delta_{m,-n} \\
\{ \psi_{r+m}^g, \bpsi_{-r+n}^g\}&= (r+m)\delta_{m,-n} \\
\{ \bar{\psi}_{-r+m}^g, \bpsi_{-r+n}^g\}&=0 
\end{align*} 
The action of the Virasoro algebra is by \cite{BS16} Proposition 4.1 with shifted indices
\CommentsForMe{More...Shifted indices make exactly binomial term?}
\begin{align*}
L_n
&=\begin{cases}
\sum_{j>0} \bpsi_{-j}^g\psi_{j}^g
+\frac{1}{2}(\bpsi_{0}^g\psi_{0}^g-\psi_{0}^g\bpsi_{0}^g)
-\sum_{i>0} \psi_{-i}^g\bpsi_{i}^g +\binom{s_0}{2},\quad &\text{ for }n=0\\
\sum_{i+j=n} \bpsi_{i}^g\psi_{j}^g&\text{ for }n\neq 0\\
\end{cases}
\end{align*}
\end{corollary}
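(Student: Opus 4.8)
The plan is to derive both relations directly from the $g$-twisted commutator formula \ref{twisted_commutator} (in its super-version for fermions, i.e. with anticommutators and appropriate signs), together with the conjugation data for $g$ fixed above. First I would set up the two $s$-eigenvectors: $\psi$ has $\ad_s$-eigenvalue $r$ (so twisted modes live in $r+\Z$) and $\bpsi$ has eigenvalue $-r$ (so modes live in $-r+\Z$), matching the index sets in the statement. The nilpotent part acts by $x.\psi = 0$ and $x.\bpsi = 2t\,\psi$ (reading off the matrix $x=\begin{psmallmatrix}0&0\\2t&0\end{psmallmatrix}$ in the basis $\psi,\bpsi$), and the bilinear pairing on the symplectic fermion pair is the symplectic form with $\langle\psi,\bpsi\rangle = 1$, $\langle\psi,\psi\rangle=\langle\bpsi,\bpsi\rangle=0$, playing the role of $\kappa\langle-,-\rangle$ in \ref{twisted_commutator} with the level normalized so that $\{\psi_m,\bpsi_n\}=m\delta_{m,-n}$.

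Next I would plug each of the three pairs into the twisted (anti)commutator formula. For $\{\psi_m^g,\psi_n^g\}$: the "bracket" term $[\psi,\psi]$ vanishes (abelian) and the central term is $\delta_{m,-n}\langle (m+x)\psi,\psi\rangle = \delta_{m,-n}\bigl(m\langle\psi,\psi\rangle + \langle x.\psi,\psi\rangle\bigr)$; since $\langle\psi,\psi\rangle=0$ and $x.\psi=0$ this is naively $0$ — so the $2t$ must come from moving $x$ onto the \emph{other} slot via invariance of the form, i.e. $\langle x.\psi,\psi\rangle$ should be symmetrized/read as coming from $\langle\psi, x.\psi\rangle$-type contributions; more carefully, the correct reading of Bakalov's formula gives $\langle\psi, x.\psi\rangle$ replaced using $x.\bpsi = 2t\psi$ in the dual pairing, producing exactly $2t\,\delta_{m,-n}$. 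For $\{\psi_m^g,\bpsi_n^g\}$: again $[\psi,\bpsi]=0$ in the abelian fermion, and the central term is $\delta_{m,-n}\langle(m+x)\psi,\bpsi\rangle = \delta_{m,-n}(m\cdot 1 + \langle x.\psi,\bpsi\rangle) = m\,\delta_{m,-n}$ since $x.\psi=0$. For $\{\bpsi_m^g,\bpsi_n^g\}$: central term $\delta_{m,-n}\langle(m+x)\bpsi,\bpsi\rangle = \delta_{m,-n}(m\cdot 0 + \langle 2t\psi,\bpsi\rangle) = 2t\,m\,\delta_{m,-n}$ — which naively is nonzero, so to land on the stated $\{\bpsi_m^g,\bpsi_n^g\}=0$ one must use that for $\bpsi$ both with eigenvalue $-r$ the relevant modes satisfy $m+n\in\Z$ but $m\not\in\Z$ in the generic case, OR the $x$-contribution is cancelled against a compensating term; I would track this against \cite{BS16} Sections 3--4 where exactly this computation is done for the purely nilpotent case ($r=0$) and check the generic $s\neq 0$ case is a harmless reindexing.

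For the Virasoro statement I would start from the Sugawara-type / free-field expression for $L_n$ of symplectic fermions, $L_n = \sum_{i+j=n}{:}\bpsi_i\psi_j{:}$ with the standard normal ordering and the $L_0$ correction, then invoke \cite{BS16} Proposition 4.1, which gives the twisted stress tensor with a shift of indices governed by the binomial term $\binom{s_0}{2}$ and the representative $s_0$ of $r$ with $\Re(r_0)\in(-1,0]$; the claim is just this proposition transported through the identification of $\psi,\bpsi$ with the eigenbasis above, so the content is checking that the ``shifted indices'' in the corollary match Bakalov's indexing convention and that the $n=0$ piece reproduces the symmetrized zero-mode term $\tfrac12(\bpsi_0^g\psi_0^g - \psi_0^g\bpsi_0^g)$ plus $\binom{s_0}{2}$. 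The main obstacle I anticipate is bookkeeping of signs and of which slot $x$ acts on in the fermionic twisted commutator formula: getting $2t$ (not $0$, not $2tm$) in the $\psi\psi$-bracket and simultaneously $0$ in the $\bpsi\bpsi$-bracket is a delicate consistency check, and it is essentially forced by requiring that the resulting algebra still carries the Virasoro action of Proposition 4.1 — so I would use that compatibility as the sanity check that pins down the normalization, rather than trusting a sign-by-sign expansion.
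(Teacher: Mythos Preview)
Your overall approach is the paper's approach: plug into the twisted (anti)commutator formula and cite \cite{BS16} for the Virasoro action. The problem is that you have the action of $x$ backwards, and every difficulty you encounter afterwards is an artifact of that error.

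The matrix $x=\begin{psmallmatrix}0&0\\2t&0\end{psmallmatrix}$ is written in the basis $(\psi,\bpsi)$, so it sends the column vector $\begin{psmallmatrix}1\\0\end{psmallmatrix}$ to $\begin{psmallmatrix}0\\2t\end{psmallmatrix}$, i.e.\ $x.\psi=2t\,\bpsi$ and $x.\bpsi=0$, not the other way around. (The paper confirms this convention later when it writes ``for a derivation $x\psi=2t\bpsi$''.) With the correct action the three brackets fall out immediately from the formula $\{a_m^g,b_n^g\}=\sum_{j\ge0}\bigl((\binom{m+x}{j}a)_jb\bigr)_{m+n-j}^g$, using only that $\psi_1\bpsi=\bpsi_1\psi=\mathbf{1}$ and all other nonnegative products among $\psi,\bpsi$ vanish:
\[
\{\psi_m^g,\psi_n^g\}=2t\,(\bpsi_1\psi)\,\delta_{m,-n}=2t\,\delta_{m,-n},\quad
\{\psi_m^g,\bpsi_n^g\}=m\,(\psi_1\bpsi)\,\delta_{m,-n}=m\,\delta_{m,-n},\quad
\{\bpsi_m^g,\bpsi_n^g\}=0,
\]
the last because $x.\bpsi=0$ and $\bpsi_1\bpsi=0$. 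There is no need to ``move $x$ onto the other slot by invariance'' for the $\psi\psi$-bracket, and there is no mysterious cancellation needed for the $\bpsi\bpsi$-bracket; both of those were symptoms of the reversed action. Your Virasoro paragraph is fine and is exactly what the paper does.
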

We now discuss the representation theory depending on $g=e^{-2\pi\i (s+x)}$:
\begin{lemma}\label{lm_repTwistedClifford}
The abelian category of representations of the mode algebra in Corollary \ref{cor_twistedSF} is equivalent to the category of representations of the zero-mode algebra generated by $\psi_0^g,\bpsi_0^g$ resp. to the trivial algebra for $r\not\in \Z$. The bijection is given by generalized highest-weight representations in the sense that a representation of the zero-mode algebra is trivially extended to a representation of $\psi_n^g,\bpsi_n^g,\;n\geq 0$ and then induced up.
\end{lemma}
\begin{proof}
    The algebra of mode operators is apparently an infinite-dimensional complex  Clifford algebra. 
    All nontrivial commutation relations only appear for pairs of indices $r\pm m$, so it suffices to study representations of the zero mode algebra for $r=m=0$, which is the Clifford algebra associated to the quadratic form 
    $$\begin{pmatrix}
        0 & 2t \\
        2t & 0 
    \end{pmatrix}$$
    and on the other hand representations of $\psi_{r+m}^g,\bpsi_{-r+m}^g, \psi_{r-m}^g,\bpsi_{-r-m}^g$ for fixed integer $m$ with $\pm r +m>0$. This is the Clifford algebra associated to the quadratic form
    $$\begin{pmatrix}
        0 & 0 & 2t & r+m \\
        0 & 0 & r-m & 0 \\
        2t & r-m & 0 & 0 \\
        r+m & 0 & 0 & 0
    \end{pmatrix}$$
    For $r\pm m\neq 0$ this quadratic form is nondegenerate, so the corresponding Clifford algebra is a simple algebra. More explicitly, after a base change we have  two mutually commuting base pairs 
    $\psi_{r+m}^g, \bpsi_{-r-m}^g$ and $\bpsi_{-r+m}^g,\psi_{r-m}^g-\frac{2t}{r+m}\bpsi_{-r-m}^g$. The unique $4$-dimensional simple module of this algebra is generated by a vector $v$ with $\psi_{r+m}^gv=\bpsi_{-r+m}^gv=0$. 
\end{proof}

We now determine the modules induced by modules of the zero-mode algebra $\psi_0,\bpsi_0$, which also possess a compatible structure of a super vector space, given by an action of the parity operator $\pi$ with $\pi\psi_n^g=-\psi_n^g\pi$ and $\pi\bpsi_n^g=-\bpsi_n^g\pi$. These induced modules correspond to twisted modules over the Vertex subalgebra of symplectic fermions. The action of the Virasoro algebra on the generating zero-mode module fulfills 

\begin{align*}
L_n|_{\cM_0}
&=\begin{cases}
0 &\text{ for }n> 0\\
\frac{1}{2}(\bpsi_{0}^g\psi_{0}^g-\psi_{0}^g\bpsi_{0}^g)
+\binom{s_0}{2},\quad &\text{ for }n=0\\
\sum_{i+j=n,i,j<0} \bpsi_{i}^g\psi_{j}^g&\text{ for }n<0\\
\end{cases}
\end{align*}

To compute the twisted modules over the even vertex subalgebras, we have to determine also the $\pi$-twisted sectors, which correspond to the matrix $s+\frac{1}{2}$. The subsequent decomposition over the even vertex subalgebra, which corresponds to taking the $\Z_2$-equivariant objects, is simply a decomposition into the eigenspaces of $\pi$. \\ \CommentsForMe{irreduciblility of induced module?}

{\bf Case 1:} $s+x=\begin{pmatrix} 0 & 0 \\ 0 & 0\end{pmatrix}$.\\

This is the case of untwisted modules and the result is well-known, but it is instructive to briefly repeat the computation: The zero-mode algebra is 
$$\psi_{0}^2=0,\quad \bpsi_{0}^2=0,\quad \{ \psi_{0}, \bpsi_{0}\}=0$$
This superalgebra of dimension $4$ has two irreducible modules, generated by $v^\pm$ with $\psi_0v^\pm=\bpsi v^\pm=0$ and parity $\pi v^\pm=\pm v^\pm$, and projective covers

\begin{center}
\begin{tikzcd}%[column sep=tiny]
 &  v^\pm \arrow{dl}[above]{\psi_0\quad} \arrow{dr}[above]{\quad\bpsi_0} &  \\ %[left] etc
v'^{\;\mp} \arrow{dr}[below]{\bpsi_0\quad} & & v''^{\;\mp} \arrow{dl}[below]{\quad\psi_0} \\
& 1^\pm & 
\end{tikzcd}    
\end{center}

The Virasoro algebra acts on these zero-mode modules with $L_0 1^\pm=0$ resp. on the projective covers with $L_0 v^\pm=1^\pm$. 

To illustrate an example, we draw we draw the module induced from the extension $1^+\to \cM\to 1^-$, which is usually called Wakimoto module or Verma module. It is in fact a lattice vertex superalgebra $\V_\Z$ which serves as free field realization as discussed in the next section. In the diagram, where cones denote Virasoro modules, and as usual the Y-axis is the $L_0$-grading and the X-axis denotes the $\sl_2$-grading resp. the $\Z$-lattice grading.

\begin{center}
\begin{tikzpicture}
\begin{scope}[yscale=-2.5,xscale=2.5]
	\cone{0}{0}{1.5}{$1^{+}$};
	\cone{1}{0}{1.5}{$1^{-}$};
	%\arr{1}{0}{0}{0}{$\psi_{0}$;}
	
	\cone{-1}{1}{1.5}{$\psi_{-1}1^{+}$};
	\cone{0}{1}{1.5}{$\psi_{-1}1^{-}$};
	\cone{1}{1}{1.5}{$\bpsi_{-1}1^{+}$};
	\cone{2}{1}{1.5}{$\bpsi_{-1}1^{-}$};

	\arrL{0}{1}{0}{0}{$L_{1}$}{0.5};
	\arrL{1}{0}{1}{1}{$L_{-1}$}{0.5};
	 
	 \cone{-2}{3}{1.5}{$\psi_{-2}\psi_{-1}1^{+}$};
	 \cone{-1}{3}{1.5}{};
	 \cone{0}{3}{1.5}{};
	 \cone{1}{3}{1.5}{};
	 \cone{2}{3}{1.5}{};
	 \cone{3}{3}{1.5}{$\bpsi_{-2}\bpsi_{-1}1^{-}$};

    \arrL{-1}{3}{-1}{1}{$L_{2}$}{0.5};
    \arrL{0}{1}{0}{3}{$L_{-2}$}{0.5};
    \arrL{1}{3}{1}{1}{$L_{2}$}{0.5};
    \arrL{2}{1}{2}{3}{$L_{-2}$}{0.5};
\end{scope}
\end{tikzpicture}
\end{center}
 
We now discuss the $\pi$-twisted sector with $s+x=\begin{pmatrix} 1/2 & 0 \\ 0 &  1/2\end{pmatrix}$.\\

Due to the half-integer grade it has a trivial zero-mode algebra, so there are two irreducible modules generated by $1^\pm$ with $\psi_0^\pi,\psi_0^\pi$. The Virasoro algebra acts on these zero-mode modules with $L_0=\binom{s_0}{2}=-\frac{1}{8}$ and the next nontrivial terms are $L_0\psi_{-1/2}1^\pm = (-\psi_{-1/2}\bpsi_{1/2}-\frac{1}{8})\psi_{-1/2}1^\pm=\frac{3}{8}\psi_{-1/2}1^\pm$ and similarly $L_0\bpsi_{-1/2}1^\pm =\frac{3}{8}\bpsi_{-1/2}1^\pm$.

\begin{comment}
%IMAGES
\begin{center}
\hspace*{-2cm}
	\begin{tikzpicture}
	\begin{scope}[yscale=-6,xscale=4]
	
	\cone{0}{0}{1.5}{$1$};
	
	%Parallel \arrLg{0.02}{0.02}{-0.26}{1}{\textcolor{darkgreen}{$L_{-2}\quad$}}{.5};
	\arrLg{0.018}{-0.03}{-0.26}{1}{\textcolor{darkgreen}{$L_{-2}\quad$}}{.5};
	\arrLg{-0.0}{0.0}{-0.22}{0.25}{\textcolor{darkgreen}{$L_{-1}$}}{.3};
	\arrLb{0}{0.25}{0}{0.11}{\textcolor{blue}{$\hspace{0.4cm} L_{1}$}}{.3};
	\arrLb{0}{0.2}{0.0}{1}{\textcolor{blue}{$\hspace{0.4cm} L_{-1}$}}{.3};

	\cone{-1}{0.25}{1.5}{$e^{-\alpha}$};
	\graycone{0}{0.25}{1.5}{$\;\;\,\partial\varphi$};
	\cone{1}{0.25}{1.5}{$e^\alpha$};

	\cone{-2}{1}{1}{$e^{-2\alpha}$};
	\graycone{-1}{1}{1}{$\partial\varphi e^{-\alpha}$};
	\ncone{-0.269}{1}{1}{$(\partial\varphi)^2$};
	\graycone{0.0}{1}{1}{$\partial^2\varphi$};
	\graycone{1}{1}{1}{$\partial\varphi e^{\alpha}$};
	\cone{2}{1}{1}{$e^{2\alpha}$};

	\end{scope}
	\end{tikzpicture}
\end{center} 
\end{comment}
\enlargethispage{1cm}

\begin{center}
	\begin{tikzpicture}
	\begin{scope}[yscale=-2.5,xscale=2.5]
	
	\cone{0}{-0.125}{1.5}{$1^\pm$};
	
	\arrL{0}{-0.125}{0}{1.875}{$L_{2}$}{0.6};
	
	\cone{-1}{0.375}{1.5}{$\psi_{-\frac{1}{2}}1^\pm$};
	\cone{1}{0.375}{1.5}{$\bpsi_{-\frac{1}{2}}1^\pm$};
	%\arrL{-1}{0.375}{1}{0.375}{\textcolor{darkgreen}{$L_0$}}{0.5};
	%\arrL{-1}{0.375}{1}{1.375}{\textcolor{darkgreen}{$L_{-1}$}}{0.6};
	%\arrL{-1}{1.375}{1}{0.375}{\textcolor{darkgreen}{$L_{1}$}}{0.6};
	
	\ncone{0}{0.875}{1.5}{$\psi_{-\frac{1}{2}}\bpsi_{-\frac{1}{2}}1^\pm$};
	
	\ncone{-1}{1.375}{1.5}{$\psi_{-\frac{3}{2}}1^\pm$};
	\ncone{1}{1.375}{1.5}{$\bpsi_{-\frac{3}{2}}1^\pm$};
	
	\cone{-2}{1.875}{1.5}{$\psi_{-\frac{3}{2}}\psi_{-\frac{1}{2}}1^\pm$};
	\cone{0}{1.875}{1.5}{};
	\ncone{0.1}{1.875}{1.5}{};
	\cone{2}{1.875}{1.5}{$\bpsi_{-\frac{3}{2}}\bpsi_{-\frac{1}{2}}1^\pm$};
	%\arrL{-2}{1.875}{0}{1.875}{\textcolor{darkgreen}{$L_0$}}{0.5};
	%\arrL{-2}{1.875}{0}{-0.125}{\textcolor{darkgreen}{$L_2$}}{0.6};
		
	\end{scope}
	\end{tikzpicture}
\end{center} 

We also discuss the decomposition behaviour under restriction to the invariants $\V^{G}$, which is the Virasoro algebra, via the action of the centralizer $\mathrm{Cent}(g)=G=\mathrm{SL}_2$: The action of $\sl_2$ preserves the (vertical) $L_0$-grading and the irreducible module above apparently decompose into the $\sl_2$-representations of dimension $1,3,5,\ldots$ resp. $2,4,6,\ldots$ tensored with irreducible Virasoro representations $\varphi_{1,n}$.
\\
    
{\bf Case 2:} $s+x=\begin{pmatrix} 0 & 0 \\ 2t & 0\end{pmatrix},\quad t\neq 0$.\\

The zero-mode algebra is 
$$(\psi_{0}^g)^2=t,\quad (\bpsi_{0}^g)^2=0,\quad \{ \psi_{0}^g, \bpsi_{0}^g\}=0$$    
One-dimensional representations with $\psi_{0}^g=\pm\sqrt{2t}$ do not possess a compatible structure of a supervector space. Instead this only leaves a unique irreducible representation $\cW_{\begin{psmallmatrix} 0 & 0 \\ 2t & 0\end{psmallmatrix}}$
$$\langle 1^{+},1^{-}\rangle,\qquad 
\psi_{0}^g1^{-}=1^{+},\quad
\psi_{0}^g1^{+}=t\,1^{-},\quad
\bpsi_{0}^g1^{\pm}=0,\quad
\pi 1^{\pm}=\pm 1^{\pm}
$$
\begin{center}
\begin{tikzcd}
1^+ \arrow[r, "\psi_0^g"]  & 1^- \arrow[l]  
\end{tikzcd}    
\end{center}
and a unique indecomposable self-extension
\begin{center}
%\begin{tikzcd}
%v^+ \arrow[d, "\bpsi_0^g"]\arrow[r, "\psi_0^g"]  & v^- \arrow[d, %"\bpsi_0^g "]\arrow[l]  \\ 
%1^+ \arrow[r, "\psi_0^g"]  & 1^- \arrow[l]  \\ 
%\end{tikzcd}    
\begin{tikzcd}
    & v^+ 
    \ar[dl,"\psi_0^g", shift right=-0.5ex]
    \ar[dr,swap,"\bpsi_0^g", shift left=-0.5ex] & \\
    v^-
    \ar[dr, "\bpsi_0", shift right=-0.5ex] 
    \ar[ur, "\psi_0^g", shift right=-0.5ex] && 
    1^+ 
    \ar[dl, swap, "\psi_0^g", shift left=-0.5ex] \\
    %\ar[ul, swap, "\psi_1", shift left=-0.5ex] \\
    & 1^-
    %\ar[ul, "\psi_1", shift right=-0.5ex]
    \ar[ur, swap, "\psi_0^g", shift left=-0.5ex]
\end{tikzcd}  
\end{center}

The structure of $\cW_{\begin{psmallmatrix} 0 & 0 \\ 2t & 0\end{psmallmatrix}}$ as an algebra over the Virasoro algebra is very rich. For illustration we now compute some instances, where we use green color to highlight the new terms in the twisted case. In particular we observe Jordan blocks for $L_0$ and new extensions between Virasoro modules.

\begin{align*}
L_0\, 1^{\pm} 
&=0 
\qquad
& L_0\,\bpsi^g_{-1}1^{\pm}
%&=\bpsi^g_{-1}\psi^g_{1}\bpsi^g_{-1}1^{\pm}\\
&=\bpsi^g_{-1}1^{\pm}
\qquad
&L_0\,\psi^g_{-1}1^{\pm}
%&=(\bpsi^g_{-1}\psi^g_{1}
%-\psi^g_{-1}\bpsi^g_{1})\psi^g_{-1}1^{\pm}\\
&=\textcolor{darkgreen}{2t\bpsi^g_{-1}1^{\pm}}+\psi^g_{-1}1^{\pm}\\
&\\
L_{-1}\,1^{\pm}
%&=\bpsi^g_{-1}\psi^g_{0}1^{\pm}\\
&=\begin{cases}
\textcolor{darkgreen}{t\,\bpsi^g_{-1}1^{-}}\\
\bpsi^g_{-1}1^{+}\\
\end{cases}
\qquad
&L_1\,\bpsi^g_{-1}1^{\pm}
&=\bpsi^g_{1}\psi^g_{0}\bpsi^g_{-1}1^{\pm}=0
\qquad
&L_1\,\psi^g_{-1}1^{\pm}
&=\bpsi^g_{1}\psi^g_{0}\psi^g_{-1}1^{\pm}
=\begin{cases}
\textcolor{darkgreen}{t\,1^{-}}\\
1^{+}\\
\end{cases}\\
\end{align*}

We again draw these results in a diagram. Observe however, that the $L_0$-action is not semisimple and that the $\sl_2$-grading is not preserved.

%\begin{center}
%	\begin{tikzpicture}
%	\draw (0,0)--(1.4,0) (1.6,0)--(3,0);
%	\draw (-0.1,0) circle[radius=0.1cm] node[anchor=south]{$ q^{4}$} node[anchor=north]{$ 4m$}
%	(1.5,0) circle[radius=0.1cm] node[anchor=south]{$ -1$}node[anchor=north]{$ 1$}
%	(3.1,0) circle[radius=0.1cm] node[anchor=south]{$ -q^{-2}$};
%	\draw (0.7,0) node[anchor=south]{$ q^{-4}$}node[anchor=north]{$ -4m$};
%	\draw (2.3,0) node[anchor=south]{$ q^{4}$} node[anchor=north]{$ -2+4m$};
%	\draw (3.5,0) node[anchor=north]{$ 1-2m$};
%	\end{tikzpicture}
%\end{center}

\begin{center}
\begin{tikzpicture}
\begin{scope}[yscale=-2.5,xscale=2.5]
	\cone{0}{0}{1.5}{$1^{+}$};
	\cone{1}{0}{1.5}{$1^{-}$};
	%\arr{1}{0}{0}{0}{$\psi_{0}$;}
	
	\cone{-1}{1}{1.5}{$\psi_{-1}1^{+}$};
	\cone{0}{1}{1.5}{$\psi_{-1}1^{-}$};
	\cone{1}{1}{1.5}{$\bpsi_{-1}1^{+}$};
	\cone{2}{1}{1.5}{$\bpsi_{-1}1^{-}$};

	\arrL{0}{1}{0}{0}{$L_{1}$}{0.5};
	\arrL{1}{0}{1}{1}{$L_{-1}$}{0.5};
	
	\arrLgd{-1}{1-0.01}{1}{1-0.01}{\textcolor{darkgreen}{$L_0$}}{0.7};
	\arrLgd{0}{1+0.01}{2}{1+0.01}{\textcolor{darkgreen}{$L_0$}}{0.7};
	\arrLg{0}{0}{2}{1}{\textcolor{darkgreen}{$L_{-1}$}}{0.7};
	\arrLg{-1}{1}{1}{0}{\textcolor{darkgreen}{$L_{1}$}}{0.7};
	 
	 \cone{-2}{3}{1.5}{$\psi_{-2}\psi_{-1}1^{+}$};
	 \cone{-1}{3}{1.5}{};
	 \cone{0}{3}{1.5}{};
	 \cone{1}{3}{1.5}{};
	 \cone{2}{3}{1.5}{};
	 \cone{3}{3}{1.5}{$\bpsi_{-2}\bpsi_{-1}1^{-}$};

    \arrL{-1}{3}{-1}{1}{$L_{2}$}{0.5};
    \arrL{0}{1}{0}{3}{$L_{-2}$}{0.5};
    \arrL{1}{3}{1}{1}{$L_{2}$}{0.5};
    \arrL{2}{1}{2}{3}{$L_{-2}$}{0.5};

	 \arrLg{-2}{3}{0}{1}{\textcolor{darkgreen}{$L_{2}$}}{0.7};
	 \arrLgd{-2}{3}{0}{3}{\textcolor{darkgreen}{$L_0$}}{0.6};
	 \arrLg{1}{1}{3}{3}{\textcolor{darkgreen}{$L_{-2}$}}{0.7};	 \arrLgd{1}{3}{3}{3}{\textcolor{darkgreen}{$L_0$}}{0.6};
\end{scope}
\end{tikzpicture}
\end{center}

The $\pi$-twisted modules are again induced from the two irreducible representations of the trivial zero-mode algebra

%\begin{align*}
%\{ \psi_{m+\frac{1}{2}+\Nil}, %\psi_{n+\frac{1}{2}+\Nil}\}&=\textcolor{darkgreen}{\delta_{m,-n} t} \\
%\{ \psi_{m+\frac{1}{2}+\Nil}, \bpsi_{n+\frac{1}{2}+\Nil}\}&=\delta_{m,-n} (m+\frac{1}{2}) \\
%\{ \bar{\psi}_{m+\frac{1}{2}+\Nil}, \bpsi_{n+\frac{1}{2}+\Nil}\}&=0 
%\end{align*} 
%and Virasoro action 
%\begin{align*}
%L_n
%&=\begin{cases}
%\sum_{j>0} \bpsi_{-j+\Nil}\psi_{j+\Nil}
%+\frac{1}{2}(\bpsi_{0+\Nil}\psi_{0+\Nil}-\psi_{0+\Nil}\bpsi_{0+\Nil})
%-\sum_{i>0} \psi_{-i+\Nil}\bpsi_{i+\Nil}-\frac{1}{8},\quad &\text{ for }n=0\\
%\sum_{i+j=n} \bpsi_{i+\Nil}\psi_{j+\Nil}&\text{ for }n\neq 0\\
%\end{cases}
%\end{align*}

\begin{center}
	\begin{tikzpicture}
	\begin{scope}[yscale=-2.5,xscale=2.5]
	
	\cone{0}{-0.125}{1.5}{$1^\pm$};
	
	\arrL{0}{-0.125}{0}{1.875}{$L_{2}$}{0.6};
	
	\cone{-1}{0.375}{1.5}{$\psi_{-\frac{1}{2}}1^\pm$};
	\cone{1}{0.375}{1.5}{$\bpsi_{-\frac{1}{2}}1^\pm$};
	\arrLgd{-1}{0.375}{1}{0.375}{\textcolor{darkgreen}{$L_0$}}{0.65};
	%\arrL{-1}{0.375}{1}{1.375}{\textcolor{darkgreen}{$L_{-1}$}}{0.6};
	%\arrL{-1}{1.375}{1}{0.375}{\textcolor{darkgreen}{$L_{1}$}}{0.6};
	
	\ncone{0}{0.875}{1.5}{$\psi_{-\frac{1}{2}}\bpsi_{-\frac{1}{2}}1^\pm$};
	
	\ncone{-1}{1.375}{1.5}{$\psi_{-\frac{3}{2}}1^\pm$};
	\ncone{1}{1.375}{1.5}{$\bpsi_{-\frac{3}{2}}1^\pm$};
	
	\cone{-2}{1.875}{1.5}{$\psi_{-\frac{3}{2}}\psi_{-\frac{1}{2}}1^\pm$};
	\cone{0}{1.875}{1.5}{};
	\ncone{0.1}{1.875}{1.5}{};
	\cone{2}{1.875}{1.5}{$\bpsi_{-\frac{3}{2}}\bpsi_{-\frac{1}{2}}1^\pm$};
	\arrLgd{-2}{1.875}{0}{1.875}{\textcolor{darkgreen}{$L_0$}}{0.5};
	\arrLg{-2}{1.875}{0}{-0.125}{\textcolor{darkgreen}{$L_2$}}{0.7};
	\arrLgd{0}{1.875}{2}{1.875}{\textcolor{darkgreen}{$L_0$}}{0.5};
	\arrLg{0}{-0.125}{2}{1.875}{\textcolor{darkgreen}{$L_2$}}{0.7};	
	\end{scope}
	\end{tikzpicture}
\end{center}

\begin{comment}
\begin{align*}
L_0\,\psi_{-\frac{1}{2}}1_{\sigma\Nil}
&=\left(\bpsi_{-\frac{1}{2}}\psi_{\frac{1}{2}}-\psi_{-\frac{1}{2}}\bpsi_{\frac{1}{2}}-\frac{1}{8}\right) \psi_{-\frac{1}{2}}1_{\sigma\Nil}\\
&=\textcolor{darkgreen}{t\bpsi_{-\frac{1}{2}}1_{\sigma\Nil}}+(\frac{1}{2}-\frac{1}{8}) \psi_{-\frac{1}{2}}1_{\sigma\Nil}\\
L_0\,\psi_{-\frac{1}{2}}1_{\sigma\Nil}
&=\left(\bpsi_{-\frac{1}{2}}\psi_{\frac{1}{2}}-\psi_{-\frac{1}{2}}\bpsi_{\frac{1}{2}}-\frac{1}{8}\right) \psi_{-\frac{1}{2}}1_{\sigma\Nil}\\
&=\textcolor{darkgreen}{t\bpsi_{-\frac{1}{2}}1_{\sigma\Nil}}+(\frac{1}{2}-\frac{1}{8}) \psi_{-\frac{1}{2}}1_{\sigma\Nil}\\
%
L_0\,\psi_{-\frac{3}{2}}\psi_{-\frac{1}{2}}1_{\sigma\Nil}
&=\left(\bpsi_{-\frac{3}{2}}\psi_{\frac{3}{2}}+\bpsi_{-\frac{1}{2}}\psi_{\frac{1}{2}}-\psi_{-\frac{1}{2}}\bpsi_{\frac{1}{2}}-\psi_{-\frac{3}{2}}\bpsi_{\frac{3}{2}}
-\frac{1}{8}\right)\psi_{-\frac{3}{2}}\psi_{-\frac{1}{2}}1_{\sigma\Nil}\\
&=\textcolor{darkgreen}{t\bpsi_{-\frac{3}{2}}\psi_{-\frac{1}{2}}1_{\sigma\Nil}
+t\psi_{-\frac{3}{2}}\bpsi_{-\frac{1}{2}}1_{\sigma\Nil}
}+(\frac{1}{2}+\frac{3}{2}-\frac{1}{8})\psi_{-\frac{3}{2}}\psi_{-\frac{1}{2}}1_{\sigma\Nil}\\
L_2\,\psi_{-\frac{3}{2}}\psi_{-\frac{1}{2}}1_{\sigma\Nil}
&=\left(\bpsi_{\frac{3}{2}}\psi_{\frac{1}{2}}+\bpsi_{\frac{1}{2}}\psi_{\frac{3}{2}}\right)\,\psi_{-\frac{3}{2}}\psi_{-\frac{1}{2}}1_{\sigma\Nil}\\
&=\textcolor{darkgreen}{(-\frac{3}{2}t-\frac{1}{2}t) 1_{\sigma\Nil}}
L_{-2}\,1_{\sigma\Nil}
&=
\end{align*}
\end{comment}

We now discuss the decomposition behaviour under restriction to the invariants $\V^{G}$, which is the Virasoro algebra, via the action of the centralizer $\mathrm{Cent}(g)$, which is the Borel part generated by $g$ itself, respectively by its Lie algebra generated by $x$.

\begin{lemma}\label{lm_tilting}
The kernel of $x$ on the $\pi$-untwisted module of parity $\pm1$ are Verma modules generated by $1^\pm$. More precisely the singular vectors are
$$1^\pm,\;\bpsi^g_{-1}1^\mp,\bpsi^g_{-2}\bpsi^g_{-1}1^\pm,\ldots$$
and the cokernel of $x$ are contragradient Verma modules with cosingular vectors 
$$1^\pm,\;\psi^g_{-1}1^\mp,\psi^g_{-2}\psi^g_{-1}1^\pm,\ldots$$
\end{lemma}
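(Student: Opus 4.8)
Here is how I would approach the proof.

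The plan is to make the operator $x$ on the twisted module $\cM$ completely explicit, observe that it commutes with the Virasoro action, match the graded dimensions of $\ker x$ and $\mathrm{coker}\,x$ to those of the Virasoro Verma module of central charge $-2$, and then pin down the module structure using the chain of singular vectors of that Verma. First I would fix the $x$-action. As a nilpotent element of $\mathfrak{sp}_2=\sl_2$, $x$ acts on symplectic fermions by the even derivation $x.\psi=2t\,\bpsi$, $x.\bpsi=0$, hence $[x,\psi^g_n]=2t\,\bpsi^g_n$ and $[x,\bpsi^g_n]=0$; and it annihilates the zero-mode module $\cW:=\cW_{\begin{psmallmatrix}0&0\\2t&0\end{psmallmatrix}}$, since writing $x\cdot 1^-=b\,1^-$ by parity and weight, the relations $\psi^g_01^-=1^+$, $[x,\psi^g_0]=2t\bpsi^g_0$, $\bpsi^g_01^\pm=0$ force $x\cdot 1^+=b\,1^+$, and comparison with $L_01^\pm=0$ (Corollary \ref{cor_twistedSF}) forces $b=0$. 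So, with $\cM\cong\Lambda\otimes\cW$ as vector spaces, $\Lambda=\Lambda(\psi^g_{-n},\bpsi^g_{-n}:n\ge1)$ the exterior algebra on the negative modes, $x$ is the Grassmann derivation $2t\sum_{n\ge1}\bpsi^g_{-n}\,\partial_{\psi^g_{-n}}$, which preserves the decompositions $\cM^{+}=\Lambda^{\mathrm{ev}}1^{+}\oplus\Lambda^{\mathrm{odd}}1^{-}$ and $\cM^{-}=\Lambda^{\mathrm{odd}}1^{+}\oplus\Lambda^{\mathrm{ev}}1^{-}$ by fermion-number parity of $\Lambda$. It is moreover a multiple of the lowering operator of an $\sl_2$-triple on $\Lambda$ (completed by $\sum_n\psi^g_{-n}\partial_{\bpsi^g_{-n}}$ and $\sum_n(\psi^g_{-n}\partial_{\psi^g_{-n}}-\bpsi^g_{-n}\partial_{\bpsi^g_{-n}})$), so $\ker(x|_\cM)$ is spanned by the lowest-weight vectors of the irreducible $\sl_2$-summands of $\Lambda$.

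Since $x$ generates $\mathrm{Lie}(\mathrm{Cent}(g))$ and the conformal vector of symplectic fermions is $\mathfrak{sp}_2$-invariant, $[L_n,x]=0$ on $\cM$ (equivalently $L_0=N+x$ is the Jordan decomposition, $N$ the mode-level operator, as is checked on generators); hence $\ker(x|_{\cM^\pm})$ is a Virasoro submodule, $\mathrm{coker}(x|_{\cM^\pm})$ a Virasoro quotient, and $L_0$ acts semisimply by $N$ on $\ker x$. Because $x$ kills $\cW$ and preserves fermion parity of $\Lambda$, the level-$N$ part of $\ker(x|_{\cM^\pm})$ has dimension equal to the number of irreducible $\sl_2$-summands of $\Lambda_N$, i.e.\ the $q^N$-coefficient of $m_0(q)+m_1(q)$, where $\sum_\lambda m_\lambda(q)\,y^\lambda=\prod_{n\ge1}(1+q^ny)(1+q^ny^{-1})$ is the graded character of $\Lambda$; a short computation with the Jacobi triple product (using that the exponents $m(m+1)/2$ in the theta series are invariant under $m\mapsto-m-1$) gives $m_0(q)+m_1(q)=\prod_{n\ge1}(1-q^n)^{-1}$. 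So $\ker(x|_{\cM^\pm})$ and, since $\dim\mathrm{coker}=\dim\ker$ at each level, $\mathrm{coker}(x|_{\cM^\pm})$ both have the character of the Virasoro Verma module $M(-2,0)$ of central charge $-2$ and lowest weight $0$.

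The vector $1^\pm$ is a Virasoro highest-weight vector of weight $(-2,0)$ lying in $\ker(x|_{\cM^\pm})$, so there is a nonzero map $\phi\colon M(-2,0)\to\ker(x|_{\cM^\pm})$. At $c=-2$ the Verma $M(-2,0)$ has the chain submodule structure $M\supsetneq M^{(1)}\supsetneq M^{(2)}\supsetneq\cdots$ with $M^{(k)}$ generated by a singular vector of level $\binom{k+1}{2}$, so $\phi$ is injective once it kills no such singular vector. Using $L_n=\sum_{i+j=n}\bpsi^g_i\psi^g_j$ together with $(\bpsi^g_{-\ell})^2=0$ and $\bpsi^g_m1^\pm=0$ for $m\ge0$, one checks that $\bpsi^g_{-k}\bpsi^g_{-k+1}\cdots\bpsi^g_{-1}1^{\pm}$ is Virasoro-singular in $\cM^\pm$ (every term of $L_n$, $n>0$, applied to it acquires a repeated $\bpsi^g$) and that it lies in $\mathrm{im}\,\phi$ — for $k=1$, $L_{-1}1^{+}=t\,\bpsi^g_{-1}1^{-}$, and for general $k$ either via the singular-vector formula recalled in Section \ref{sec_Vir} \cite{BSA88}, or inductively, using that the level-$\binom{k+1}{2}$ singular subspace of $\ker(x|_{\cM^\pm})$ is one-dimensional. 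Since these $\bpsi^g$-chains are nonzero basis vectors, $\ker\phi=0$, and with the character equality $\phi$ is an isomorphism $M(-2,0)\xrightarrow{\sim}\ker(x|_{\cM^\pm})$ carrying the singular vectors of $M$ onto the listed ones. For the cokernel I would either dualize — $\cM^\pm$ carries a nondegenerate contragredient form intertwining $x$ with the mirror operator $\sum_n\psi^g_{-n}\partial_{\bpsi^g_{-n}}$, whence $\mathrm{coker}(x|_{\cM^\pm})\cong M(-2,0)^\vee$ — or run the previous paragraph with $\psi^g\leftrightarrow\bpsi^g$: the image $\overline{1^\pm}$ is a Virasoro highest-weight vector generating a copy of $L(-2,0)$ in the socle, the weight-$\binom{k+1}{2}$ cosingular spaces are one-dimensional and spanned by the images of $\psi^g_{-k}\cdots\psi^g_{-1}1^{\pm}$, and a character comparison forces $\mathrm{coker}(x|_{\cM^\pm})\cong M(-2,0)^\vee$ with these as cosingular vectors. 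The crux, and main obstacle, is that single technical input: confirming the chain singular vectors of $M(-2,0)$ survive under $\phi$, i.e.\ map exactly onto the $\bpsi^g$-chains; the rest is the Jacobi triple product and bookkeeping with the (anti)commutation relations.
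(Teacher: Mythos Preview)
Your approach is genuinely different from the paper's and, in one respect, more complete: the paper simply asserts ``it is clear that the kernel resp.\ cokernel are generated by the respective vectors'' and then proves a single direct formula
\[
L_n\bigl(\psi^g_{-n}\cdots\psi^g_{-1}1^{\pm}\bigr)=c_n\,\psi^g_{-(n-1)}\cdots\psi^g_{-1}1^{\mp},\qquad c_n\neq 0,
\]
by an explicit cancellation in the twisted mode algebra. Your Jacobi triple product computation of $\mathrm{char}\,\ker x=\prod_{n\ge1}(1-q^n)^{-1}$ is exactly what is needed to justify that assertion, and your observation $L_0=N+x$ (whence $x\!\cdot\!1^\pm=0$ and $L_0$ is semisimple on $\ker x$) is the right way to set things up. The verification that the $\bpsi^g$-chains are Virasoro-singular is also correct.

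The real gap is the step you flag yourself: proving that the singular vectors of $M(-2,0)$ survive under $\phi$. Neither of your two proposed fixes closes it as stated. The inductive argument presupposes that the singular subspace of $\ker x$ at level $\binom{k+1}{2}$ is one-dimensional, but you have not established this --- the formal $\sl_2$-triple $(e,h,f)$ you introduce does \emph{not} commute with the twisted Virasoro action (only $f=x/2t$ does; indeed $[e,L_0]=2t\,h\neq 0$), so there is no $\sl_2\times\Vir$ decomposition to read off multiplicity one. The BSA route works in principle but is exactly the kind of explicit computation you were trying to avoid. What actually settles the matter is the paper's formula above (or its mirror on the $\bpsi^g$-side): it shows the successive extensions in the $h$-weight filtration of $\mathrm{coker}\,x$ are nontrivial, and combined with your character count this forces $\mathrm{coker}\,x\cong M(-2,0)^\vee$; dualizing via the contragredient pairing then gives $\ker x\cong M(-2,0)$. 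So the paper's hands-on computation and your structural argument are complementary --- each supplies what the other leaves implicit.
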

\begin{proof}
It is clear that the kernel resp. cokernel are generated by the respective vectors. Be induction and the formulas for $L_0$ we see that 
\begin{align*}
  L_n\,\psi^g_{-n}\psi^g_{-(n-1)}\cdots \psi^g_{-1}1^{\pm}
&=(\bpsi^g_{n}\psi^g_{0})\,\psi^g_{-n}\psi^g_{-(n-1)}\cdots \psi^g_{-1}1^{+}\\
&+\sum_{i=1}^{n-1}(\bpsi^g_{i}\psi^g_{n-i})\,\psi^g_{-n}\psi^g_{-(n-1)}\cdots \psi^g_{-1}1^{\pm}
\end{align*}
The sum has nonzero contribution but these cancel each other out
\begin{align*}
&\sum_{i=1}^{n-1}(\bpsi^g_{i}\psi^g_{n-i})\,\psi^g_{-n}\psi^g_{-(n-1)}\cdots \psi^g_{-1}1^{\pm}\\
&=\textcolor{darkgreen}{\sum_{i=n/2}^{n-1} (-i)(2t)(-1)^{i+(n-i)} \bpsi^g_{i}\,\psi^g_{-n}\psi^g_{-(n-1)}\cdots \cancel{\psi^g}_{-i} \cdots\cancel{\psi^g}_{-(n-i)} \cdots \psi^g_{-1}1^{+}}\\
&\textcolor{darkgreen}{+\sum_{i=1}^{n/2} (-i)(2t)(-1)^{(i-1)+(n-i)} \bpsi^g_{i}\,\psi^g_{-n}\psi^g_{-(n-1)}\cdots \cancel{\psi^g}_{-(n-i)}\cdots  \cancel{\psi^g}_{-i} \cdots \psi^g_{-1}1^{+}=0}
\end{align*}
which leaves  only the first term, which is depending on the sign
\begin{align*}
  L_n\,\psi^g_{-n}\psi^g_{-(n-1)}\cdots \psi^g_{-1}1^{-}
&=\hphantom{t}{(-n)(-1)^{n}\psi^g_{-(n-1)}\cdots \psi^g_{-1}1^{+}}\\
  L_n\,\psi^g_{-n}\psi^g_{-(n-1)}\cdots \psi^g_{-1}1^{+}
&=\textcolor{darkgreen}{t(-n)(-1)^{n}\psi^g_{-(n-1)}\cdots \psi^g_{-1}1^{-}}
\end{align*}
\end{proof}

We also note for later use that
\begin{align*}
    L_0\,\psi^g_{-n}\psi^g_{-(n-1)}\cdots \psi^g_{-1}1^{+}
&=-\sum_{k=1}^n (\psi^g_{-k}\bpsi^g_{k})\,\psi^g_{-n}\psi^g_{-(n-1)}\cdots \psi^g_{-1}1^{+}\\
&+\sum_{k=1}^n (\bpsi^g_{-k}\psi^g_{k})\,\psi^g_{-n}\psi^g_{-(n-1)}\cdots \psi^g_{-1}1^{+}\\
&=-\left(\sum_{k=1}^n (-k)\right)\psi^g_{-n}\psi^g_{-(n-1)}\cdots \psi^g_{-1}1^{+}\\
&+\textcolor{darkgreen}{t\sum_{k=1}^n \psi^g_{-n}\psi^g_{-(n-1)}\cdots \bpsi^g_{-k}\cdots  \psi^g_{-1}1^{+}}\\
\end{align*}

\begin{theorem}\label{thm_tilting}
    The modules deformed by a nilpotent element have a filtration by Verma modules and a filtration by Coverma modules given by kernel resp. cokernel of the long screening operator. 
\end{theorem}
\begin{proof}
    This follows from the respective computation in the previous Lemma \ref{lm_tilting} together with the known structure of the Fock module under the undeformed Virasoro action and long screening operator action. To this end, we rewrite everything in terms of (both undeformed and deformed) creation operators $\psi_{-j},\bpsi_{-j},j>0$ and undeformed annihilation operators $\frac{\partial}{\partial \bpsi_{-j}},\frac{\partial}{\partial \psi_{-j}},j>0$. Then the deformed Virasoro action is equal to the undeformed Virasoro action (which vanishes on highest weight vectors by definition) plus the additional terms $2t\bpsi_{-j-n}\frac{\partial}{\partial \psi_{-j}}$ that shift to a different Fock module. Clearly, these terms commute with the undeformed long screening operators, which are similar linear combinations of $\bpsi_{-j}\frac{\partial}{\partial \psi_{-j}}$. We proved in the previous Lemma \ref{lm_tilting} that the left-most irreducible modules combine to a Coverma module of the Virasoro algebra. Since the structure commutes with the undeformed action of the long screening operators, and it is known that the irreducible modules in the undeformed Fock modules form $\sl_2$ representations generated by the left-most irreducibles, this assertion also holds for all other irreducible modules, up to more-left terms. Hence we have a filtration in terms of Coverma modules, as was already suggested by the picture. The filtration in terms of Verma  modules is analogous.  
\end{proof}

{\bf Case 3}: If $r\not\in \Z$, then the zero-mode algebra is trivial, so there are two induced modules, each generated by a vector $v^\pm$ with fixed parity $\pi v^\pm=\pm v$. Since the $\pi$-twisted sectors  For $r\not\in \frac{1}{2}+\Z$ the $\pi$-twisted sectors have the same description, otherwise it it discussed in the previous bullet. Note that for $r\not\in \frac{1}{2}\Z$ we have a conjugate with $t=0$.

\subsection{Quantum groups}\label{sec_quantumgroups}

There should be a quantum group setup that reproduces the categorical situation, more precisely a $G^\vee$-crossed extension of the category of $u_q(\g)$-representations, see Question \ref{quest_quantumgroup}. We now compare the abelian category of $g$-twisted representation for $g\in B$ a fixed Borel subgroup  to the  Kac-Procesi-DeConcini quantum group $U^{\mathcal{K}}_q(\g)$ or unrestricted specialization with a big center, for which we refer to \cite{DKP92a} and \cite{CP94} Chapter 9.2 and  Chapter 11. 

For $\ord(q)=\ell$ odd and larger than all $(\alpha,\alpha)/2=d_\alpha$ there is a central subalgebra $Z_0\subset U^{\mathcal{K}}_q(\g)$, which contains in particular $E_i^\ell,F_i^\ell,K_i^\ell$ and which is isomorphic to the ring of functions on the Poisson dual Lie group. In particular, the representations of $U^{\mathcal{K}}_q(\g)$ form a graded tensor category with zero-fibre representations of $u_q(\g)$.  For $\ell$ even, the Lie group has to be replaced by its Langlands dual $G^\vee$, see \cite{Beck92, Tan13} for work in this direction.

\begin{question}
What is the abelian (module-) category associated to a conjugacy class $[g]$ of $G$ resp. $G^\vee$?  
\end{question}

It is expected that this is connected to the geometric properties of the conjugacy class $[g]$, for example an influential conjecture by DeConcini, Kac, Procesi states that the dimensions of the irreducible representations are divisible by $\ell^{\frac{1}{2}\dim([g])}$, where $\dim([g])$ is the dimension as a complex manifold, for example for regular elements of maximal dimension $|\varphi^+|$ it is true. Considerably less is known about the full structure as an abelian category, and to our knowledge there is no work in the case of even $\ell$ and the corresponding quantum group in \cite{CGR20,GLO18}, which is the one expected to correspond to the Feigin-Tipunin algebra. 

As a consequence of standard algebra techniques it follows, see  \cite{CP94} Theorem 11.1.1 and Lemma 11.1.3

\begin{theorem}
There is a proper closed subvariety $\mathcal{D}$ of the spectrum of the (full) center, such that outside $\mathcal{D}$ there is a unique irreducible representation with prescribed central character, which has dimension. $\ell^{|\varphi^+|}$ with $\varphi^+$ the set of positive roots.
\end{theorem}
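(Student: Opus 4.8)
The plan is to deduce this from the general structure theory of prime affine algebras satisfying a polynomial identity (PI), exactly along the lines of \cite{CP94} Chapter 11. Write $A=U^{\mathcal{K}}_q(\g)$ and let $Z$ denote its full center. First I would record that $A$ is a Noetherian domain which is module-finite over $Z$. The domain property comes from the PBW basis of $A$ in the De Concini--Kac root vectors $F_\beta$, the $K_i^{\pm1}$ and $E_\beta$ ($\beta\in\Phi^+$), together with the Levendorskii--Soibelman straightening relations, which exhibit $A$ as an iterated Ore extension. Module-finiteness over $Z$ follows because, $\ord(q)=\ell$ being odd and larger than all $d_\alpha$, the elements $E_\beta^\ell,F_\beta^\ell,K_i^{\pm\ell}$ lie already in $Z_0\cong\mathcal{O}(G)$, and the finitely many PBW monomials with root-vector exponents in $\{0,\dots,\ell-1\}$ and bounded torus exponents span $A$ over $Z_0$; since $Z_0\subseteq Z$, the algebra $A$ is module-finite over $Z$, hence PI.

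Next I would invoke Posner's theorem: the total ring of quotients $A\otimes_Z\mathrm{Frac}(Z)$ is a central simple algebra over $\mathrm{Frac}(Z)$ of dimension $d^2$, where $d$ is the PI degree of $A$. The Artin--Procesi theorem then supplies the Azumaya locus: a non-empty Zariski-open $U\subseteq\mathrm{Spec}(Z)$ such that $A/\mathfrak{m}A\cong M_d(\C)$ for every maximal ideal $\mathfrak{m}\in U$. Set $\mathcal{D}=\mathrm{Spec}(Z)\setminus U$, a proper closed subvariety. For a point $\chi\in U$ the algebra $A/\mathfrak{m}_\chi A\cong M_d(\C)$ has, up to isomorphism, a unique simple module, of $\C$-dimension $d$; conversely any simple $A$-module has a central character $\chi\in\mathrm{Spec}(Z)$ by Schur's lemma, which, if $\chi\notin\mathcal{D}$, forces the module to factor through $M_d(\C)$. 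This gives the claimed uniqueness and dimension once $d$ is identified; a routine check that $\mathrm{Spec}(Z)\to\mathrm{Spec}(Z_0)=G$ is finite shows in addition that each central character over $Z$ still indexes only finitely many simples and that the Azumaya locus over $Z$ is dense.

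The one genuinely computational step is the identification $d=\ell^{|\Phi^+|}$. I would compute the PI degree by passing to an associated graded of $A$ --- following De Concini--Kac --- which is a quantum affine space $\C_{\mathbf{q}}[x_1,\dots,x_n]$ in the root vectors and torus generators, with $q$-commutation parameters that are integer powers of $q$; the PI degree is unchanged under this reduction (Brown--Goodearl). For such a quantum affine space over $\C$ with $q$ a primitive $\ell$-th root of unity, the PI degree equals $\ell^{\,r}$, where $2r$ is the rank of the skew-symmetric integer exponent matrix --- here one uses again that $\ell$, being odd and larger than all $d_\alpha$, is coprime to the invariant factors of this matrix, so no elementary divisor degenerates the answer. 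A direct computation of the exponent matrix (the $E_\beta$ and $F_\beta$ contributions pairing off via the root pairing) gives $2r=2|\Phi^+|$, hence $d=\ell^{|\Phi^+|}$; for $\g=\sl_2$ this recovers the familiar fact that the generic cyclic $U_q(\sl_2)$-modules at an $\ell$-th root of unity are $\ell$-dimensional. Alternatively, as in \cite{CP94} Lemma 11.1.3, one produces an explicit generic irreducible of dimension $\ell^{|\Phi^+|}$ (using that the $E_\beta^\ell$ act by nonzero scalars and the $E_\beta$ act generically semisimply), forcing $d\ge\ell^{|\Phi^+|}$, and complements it with the rank bound $d\le\ell^{|\Phi^+|}$.

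The main obstacle is precisely this last step: the arguments built on Posner and Artin--Procesi are formal once $A$ is known to be prime affine PI, but verifying that the $\ell$-th root of unity interacts cleanly with the Levendorskii--Soibelman relations --- so that the rank of the commutation matrix is \emph{exactly} $2|\Phi^+|$ and no elementary divisor shares a factor with $\ell$ --- is where the hypotheses ``$\ell$ odd'' and ``$\ell>d_\alpha$'' are really needed, and where a careless estimate would only give an inequality. A secondary point to get right is that the relevant center here is the \emph{full} center $Z$, not merely the big center $Z_0\cong\mathcal{O}(G)$, so one must track the finite extension $Z_0\subseteq Z$ throughout.
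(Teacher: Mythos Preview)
Your proposal is correct and follows precisely the route the paper intends: the statement is quoted without proof as a consequence of \cite{CP94} Theorem 11.1.1 and Lemma 11.1.3, and your sketch---prime affine PI via module-finiteness over $Z_0$, Posner plus Artin--Procesi to produce the Azumaya locus, and identification of the PI degree as $\ell^{|\Phi^+|}$ by passing to the associated graded quantum affine space---is exactly the argument in that reference. Your closing remarks on where the hypotheses $\ell$ odd and $\ell>d_\alpha$ enter, and on distinguishing $Z$ from $Z_0$, are apt and match the standard treatment.
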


We now concentrate on the case $\sl_2$. The full center consist of $Z_0$, generated by $E^\ell,F^\ell,K^\ell$ which we identify with the ring of functions on the big cell of $G=\mathrm{SL}_2$, together with Casimir element $C$. The conjugacy classes in $G$ consist of $\pm 1$, the regular semisimple elements and the regular unipotent element (possibly with sign). In this example \cite{CP94} Example 11.1.7 discuss all irreducible representations: There are cyclic representations $V(a,b,\lambda)$ for generic points of the spectrum of the full center and the familiar representations $V^\pm_k,k\leq \ell$ for the $\ell-1$ special points with central character $\chi$ 
$$\chi(E^\ell)=\chi(F^\ell)=0,\quad \chi(K^\ell)=1, \quad \chi(C)=\pm \frac{q^{r+1}-q^{-r-1}}{(q-q^{-1})^2}$$

We want to further derive the full abelian category structure of  the fibre over any point $(e,f,k)$ in the spectrum of $Z_0$, which is the  category of representation of   
$$u_q(\sl_2)^{(e,f,k)} := U^{\mathcal{K}}_q(\sl_2)/(E^\ell-e,\;F^\ell-f,K^\ell-k)$$
\begin{lemma}
The fibre $\Rep(u_q(\sl_2)^{(e,f,k)})$ over $(e,f,k)=(0,0,\pm 1)$ is the familiar nonsemisimple category of representations of $u_q(\g)$, the fibre over a regular semisimple element (conjugate to $(0,0,k),k\neq \pm 1$) is a semisimple category with $\ell$ representations of dimension $\ell$, and the fibre over a regular unipotent element (conjugate to $(1,0,\pm 1)$) is a nonsemisimple category with a unique simple representation of dimension $\ell$ and a projective cover with composition series of length $\ell$.
\end{lemma}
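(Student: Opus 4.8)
\medskip
\noindent\textit{Proof strategy.}
The plan is to compute each of the three fibres by feeding the known list of irreducibles (\cite{CP94} Example~11.1.7) into elementary dimension bookkeeping, using throughout that $U^{\mathcal{K}}_q(\sl_2)$ is free of rank $\ell^3$ over $Z_0=\C[E^\ell,F^\ell,K^{\pm\ell}]$, so that $\dim_\C u_q(\sl_2)^{(e,f,k)}=\ell^3$ for every point. First I would reduce to three convenient representatives: since representation-theoretic data is constant along the symplectic leaves of $\mathrm{Spec}(Z_0)$, which are the preimages of the conjugacy classes of $\SL_2$, it suffices to treat the points $(0,0,\pm1)$, $(0,0,k)$ with $k\neq\pm1$, and $(1,0,\pm1)$, corresponding to $\pm1$, a regular semisimple, and a (signed) regular unipotent element. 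Moreover the algebra automorphism $E\mapsto E,\ F\mapsto -F,\ K\mapsto -K$ of $U^{\mathcal{K}}_q(\sl_2)$ (using $\ell$ odd) identifies $u_q(\sl_2)^{(e,f,k)}\cong u_q(\sl_2)^{(e,-f,-k)}$, which collapses each $\pm$-pair to a single computation.

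The point $(0,0,1)$ is immediate, since $u_q(\sl_2)^{(0,0,1)}=U^{\mathcal{K}}_q(\sl_2)/(E^\ell,F^\ell,K^\ell-1)$ is by definition the small quantum group $u_q(\sl_2)$, whose module category is the familiar non-semisimple one ($\ell$ simple modules $L(0),\dots,L(\ell-1)$ of dimensions $1,\dots,\ell$, the Steinberg module $L(\ell-1)$ simple projective, the remaining indecomposable projectives of dimension $2\ell$; see \cite{CP94}), and $(0,0,-1)$ follows by the automorphism above. For the regular semisimple point $(0,0,k)$ with $k\neq\pm1$ both $E$ and $F$ act nilpotently, so every simple is a quotient of a highest-weight module $M(\mu)$ with $K$-highest weight $\mu$, $\mu^\ell=k$; there are $\ell$ such $\mu$, all outside $\{\pm q^m\}$ precisely because $k\neq\pm1$, so the matrix entries of $E$ along the string $w,Fw,\dots,F^{\ell-1}w$ are all nonzero, each $M(\mu)$ is simple of dimension $\ell$, and distinct $\mu$ give non-isomorphic modules. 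Then $\sum_i(\dim S_i)^2\ge \ell\cdot\ell^2=\ell^3=\dim u_q(\sl_2)^{(0,0,k)}$, and since always $\dim A\ge\sum_i(\dim S_i)^2$ with equality iff $A$ is semisimple, the fibre is semisimple with exactly $\ell$ simples, each of dimension $\ell$.

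The regular unipotent point $(1,0,1)$ needs a little more. There $E^\ell=1$, so $E$ is invertible, and $K^\ell=1$ with $KEK^{-1}=q^2E$, $q^2$ a primitive $\ell$th root of unity; hence the subalgebra $\langle E^{\pm1},K^{\pm1}\rangle$ is a copy of the quantum torus at a primitive $\ell$th root of unity, i.e.\ $\cong M_\ell(\C)$. Since the Casimir $C$ is central and $F$ is recovered from $C$, $K$ and $E^{-1}$, the fibre is generated by this $M_\ell(\C)$ together with $C$; as $C$ satisfies a monic degree-$\ell$ relation over $Z_0$, the count $\ell^3=\ell^2\cdot\ell$ forces $u_q(\sl_2)^{(1,0,1)}\cong M_\ell(\C)\otimes_\C\C[C]/(p(C))$ with $\deg p=\ell$. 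By \cite{CP94} Example~11.1.7 this fibre has a unique simple module, of dimension $\ell$, so $\C[C]/(p(C))$ must be local, i.e.\ $p(C)=(C-c_0)^\ell$, and therefore $u_q(\sl_2)^{(1,0,1)}\cong M_\ell\bigl(\C[x]/(x^\ell)\bigr)$. The category is then non-semisimple with a single simple $S$ of dimension $\ell$ whose projective cover has dimension $\ell^2$ and, mirroring the radical filtration of $\C[x]/(x^\ell)$, composition series of length $\ell$.

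I expect the genuine obstacle to be isolated in the unipotent case: the step that $\C[C]/(p(C))$ is local, equivalently that all central characters in that fibre coincide, equivalently that the unipotent point is totally ramified in $\mathrm{Spec}(Z)\to\mathrm{Spec}(Z_0)$. This is the only place where real input is required, namely the cited classification of irreducibles (or the corresponding ramification statement from \cite{DKP92a}); everything else is a dimension count and Wedderburn theory. A secondary point to handle carefully is the passage from arbitrary points of a conjugacy class to the representatives above, which rests on the (Lie-group) version of the constancy of representation theory along symplectic leaves of a Poisson order.
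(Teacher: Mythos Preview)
Your argument is correct and shares the paper's core strategy: in each of the three fibres, count the simple modules using the classification in \cite{CP94} Example~11.1.7, then deduce the abelian category structure from the fact that $\dim u_q(\sl_2)^{(e,f,k)}=\ell^3$ via Wedderburn/regular-bimodule considerations. The paper carries out the first step uniformly, by analysing the explicit parameter map $V(a,b,\lambda)\mapsto(e,f,k)$ (showing $\ell$ non-isomorphic preimages over $(0,0,k)$ for $k\neq\pm1$ and a single preimage over $(e,0,\pm1)$ for $e\neq0$), whereas you treat the cases separately: a direct Verma-module argument in the semisimple fibre, and in the unipotent fibre the structural observation that $\langle E^{\pm1},K^{\pm1}\rangle\cong M_\ell(\C)$ forces $u_q(\sl_2)^{(1,0,\pm1)}\cong M_\ell\bigl(\C[C]/(p(C))\bigr)$, with uniqueness of the simple then forcing $p$ to have a single root. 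This buys you a sharper statement than the paper's --- an explicit isomorphism $u_q(\sl_2)^{(1,0,\pm1)}\cong M_\ell(\C[x]/(x^\ell))$ --- while the paper's route via the parameter map is more uniform across the two regular cases. Both arguments ultimately rest on the same cited input for the unipotent fibre (uniqueness of the simple), which you correctly flag as the one non-formal ingredient.
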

\begin{proof}
We use the results in \cite{CP94} on irreducible modules. We first have to analyze the fibers of the map of parameters 
\begin{align*}
    V(a,b,\lambda) &\longmapsto (e,f,k)=\left(a\prod_{j=1}^{\ell-1}\left(ab+\frac{(\lambda q^{1-j}-\lambda^{-1}q^{-1+j})(q^j-q^{-j})}{(q-q^{-1})^2}\right),\;b,\;\lambda^\ell\right) \\
    V^\pm_k &\longmapsto
    (e,f,k)=(0,0,1)
\end{align*}  
The preimage of $(e,f,k)\neq (0,0,1)$ consists generically of $\ell$ solutions $\lambda^\ell=k$ and for each $\ell$ solutions for the polynomial equation for $a$. If $f=0$ and hence $b=0$ this polynomial equation becomes a linear equation $e=aP$ with $P=\prod_{j=1}^{\ell-1}\frac{(\lambda q^{1-j}-\lambda^{-1}q^{-1+j})(q^j-q^{-j})}{(q-q^{-1})^2}$. The product $P$ is zero iff $\lambda=\pm q^{j-1}, j=1\ldots \ell-1$, which is the case for all but one solution $\lambda=q^{\ell-1}$ of $\lambda^\ell=\pm 1$. Hence for $k\neq \pm 1$ there are $\ell$ preimages of $(e,0,k)$, and they are  nonisomorphic, since there is a unique lowest-weight-vector. On the other hand for $k=\pm 1$ there is a unique preimage of  $(e,0,\pm 1)$ of $e\neq 0$. All other $(e,f,k)$ are $\mathrm{SL}_2$-conjugate to these cases. In particular the points  $(e,0,k),k\neq \pm 1$ are conjugate to $(0,0,k)$, in which case the preimages are simply Verma-modules with generic highest weights $\lambda$ with $\lambda^\ell=k$. \\

Now since $u_q(\sl_2)^{(e,f,k)})$ is an algebra of dimension $\ell^3$ it is clear from standard algebra arguments (recall the decomposition of the regular bimodule into products of irreducibles and their projective cover) that if there are $\ell$ irreducible modules of dimension $\ell$, then the category of representations is semisimple, and if there is a unique irreducible module, then its projective cover has a composition series of length $\ell$. 
\end{proof}

Since this result does not cover the case of even order root of unity and quantum group variant $\tilde{u}_q(\sl_2)$ relevant to this article, we also directly discuss the case $q^4=1$ and find it is in complete correspondence to our previous findings. 

\begin{example}[Case $\sl_2,p=2$]
     The quantum group $\tilde{u}_q(\sl_2)$ in \cite{CGR20} is as an algebra isomorphic to $u_q(\sl_2)$ for $\ord(q)=4$. We consider the  quantum group with a big center $U^{\mathcal{K}}_q(\sl_2)$ for $\ord(q)=4$, which has central elements $e:=E^2,f:=F^2, k=K^4$, which correspond to the Langlands dual group, and define again
     $$u_q(\sl_2)^{(e,f,k)} := U^{\mathcal{K}}_q(\sl_2)/(E^2-e,\;F^2-f,K^4-k)$$
     Since $K^2$ is also central, the fiber $\Rep(u_q(\sl_2)^{(e,f,k)}$ decomposes into a direct sum (which corresponds to $\pi$-untwisted and $\pi$-twisted sectors in the previous section). 
     
     Over $(0,0,k)$ all modules are highest-weight modules and hence quotients of baby Verma modules, which are irreducible for $k\neq 1$ and for $K^2=-1$. % $(EF-FE)v=\frac{K-K^{-1}}{q-q^{-1}}$ 
     In particular the fibres over $(0,0,k),k\neq 1$ are semisimple with two irreducible representations of dimension $2$.
     
     Over $(e,f,1)$ on the other hand we have $[E,F]=0$ resp. $[E,FK]_+=0$ and a fibre functor to super vector spaces via the action of $K$. Hence the fibre of $(e,f,1)$ are representations of the following algebra in the category of super vector spaces
     $$\langle E,FK \rangle / (E^2-e,F^2-f)$$
     In particular over $(e,0,1)$ we have a $2$-dimensional cyclic module with $F$ acting identically zero and a $2+2$-dimensional self-extension with $F$ acting nilpotently, which is the projective cover, just as the unipotent fibre in the previous section. Identifying $E,FK$ in the previous algebra description with $\psi_0^g,\bpsi_0^g$ shows directly that it is isomorphic to the zero-mode algebra, and hence the equivalence of abelian categories.
\end{example}

\subsection{Twisted free field realization}\label{sec_twistedFreeField}

The twisted representations of $\cW_{\sl_2,2}$ in the previous section, in particular the logarithmically twisted representations associated to nilpotent automorphisms, were found by explicitly computing the twisted mode algebra, and they rely on the description as symplectic fermions. We now give a construction of twisted module, which relies on the definition of $\cW_{\g,p}$ as subalgebra of a lattice vertex algebra $\V_{\sqrt{p}\Lambda^\vee}$ as kernel of a set of short screening operators. We cannot realize all twisted modules, in particular not the projectives, and we cannot prove that we can realize all irreducible twisted modules in this way, although we would conjecture so.  \\

Consider the lattice vertex algebra $\V_\Lambda$ for the rescaled coroot lattice $\Lambda=\sqrt{p}\Lambda_\g^\vee$. It has an action of the Borel subgroup $B^\vee\subset G^\vee$ on the by exponentials of semisimple resp. nilpotent elements  $$s^{\alpha^\vee}_0=\mathrm{Res}(\Y(s^{\alpha^\vee},z)),\qquad  x^{\alpha^\vee}_0=\mathrm{Res}(\Y(x^{\alpha^\vee},z))$$
(the latter the  so-called long screening operators) where $\alpha^\vee$ runs over the basis of coroots associated to the choice of $B$ and 
$$s^{\alpha^\vee}=\partial\varphi_{\alpha_i^\vee},\qquad x^{\alpha^\vee}=e^{\sqrt{p}\alpha^\vee}$$
an we have chosen a Virasoro action shifted by a background charge so that the $x^{\alpha^\vee}$ have conformal dimension $1$. If we assume $\g$ simply-laced, then this action is rigorously established in \cite{FT10,Sug21a}, but we expect the general proof to be similar.\\

Let $g\in G^\vee$. Up to conjugation, we can assume $g\in B^\vee$. We can now construct in general $g$-twisted modules of $\V_{\sqrt{p}\Lambda_\g^\vee}$, by using a method initiated in \cite{DLM96,Li97} in the semisimple setting and developed in \cite{FFHST02,AM09} for applications to short screening operators. To apply this method to the long screening operators, we collect the following properties:

\begin{itemize}
    \item The vertex algebra infinitesimal automorphisms (or derivations) $s^{\alpha^\vee}_0,x^{\alpha^\vee}_0$ are inner infinitesimal automorphism, meaning that we have obtained them as as $z^{-1}$-mode of vertex algebra elements $s^{\alpha^\vee},x^{\alpha^\vee}$. Hence, suitable mode operators $x_n,n\in\Z$ are available to us.  
    \item $s^{\alpha^\vee},x^{\alpha^\vee}$ are primary of conformal dimension $1$, meaning that 
    $$L_ns^{\alpha^\vee}=0,n>0,\qquad L_0s^{\alpha^\vee}=s^{\alpha^\vee}$$ 
    $$L_nx^{\alpha^\vee}=0,n>0,\qquad L_0x^{\alpha^\vee}=x^{\alpha^\vee}$$ 
    \item The mode operators $x_n^{\alpha^\vee}$ are self-commuting by the commutator formula
    \begin{align*}
    %[x(z^{q})^{\alpha^\vee},x(z^{p})^{\alpha^\vee}]
    %&=\sum_{l<0} {-q-1 \choose -l-1} \Y(x(z^{l})^{\alpha^\vee}x^{\alpha^\vee})(z^{p+q-l})\\ %q=-n-1, p=-m-1
    [x_n^{\alpha^\vee},x_m^{\alpha^\vee}]
    &=\sum_{k\geq 0} \binom{n}{k} \Y(x^{\alpha^\vee}_k x^{\alpha^\vee})_{m+n-k}\\%z^{-m-n+k-1}=0
    \end{align*}
    because $x_k^{\alpha^\vee}x^{\alpha^\vee}=0$ for $-k-1<|\sqrt{p}\alpha^\vee|$. The mode operators $s_n^{\alpha^\vee}$ span the Heisenberg algebra, so they are not self-commuting, but sufficiently controlled, see \cite{DLM96,Li97}.
\end{itemize}
Since twisted modules for semisimple elements are well-known, and on the other hand the last bullet requires modifications in this case, we now assume a unipotent element $g$ resp. a nilpotent derivation $x$, and study the logarithmically twisted modules as described. We now state the main result in \cite{Hu10} Section 5: 
\begin{theorem}\label{thm_Delta}
Assume that $\V$ is a vertex operator algebra and $x$ an element, which is a primary field of conformal dimension $1$ and which has self-commuting mode operators, as discussed above. Let $\cM$ be a (weak) $\V$-module, then the following defines on the vector space $\cM$ the structure of an (weak) $x_0$-twisted module $\widetilde{\cM}$:
$$\Y_{\widetilde{\cM}}(a,z)
:=\Y_{\cM}(\Delta(x)a,z),\qquad
\Delta(x,z)=\exp\left(x_0\log(z)+\sum_{n=1}^\infty \frac{x_n}{-n}(-z)^{-n}\right)
$$
\end{theorem}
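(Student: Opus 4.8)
The plan is to verify directly that the proposed $\Y_{\widetilde{\cM}}$ satisfies the defining axioms of an $x_0$-twisted module, namely the twisted commutator formula and the monodromy condition \ref{twisted_monodromy} (with $s=0$, $x$ nilpotent). The key observation, following Li's original $\Delta$-operator argument \cite{Li97} and its logarithmic elaboration in \cite{Hu10}, is that $\Delta(x,z)$ is a ``gauge transformation'' intertwining the untwisted and twisted vertex operators. First I would record the basic formal identities satisfied by $\Delta(x,z)$: since $x$ is primary of conformal weight $1$ with self-commuting modes $x_n$ ($n\in\Z$), the operators $x_n/(-n)$ for $n\ge 1$ together with $x_0\log z$ all commute with one another, so $\Delta(x,z)$ is a well-defined operator-valued expression, invertible with inverse obtained by negating the exponent. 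I would then establish the transport formula
\begin{align*}
\Delta(x,z_1)\,\Y_\cM(a,z_2)\,\Delta(x,z_1)^{-1}
&=\Y_\cM\big(\Delta(x,z_1-z_2)a,\,z_2\big),
\end{align*}
valid for $|z_2|<|z_1|$, which is the heart of the matter: it follows from the commutator $[x_n,\Y_\cM(a,z_2)]=\sum_{k\ge 0}\binom{n}{k}z_2^{n-k}\,\Y_\cM(x_k a,z_2)$ (using that $x$ is weight $1$ and that only finitely many $x_k a$ are nonzero), exponentiated and resummed. The weight-$1$ primary condition is exactly what makes the resummation close up into $\Delta(x,z_1-z_2)$ acting on $a$ rather than producing extra derivative-of-$z_2$ terms.

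Granting the transport formula, the twisted commutator (or equivalently the twisted Jacobi/associativity identity) for $\Y_{\widetilde\cM}(a,z):=\Y_\cM(\Delta(x,z)a,z)$ is deduced by substituting into the ordinary Jacobi identity for $\Y_\cM$ and using the transport formula to move the $\Delta$-factors past one another; the powers $\log z$ coming from the $x_0\log z$ term in the exponent are precisely what convert integer-power expansions into the logarithmic expansions appropriate to an $x_0$-twisted module, and the monodromy \ref{twisted_monodromy} is checked by applying $e^{2\pi\i\,z\partial_z}$: this sends $\log z\mapsto \log z+2\pi\i$ and fixes all the $z^{-n}$ terms, so $\Delta(x,z)\mapsto \Delta(x,z)\,e^{2\pi\i x_0}$, hence $\Y_{\widetilde\cM}(a,z)\mapsto \Y_{\widetilde\cM}(e^{2\pi\i x_0}a,z)=\Y_{\widetilde\cM}(g^{-1}a,z)$ where $g=e^{-2\pi\i x_0}$, which is the required equivariance (up to the sign/inverse convention fixed in \ref{twisted_monodromy}). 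One also checks the vacuum and translation axioms: $\Delta(x,z)\unit=\unit$ since $x_n\unit=0$ for $n\ge 0$ and $x_0\unit=0$, and the $L_{-1}$-derivative property follows from $[L_{-1},\Delta(x,z)]$ combined with $L_{-1}x=x_{-2}\unit$ and the weight condition.

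The main obstacle is making the formal manipulations rigorous in the \emph{logarithmic} (non-semisimple) setting: unlike the semisimple case of \cite{Li97,DLM96}, here $\Delta(x,z)$ genuinely contains $\log z$, so $\Y_{\widetilde\cM}(a,z)$ lands in $\cM((z))[\log z]$ rather than in a space of fractional-power series, and one must check that all infinite sums converge in the appropriate topology on $\mathrm{End}(\cM)$-valued series, that the resummations defining the transport formula are legitimate, and that the resulting object genuinely satisfies Huang's axioms for a logarithmic twisted module (lower-truncation in $z$ on each vector, polynomial bound in $\log z$, the twisted Jacobi identity). Since $x$ acts locally nilpotently on $\cM$ by hypothesis, $x_0^N$ annihilates any given vector for $N$ large, so the $\log z$-degree is locally bounded and these convergence issues are controlled; nonetheless this is where the real work lies, and it is precisely the content of \cite{Hu10} Section 5, so for the purposes of this paper I would cite that verification and content myself with exhibiting the formula and the transport identity that make it work.
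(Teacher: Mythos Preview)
Your proposal is correct and aligns with the paper's treatment: the paper does not prove this theorem but states it as the main result of \cite{Hu10} Section~5, and then only verifies the monodromy condition \eqref{twisted_monodromy} and the $L_{-1}$-translation compatibility as sanity checks---precisely two of the verifications you outline. Your sketch goes somewhat further (the transport/conjugation identity and the strategy for the twisted Jacobi identity), but your conclusion that the rigorous work is exactly the content of \cite{Hu10} Section~5 and should be cited rather than reproduced matches the paper.
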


Note that the source states this more generally for non-selfcommuting mode operators, on the other hand it is demanded in \cite{AM09} Theorem 2.1. and in this case we find the exponential more difficult to handle.\CommentsForMe{TC: Where does it exactly enter?} The defining property for $g=e^{2\pi\i x_0}$ holds by construction in the first term of $\Delta$
\begin{align*}
e^{2\pi\i\;z\frac{\partial}{\partial z}}
\Y_{\widetilde{\cM}}(a,z)
&=\Y_{{\cM}}(\exp\left(x_0(\log(z)+2\pi\i)+\sum_{n=1}^\infty \frac{x_n}{-n}(-z)^{-n}\right)a,z)\\
&=\Y_{\widetilde{\cM}}(e^{2\pi\i x_0}a,z)
\end{align*}
and we have 
\begin{align*}
\frac{\partial}{\partial z}
\Y_{\widetilde{\cM}}(a,z)
&=\frac{\partial}{\partial z}\Y_{{\cM}}(\exp\left(x_0\log(z)+\sum_{n=1}^\infty \frac{x_n}{-n}(-z)^{-n}\right)a,z)\\
&+\Y_{{\cM}}((x_0z^{-1}+\sum_{n=1}^\infty x_nz^{-n-1})\exp\left(x_0\log(z)+\sum_{n=1}^\infty \frac{x_n}{-n}(-z)^{-n}\right)a,z)\\
&=\Y_{{\cM}}((L_{-1}+\Y(x,z))(\exp\left(x_0\log(z)+\sum_{n=1}^\infty \frac{x_n}{-n}(-z)^{-n}\right)a,z)\\
&=\Y_{\tilde{\cM}}(L_{-1}a,z)
\end{align*}

Now we calculate the deformed Virasoro action in analogy to \cite{Hu10} Section 5 or \cite{AM09} Thm 2.2:
\begin{example}
Since $L_0x=x$ and $L_nx=0$ for $n>0$ we have by skew-symmetry
\begin{align*}
    \Y(x,z)L
    &=e^{zL_{-1}}\Y(L,-z)x
    &=e^{zL_{-1}}\left((-z)^{-2}x+(-z)^{-1}L_{-1}x+\sum_{n\leq-2} z^{-n-2}L_nx\right)
\end{align*}
in particular the singular terms are $x_0L=-L_{-1}x+L_{-1}x=0$ and $x_1L=x$, and further $x_1x_1L=0$. Hence 
\begin{align*}
    \Y_{\tilde{\cM}}(L,z)
    &=\Y_{\cM}(\Delta(x,z)L,z)
    =\Y_{\cM}(L+z^{-1}x,z) \\
    \tilde{L}_n
    &=L_n+x_n
\end{align*}
\end{example}

Next we consider a set of short screening operators $y^{-\beta}_0=\mathrm{Res}(\Y(y^{-\beta},z))$ with $y^{-\beta}=e^{-\alpha/\sqrt{p}}$ where $-\beta$ runs over a set of simple roots of the opposite Borel. It is known that $x^{\alpha^\vee}_0$ and $y^{-\beta}_0$ commute, due to the commutator formula: For $\alpha\neq \beta$ this follows from $\langle\alpha,-\beta\rangle\geq 0$, whence the sum over $k$ is empty, and for $\alpha=\beta$ by explicit calculations of the summands $k=1,2$ and cancellation. 

\begin{corollary}
The action of $B^\vee$ on $\V_{\Lambda}$ for $\Lambda=\sqrt{p}\Lambda^\vee$ by vertex algebra automorphism descends to an action of $B$ on the vertex subalgebra $\cW_{\g,p}$ defined as intersection of the kernel of all short screenings $y^{-\beta}_0$. 
\end{corollary}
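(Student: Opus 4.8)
The plan is to show that the $^{\Lang}B$-action on $\V_\Lambda$ restricts to an action on the subspace $\cW_{\g,p}=\bigcap_\beta \ker(y_0^{-\beta})$, and that this restricted action is again by vertex algebra automorphisms. Since $s_0^{\alpha^\vee}$ and $x_0^{\alpha^\vee}$ are the infinitesimal generators of the $^{\Lang}B$-action, and the group elements $g\in{^\Lang}B$ are obtained as (possibly infinite, but locally finite on each graded piece) exponentials $e^{s_0^{\alpha^\vee}}$, $e^{x_0^{\alpha^\vee}}$, it suffices to check that each generating derivation preserves $\cW_{\g,p}$; then so does any composite, and so does its exponential, since on each conformal-weight graded piece the exponential is a finite sum.

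The key step is the commutation $[s_0^{\alpha^\vee},y_0^{-\beta}]=0$ and $[x_0^{\alpha^\vee},y_0^{-\beta}]=0$ for all pairs of coroots $\alpha^\vee$ and simple roots $-\beta$ of the opposite Borel, which has in fact already been recorded in the paragraph preceding the statement: one invokes the Borcherds commutator formula
\[
[u_m,v_n]=\sum_{k\ge 0}\binom{m}{k}\,\Y(u_k v)_{m+n-k},
\]
applied with $u=x^{\alpha^\vee}=e^{\sqrt p\,\alpha^\vee}$ (resp. $u=s^{\alpha^\vee}=\partial\phi_{\alpha^\vee}$), $v=y^{-\beta}=e^{-\alpha_\beta/\sqrt p}$, and $m=n=0$; for $\alpha\ne\beta$ the pairing $\langle\alpha,-\beta\rangle\ge 0$ forces $x_k^{\alpha^\vee}y^{-\beta}=0$ for all $k\ge 0$ so the sum is empty, while for $\alpha=\beta$ one checks by hand that the surviving terms (the normally ordered products with shift given by the inner product $-2$, i.e. the $k=1,2$ contributions) cancel. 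The same bracket with $\partial\phi$ in place of the exponential screening is the familiar statement that a Heisenberg current commutes with a screening of weight $1$ whose charge is orthogonal enough, which again follows by noting that $s_k^{\alpha^\vee}y^{-\beta}=0$ for $k\ge 1$ and $s_0^{\alpha^\vee}y^{-\beta}=\langle\alpha^\vee,-\beta\rangle\,y^{-\beta}$ contributes only a total-derivative term $\Y(s_0^{\alpha^\vee}y^{-\beta})_{-1}=\langle\alpha^\vee,-\beta\rangle\,(y^{-\beta})_{-1}$ which has zero residue — more carefully, the bracket of residues is the residue of $\Y(s_0^{\alpha^\vee}y^{-\beta},z)$, a total $z$-derivative, hence zero.

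Given that all generating derivations of the $^{\Lang}B$-action commute with every short screening $y_0^{-\beta}$, the subspace $\cW_{\g,p}$ is stable: if $a\in\ker(y_0^{-\beta})$ then $y_0^{-\beta}(s_0^{\alpha^\vee}a)=s_0^{\alpha^\vee}(y_0^{-\beta}a)=0$, and likewise for $x_0^{\alpha^\vee}$, so the derivations restrict; exponentiating (grade by grade, where the action is locally nilpotent in the unipotent directions and locally finite in the semisimple direction since $\V_\Lambda$ has finite-dimensional conformal weight spaces) yields a well-defined action of the group $^{\Lang}B$ on $\cW_{\g,p}$. Finally, these restricted maps are vertex algebra automorphisms of $\cW_{\g,p}$ because they are the restrictions of vertex algebra automorphisms of $\V_\Lambda$ to a vertex subalgebra that they preserve; one only needs that $\cW_{\g,p}$ is a vertex subalgebra, which holds since a kernel of a collection of screening operators (mode-zero operators that are derivations of the vertex algebra structure) is always a vertex subalgebra.

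The main obstacle, and the reason this is phrased as a corollary rather than a triviality, is the $\alpha=\beta$ case of the vanishing bracket between a long screening and a short screening: there the naive degree count does not immediately kill the Borcherds sum, and one must verify the explicit cancellation of the $k=1$ and $k=2$ terms — this is the genuine computational input, and it is exactly the point already asserted (without the arithmetic) in the sentence introducing the corollary. A secondary subtlety, which I would address with a remark rather than a full argument, is the meaning of ``exponentiating'' an infinite-dimensional group action: one should note that $\V_\Lambda$ is $\Z$- (or $\tfrac1p\Z$-)graded by conformal weight with finite-dimensional homogeneous components, the screening operators and the Heisenberg/vertex modes respect a compatible grading, and hence all exponential series in question truncate on each component, making the $^{\Lang}B$-action well defined without analytic input.
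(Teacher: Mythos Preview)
Your overall architecture matches the paper's: the corollary is stated there without proof, as an immediate consequence of the commutation of long and short screenings established in the preceding paragraph, and you have correctly supplied the missing elaboration (derivations preserving the kernel, exponentiation on finite-dimensional graded pieces, restriction of automorphisms to an invariant subalgebra).

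There is, however, a genuine computational slip in your treatment of the Cartan generators. You claim $[s_0^{\alpha^\vee},y_0^{-\beta}]=0$, arguing that the Borcherds sum produces a $(\,\cdot\,)_{-1}$ term which is a total derivative. But with $m=n=0$ only $k=0$ contributes, and the index is $m+n-k=0$, not $-1$: one gets
\[
[s_0^{\alpha^\vee},y_0^{-\beta}]
=(s_0^{\alpha^\vee}y^{-\beta})_0
=\langle\alpha^\vee,-\beta/\sqrt{p}\rangle\,(y^{-\beta})_0
=\langle\alpha^\vee,-\beta/\sqrt{p}\rangle\,y_0^{-\beta},
\]
which is not zero (and $\Y(s_0^{\alpha^\vee}y^{-\beta},z)=c\,\Y(y^{-\beta},z)$ is not a total $z$-derivative). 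Fortunately this does not damage the conclusion: a commutator that is a scalar multiple of $y_0^{-\beta}$ itself still preserves $\ker(y_0^{-\beta})$, since $y_0^{-\beta}(s_0^{\alpha^\vee}a)=s_0^{\alpha^\vee}(y_0^{-\beta}a)-c\,y_0^{-\beta}a=0$. Equivalently, the torus conjugates each short screening into a nonzero multiple of itself, hence fixes its kernel. Replace your ``commutes'' by ``normalizes'' for the semisimple part and the argument goes through unchanged; the paper itself only asserts commutation for the nilpotent generators $x_0^{\alpha^\vee}$, which is consistent with this.
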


\begin{corollary}
For $\lambda\in \Lambda^*/\Lambda$ and $\V_\lambda$ the simple module of $\V_\lambda$ generated by $e^\lambda$ and $x:=x^{\alpha^\vee}_0$, we obtain an  $x$-twisted module $\widetilde{\V}_\lambda$ by Theorem \ref{thm_Delta} and by restriction an $x$-twisted module of $\cW_{\g,p}$, which we call the $x$-twisted Verma module for weight $\lambda$.
\end{corollary}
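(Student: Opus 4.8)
The proof is an application of Theorem~\ref{thm_Delta} followed by restriction along a sub-vertex-algebra inclusion; essentially all the ingredients have been assembled in the preceding discussion, so the plan is to put them together and verify the compatibility conditions.

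First I would record that the vertex algebra element $x^{\alpha^\vee}=e^{\sqrt p\alpha^\vee}\in\V_\Lambda$ satisfies the hypotheses of Theorem~\ref{thm_Delta}: with respect to the background-charge-shifted Virasoro it is primary of conformal weight $1$ (second bullet above) and its modes are self-commuting (third bullet, since $x^{\alpha^\vee}_k x^{\alpha^\vee}=0$ for all $k\ge 0$). I would also note that $x^{\alpha^\vee}_0$ acts locally nilpotently on any module $\V_\lambda$, $\lambda\in\Lambda^*/\Lambda$: it raises the lattice grading by $\sqrt p\alpha^\vee$, and since the conformal weight of the Fock vacuum $e^\mu$ tends to $+\infty$ as $\mu$ moves in the direction $\sqrt p\alpha^\vee$, only finitely many summands $F_\mu$ of $\V_\lambda$ meet a fixed conformal-weight eigenspace. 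Hence $\exp\bigl(x^{\alpha^\vee}_0\log z\bigr)$ is a finite sum, $\Delta(x^{\alpha^\vee},z)$ is a well-defined operator on $\V_\lambda$, and $g:=e^{-2\pi\i\,x^{\alpha^\vee}_0}$ is a genuine unipotent automorphism of $\V_\Lambda$.

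With this in place, Theorem~\ref{thm_Delta} applied to $\cM=\V_\lambda$ equips the underlying vector space with the structure of an $x^{\alpha^\vee}_0$-twisted $\V_\Lambda$-module $\widetilde{\V}_\lambda$ via $\Y_{\widetilde{\V}_\lambda}(a,z)=\Y_{\V_\lambda}(\Delta(x^{\alpha^\vee},z)a,z)$, with the monodromy relation \eqref{twisted_monodromy} and the $L_{-1}$-derivative property holding exactly as in the computation preceding the statement, and with deformed Virasoro $\widetilde L_n=L_n+x^{\alpha^\vee}_n$ as in the Example above. Next I would restrict this structure along the inclusion $\cW_{\g,p}=\bigcap_{\beta}\ker(y^{-\beta}_0)\hookrightarrow\V_\Lambda$. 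The only point to check is that $g$ preserves $\cW_{\g,p}$, so that the phrase ``$g$-twisted $\cW_{\g,p}$-module'' makes sense; but $x^{\alpha^\vee}_0$ lies in the Lie algebra of ${}^\Lang B$, and by the previous Corollary the ${}^\Lang B$-action descends to $\cW_{\g,p}$, so $g(\cW_{\g,p})=\cW_{\g,p}$. Since $\cW_{\g,p}$ is a sub-vertex-algebra containing the (shifted) conformal vector, the restriction of $\Y_{\widetilde{\V}_\lambda}$ to $\cW_{\g,p}\otimes\widetilde{\V}_\lambda$ is by definition an $x^{\alpha^\vee}_0$-twisted $\cW_{\g,p}$-module. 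The name ``$x$-twisted Verma module'' is then a definition, justified by the fact that $\widetilde{\V}_\lambda$ is generated by the image of $e^\lambda$ and that, as seen in the examples of Section~\ref{sec_twistedTriplet}, its composition structure over $\cW_{\g,p}$ mirrors that of the triplet Verma modules.

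The only genuine subtlety — the ``main obstacle'' if one insists on a self-contained argument rather than a citation — is the interchange of restriction and twisting: one needs that a twisted module over a vertex operator algebra, restricted to a sub-vertex-algebra stable under the twisting automorphism and containing the conformal vector, is again a twisted module for the restricted automorphism. This is formal once one fixes the axioms of Huang \cite{Hu10} or Bakalov \cite{Bak16}: the defining identities for $\Y_{\widetilde{\V}_\lambda}$ — the twisted commutator/Jacobi identity, the $L_{-1}$-derivative property, and the monodromy \eqref{twisted_monodromy} — involve only $a,b$ ranging over the ambient algebra and remain valid upon restriction to $a,b\in\cW_{\g,p}$, while $L_{-1}$ is shared by both algebras. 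I would spell this out in a single line with the appropriate reference.
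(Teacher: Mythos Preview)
Your proposal is correct and follows exactly the approach the paper intends: the paper gives no separate proof of this corollary, treating it as immediate from Theorem~\ref{thm_Delta} (applied to the simple lattice module $\V_\lambda$, with the hypotheses already verified in the three bullets preceding the theorem) together with the previous Corollary ensuring that the ${}^\Lang B$-action, and hence $g=e^{-2\pi\i\,x^{\alpha^\vee}_0}$, preserves the subalgebra $\cW_{\g,p}$. Your write-up simply makes explicit the local nilpotency of $x^{\alpha^\vee}_0$ and the formal compatibility of restriction with twisting, which the paper leaves implicit.
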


We now discuss this more explicitly in the case $\cW_{\sl_2,2}$ discussed in the previous subsection. Recall that for a derivation $x\psi=2t\bpsi$ we had ($\pi$-untwisted) $x$-twisted modules generated by for every representation of the zero-mode algebra 
$$(\psi_{0}^g)^2=t,\quad (\bpsi_{0}^g)^2=0,\quad \{ \psi_{0}^g, \bpsi_{0}^g\}=0$$  
with compatible super vector space structure, more precisely a unique irreducible module with $\bpsi_0^g=0$ and a unique self-extension, 
and unique $\pi$-twisted modules, each decomposed into $\pi$-eigenspaces. We now ask: Which of these modules can be extended to a representation of $\V_{\sqrt{2}\Z}\supset \cW_{\sl_2,2}$, where in this free-field realization $\psi=e^{-\alpha/\sqrt{2}}$ and $\bpsi=L_{-1}e^{\alpha/\sqrt{2}}$. In fact $\V_{\sqrt{2}\Z}$ as a module over $\cW_{\sl_2,2}$ is an indecomposable extension with two composition factors generated by $1$ and $e^{\alpha/\sqrt{2}}$, so the question is, whether the module can be extended by an action of the second generator. Now because
$$\frac{\partial}{\partial z}\Y(e^{\alpha/\sqrt{2}},z)=
\Y(L_{-1}e^{\alpha/\sqrt{2}},z)
=\sum_{n\in\Z} \bpsi_n z^{-n-1}$$
this requires necessarily that $\bpsi_0=0$, because the derivation cannot produce a term $z^{-1}$. Conversely, under this assumption we can define easily
$$\Y(e^{\alpha/\sqrt{2}},z):=\sum_{n\in\Z}\frac{\bpsi_n}{-n}z^{-n}$$
This means that the unique irreducible $\pi$-untwisted and $\pi$-twisted $x$-twisted module of $\cW_{\sl_2,2}$ extends to a $x$-twisted module of $\V_\Lambda$, while the indecomposable  $\pi$-untwisted $x$-twisted module does not extend.\\

We can compare this explicitly to the twisted free field realization using $\Delta$, with
$$x=e^{\alpha\sqrt{2}},\qquad 
\psi=e^{-\alpha/\sqrt{2}},\qquad 
\bpsi=L_{-1}e^{\alpha/\sqrt{2}}$$
For example, since 
$$
x_0e^{-\alpha/\sqrt{2}}=\partial\varphi_{\alpha\sqrt{2}}e^{\alpha/\sqrt{2}}
=2L_{-1}e^{\alpha/\sqrt{2}},\qquad 
x_1e^{-\alpha/\sqrt{2}}=e^{\alpha/\sqrt{2}},\qquad 
x_ne^{-\alpha/\sqrt{2}}=0,n>1$$ 
and all successive actions with $x_n$ are zero, we get upon expanding the exponential the following explicit expression for $\psi^g(z)$
%\exp\left(x_0\log(z)\sum_{n=1}^\infty \frac{x_n}{-n}(-z)^{-n}\right)
\begin{align*}
    \Y_{\widetilde{\cM}}(e^{-\alpha/\sqrt{2}}, z) 
    &=\Y_\cM(\Delta(x,z)e^{-\alpha/\sqrt{2}},z) \\
    &=\Y_\cM((1+x_0\log(z)+x_1z^{-1})e^{-\alpha/\sqrt{2}},z)\\
    &=\Y_\cM(e^{-\alpha/\sqrt{2}},z)
    +z^{-1}\Y_\cM(e^{\alpha/\sqrt{2}},z)
    +2\log(z)\frac{\partial}{\partial z}\Y_\cM(e^{\alpha/\sqrt{2}},z)
\end{align*}
As intended, the logarithmic term has by construction the correct monodromy behavior \ref{twisted_monodromy}, if we recall again that in our case  $(x_0)^2=0$. The non-logarithmic operator has the additional term $z^{-1}\Y_\cM(e^{-\alpha/\sqrt{2}},z)$, which we claim corresponds to the additional (green) terms in the previous section. For example, we recover the representation of the zero-mode algebra on the groundstates:
\begin{align*}
 \psi_0 1^+
 &=
 \mathrm{Res}\left(\Y_{\tilde{\cM}}(-e^{\alpha/\sqrt{2}},z)\right)1
 =
\mathrm{Res}\left(\textcolor{darkgreen}{z^{-1}\Y_\cM(-e^{\alpha/\sqrt{2}},z)}\right)1=
\textcolor{darkgreen}{e^{\alpha/\sqrt{2}}}=\textcolor{darkgreen}{1^-} \\
\psi_0 1^-
&=
 \mathrm{Res}\left(\Y_{\tilde{\cM}}(-e^{\alpha/\sqrt{2}},z)\right)e^{\alpha/\sqrt{2}}
 =\mathrm{Res}\left(\Y_{{\cM}}(-e^{\alpha/\sqrt{2}},z)\right)e^{\alpha/\sqrt{2}}=
 1=1^+\\
 \bpsi_0 1^+
 &= \mathrm{Res}\left(\Y_{\tilde{\cM}}(e^{\alpha/\sqrt{2}},z)\right)1
 =\mathrm{Res}\left(\Y_{{\cM}}(e^{\alpha/\sqrt{2}},z)\right)1
=0\\
 \bpsi_0 1^-
 &= \mathrm{Res}\left(\Y_{\tilde{\cM}}(e^{\alpha/\sqrt{2}},z)\right)e^{\alpha/\sqrt{2}}
 =\mathrm{Res}\left(\Y_{{\cM}}(e^{\alpha/\sqrt{2}},z)\right)e^{\alpha/\sqrt{2}}
=0
\end{align*}
\begin{remark}
In \cite{FT10,Sug21a} (again, simply-laced) the modules of $\cW_{\g,p}$ where constructed by the cohomology of $\V_\Lambda\times_{B^\vee} G^\vee$, which is a vertex algebra bundle over $G^\vee/B^\vee$ with fiber $\V_\Lambda$. 
The fact that we find a single $g$-twisted module of every untwisted module of  $\V_\Lambda$, in conjunction with the results of the second part, indicates some connection between these approaches.
\end{remark}

\begin{question}
It would be nice to have explicit twisted intertwining operators for the tensor product.
\end{question}

\section{Semiclassical limits and Sturm-Liouville operators}

To speak about limits of vertex algebras for $\epsilon\to 0$ (or respectively $\epsilon\to\infty$), we need to consider a vertex algebra $\V_\epsilon$ over the field of rational functions $\C(\epsilon)$ and fix a choice of an integral form, which is $\C[\epsilon]$-submodule $\V_\epsilon^{\C[\epsilon]}$ such that $\V_\epsilon^{\C[\epsilon]}\otimes_{\C[\epsilon]}\C(\epsilon)\cong \V_\epsilon$ and $\V_\epsilon^{\C[\epsilon]}$ is closed under the vertex operator $\Y$. Differently spoken, we fix a $\C[\epsilon]$-basis such that the vertex operator  on $\V_\epsilon^{\C[\epsilon]}$ only consists of nonnegative (respectively nonpositive) powers of $\epsilon$, so that they are well defined in the limit $\epsilon\to 0$ resp. $\epsilon\to \infty$ as the $\epsilon^0$-term. This procedure has been introduced in \cite{CL15} under the name \emph{deformable family}. The semiclassical limit of a vertex algebra acquires the additional structure of a Poisson vertex algebra, as discussed e.g. in  \cite{FBZ04} Sec. 16 \\

\subsection{Limit of affine Lie algebras: Connections}\label{sec_LimitAff}

We first briefly review the semiclassical limit of affine Lie algebras: Let  $\hat{\g}_\kappa$ be an affine Lie algebra 
$$[a_m, b_n]=[a,b]_{m+n}+m\delta_{m,-n}\langle a,b\rangle\kappa$$
with $a,b\in \g$ and Killing form $\langle a,b\rangle$. Then with the choice of an integral form generated by $a_n\kappa^{-1}$ we have the semiclassical limit $\kappa\to \infty$, which due to
$$[a_m\kappa^{-1}, b_n\kappa^{-1}]=\kappa^{-1}\cdot [a,b]_{m+n}\kappa^{-1}+\kappa^{-2}\cdot m\delta_{m,-n}\langle a,b\rangle\kappa$$
becomes commutative and can be identified with the ring of functions on the space of $\g$-valued connections 
$$\d+A,\qquad A=\sum_{n}A_n z^{-n-1}$$
where $a_n^{\kappa+p}\kappa^{-1}$ corresponds to the function 
$$\d+A\;\longmapsto \langle A_n,a\rangle$$
Clearly, the limit of the vacuum module (the Verma module for weight zero) can be identifies with the ring of functions on the space of regular $\g$-valued connections. As stated in \cite{FBZ04} Section 16.3 this is in fact an isomorphism respecting the natural Poisson structure.\\

%There are different integral forms and hence deformable families we can consider: Take as generators of the Verma module $a_0,a_m\kappa^{-1},m>0$ for $a\in \g$. Then $a_0\kappa^{-1}$ acts by zero, and there is still a nontrivial action of $\g$. In this sense, the limit $\kappa\to \infty$ of $\bV^\kappa_\lambda(\g)$, with $h$ not scaling with $b$ (so $\lambda b^{-2}\to 0$), is a bundle over the ring of functions on the space of regular $\g$-connections, and the zero-fibre is the irreducible representation of $\g$ with highest weight $\lambda$.
%The limit of the direct sum 
%$$\bigoplus_{\lambda=0}^\infty \bV^\kappa_\lambda\to \bigoplus_{\lambda=0} \LieL_\lambda(\sl_2)$$
%can be identified with the ring of functions on the space of regular connections with a fixed solution, where $L_\lambda(\sl_2)$ are the polynomials of degree $\lambda$ in some basis of two solutions under the natural $\sl_2$-action 

%\begin{question}
%Presumably, under this identification the limit Poisson vertex algebra structure matches some natural Poisson structure on the bundle of solutions over the variety of regular connections.
%\end{question}

\subsection{The Virasoro algebra}\label{sec_Vir}

We now discuss in more detail the semiclassical limit of $\rW^\kappa(\sl_2)$, which is equal to the Virasoro algebra $\Vir^b$ for $b=\sqrt{-\kappa-2}$ defined by 
$$[L_m,L_n]=(m-n)L_{m+n}+\frac{c}{12}(m^3-m)\delta_{m,-n}$$
where we parameterize central charge and conformal weights ($L_0$-eigenvalues) by
$$c=1+6(b^{-1}+b)^2,\qquad h_P=\frac{(b^{-1}+b)^2}{4}-P^2$$
The Verma module $\V_{c,h}$ is the module with highest weight vector
$$L_0v=hv,\qquad L_nv=0,\; n>0$$
For generic $b^2\not\in\mathbb{Q}$ the Verma module is irreducible except for the following values $h_{m,n}$ for $m,n\in \mathbb{N}$, in which case there is a singular vector at level $mn$.
$$h_{m,n}=\frac{(b^{-1}+b)^2}{4}-\frac{(mb^{-1}+nb)^2}{4}
=-b^{-2}\frac{m^2-1}{4}-\frac{mn-1}{2}-b^{2}\frac{n^2-1}{4}
$$
In particular the vacuum module $h_{1,1}=0$ has a singular vector $L_{-1}v=0$. For later use we spell out
\begin{align*}
    h_{m,1}&=-b^{-2}\frac{m^2-1}{4}-\frac{m-1}{2}\\
    h_{1,n}&=-\frac{n-1}{2}-b^{2}\frac{n^2-1}{4}
\end{align*}
and note that in in the limit $b\to \infty$ we have $h_{m,1}\to -\frac{m-1}{2}$ and $h_{1,n}$ scales like $-b^{2}\frac{n^2-1}{4}$.\footnote{I physics literature, the former limit is called light, the latter heavy.}

It is proven in \cite{BSA88} that the singular vector for $h_{1,n}$, and after replacing $b\leftrightarrow b^{-1}$ also $h_{m,1}$, is given by
\begin{align}\label{formula_BSA}
S_{1,n}=(b^{-2})^n\sum_{n_1+\cdots n_k=n} \frac{(n-1)!}{\prod_{l=1}^{k-1}(\sum_{i=1} L_{n_i})(n-\sum_{i=1} L_{n_i})} L_{-n_1}b^{-2}\cdots L_{-n_2}b^{-2}
\end{align}
For example $S_{1,2}b^4=(L_{-1}b^{-2})^2+L_{-2}b^{-2}$ and $S_{2,1}b^{-4}=(L_{-1}b^2)^2+L_{-2}b^{2}$. 
We note that in the naive limit $b\to \infty$ we have for $h_{m,1}$ the singular vector $L_{-1}^n$ and for $h_{1,n}$ the singular vector $L_{-n}$. However, the result depends on the choice of integral form, see below.\\

The tensor structure on the representation category of the Virasoro vertex algebra has been obtained in \cite{CJORY21}.

\subsection{Limit of the Virasoro algebra}\label{sec_LimitVir}

There are several interesting choices of integral forms and hence deformable families. First, let us choose a system of generators ${\ell}_n=L_n b^{-2}$, then the limit of the Virasoro algebra $b\to \infty$ becomes commutative. It can be identified with the space of functions on the space of Sturm-Liouville operators
$$\frac{\d^2}{\d z^2}+q(z),\qquad q(z)=\sum_{n\in \Z} q_n z^{-n-2}$$
where ${\ell}_n$ corresponds to the function 
$$\frac{\d^2}{\d z^2}+q\;\longmapsto q_n$$

There is also a different integral form generated by $L_1,L_0,L_{-1}$ and ${\ell}_{n}$ else, whence ${\ell}_1={\ell}_0={\ell}_{-1}=0$. In this case $L_1,L_{0},L_{-1}$ generate as usual the Lie algebra $\sl_2$. The action on ${\ell}_{-n},n\geq 2$ can be read off the Virasoro algebra relations
\begin{align*}
[L_1,{\ell}_{n}]=(-n+1){\ell}_{n-1}
[L_0,{\ell}_{n}]=(-n){\ell}_{n}
[L_{-1},{\ell}_{n}]=(-n-1){\ell}_{n+1}
\end{align*}
so ${\ell}_{n},n\leq -2$ resp. $n\geq 2$ span an $\sl_2$ Verma module with lowest weight vector ${\ell}_{-2}$ resp. highest weight vector ${\ell}_2$.

\begin{remark}
Again, the limit acquires an additional Poisson structure, and the correspondence above relates this with a respective natural Poisson structure on the space of Sturm-Liouville operators. A similar principle relates the quantum Hamiltonian reduction of arbitrary $\cW_{\g,k}$ to higher differential operators, which is related to the KdV-hierarchy and also puts more context on the formula for the singular vectors $S_{1,n}$, see the lecture \cite{Zub91}.
\end{remark}

We give some examples:
\begin{example} The limit of the Verma module $\V_h$ at conformal weight $h=b^2a$ for a fixed value $a$ is the ring of functions on the space of  Sturm-Liouville operators with regular singularity with fixed singular coefficient $q_0=a$, because $\ell_0=L_0b^{-2}$ has eigenvalue $a$ by construction and on the other side $\ell_0$ corresponds to the function that assigns to a Sturm-Liouville operator the coefficient $q_0$. 
\end{example}
\begin{example}
    The limit of the vacuum representation $\varphi_{1,1}=\V_0/L_{-1}1$ is the ring $\C[{\ell}_{-2},{\ell}_{-3},\ldots]$ of functions on the space of  Sturm-Liouville operators that are regular in $z=0$.\\
    
    More generally, the limit of $\varphi_{1,n}$ is the ring of functions on the space of Liouville operators with ${\ell}_0=-\frac{n^2-1}{4}$ with diagonalizable monodromy. We discuss this in more depth in the next section.
\end{example}
\begin{example}
    The limit of $\varphi_{m,1}$ is, in view of ${\ell}_0\to 0$ and  $S_{m,1}=b^{4m}{\ell}_{-1}^n+\cdots$ the truncated polynomial ring
    $$\C[{\ell}_{-1},{\ell}_{-2}\ldots]/({\ell}_{-1}^n)$$
    viewed as a module over $\C[{\ell}_{-2},{\ell}_{-3},\ldots]$. Note that in contrast to the the previous case $\varphi_{1,n}$ now $L_0$ acts with a nonzero value in $-\frac{m-1}{2}+\mathbb{N}$. \\
    
    Let us now choose a different integral form on this module, generated by $L_{-1},{\ell}_{-2},{\ell}_{-3}\ldots$. Recall formula \eqref{formula_BSA} for $S_{m,1}$ and assume that the partitions $(n_1,\ldots,n_k)$ we sum over are ordered and let $l$ be the index such that $n_i=1$ iff $i\leq l$. Then the corresponding summand, rewritten in the chosen integral form, is 
    $$(L_{-n_1}b^2)\cdots (L_{-n_k}b^2)
    =b^{2l+4(k-l)} L_{-1}^{l}{\ell}_{-n_{k-l}}$$
    and the highest scaling $b^{2m}$ is attained precisely for the partition $(1,\ldots 1,2,\ldots 2)$. Hence this limit of $\varphi_{m,1}$ is 
        $$\C[L_{-1},{\ell}_{-2}\ldots]/S_{m,1}$$
    with $S_{m,1}$ a certain polynomial in $L_{-1},{\ell}_{-2}$ starting with $L_{-1}^m$.  By acting with $L_{-1},L_0,L_1$ there seems to be an $\sl_2$-action on this space.
\end{example}

\begin{question}~
\begin{itemize}
    \item 
    Presumably, the $\sl_2$ action is the action of the Möbius transformations on the space of Sturm-Liouville operators. Note that the trivial operator $\frac{\d^2}{\d z^2}$ where all ${\ell}_{n}=0$ is invariant under $\sl_2$, otherwise we have a  conjugated $\sl_2$-action for regular $q(z)$.
    \item Presumably, the limit of the direct sum $\bigoplus_{m\geq 1}\varphi_{m,1}$ is the ring of functions on the space of regular Sturm-Liouville operators together with a fixed solution $f_1$. Presumably, the $\sl_2$-action on the space of solutions corresponds to the $\sl_2$-action by $L_{-1},L_0,L_1$. Is there also a natural matching Poisson structure?
    \item What is the respective statement for $\varphi_{m,n}$? The most naive guess would be that this is simply a combination of the cases $\varphi_{m,1}$ and $\varphi_{1,n}$. This is, that the direct sum $\bigoplus_{m\geq 1}\varphi_{m,n}$ is the ring of functions on the space of Sturm-Liouville operators with $q_0=-\frac{n^2-1}{4}$ and diagonal monodromy, as discussed below for $\varphi_{1,n}$, together with a fixed solution $f_1$.
\CommentsForMe{Do $2,n$ explicitly}
%EXPLICIT FORMULA FOR VIRASORO SINGULAR VECTOR DMITRY V. MILLIONSHCHIKOV}
    \item Similarly, one could presume that the semiclassical limit of the triplet algebra $\bigoplus \varphi_{m,1}\otimes \C^{m}$ is the ring of functions on the space of regular Sturm-Liouville operators together with a fixed basis of solutions $f_1,f_2$, and the according $\sl_2$-action. 
    \item One could also consider an integral form with ${\ell}_0,L_{-1}$ and $L_0,{\ell}_{-1}$ and then only action of part of $\sl_2$. What does this correspond to?
    \item Similarly, for the $W$-algebra resp. the affine Lie algebra for arbitrary $\g$ there should be a modified integral form on the Verma module, such that the limit is the bundle given by opers resp. connections with a fixed solution, together with an action of $\g$.
\end{itemize}
\end{question}

\subsection{Sturm-Liouville operators}\label{sec_SturmLiouville}

The structure of the representations of the Virasoro algebra deeply reflects the solution properties of the associated differential equation, for example its monodromy. Let us briefly review some aspects of Sturm-Liouville differential equations:

$$\left(\frac{\d^2}{\d z^2}+q(z)\right)f(z)=0$$

The Frobenius method makes a power series ansatz $f(z)=\sum_{n\in\Z} a_nz^n$ around $z=0$. If $q(z)$ is regular then $z=0$ is called a regular point. If $q(z)z^2$ is regular, say $q(z)=\frac{{\ell}_0}{z^2}+\cdots$, then $z=0$ is called a regular singular point. In this case, we can start by solving the differential equation for the lowest term 
$$\left(\frac{\d^2}{\d z^2}+\frac{{\ell}_0}{z^2}\right)f_0(z)=0,\qquad f_0(z)=z^s,\qquad s(s-1)+{\ell}_0=0$$
and obtain a recursive formula for the power series coefficients of a solution $f(z)=\sum_{n=0}^\infty a_nz^{n+s}$. The equation $s(s-1)+{\ell}_0=0$ is called indical equation. Indeed, the theorem of Fuchs states, that the solutions of a Sturm-Liouville differential equation given by a convergent Frobenius series as above, or one of the solutions is of second kind:
$$f_2=f_1\log(x)+\sum_{n=0}^\infty a_n z^{n+s}$$
where $f_1$ is a solution of the first kind. \\

To understand more precisely, when this can happen, we consider the monodromy of the $2$-dimensional space of solutions around $z=0$. Suppose the two solutions of the indicial equation $s_1,s_2$ differ not by an integer, then the monodromy is diagonalizable 
$$\begin{pmatrix} e^{2\pi\i s_1} & 0 \\ 0 &  e^{2\pi\i s_2} \end{pmatrix}$$
in a basis of two distinguished solutions $f_1,f_2$ with Frobenius series for $s_1,s_2$. If $s_1-s_2\in\Z$, it may happen that the monodromy matrix is a Jordan block
$$\begin{pmatrix} e^{2\pi\i s_1} & 2\pi\i \\ 0 &  e^{2\pi\i s_2} \end{pmatrix}$$
with a solution $f_1$ with a Frobenius series for $s_1$ and a solution $f_2$ of the second type. Indeed, the classification theorem states that Sturm-Liouville operators with regular singular points are up to coordinate transformations  classified by their monodromy operator, see e.g. \cite{Ov01} Section 4.2.\\

From the perspective of the Virasoro algebra, ${\ell}_0$ is the rescaled conformal dimension, and the indicial equations $s(s-1)+{\ell}_0=0$ with $s_1-s_2=n$ precisely correspond to the values ${\ell}_0=-\frac{n^2-1}{4}$ of the series $\varphi_{1,n}$. A striking observation (say, in the next examples) is that the singular vector $S_{1,n}$, as a polynomial equation in ${\ell}_{-k}$, seems to predict precisely those coefficients that lead to diagonalizable monodromy.  In this sense, the limit of $\varphi_{1,n}$ could be identified with the ring of functions on the space of diagonalizable Sturm-Liouville operators with $s_1-s_2=n$.

\begin{question}
We could not find sources in literature for the statements in the previous paragraph, although we would strongly assume that this behavior is known to the experts. We would be happy for any reference, otherwise this is an interesting question to settle.  
\end{question}

For Sturm-Liouville differential equations with irregular singularity, one observes the Stokes phenomenon that the vicinity of the singularity consists of angular sectors, called Stokes sectors, each with a different asymptotic expansion, see for example \cite{Boa14}. It would be very interesting to continue the present discussion in this context, see Question~\ref{quest_irregular}.\\

\begin{example}
The Euler equation 
$$\left(\frac{\d^2}{\d z^2}+\frac{a}{z^2}+\frac{0}{z}\right)f(z)=0$$
is solved by $z^{s}$ for the two solutions $s(s-1)+a=0$, meaning
$s=\frac{1\pm\sqrt{1-4a}}{2}$. For $\varphi_{1,2}$ we have ${\ell}_0=-\frac{3}{4}$ and thus $s=\frac{3}{2},-\frac{1}{2}$. In particular the monodromy is 
$$\begin{pmatrix} e^{2\pi\i(\frac{3}{2})} & 0 \\ 0 & e^{2\pi\i(-\frac{1}{2})}  \end{pmatrix}=\begin{pmatrix} -1 & 0 \\ 0 & -1 \end{pmatrix}$$
This diagonalizability is consistent with ${\ell}_{-1}=0$ being the singular vector in $\varphi_{1,1}$.
\end{example}

\begin{example}
The Bessel differential equation 
$$ \left(z^2 \frac{\d^2 z}{\d z^2} + z \frac{\d}{\d x} + (z^2 - \alpha^2 )\right)F_\alpha(z)=0$$
becomes under substitution $F_\alpha(z)=z^{-1/2}f_\alpha(z)$ 
$$\left(\frac{\d^2 z}{\d z^2} + \frac{1-4\alpha^2}{4z^2}+\frac{0}{z}+1\right)F_\alpha(z)$$
with a regular singularity at $z=0$, solved by the Bessel function $J_\alpha(z)$. Typically $J_{\pm \alpha}(z)$ are two linearly independent solutions. However, if $\alpha$ is an integer than some coefficient vanish and we have $J_{-\alpha}(z)=(-1)^\alpha J_\alpha(z)$. Then, another linearly independent solution can be obtained by a limit procedure, called the Bessel function of the second kind
\begin{align*}
    Y_n(z) =
\frac{2}{\pi} J_n(z) \ln \frac{z}{2}
&-\frac{\left(\frac{z}{2}\right)^{-n}}{\pi}\sum_{k=0}^{n-1} \frac{(n-k-1)!}{k!}\left(\frac{z^2}{4}\right)^k \\ &-\frac{\left(\frac{z}{2}\right)^n}{\pi}\sum_{k=0}^\infty (\psi(k+1)+\psi(n+k+1)) \frac{\left(-\frac{z^2}{4}\right)^k}{k!(n+k)!}
\end{align*}
We now discuss this from our perspective: The indicial equation predicts solutions with exponent $s=\frac{1}{2}\pm \alpha$, which is consist with the behavior $z^{\pm \alpha}$ for the Bessel functions $J_\alpha$ and $J_{-\alpha}$ at the singularity. For $\alpha$ half-integer, the exponents differ by an integer. Accordingly, the value ${\ell}_0=\frac{1-4\alpha^2}{4}$ matches ${\ell}_0=-\frac{n^2-1}{4}$ for $n=2\alpha$. Moreover 
$${\ell}_{-1}=0, \quad {\ell}_{-2}=1,\quad {\ell}_{-n}=0,n>2$$
Thus the only contribution to $S_{1,n}$ is $(n_1,n_2,...)=(2,2,...)$ and $S_{1,n}=0$ iff $n$ odd. Accordingly the monodromies are for $n=2\alpha$ odd resp. even are in the basis $J_\alpha,$
$$\begin{pmatrix} 1 & 0 \\ 0 & 1  \end{pmatrix},\qquad \begin{pmatrix} -1 & 4\i \\ 0 & -1 \end{pmatrix}$$
According to their appearance in applications, the first type are called spherical Bessel functions and the second type cylindrical harmonics.    
\end{example}

\begin{example}
The hypergeometric differential equation 
$$\left(z(1-z)\frac {\d^2}{\d z^2} + (c-(a+b+1)z) \frac {\d}{\d z} - ab\right)F(z) = 0$$
has regular singular points at $0,1,\infty$, and it is up to coordinate transformation the unique differential equation with three regular singular points. It can be transformed to the differential equation 
$$\left(\frac{\d^2}{\d z^2}+q(z)\right)f(z)=0$$
\begin{align*}
q(z)&=\frac{z^2(1-(a-b)^2) +z(2c(a+b-1)-4ab)+c(2-c)}{4z^2(1-z)^2}\\
&=\frac{c(2-c)}{4}z^{-2}
+\frac{(2c(a+b-1)-4ab)+2c(2-c)}{4}z^{-1}+\cdots \\
%&+\frac{(1-(a-b)^2))+2(2c(a+b-1)-4ab)+3c(2-c)}{4}z^{0}
%+\cdots\\
f(z)&=z^{c/2}(1-z)^{-(c-a-b-1)/2}\cdot F(z)
\end{align*}
For $c\not\in\Z$ the hypergeometric equation has two solution
$$ {_2}F_1(a,b,c;z), \qquad 
z^{1-c} \, {_2}F_1(1+a-c,1+b-c;2-c;z)
$$
leading to solutions with exponents $s=c/2$ and $s=1-c/2$ at $z=0$. For integer $c$, either one of the solutions is a rational function, or a hypergeometric function with an additional logarithms and a digamma function. This phenomenon is reviewed for example in  \cite{Vid07} and \cite{DK17}, which also nicely discusses this effect on a toy model ${_0}F_{1}$ with one regular singular point.\\

The case of positive integer $c$ (resp. $2-c$) corresponds by  $c=n+1$ we have $\frac{c(2-c)}{4}=-\frac{n^2-1}{4}$ to $\varphi_{1,n}$. \\

For example, $c=0$ corresponds to $\varphi_{1,1}$ and $z^{-2}$-term in $q(z)$ disappears. The relation ${\ell}_{-1}$ means that the $z^{-1}$-term in $q(z)$ also disappears and $b=0$ resp. $a=0$. In the hypergeometric equation in this case the third term vanishes and 
$$\left(z(1-z)\frac {\d^2}{\d z^2} -(a+1)z \frac {\d}{\d z}\right)F(z) = 0$$
and after canceling $z$ and substituting $F'$ we find solutions
$$F=1,\quad (1-z)^{a+2},\qquad \text{resp.}\qquad F=1,\quad \log(1-z)$$
with the respective diagonalizable monodromy with eigenvalues $+1$

On the other hand the hypergeometric function for $a+b=1$ is related to the Legendre function $P_{-a}^{1-c}(1-2z)$. For integer $c$ the solutions are Legendre polynomials, together with the Legendre functions of the second kind, which involve logarithms. 
\end{example}

We finally discuss the action by coordinate transformation and an $\sl_2$-action:

If $f_1,f_2$ are solutions of the differential equation, then $f_1/f_2$ has Schwarzian derivative $2q(z)$. Here, the Schwarzian derivative $\mathcal{S}$ has the following definition and defining property
$$(\mathcal{S}f)(z)  = \left( \frac{f''(z)}{f'(z)}\right)'  - \frac{1}{2}\left(\frac{f''(z)}{f'(z)}\right)^2 
 = \frac{f'''(z)}{f'(z)}-\frac{3}{2}\left(\frac{f''(z)}{f'(z)}\right)^2$$
$$\mathcal{S}(f \circ g) = \left( \mathcal{S}(f)\circ g\right ) \cdot(g')^2+\mathcal{S}(g)$$
and is zero for Möbius transformations. We have the following action of precomposing with a diffeomorphism, 
$$\frac{\d^2}{\d z^2} + f^\prime(z)^2 \,(q\circ f)(z) + \tfrac{1}{2} \mathcal{S}(f)(z)$$
and the following remarks:
\begin{itemize}
    \item The space of Sturm-Liouville operators can be interpreted as the coadjoint representation of the Virasoro algebra, see 
    \cite{Ov01}. The Schwartz derivative appears here a $1$-cocycle for the Virasoro algebra.
    \item From the analytic perspective, this transformation formula is the canonical transformation behavior of a linear map between spaces of tensor densities, see \cite{Ov01} Section 1.7. and \cite{FBZ04} Section 8.2 and 16.5.
    $$\mathcal{F}_{-1/2}\to \mathcal{F}_{3/2}$$
    \item From a conformal field theory perspective, this transformation behavior with Schwarzian derivative is precisely the transformation behavior of the Virasoro field $\Y(T,z)$ at nonzero or zero central charge.
    \item The functions $f_1^2, f_1f_2,f_2^2$, which also appear as a basis of $\varphi_{3,1}$, are solutions to a third order differential equation, this should be viewed in the sense of the KdV hierarchy, see \cite{Ov01,Zub91}
\end{itemize}

Suppose $f_1,f_2$ are solutions of a Sturm-Liouville problem. Then the transformation to the coordinate $t=f_1/f_2$ brings the differential equation to the trivial form $\frac{\d^2}{\d t^2} f(t)=0$. \\

One has for the trivial form a Lie algebra $\sl_2$ via $\frac{\d}{\d z},z\frac{\d}{\d z},z^2\frac{\d}{\d z}$. Following Kirillov, we see that via the coordinate transformation, this leads to an action of $\sl_2$ by vector fields 

$$f_1^2 \frac{\d}{\d z},\quad  f_1f_2\frac{\d}{\d z},\quad f_2^2\frac{\d}{\d z}$$

We check the commutator relations
\begin{align*}
    [f_1f_2 \frac{\d}{\d z},f_1^2 \frac{\d}{\d z}]
    &=2f_1^2f_1'f_2\frac{\d}{\d z}
    -f_1^2f_1'f_2\frac{\d}{\d z}
    -f_1^3f_2'\frac{\d}{\d z} =W\cdot f_1^2 \frac{\d}{\d z}\\
    [f_1f_2 \frac{\d}{\d z},f_2^2 \frac{\d}{\d z}]
    &=2f_2^2f_2'f_1\frac{\d}{\d z}
    -f_2^2f_2'f_1\frac{\d}{\d z}
    -f_2^3f_1'\frac{\d}{\d z} =-W\cdot f_1^2 \frac{\d}{\d z}\\
    [f_1^2 \frac{\d}{\d z},f_2^2 \frac{\d}{\d z}]
    &=2f_1f_1'f_2^2\frac{\d}{\d z}
    -2f_1^2f_2'f_2\frac{\d}{\d z}=2W\cdot f_1f_2 \frac{\d}{\d z}
\end{align*}
where the Wronski determinant $W=f_1'f_2-f_1f_2'$ is constant and nonzero. \CommentsForMe{We also check that these vector fields leave the space of solutions invariant}
%\begin{align*}
%f_1^2\frac{\d}{\d z}f_1=....0?
%\left(\frac{\d^2}{\d z^2}+q(z)}\right)f_1^2 \frac{\d}{\d z}f_1
%&=(f_1^2f_1')''+(-f_1'')f_1^2f_1'
%\mathrm{ad}_{f_1^2}(\frac{\d^2}{\d z^2}+q(z))
%&=f_1^2q'+2(f_1^2)'q-(f_1^2)'''\\
%&=f_1^2(-f_1''')+4f_1'f_1(-f_1'')-2f_1f_1'''(-f_1'')-3f_1'f_1''
%directly....?
%[ \frac{\d^2}{\d z^2}+q(z),f_1f_2 \frac{\d}{\d z}]
%&=
%2(f_1'f_2+f_1f_2')\frac{\d^2}{\d z^2}
%+f_1''f_2+f_1'f_2'+f_1f_2''-f_1f_2 q(z)'
%....
%\end{align*}
Since, at least for regular $q(z)$, we may assume after a coordinate change a trivial Sturm-Liouville operator, it suffices to study $\sl_2$-action in this case
$$\frac{\d}{\d z},\quad z\frac{\d}{\d z},\quad z^2\frac{\d}{\d z} $$
For singular $q(z)$ such a coordinate change is no longer holomorphic and we do not get an action of $\sl_2$ by regular vector fields. 

\section{Vertex algebras with big center}

Denote by $\rV^\kappa(\g)$ the universal affine vertex algebra at level $\kappa$ and by $\bM^\kappa_\lambda$ the induced  module at dominant weight $\lambda$, by $\bW^\kappa_\lambda$ the Wakimoto module, and by $\bV^\kappa_{\lambda}$ their irreducible quotient for integral weight $\lambda$. Denote by  $\bL_{\kappa^*}(\g)$ the irreducible affine vertex algebra for integral level ${\kappa^*}$ and by $\bL^{\kappa^*}_\alpha$ the irreducible quotients modules $\bL^\kappa_\lambda$ for integral weight $\lambda$. For $\g=\sl_2,{\kappa^*}=1$ we have $\bL^1(\sl_2)=\V_{\sqrt{2}\Z}$ a lattice vertex algebra and $\bL^1_0=\V_0,\;\bL^1_{1/2}=\V_{1/2}$ their irreducible modules.

Denote by $\cW^\kappa(\g)$ the quantum Hamiltonian reduction by a principal nilpotent element $f$, and $\bV_{\lambda,f}$ and more generally $T_{\lambda,\mu}^\kappa$ the irreducible module obtained by reduction resp. $\mu$-twisted reduction of $\bV^\kappa_\lambda$, for $\lambda$ a dominant weight and $\mu$ a dominant coweight \cite{AF19}. For $\g=\sl_2$ we have $\bV_{\lambda,f}=\varphi_{m,1}$ and $T_{\lambda,\mu}=\varphi_{m,n}$ with Feigin-Frenkel duality $\varphi^b_{m,n}=\varphi^{b^{-1}}_{n,m}$. Denote by $\pi^\kappa_\lambda$ the Fock module, which is the reduction of the Wakimoto module and for generic $\lambda$ coincides with the Verma module.\\

\subsection{Coupled vertex algebras}\label{sec_couple}

Predecessors of the following general construction were the explicit constructions for $\sl_2,p=1$ in \cite{BFL16} and $\sl_2,p=2$ in \cite{BBFLT13} and $\g,p=1$ in \cite{ACF22}, whose conventions we use. On the other hand the construction builds for $p=0$ on the algebra of chiral differential operators $\mathcal{D}_{G,\kappa}$, which can be seen as a chiralization of the space of regular function on a finite group or Lie group.

The following construction first appeared in \cite{CG17} from physical considerations, on a category level it has been proven in \cite{Mor21} using quantum groups and Kazhdan-Lusztig correspondence 
\begin{theorem}\label{thm_couple}
For each integer $p>0$ divisible by the lacing number $m$ there exists a  family of vertex algebras, which extends $\rV^\kappa(\g)\otimes \rV^{{\kappa^*}}(\g)$ and decomposes over it as follows
$$\A^{(p)}[\g,\kappa]
=\bigoplus_{\lambda\in Q^+} \bV_\lambda^\kappa(\g)\otimes \bV_{\lambda^*}^{{\kappa^*}}(\g) $$
for a pair of generic levels $\kappa,{\kappa^*}$ satisfying 
$$\frac{1}{\kappa+h^\vee}+\frac{1}{{\kappa^*}+(h^\vee)^\vee}=p$$
where $(h^\vee)^\vee$ denotes the dual Coxeter number of the Langlands dual root system. 
\end{theorem}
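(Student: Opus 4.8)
The plan is to realize $\A^{(p)}[\g,\kappa]$ as a vertex operator algebra extension of $\rV^\kappa(\g)\otimes\rV^{{\kappa^*}}(\g)$: I would equip the underlying space $\bigoplus_{\lambda\in Q^+}\bV_\lambda^\kappa(\g)\otimes\bV_{\lambda^*}^{{\kappa^*}}(\g)$ with a vertex operator algebra structure extending the product of the two Sugawara algebras, and then exhibit a $\C[(\kappa+h^\vee)^{-1}]$-integral form. The conceptual input is the correspondence, valid in the generic-level semisimple setting, between commutative (haploid) algebra objects $A$ in a braided tensor category $\cC$ of $\V$-modules and VOA extensions $\V\subset A$ in the sense of Huang--Kirillov--Lepowsky: once the relevant object is identified as a commutative, associative, single-valued algebra object, the extension exists and is essentially unique, with conformal vector $\omega^\kappa\otimes 1+1\otimes\omega^{{\kappa^*}}$.

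First I would pin down the algebra object. At generic $\kappa$ the category $\mathrm{KL}_\kappa(\g)$ of $\rV^\kappa(\g)$-modules is a semisimple rigid braided tensor category and, by Kazhdan--Lusztig, braided-equivalent to $\Rep U_q(\g)$ with $q=\exp(\pi\i/r^\vee(\kappa+h^\vee))$, sending $\bV_\lambda^\kappa(\g)$ to the irreducible module $L_\lambda$. Working in the external product $\mathrm{KL}_\kappa(\g)\boxtimes\mathrm{KL}_{{\kappa^*}}(\g)$, the candidate $A=\bigoplus_\lambda\bV_\lambda^\kappa\boxtimes\bV_{\lambda^*}^{{\kappa^*}}$ corresponds to $\bigoplus_\lambda L_\lambda\boxtimes L_\lambda^*$, which is exactly the object underlying the quantum coordinate ring $\mathcal O_q(G)$ in its Peter--Weyl decomposition; the coevaluation maps make it canonically an associative algebra, and multiplicity freeness of $\mathcal O_q(G)$ renders each fusion space one-dimensional, so there is no structural ambiguity beyond a choice of scalars.

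Next I would check that $A$ is \emph{commutative} as a braided algebra --- equivalently, that the operator products among the summands are single-valued. The self-monodromy of $\bV_\lambda^\kappa\boxtimes\bV_{\lambda^*}^{{\kappa^*}}$ is the product of the two double-braidings, which by the balancing axiom are governed by the ribbon twists; on $\bV_\lambda^\kappa$ the twist is a root of unity whose exponent is a multiple of $1/((\kappa+h^\vee)r^\vee)$, and similarly for ${\kappa^*}$. The hypothesis $\frac1{\kappa+h^\vee}+\frac1{{\kappa^*}+h^\vee}=p$ with $r^\vee\mid p$ is precisely what makes the combined phase trivial (up to a sign absorbed by the $Q^+$-grading), so $A$ is a genuine commutative algebra object and locality holds; the vanishing of the associativity ($\mathrm{H}^3$) obstruction comes similarly from the known cocycle of $\mathcal O_q(G)$. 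Then Huang--Kirillov--Lepowsky give the VOA extension, establishing abstract existence. For the deformable family I would, for each triple $(\lambda,\mu,\nu)$ with nonzero fusion coefficient, choose a generator of the one-dimensional intertwining-operator space and rescale so that every operator product coefficient lies in $\C[(\kappa+h^\vee)^{-1}]$ with finite $\kappa\to\infty$ limit, following \cite{CN22}; the limit is the commutative Poisson vertex algebra $\bigoplus_\lambda \LieL_\lambda(\g)\otimes\LieL_{\lambda^*}(\g)$, i.e.\ $\mathcal O(G)$ with its $\g\oplus\g$-action, consistent with Section~\ref{sec_LimitAff}. The reductions $\HA^{(p)}$ and $\HHA^{(p)}$ then follow by applying quantum Drinfeld--Sokolov reduction $H_f$ --- monoidal and defined over $\C[(\kappa+h^\vee)^{-1}]$ --- to one or both tensor factors, which carries the algebra structure of $A$ along.

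The hard part will be the integral form: the identification and commutativity of the algebra object are ``soft'' given Kazhdan--Lusztig and the representation theory of $\mathcal O_q(G)$, but proving that a single global rescaling of all intertwining operators renders every structure constant regular at $\kappa=\infty$, and identifying the limit explicitly, requires genuine control of the $\kappa$-dependence of the quantum Clebsch--Gordan and Racah data and is exactly where the divisibility $r^\vee\mid p$ enters --- this is the technical core carried out in \cite{Mor21,CN22}.
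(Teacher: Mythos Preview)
The paper does not actually supply its own proof of this theorem: it is stated as a result quoted from the literature, with attribution to \cite{CG17} for the physical origin, \cite{Mor21} for a proof ``using quantum groups and Kazhdan--Lusztig correspondence'', and \cite{CN22} for ``the form presented here including integral form and reduction''. Your sketch is a faithful reconstruction of exactly that strategy --- identifying the algebra object with the quantum coordinate ring via Kazhdan--Lusztig, checking commutativity through the ribbon twists and the relation on $\kappa,{\kappa^*}$, invoking Huang--Kirillov--Lepowsky for the extension, and deferring the integral-form work to \cite{CN22} --- so there is nothing to compare against beyond noting that your outline matches the cited approach the paper is reporting on.
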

It is a major open question if this has all necessary properties of a deformable family. For the quantum Hamiltonian reduction by a principal element on at least one side, this is the first result in \cite{CN22}:   
\begin{corollary}
By quantum Hamiltonian reductions on either or both sides, with respect to the principal nilpotent orbit, we obtain deformable families of vertex algebras  extending $\mathrm{V}^\kappa\otimes \rW^{{\kappa^*}}$
$$\HA^{(p)}[\g,\kappa]=\bigoplus_{\lambda\in Q^+} \bV_\lambda^\kappa(\g)\otimes T_{\lambda^*,0}^{{\kappa^*}}(\g)$$
(or sides reversed) and deformable families extending $\rW^\kappa\otimes \rW^{{\kappa^*}}$
$$\HHA^{(p)}[\g,\kappa]=\bigoplus_{\lambda\in Q^+} T_{\lambda,0}^{\kappa}(\g)\otimes T_{\lambda^*,0}^{{\kappa^*}}(\g)$$
\end{corollary}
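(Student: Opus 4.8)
The plan is to derive both families by applying the Drinfeld--Sokolov (quantum Hamiltonian) reduction functor to the deformable family $\A^{(p)}[\g,\kappa]$ of Theorem~\ref{thm_couple}, carried out on one or on both of the two commuting affine subalgebras $\rV^\kappa(\g)$ and $\rV^{\kappa^*}(\g)$. Recall that for a vertex algebra $\V\supset \rV^{\kappa^*}(\g)$ the reduction $H^\bullet_{DS}(\V)$ by a principal nilpotent $f$ is the cohomology of a BRST complex $\V\otimes F^{\mathrm{ch}}(\mathfrak n_+)$ whose differential is built only from the currents of $\hat\g_{\kappa^*}$. Hence, for a tensor product $\rV^\kappa(\g)\otimes\V'$ with $\V'\supset\rV^{\kappa^*}(\g)$, reducing the second factor only involves that factor and yields $\rV^\kappa(\g)\otimes H^\bullet_{DS}(\V')$ as vertex algebras; applied to $\rV^{\kappa^*}(\g)$ this produces $\rW^{\kappa^*}(\g)$, and applied to the irreducible module $\bV^{\kappa^*}_{\lambda^*}$ it produces $T^{\kappa^*}_{\lambda^*,0}$, which is the definition of the latter.

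First I would invoke the exactness of $H^\bullet_{DS}$ on category $\mathcal{O}$ at generic level \cite{Ara07}, so that $H^i_{DS}(\bV^{\kappa^*}_{\lambda^*})=0$ for $i\neq 0$ and the reduction commutes with the direct-sum decomposition of $\A^{(p)}$ over $\rV^\kappa(\g)\otimes\rV^{\kappa^*}(\g)$. Combining this with the remark above gives, as a module over $\rV^\kappa(\g)\otimes\rW^{\kappa^*}(\g)$,
\[
\HA^{(p)}[\g,\kappa]=H^{\mathrm{right}}_{DS}\!\Big(\bigoplus_{\lambda\in Q^+}\bV^\kappa_\lambda(\g)\otimes\bV^{\kappa^*}_{\lambda^*}(\g)\Big)=\bigoplus_{\lambda\in Q^+}\bV^\kappa_\lambda(\g)\otimes T^{\kappa^*}_{\lambda^*,0}(\g),
\]
with the vertex algebra structure on the left inherited from that of $\A^{(p)}$ because the BRST differential is a derivation of the state--field correspondence and so descends to cohomology. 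Reducing the left factor instead, or iterating on both factors — the two reductions commute, since their BRST differentials involve disjoint sets of currents and fermions — then yields the $\rW^\kappa\otimes\rV^{\kappa^*}$ version and, in the two-sided case, $\HHA^{(p)}[\g,\kappa]=\bigoplus_{\lambda\in Q^+}T^{\kappa}_{\lambda,0}(\g)\otimes T^{\kappa^*}_{\lambda^*,0}(\g)$.

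The step I expect to be the actual content, rather than bookkeeping, is verifying that reduction is compatible with the $\C[\epsilon]$-integral form, i.e.\ that $H^\bullet_{DS}$ of a deformable family is again a deformable family. The plan there is to realise the BRST complex over $\C[\epsilon]$ — choosing the $\hat\g_{\kappa^*}$-currents and the fermionic generators $\epsilon$-integrally so that the differential has no negative powers of $\epsilon$ — and then to show that taking cohomology commutes with the specialisation $\epsilon\to 0$. The clean route is to equip $\rV^{\kappa^*}(\g)$ and the modules $\bV^{\kappa^*}_{\lambda^*}$ with filtrations whose associated graded is $\epsilon$-independent (the classical Drinfeld--Sokolov, respectively free-field, pictures), compute the cohomology on the associated graded, and run a spectral-sequence and base-change argument to conclude that $H^\bullet_{DS}$ is a free (graded) $\C[\epsilon]$-module with the expected $\epsilon\to 0$ limit; the generic-level exactness then forces the relevant degenerations and rules out $\epsilon$-torsion. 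This is precisely the analysis performed in \cite{CN22}, and once it is in hand the displayed decompositions and the deformable-family claim follow.
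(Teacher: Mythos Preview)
The paper does not give its own proof of this corollary: it is stated immediately after Theorem~\ref{thm_couple} as a direct consequence of applying quantum Hamiltonian reduction, with the entire construction (including the integral form and the reductions) attributed to \cite{CN22} in the paragraph preceding the theorem. Your proposal is a correct and reasonable fleshing-out of what that citation covers---exactness of $H^\bullet_{DS}$ at generic level, commutation with the tensor decomposition, and compatibility with the $\C[\epsilon]$-integral form---so it is entirely in line with the paper's (implicit) argument.
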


%\begin{question}~
%\begin{enumerate}
%    \item In these works also the case of admisible level is discussed.
%    \item What is the formula for Verma modules at nongeneric weight? Here we don't have a direct sum, but a massive tower of extensions, as our result in the next section of taking covariants on the left-hand side shows in comparisment to the twisted modules.   
%\end{enumerate}
%\end{question}

\subsection{Limits of these vertex algebras}\label{sec_limit}

The limits for $\kappa\to\infty$ or ${\kappa^*}\to \infty$ are, at least as modules, easily read off the defining decomposition formulas. They are by Section \ref{sec_LimitAff} resp. Section \ref{sec_LimitVir} bundles over the regular $\g$-connections resp. regular $\g$-opers with zero-fibres
$$\bigoplus_{\lambda\in Q^+} \bV_{\lambda}^{\kappa}(\g) \otimes \LieL_{\lambda^*}(\g),\qquad 
\bigoplus_{\lambda\in Q^+} T_{\lambda,0}^{\kappa}(\g) \otimes \LieL_{\lambda^*}(\g)
$$ 
The quantum Hamiltonian reduction on the factor where the limit is taken are after the limit presumably related by a classical Hamiltonian reduction. \\

A conjecture communicated to us in the case $\sl_2$ by T. Creutzig, and raised in this generality in his joint work \cite{CN22} is

\begin{conjecture}\label{conj_CN}
The zero-fibre $\bigoplus_{\lambda\in Q^+} T_{\lambda,0}^{\kappa} \otimes \LieL_{\lambda^*}(\g)$ is the Feigin-Tipunin algebra $\cW_p(\g)$. Note that this is true for $p=1$ and $\g$ simply laced, and also in type $C$ using the large level limit of $\mathfrak{osp}(1|2n)$. 
\end{conjecture}

Note that the divisibility $m\mid p$ is precisely the condition we found in the non-simply-laced case in \cite{FL18}.

\begin{question}\label{quest_AffineTriplet}
The zero-fibre $\bigoplus_{\lambda\in Q^+} \bV_{\lambda}^{\kappa} \otimes \LieL_{\lambda^*}(\g)$, whose quantum Hamiltonian reduction is presumably the Feigin-Tipunin algebra, seems to be a very interesting vertex algebra. What is known about it? Is there a free-field realization using nonlocal screening operators? For $\sl_2$ this has a nice recent answer \cite{CNS24}. 
\end{question}

\section{The case of simply-laced \texorpdfstring{$\g$}{g} at \texorpdfstring{$p=1$}{p=1}}

For $p=1$ the algebra in the previous section has an explicit realization, which appeared in physics literature first for $\sl_2$ under the name GKO-construction \cite{GKO86} and was established in \cite{ACL19} for $\g$ simply-laced, see also \cite{ACF22}. %\cite{CN22} Theorem 1.3. and references therein:

\begin{theorem}
Let $\g$ be simply-laced, then we have the deformable family of vertex algebras extending $\rV^\kappa(\g)\otimes \rW^{{\kappa^*}}(\g)$
$$\rV^{\kappa-1}(\g)\otimes \bL^1(\g)=
\bigoplus_{\lambda\in Q^+} \bV_\lambda^\kappa\otimes T_{\lambda^*,0}^{{\kappa^*}},\qquad
\frac{1}{\kappa+h^\vee}+\frac{1}{{\kappa^*}+(h^\vee)^\vee}= 1$$
More precisely, the full commutant of $\rV^\kappa(\g)$ is $\rW^{{\kappa^*}}(\g)$. 
\end{theorem}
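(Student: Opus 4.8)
The plan is to recognize the asserted identity as an instance of the coset (commutant) construction and to reduce it to the main theorem of \cite{ACL19}. First recall that for simply-laced $\g$ the simple level-one affine vertex algebra $\bL^1(\g)$ is the lattice vertex algebra $\V_Q$ of the root lattice $Q$, and that the diagonal embedding $\g\hookrightarrow\g\oplus\g$ induces an embedding $\hat\g_\kappa\hookrightarrow\hat\g_{\kappa-1}\times\hat\g_1$ of affine Lie algebras (the levels add, $(\kappa-1)+1=\kappa$), so that $\rV^{\kappa-1}(\g)\otimes\V_Q$ carries a diagonal action of $\hat\g_\kappa$ at level $\kappa$. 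Write $\cC:=\mathrm{Com}\bigl(\rV^\kappa(\g),\,\rV^{\kappa-1}(\g)\otimes\V_Q\bigr)$. Two things must be shown: that $\cC\cong\rW^{{\kappa^*}}(\g)$ with $\tfrac1{\kappa+h^\vee}+\tfrac1{{\kappa^*}+h^\vee}=1$, and that the decomposition of $\rV^{\kappa-1}(\g)\otimes\V_Q$ over $\rV^\kappa(\g)\otimes\cC$ has the stated multiplicity modules.

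For the second ingredient, note that $\V_Q$ is an integrable level-one module while, for generic $\kappa$, the universal $\rV^{\kappa-1}(\g)$ is simple, hence a single Weyl module; tensoring a generic-level Weyl module with an integrable module stays in category $\mathcal{O}$ at the generic level $\kappa$ and decomposes into Weyl modules with finite multiplicity, as is standard for coset constructions and can be checked on characters. We thus obtain
$$\rV^{\kappa-1}(\g)\otimes\V_Q=\bigoplus_{\lambda}\bV^\kappa_\lambda\otimes M_\lambda,\qquad M_\lambda:=\Hom_{\hat\g_\kappa}\!\bigl(\bV^\kappa_\lambda,\,\rV^{\kappa-1}(\g)\otimes\V_Q\bigr),$$
where the $\g$-weights that occur lie in the root lattice, so $\lambda$ ranges over $Q^+$, each $M_\lambda$ is a $\cC$-module, and $M_0=\cC$.

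The first ingredient, $\cC\cong\rW^{{\kappa^*}}(\g)$, is the heart of the matter and is exactly \cite{ACL19}. One exhibits an embedding $\rW^{{\kappa^*}}(\g)\hookrightarrow\cC$: for $\g=\sl_2$ this is the Goddard–Kent–Olive \cite{GKO86} coset Virasoro field, whose central charge $c_{\kappa-1}+c_1-c_\kappa$ matches that of $\rW^{{\kappa^*}}(\sl_2)$ precisely when $\tfrac1{\kappa+h^\vee}+\tfrac1{{\kappa^*}+h^\vee}=1$, and for general simply-laced $\g$ one constructs the full set of strong generators of the principal $\cW$-algebra inside the coset with the same constraint on ${\kappa^*}$. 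One then proves there is nothing more, i.e.\ $\cC=\rW^{{\kappa^*}}(\g)$, by comparing associated graded Poisson vertex algebras: the classical analogue of the coset is identified with functions on a suitable arc space, a strong-generation bound is extracted, and a graded-dimension comparison forces the equality already in the quantum setting. The deformable-family (integral-form) statement is obtained by running the entire construction over $\C[\kappa^{-1}]$ in the sense of \cite{ACL19, ACF22}, checking that the coset generators and their operator products have no poles in $\kappa^{-1}$.

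Finally one identifies the multiplicity modules: each $M_\lambda$ is a module over $\rW^{{\kappa^*}}(\g)$ whose graded character equals the corresponding coset branching function; matching this character with that of $T^{{\kappa^*}}_{\lambda^*,0}(\g)$, the principal quantum Hamiltonian reduction of $\bV^{{\kappa^*}}_{\lambda^*}$, and using that a simple module over the principal $\cW$-algebra at generic level is pinned down by its lowest conformal weight, gives $M_\lambda\cong T^{{\kappa^*}}_{\lambda^*,0}(\g)$, with simplicity of $M_\lambda$ inherited from the clean decomposition above. This reproduces the asserted decomposition and is consistent with the $p=1$ case of Theorem~\ref{thm_couple} after quantum Hamiltonian reduction on one factor. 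The main obstacle is the equality $\cC=\rW^{{\kappa^*}}(\g)$: constructing the higher $\cW$-generators inside the coset is already delicate, and proving the commutant is not strictly larger is the point where the classical-limit and strong-generation technology of \cite{ACL19} is indispensable.
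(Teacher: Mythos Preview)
Your proposal is correct in spirit and, importantly, correctly identifies the source: the paper does not give its own proof of this theorem but states it as a known result, attributing it to \cite{ACL19} (with the $\sl_2$ precursor \cite{GKO86} and the deformable-family refinement in \cite{ACF22}). Your outline---diagonal embedding $\hat\g_\kappa\hookrightarrow\hat\g_{\kappa-1}\times\hat\g_1$, decomposition into Weyl modules with multiplicity spaces over the commutant, identification of the commutant with $\rW^{{\kappa^*}}(\g)$ via the strong-generation and classical-limit techniques of \cite{ACL19}, and finally matching the multiplicity modules with $T^{{\kappa^*}}_{\lambda^*,0}$ by character comparison at generic level---is an accurate summary of that argument and is exactly what the paper is invoking. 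There is nothing to compare against beyond the citation, and your sketch is faithful to it.
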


\begin{remark}
For non-simply-laced cases, we would have to consider as smallest case $p=m$ in order to have a well-defined Feigin-Tipunin algebra \cite{FL18}, corresponding to the largest $p$ for which the quantum group $u_q(\g)$ collapses.  
\end{remark}

\subsection{The affine coset}\label{sec_coset}
The explicit construction is based on the diagonal embedding of affine Lie algebras
$$\hat{\g}_{\kappa}\to \hat{\g}_{\kappa-1} \times \hat{\g}_1$$ 
$$a_n^{(\kappa)}\mapsto a_n^{(\kappa-1)}+a_n^{(1)}$$
and the commuting Virasoro embedding based on the three Sugawara elements $$L_n^{Coset}
=L_n^{(\kappa-1)}+L_n^{(1)}-L_n^{(\kappa)}$$
to which the $\rW^{{\kappa^*}}(\g)$ reduces to in the case $\sl_2$. The overall deformed action of $\rW^{{\kappa^*}}(\g)$ could be calculated similarly.

\begin{theorem} \label{thm_coset}
The limit $\kappa\to\infty$ of the affine Lie algebra $\hat{\g}_{\kappa-1} \times \hat{\g}_1$ is a bundle over the  space of $\g$-connections.
The fibers over a point $\d+A$, or the coinvariants with respect to $a_n^{(\kappa)}/\kappa-\langle A_n,a\rangle$ are isomorphic as a vector space to $\hat{\g}_1$ and $L^{Coset}$ descends to a deformed  Virasoro action
\begin{align*}
L_n^{\d+A}
&=L_n
+\sum_{n'+n''=n} (A_{n''})_{n'}
-\frac{1}{2}\sum_{n'+n''=n}  \langle A_{n'},A_{n''}\rangle
\end{align*}
where $(A_{n''})_{n'}$ denotes the mode operator $n'$ of the element $A_{n''}\in\g$. 
\end{theorem}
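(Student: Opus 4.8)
The plan is to set up the deformable family explicitly and read off the limit in two stages: first the commutative algebra structure of the limit of $\hat{\g}_{\kappa-1}$, then the effect on the coset Virasoro element. I would fix the integral form on $\rV^{\kappa-1}(\g)\otimes\bL^1(\g)$ generated by the level-$1$ modes $a_n^{(1)}$ (which carry no $\kappa$) together with the rescaled modes $a_n^{(\kappa-1)}\kappa^{-1}$ of the first factor; by Section~\ref{sec_LimitAff} the subalgebra generated by the latter becomes, in the limit $\kappa\to\infty$, commutative and identified with the ring of functions on $\g$-connections, with $a_n^{(\kappa-1)}\kappa^{-1}$ corresponding to $\d+A\mapsto\langle A_n,a\rangle$. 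So a fibre is the quotient by the ideal generated by $a_n^{(\kappa-1)}\kappa^{-1}-\langle A_n,a\rangle$, i.e.\ the quotient in which every $a_n^{(\kappa-1)}$ is replaced by the scalar $\kappa\langle A_n,a\rangle$ (which is consistent since these modes now commute with everything surviving). The underlying vector space of the fibre is then visibly $\bL^1(\g)$, generated by the $a_n^{(1)}$.

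Next I would transport the coset Virasoro element $L_n^{Coset}=L_n^{(\kappa-1)}+L_n^{(1)}-L_n^{(\kappa)}$ to this fibre. Write each Sugawara element explicitly as a normally ordered sum $L_n^{(\ell)}=\frac{1}{2(\ell+h^\vee)}\sum_{m,i}:v^{(i)\,(\ell)}_m w^{(i)\,(\ell)}_{n-m}:$ over a dual pair of bases of $\g$. Under the diagonal embedding $a_n^{(\kappa)}=a_n^{(\kappa-1)}+a_n^{(1)}$, expanding $L_n^{(\kappa-1)}-L_n^{(\kappa)}$ and collecting terms, the purely-$(1)$ quadratic terms combine — using $\frac{1}{2\kappa'}-\frac{1}{2(\kappa'+?)}$ bookkeeping at levels $\kappa-1$ and $\kappa$ — into the surviving $L_n^{(1)}$ of $\bL^1(\g)$ plus cross terms of the shape $\frac{1}{\kappa}\sum :a^{(\kappa-1)}_{?}\,a^{(1)}_{?}:$ and pure-$(\kappa-1)$ quadratic terms of the shape $\frac{1}{\kappa^2}\sum :a^{(\kappa-1)}_{?}a^{(\kappa-1)}_{?}:$ (the prefactors $\frac{1}{\kappa-1+h^\vee}$ and $\frac{1}{\kappa+h^\vee}$ each contribute a $\kappa^{-1}$ up to $O(\kappa^{-2})$ corrections that vanish in the limit). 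On the fibre, each $a^{(\kappa-1)}_{n''}$ becomes $\kappa\langle A_{n''},a\rangle$, so the $\kappa^{-1}$-weighted cross term becomes $\sum_{n'+n''=n}(A_{n''})_{n'}$ — the mode $n'$ of the element $A_{n''}$ acting on $\bL^1(\g)$ — and the $\kappa^{-2}$-weighted pure-$(\kappa-1)$ term becomes $-\tfrac12\sum_{n'+n''=n}\langle A_{n'},A_{n''}\rangle$, the sign and the $\tfrac12$ coming from the difference $L^{(\kappa-1)}-L^{(\kappa)}$ and the symmetrization of the double sum. All remaining terms carry a strictly negative power of $\kappa$ and drop out; this gives exactly the claimed formula for $L_n^{\d+A}$.

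Two things need care. First, the bundle claim — that the limit vertex algebra is genuinely a (flat, Poisson) bundle over the space of $\g$-connections and not merely a family of quotients — follows as in \cite{FBZ04} Section~16 once one checks the integral form above is closed under $\Y$ and that the associated graded is polynomial on the $a_n^{(\kappa-1)}\kappa^{-1}$; this is the structural input I would invoke rather than reprove. Second, and this is the main obstacle, one must handle the normal-ordering carefully: the naive substitution $a^{(\kappa-1)}_{n''}\mapsto\kappa\langle A_{n''},a\rangle$ inside a normally ordered product is only literally correct after one has arranged the modes so that no surviving commutator $[a^{(\kappa-1)}_{m},b^{(\kappa-1)}_{-m}]$ — which carries a factor $\kappa$ and would therefore contribute at the same order $\kappa^{-1}$ as the cross terms — has been dropped. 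So I would first rewrite $L_n^{Coset}$ with all $(\kappa-1)$-modes moved to one side, track the resulting $\delta$-function terms (they produce precisely the $-\tfrac12\langle A_{n'},A_{n''}\rangle$ Sugawara shift and a possible constant), and only then pass to the fibre. The example computation $[L^{\hat\g_{\kappa,g}}_{-1},a(z)]=(\tfrac{\d}{\d z}-A(z))a(z)$ already in the excerpt is the $n=-1$, regular-singular shadow of this formula and can be used as a consistency check.
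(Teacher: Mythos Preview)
Your proposal is correct and follows essentially the same route as the paper: write out the three Sugawara elements, substitute the diagonal embedding $a_n^{(\kappa)}=a_n^{(\kappa-1)}+a_n^{(1)}$, replace the level-$(\kappa-1)$ modes by scalars on the fibre, and observe that the surviving terms are the cross term $\sum (A_{n''})_{n'}$ and the quadratic term with prefactor $\bigl(-\tfrac{1}{2(\kappa-1+h^\vee)}+\tfrac{1}{2(\kappa+h^\vee)}\bigr)\kappa^2\to -\tfrac12$. The only cosmetic difference is that the paper takes coinvariants with respect to the diagonal $a_n^{(\kappa)}/\kappa$ rather than $a_n^{(\kappa-1)}/\kappa$, noting explicitly that these agree in the limit since $a_n^{(1)}/\kappa\to 0$; your caution about normal ordering is in fact more careful than the paper, which simply remarks that once at least one factor is the scalar $A_n$ the normal ordering becomes irrelevant.
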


\begin{proof}
Now we take the limit $\kappa\to \infty$ and compute the coinvariants with respect to $a_n^{(\kappa)}/\kappa-\lambda_n$ where we express the eigenvalues in terms of the connections by $\lambda_n=\langle A_n,a\rangle$. This expression is by definition equal to the relation 
$$a_n^{(\kappa-1)}/\kappa+a_n^{(1)}/\kappa \sim \langle A_n,a\rangle$$
Hence the coinvariants are given with the following isomorphism of vector spaces
$$\left(\hat{\g}_{\kappa-1} \times \hat{\g}_1\right)/{\sim} \stackrel{\sim}{\longrightarrow} \hat{\g}_1$$
In particular, if $a_n^{(1)}$ acts on modules not scaling with $\kappa$, then taking coinvariants $a_n^{(\kappa)}/\kappa-\langle A_n,a\rangle\kappa$ gives the same as taking coinvariants  $a_n^{(\kappa-1)}/\kappa-\langle A_n,a\rangle$. The Virasoro action of the coset, as defined, is explicitly
\begin{align*}
L_n^{Coset}
&=L_n^{(\kappa-1)}+L_n^{(1)} - L_n^{(\kappa)}\\
&=-\frac{1}{2(\kappa-1+h^\vee)}\sum_{n'+n''=n} \colon v^{(i)(\kappa-1)}_{n'}w^{(i)(\kappa-1)}_{n''}\colon
-\frac{1}{2(1+h^\vee)}\sum_{n'+n''=n} \colon v^{(i)(1)}_{n'}w^{(i)(1)}_{n''}\\
&+\frac{1}{2(\kappa+h^\vee)}\sum_{n'+n''=n}  \colon(v^{(i)(\kappa-1)}_{n'}+v^{(i)(1)}_{n'})(w^{(i)(\kappa-1)}_{n''}+w^{(i)(1)}_{n''})\colon
\end{align*}
for $v^{(i)(\kappa)},w^{(i)(\kappa)}$ a dual basis of $\g$ corresponding to the level $\kappa$. Then the image under the previous isomorphism is
\begin{align*}
L_n^{\d+A(z)}
&=L_n-\frac{1}{2(\kappa-1+h^\vee)}\sum_{n'+n''=n} \colon (-v_{n'}^{(i)(1)}+\langle A_{n'},v^{(i)(1)}\rangle\kappa)(-w_{n''}^{(i)(1)}+\langle A_{n''}, w^{(i)(1)}\rangle\kappa)\colon\\
&+\frac{1}{2(\kappa+h^\vee)}\sum_{n'+n''=n} \colon \langle A_{n'},v^{(i)(1)}\rangle\kappa\langle A_{n''},w^{(i)(1)}\rangle\kappa\colon
\end{align*}
In the limit $\kappa\to \infty$ only terms involving at least one $A_n$ survive. Then taking normally ordered products is not necessary anymore. The quadratic term in $A_n$ survives with a prefactor $$\left(-\frac{1}{2(\kappa-1+h^\vee)}+\frac{1}{2(\kappa+h^\vee)}\right)\kappa^2
%=(-\kappa-p-h^\vee+\kappa+h^\vee)\kappa^2/(2(\kappa+h^\vee)(\kappa+p+h^\vee))
\to -1/2$$ 
The two mixed terms are equal, because the dual basis is symmetric.  %Altogether this gives 
%\begin{align*}
%L_n^{Coset}
%&=L_n^p
%+\sum_{n'+n''=n} 
% v_{n'}^{(i)p}\lambda(w^{(i)p},n'')
%-\frac{p}{2}\sum_{n'+n''=n}  \lambda(v^{(i)p},n')\lambda(w^{(i)p},n'')
%\end{align*}
With the property of the dual basis the 
\begin{align*}
L_n^{\d+A(z)}
&=L_n
+\sum_{n'+n''=n} (A_{n''})_{n'}
-\frac{1}{2}\sum_{n'+n''=n}  \langle A_{n'},A_{n''}\rangle
\end{align*}
\end{proof}

\begin{corollary}
For $a\in \g$ we find 
\begin{align*}
    [L^{\d+A(z)}_n, a_k]
    &=[L_n, a_k] %(-k)\kappa a_{k+n} + 
    +\sum_{n'+n''=n} [ (A_{n''})_{n'},a_k] \\
    &=[L_n, a_k] %(-k)\kappa a_{k+n} + 
    +\sum_{n'+n''=n} [A_{n''},a]_{n'+k}
    + \sum_{n'+n''=n} n'\delta_{n',-k}\langle A_{n''},a\rangle\kappa
\end{align*}
and the quadratic term in $L_n^{\d+A}$ is central. In particular the modified Virasoro action has 
$$L^{\d+A(z)}_{-1}=L_{-1}+A(z)+\langle A(z)',-\rangle $$ 
%with k'=k+l+1 is A_lz^{-1-l}a_kz^{-k-1}=A_la_{-l-1+k'}z^{-k'-1}
%and Res((-1-n'')A_n'',(-k-1)a_k)z^{-2-n''}z^{-1-k}
where $A(z)$ acts as $\g$-valued function with the $\g$-adjoint action on $\g$-valued functions $a(z)$ and $\langle,\rangle$ is evaluated degree-wise. Using an isomorphism like $f$ in Lemma \ref{lm_twistedrepaffine} should remove this last term. 
%e note that for
%$$f:a_k\mapsto a_k-\langle A_{l},a\rangle\delta_{k,l}$$
%we have
%$$f^{-1}L^{\d+A(z)}_{-1}f=$$
\end{corollary}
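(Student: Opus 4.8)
The plan is to read the statement off the explicit formula for $L_n^{\d+A}$ produced in the proof of Theorem~\ref{thm_coset},
\begin{align*}
L_n^{\d+A}=L_n+\sum_{n'+n''=n}(A_{n''})_{n'}-\tfrac12\sum_{n'+n''=n}\langle A_{n'},A_{n''}\rangle ,
\end{align*}
and to combine it with the affine relations mode by mode. First I would note that the last summand is a scalar: after the bounded-degree truncation that already makes $L_n^{\d+A}$ a well-defined operator on the modules under consideration, it is a finite sum of values of the invariant bilinear form on elements of the finite-dimensional Lie algebra $\g$, hence a complex number. Being central it drops out of every commutator --- which is exactly the claim that the quadratic term in $L_n^{\d+A}$ is central --- and by bilinearity of the bracket we obtain $[L_n^{\d+A},a_k]=[L_n,a_k]+\sum_{n'+n''=n}[(A_{n''})_{n'},a_k]$, the first displayed equality.

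For the second equality I would apply, for each pair $n'+n''=n$ with the fixed element $A_{n''}\in\g$, the affine commutator $[b_m,a_k]=[b,a]_{m+k}+m\,\delta_{m,-k}\langle b,a\rangle\kappa$ of Section~\ref{sec_LimitAff} with $b=A_{n''}$ and $m=n'$, giving $[(A_{n''})_{n'},a_k]=[A_{n''},a]_{n'+k}+n'\delta_{n',-k}\langle A_{n''},a\rangle\kappa$; summing over $n'+n''=n$ yields the stated identity. The one point that needs care is the legitimacy of splitting the a priori infinite sum $\sum_{n'+n''=n}(A_{n''})_{n'}$ into a convergent sum of commutators: this is fine on modules with suitably bounded grading, since (as recorded after Theorem~\ref{thm_coset}) for a regular connection the modes $(A_{n''})_{n'}$ that occur are degree-lowering and act locally nilpotently, and for a regular singular one they are at worst degree-preserving, so only finitely many terms contribute to any matrix coefficient.

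It then remains to specialise $n=-1$ and repackage in field notation. With $a(z)=\sum_k a_k z^{-k-1}$ and $A(z)=\sum_n A_n z^{-n-1}$, the operator $\sum_{n'+n''=-1}(A_{n''})_{n'}$ acts on $a(z)$ by the adjoint action $[A(z),a(z)]$, the term $[L_{-1},a(z)]$ is the translation $\partial_z a(z)$, and from $A'(z)=\sum_k(-k)A_{k-1}z^{-k-1}$ one sees that the central contribution $\sum_{n'+n''=-1}n'\delta_{n',-k}\langle A_{n''},a\rangle\kappa=(-k)\langle A_{k-1},a\rangle\kappa$ assembles, after comparing powers of $z$, into $\kappa\langle A'(z),a\rangle$. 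This is precisely the asserted $L_{-1}^{\d+A}=L_{-1}+A(z)+\langle A'(z),-\rangle$, with $A(z)$ acting by the adjoint representation on $\g$-valued fields and $\langle\,,\,\rangle$ taken degree-wise. Finally, the remark that an isomorphism along the lines of $f$ from Lemma~\ref{lm_twistedrepaffine} removes this last term would be argued by exhibiting, in analogy with the cochain $\eta$ there, a rescaling $a_n\mapsto a_n-(\text{a linear functional of }A)\,\delta_{n,0}$ absorbing the $\langle A'(z),-\rangle$ cocycle. None of the computations above is an obstacle; the only genuinely open-ended step is constructing this automorphism for a general connection $\d+A$ rather than for a single inner automorphism, which is why that part of the statement is phrased with ``should''.
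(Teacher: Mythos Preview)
Your proposal is correct and is precisely the derivation the paper leaves implicit: the corollary has no separate proof in the paper and is meant to be read off directly from the formula $L_n^{\d+A}=L_n+\sum_{n'+n''=n}(A_{n''})_{n'}-\tfrac12\sum_{n'+n''=n}\langle A_{n'},A_{n''}\rangle$ of Theorem~\ref{thm_coset} together with the affine commutator relations, which is exactly what you do. Your care about the finiteness of the sums (regular or regular singular $A$) and the field-theoretic repackaging for $n=-1$ fill in the details the paper omits; the only part left open is the final ``should'' clause about an $f$-type isomorphism, and that is open in the paper as well.
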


\begin{example}
We observe that for the connection 
$\d+A=\d+z^{-1}X$ with $X\in\g$ we have
\begin{align*}
L_n^{\d+A(z)}
&=L_n
+X_{n}
-\delta_{n,0}\frac{1}{2}\sum_{k\in\Z}  \langle X,X\rangle
\end{align*}
This coincides with with the formula for the twisted Virasoro algebra in Lemma \ref{lm_twistedrepaffine} with $\kappa=1$.
\CommentsForMe{were there a factor $\frac{1}{1-h^\vee}$?}
\end{example}

\subsection{The quantum Hamiltonian reduction}\label{sec_W}

As suggested in \cite{BBFLT13, BFL16} for $\sl_2$ and conceptually established \cite{ACF22} we now take the coset model and  perform a quantum Hamiltonian reduction $\mathrm{H}^0(-)$ with respect to the diagonally embedded $$\hat{\g}_{\kappa}\hookrightarrow \hat{\g}_{\kappa-p}\times \hat{\g}_{p}$$ 
A general principle  \cite{ACF22} Theorem 1 states that for any vertex algebra $\V$ equipped with a homomorphism $\rV^{k}(\g)\to \V$ we have an isomorphism 
$$\mathrm{H}^\bullet(\V\otimes \bL_p(\g))\cong \mathrm{H}^\bullet(\V)\otimes \bL_p(\g)^{Urod}$$
where the first reduction is with respect to the diagonal embedding and the isomorphism only becomes an isomorphism of conformal vertex algebras if the Virasoro structure on $\bL_p(\g)$ is modified, which we denote $\bL_p(\g)^{Urod}$. This is called a generalization of the previously known Urod algebra is the case $\sl_2,p=1$ \cite{BFL16}.\footnote{They introduce a parameter $\epsilon$ which interpolates between a reduction of the first and the second factor for $\epsilon\to 0,\infty$} A similar statement holds of course for reductions of any module over $\V\otimes \bL_p(\g)$. 

Precomposing this with the diagonal embedding $\rV^{\kappa}(\g)\hookrightarrow \rV^{\kappa-p}(\g)\otimes \bL_p(\g)$.
we obtain a functor called $\rW$-algebra translation, which is proven to be exact
$$\rW^{\kappa}(\g)
\to 
\mathrm{H}^0(\rV^{\kappa-p}(\g)\otimes \bL_p(\g))\cong \rW^{\kappa-p}(\g)\otimes \bL_p(\g)^{Urod}$$

For $p=1$ this is the reduction of the coset construction and gives a construction of the algebra $A(\g,\kappa,p)$ for $p=1$ in Section \ref{sec_couple}
$$\rW^{\kappa}(\g)\otimes \rW^{\kappa^*}(\g)
\hookrightarrow \rW^\kappa(\g)\otimes \bL_p(\g)^{Urod}=A(\g,\kappa,1)$$
$$\frac{1}{\kappa+h^\vee}+\frac{1}{{\kappa^*}+(h^\vee)^\vee}=m,\qquad \kappa+h^\vee\not\in\mathbb{Q}_{\leq 0} $$

The structure of the reduction $A(\g,\kappa,1)$ and also of the reduction of any module over $\rV^\kappa(\g)\otimes \bL_1(\g)$ can be computed if one knows the decomposition of modules if restricted to $\rV^\kappa(\g)\otimes \rW^{{\kappa^*}}(\g)$. This is achieved with the decomposition formulas in the next section.

\subsection{Representations and decomposition formulas}\label{sec_decomposition}

It is important to know the decomposition of a $\rV^\kappa(\g)\otimes \bL_1(\g)^{Urod}$-module over the coset subalgebra $\rV^{\kappa}(\g)\otimes \rW^{\kappa^*}(\g)$. In particular this allows to compute the quantum Hamiltonian reduction with respect to  $\rV^{\kappa}(\g)$ and obtain then similar decomposition of the corresponding $\rW^\kappa(\g)\otimes \bL_1(\g)^{Urod}$-module over the coset subalgebra  $\Vir^{b_1}\otimes \Vir^{b_2}$, which  can be found in  \cite{BFL16} Theorem 2.1 and Theorem 3.3.  and  we will discuss more explicitly in the next section. For general $\g$ and $p=1$ this is in \cite{ACF22} Section 8, which we now quote and we try to keep close to their notation.\\

%b_1\to infty, h_{1,n}\to \infty
\begin{theorem}\label{thm_decomposition}
For generic level $\kappa$ we have the following decompositions of $\rV^\kappa(\g)\otimes \bL_1(\g)^{Urod}$-module over the coset subalgebra $\rW^{\kappa}(\g)\otimes \rW^{\kappa^*}(\g)$
%decompositions of modules over $\rV^\kappa(\sl_2)\otimes \Vir^{b_2}$ resp. its reduction $\Vir^{b_1}\otimes \Vir^{b_2}$:
\begin{enumerate}
\item We have for integral weight $\mu$ \CommentsForMe{$\lambda^*\;?$}
$$\bV^{\kappa-1}_\mu\otimes \bL^{1}_\nu
=\bigoplus_{\substack{ \lambda\in \mu+\nu+ Q \\ \lambda\in P^+ }}  \bV^\kappa_{\lambda}\otimes T^{\kappa^*}_{\lambda,\mu}$$
and reduction resp. reduction after spectral flow $\mu'\neq 0$
$$T^{\kappa-1}_{\mu,\mu'}\otimes \bL^1_\nu
=\bigoplus_{\substack{ \lambda\in \mu+\mu'+\nu+ Q \\ \lambda\in P^+ }}   T^\kappa_{\lambda,\mu'}\otimes T^{\kappa^*}_{\lambda,\mu}$$
%which translates in our notation for sl2 to \marginpar{careful $b_2^{-1}$ etc}
%$$\varphi_{m,n}^{b}\otimes \V_\nu
%=\bigoplus_{l\in \nu+m+n+\mathbb{N}_0}\varphi^{b_1}_{l,n}\otimes %\varphi^{b_2}_{l,m}
%$$
\item We have  for generic weight $\mu$ 
$$\bW^{\kappa-1}_\mu \otimes \bL^1_\nu
=\bigoplus_{\lambda\in \mu+\nu+ Q}  \bW^\kappa_{\lambda}\otimes \pi^{{\kappa^*}+(h^\vee)^\vee}_{\lambda-({\kappa^*}+2)\mu}$$
and reduction 
$$\pi^{\kappa-1+h^\vee}_\mu \otimes \bL^1_\nu
=\bigoplus_{\lambda\in \mu+\nu+ Q}  \pi^{\kappa+h^\vee}_{\lambda}\otimes \pi^{{\kappa^*}+(h^\vee)^\vee}_{\lambda-({\kappa^*}+(h^\vee)^\vee)\mu}$$
\CommentsForMe{TC: In ACF steht $\mu$ $\lambda$ vertauscht. Ist $h^\vee$ in nur einem Index richtig?}
\end{enumerate}
\end{theorem}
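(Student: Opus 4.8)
The plan is to treat this as the module-level refinement of the GKO-type coset realization of \cite{ACL19}, following \cite{ACF22} Section~8. The input is the diagonal embedding $\hat{\g}_{\kappa}\hookrightarrow \hat{\g}_{\kappa-1}\times \hat{\g}_1$ together with the identification $\rW^{{\kappa^*}}(\g)=\mathrm{Com}(\rV^\kappa(\g),\rV^{\kappa-1}(\g)\otimes \bL^1(\g))$ from \cite{ACL19}, which already yields the decomposition of the vacuum module. First I would establish the first display in item~(1) at the level of characters: the affine branching rule for the diagonal embedding gives $\mathrm{ch}\,\bV^{\kappa-1}_\mu\cdot\mathrm{ch}\,\bL^1_\nu=\sum_{\lambda\in P^+}\mathrm{ch}\,\bV^\kappa_\lambda\cdot b^{\mu,\nu}_\lambda$, and at generic $\kappa$ each branching function $b^{\mu,\nu}_\lambda$ is the character of a genuine $\rW^{{\kappa^*}}(\g)$-module sitting in the commutant. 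The selection rule $\lambda\in\mu+\nu+Q$ comes from the $P/Q$-grading (equivalently the lattice grading carried by $\bL^1(\g)$, equivalently the action of the center), while $\lambda\in P^+$ just records that one keeps only the dominant integrable summands.

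The main step is to identify each multiplicity space with the specific module $T^{{\kappa^*}}_{\lambda,\mu}$ (resp. the Fock module $\pi^{{\kappa^*}+h^\vee}_{\lambda-({\kappa^*}+h^\vee)\mu}$). For this I would invoke Arakawa's structure theory of principal $\rW$-algebra modules at generic level: every module in the relevant category is a Drinfeld--Sokolov reduction of an affine module, reductions of Verma- and Wakimoto-type modules stay of that type, and the pair $(\lambda,\mu)$ is pinned down by matching the $\hat{\g}_\kappa$-weight together with the conformal weight of the lowest-weight vector, the latter computed from the coset element $L^{Coset}=L^{(\kappa-1)}+L^{(1)}-L^{(\kappa)}$ and compared with the known shifted-Sugawara conformal weight of $T^{{\kappa^*}}_{\lambda,\mu}$; the leftover label $\mu$ is exactly the twist of the twisted reduction. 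I expect it cleanest to run this first for the Wakimoto/Fock case (item~(2)), where the branching is a purely free-field (Heisenberg-plus-lattice) computation, $\bW^{\kappa-1}_\mu\otimes \bL^1_\nu$ visibly splits into $\bW^\kappa_\lambda$ times Fock modules, and the Fock label is read off directly. The irreducible case (item~(1)) then follows by resolving $\bV^\kappa_\lambda$ and $\bL^1_\nu$ by Wakimoto modules and passing to the Euler characteristic, or equivalently by invoking the maximality/semisimplicity of the dual pair $(\rV^\kappa(\g),\rW^{{\kappa^*}}(\g))$ from \cite{ACL19}, which upgrades the character identity to an honest direct sum with irreducible multiplicity spaces.

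For the second display in each item I would apply the quantum Hamiltonian reduction functor $\mathrm{H}^0(-)$ with respect to the diagonal $\hat{\g}_\kappa$, together with its spectral-flow twists $\mu'$, term by term --- legitimate since at generic level $\mathrm{H}^0$ is exact (Arakawa) and commutes with the locally finite direct sums in play. By the Urod isomorphism of \cite{ACF22} Theorem~1 this functor acts on the left-hand side as reduction of the $\hat{\g}_{\kappa-1}$-factor ($\bV^{\kappa-1}_\mu\mapsto T^{\kappa-1}_{\mu,\mu'}$, $\bW^{\kappa-1}_\mu\mapsto \pi^{\kappa-1+h^\vee}_\mu$) tensored with $\bL^1_\nu$ now carrying the Urod Virasoro, and on each right-hand summand as reduction of the $\hat{\g}_\kappa$-factor ($\bV^\kappa_\lambda\mapsto T^\kappa_{\lambda,\mu'}$, $\bW^\kappa_\lambda\mapsto \pi^{\kappa+h^\vee}_\lambda$), leaving the commutant factor $T^{{\kappa^*}}_{\lambda,\mu}$ (resp. the Fock factor) fixed; comparing the two sides gives the asserted formulas. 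The part I expect to be genuinely hard is the bookkeeping in the identification step: tracking the $P/Q$- and spectral-flow labels through reduction so that the twists $\mu,\mu'$ and the Fock shift $\lambda-({\kappa^*}+h^\vee)\mu$ come out with exactly the right normalization (this is the point flagged in the authors' own margin note), and proving that the decompositions are honest direct sums rather than mere filtrations --- the latter being where one must lean on the dual-pair semisimplicity of \cite{ACL19}, or else on an $\mathrm{Ext}^1$-vanishing between the relevant coset modules at generic level. The character and conformal-weight computations are routine and I would not reproduce them in full.
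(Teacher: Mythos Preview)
The paper does not prove this theorem at all: it is explicitly presented as a quotation from \cite{ACF22} Section~8 (and, for $\sl_2$, from \cite{BFL16} Theorems~2.1 and~3.3). The surrounding text says ``For general $\g$ and $p=1$ this is in \cite{ACF22} Section 8, which we now quote and we try to keep close to their notation.'' So there is no in-paper proof to compare against.

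Your proposal is a reasonable reconstruction of the argument one finds in \cite{ACF22} and its predecessors: the coset identification from \cite{ACL19}, the free-field/Wakimoto computation for item~(2), the passage to irreducibles via resolutions or the dual-pair semisimplicity, and the Urod isomorphism plus exactness of $\mathrm{H}^0$ to transport everything through reduction. You have correctly flagged the two genuine subtleties --- the label bookkeeping (note the authors' own margin comment questioning the Fock index conventions) and the upgrade from character identity to honest direct sum --- and attributed them to the right sources. One small correction: in the theorem statement the decomposition is over $\rV^\kappa(\g)\otimes \rW^{\kappa^*}(\g)$ (not $\rW^\kappa\otimes\rW^{\kappa^*}$ as the header line says; that is a typo in the paper), so your description of the first display as an affine branching is the right reading. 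If you were writing this up independently rather than citing, your outline would be adequate as a proof sketch, with the understanding that the actual work lives in the cited references.
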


%\begin{question}
%The formula is not at all symmetric in the two indices in the right hand %side. So the limit $b\to 0$ would give a very different result.
%\end{question}
\begin{question}
For modules from the Wakimoto module and non-generic weight $\mu$ (for example $\mu=0$) we expect similar fomulae but there can (and will) be nontrivial extensions between the summands. It would be very helpful to have results in this direction. In our context, these extensions are also visible for the twisted modules with nilpotent part, and connections with regular singular connections with nilpotent part.
\end{question}
\begin{question}
There are also decomposition formulae for admissible level. It would be very interesting to repeat out considerations in this case.
\end{question}

\subsection{Bundle associated to the representations \texorpdfstring{$\bV^{\kappa-1}_\mu\otimes \bL^{1}_\nu$}{V}}\label{sec_bun_bV}

The affine Lie algebra $\hat{\g}_\kappa$ has by Section \ref{sec_LimitAff} as limit $\kappa\to \infty$ the ring of functions over the space of $\g$-connections on the punctured disc. Similarly, the affine vertex algebra $\rV^\kappa(\g)$ has as limit the ring of functions over the space of $\g$-connections  and its modules $\bV^{\kappa}_\mu$, with integral $\mu$ not scaled by $b$, have as limit a bundle over the space of regular connections with fibre the irreducible $\g$-module $\LieL_\mu(\g)$. This can be presumably identified with polynomials of degree $\mu$ in the solutions, such that the limit of $\bigoplus_\mu  \bV^{\kappa-1}_\mu$ can be identified with the ring of functions over the space of regular connections together with a solution of the associated differential equation. Applying Theorem \ref{thm_coset} directly yields\\

\begin{corollary}
For integral $\mu$, not scaled with $\kappa$, the limit of the $\rV^{\kappa-1}\otimes \bL^{1}$-module  
$$\bV^{\kappa-1}_\mu\otimes \bL^{1}_\nu$$ 
is a bundle over the space of regular connections with zero-fibre $\LieL_\mu(\g)\otimes \bL^1_\nu$ and modified Virasoro structure.
\end{corollary}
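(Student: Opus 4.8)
The plan is to deduce this directly from Theorem~\ref{thm_coset} together with the description of the limit of $\bV^\kappa_\mu$ recalled in Section~\ref{sec_LimitAff}, using that $\bV^{\kappa-1}_\mu\otimes\bL^1_\nu$ is a module over the deformable family $\rV^{\kappa-1}(\g)\otimes\bL^1(\g)$. First I would fix the integral form: on the factor $\bL^1_\nu$ there is no $\kappa$-dependence, so it is kept as is, while on $\bV^{\kappa-1}_\mu$ I take the integral form of Section~\ref{sec_LimitAff}, generated over the highest-weight vector $v_\mu$ by $a_0$ (acting through the $\g$-module $\LieL_\mu(\g)$, which by hypothesis is not scaled with $\kappa$) and by the rescaled creation operators $a_{-m}(\kappa-1)^{-1}$, $m>0$. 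One must check that the integral form of $\rV^{\kappa-1}(\g)\otimes\bL^1(\g)$ used in Theorem~\ref{thm_coset} preserves this $\C[\kappa^{-1}]$-lattice, so that $\bV^{\kappa-1}_\mu\otimes\bL^1_\nu$ is indeed a module in the deformable-family sense.

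Next I would take the limit $\kappa\to\infty$ by forming the coinvariants with respect to $a_n^{(\kappa)}/\kappa-\langle A_n,a\rangle$, exactly as in the proof of Theorem~\ref{thm_coset}. The only new bookkeeping is which modes survive: since $a_n^{(1)}$ acts on $\bL^1_\nu$ without scaling in $\kappa$, the relation $a_n^{(\kappa)}/\kappa=(a_n^{(\kappa-1)}+a_n^{(1)})/\kappa\sim\langle A_n,a\rangle$ reduces the coinvariant condition to one on $a_n^{(\kappa-1)}/\kappa$ acting on $\bV^{\kappa-1}_\mu$; and since $a_0^{(\kappa-1)}$ acts through $\LieL_\mu(\g)$, one has $a_0^{(\kappa-1)}/\kappa\to 0$, forcing $\langle A_0,a\rangle=0$, i.e. the connection $\d+A$ is regular at $z=0$. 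What remains is precisely the situation of Section~\ref{sec_LimitAff} applied to $\bV^{\kappa-1}_\mu$ (with $\kappa$ replaced by $\kappa-1$, which is immaterial in the limit), tensored with the fixed space $\bL^1_\nu$: the limit is a bundle over the space of regular $\g$-connections whose fibre over $\d+A$ is $\LieL_\mu(\g)\otimes\bL^1_\nu$ as a vector space, with $\LieL_\mu(\g)$ carrying the horizontal $\g$-action and the coordinates $\langle A_n,a\rangle$ feeding into the deformed affine structure.

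It then remains to transport the coset Virasoro $L_n^{Coset}=L_n^{(\kappa-1)}+L_n^{(1)}-L_n^{(\kappa)}$ through this isomorphism onto the fibre and let $\kappa\to\infty$. By the same cancellation as in the proof of Theorem~\ref{thm_coset} — the Sugawara prefactors $1/(\kappa-1+h^\vee)$ and $1/(\kappa+h^\vee)$ conspire so that only terms containing at least one coordinate $A_n$ survive, the mixed term producing $\sum (A_{n''})_{n'}$ and the quadratic term $-\tfrac12\sum\langle A_{n'},A_{n''}\rangle$ — one obtains the deformed Virasoro action
\begin{align*}
L_n^{\d+A}
&=L_n
+\sum_{n'+n''=n} (A_{n''})_{n'}
-\frac{1}{2}\sum_{n'+n''=n}  \langle A_{n'},A_{n''}\rangle,
\end{align*}
where now $L_n$ and $(A_{n''})_{n'}$ act on $\LieL_\mu(\g)\otimes\bL^1_\nu$. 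Specializing to $A=0$ the deformation disappears and the zero-fibre is $\LieL_\mu(\g)\otimes\bL^1_\nu$ with the undeformed Virasoro, which is the asserted "modified Virasoro structure".

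The main obstacle I anticipate is purely technical: verifying that the module integral form on $\bV^{\kappa-1}_\mu$ is genuinely closed under the integral form of the coset operators, in particular that $L_n^{(\kappa-1)}$ and $L_n^{(\kappa)}$ — which individually carry the singular prefactor $(\kappa-1+h^\vee)^{-1}$ — act with only nonnegative powers of $\kappa^{-1}$ after the coinvariant substitution. This is the module analogue of the cancellation already performed in Theorem~\ref{thm_coset}, now carried out on $\bV^{\kappa-1}_\mu$ rather than on the vacuum module, and it is also the point where (as noted in the text) one should not expect the induced integral form to match the obvious one on the $\rW^\kappa\otimes\rW^{\kappa^*}$-side decomposition of Theorem~\ref{thm_decomposition}.
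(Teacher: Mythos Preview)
Your proposal is correct and follows essentially the same approach as the paper: the paper's own justification is the single clause ``Applying Theorem~\ref{thm_coset} directly yields,'' and you have simply unpacked what that means---the integral form on $\bV^{\kappa-1}_\mu$ from Section~\ref{sec_LimitAff}, the observation that $a_0^{(\kappa-1)}/\kappa\to 0$ on $\LieL_\mu(\g)$ forces the connection to be regular, and the transport of $L_n^{Coset}$ through the coinvariant isomorphism exactly as in the proof of Theorem~\ref{thm_coset}. Your anticipated obstacle about integral-form compatibility is precisely the warning the paper places immediately after the corollary.

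One small reading issue: in your final paragraph you identify the ``modified Virasoro structure'' with the undeformed action on the zero-fibre, but in the statement this phrase refers to the deformed action $L_n^{\d+A}$ on the generic fibre (which you have already computed correctly); the zero-fibre simply has the special case $A=0$.
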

On the other hand, naively applying the limit to the decomposition formula in Theorem \ref{thm_decomposition} yields the following matching result. {\bf A warning} should be placed however, that the decomposition is not necessarily preserving the chosen integral form, as we will see. The direct expression for $p=1$ and the decomposition formula give us two in general different integral forms to consider. 
\begin{corollary}
The limit of the decomposition of $\bV^{\kappa-1}_\mu\otimes \bL^{1}_\nu$ as $\rV^{\kappa}\otimes \rW^{\kappa^*}$-module
$$\bigoplus_{\substack{ \lambda\in \mu+\nu+ Q \\ \lambda\in P^+ }}  \bV^\kappa_{\lambda}\otimes T^{\kappa^*}_{\lambda,\mu}$$ 
is a bundle over the space of regular connections with zero-fibre $\bigoplus_{\substack{ \lambda\in \mu+\nu+ Q \\ \lambda\in P^+ }}  \LieL_\lambda \otimes T^{\kappa^*}_{\lambda,\mu}$
\end{corollary}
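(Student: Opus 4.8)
The plan is to take the limit $\kappa\to\infty$ of the decomposition in Theorem~\ref{thm_decomposition}(1), working termwise, and to identify each summand with its zero-fibre. First I would fix the integral form: on the left-hand side $\bV^{\kappa-1}_\mu\otimes \bL^1_\nu$ one uses, per Section~\ref{sec_LimitAff}, the form generated by $a_0$ and $a_m\kappa^{-1}$ for $m>0$ acting on the $\hat\g_{\kappa-1}$ factor (with $\mu$ not rescaled), together with the untouched lattice/level-$1$ factor $\bL^1_\nu$; this is precisely the form whose limit is the bundle over the space of regular $\g$-connections with zero-fibre $\LieL_\mu(\g)\otimes\bL^1_\nu$, as recorded in the previous corollary. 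On the right-hand side one transports this form through the decomposition isomorphism. The key structural input is that the coset subalgebra $\rV^\kappa(\g)\otimes\rW^{\kappa^*}(\g)$ sits inside $\rV^{\kappa-1}(\g)\otimes\bL^1(\g)$ with the diagonal embedding $a_n^{(\kappa)}\mapsto a_n^{(\kappa-1)}+a_n^{(1)}$, so taking coinvariants with respect to $a_n^{(\kappa)}/\kappa-\langle A_n,a\rangle$ is, in the limit, the same as the coinvariants defining the bundle on the left (here I would invoke verbatim the argument in the proof of Theorem~\ref{thm_coset}: because $a_n^{(1)}$ does not scale with $\kappa$, imposing $a_n^{(\kappa)}/\kappa\sim\langle A_n,a\rangle$ is equivalent to imposing $a_n^{(\kappa-1)}/\kappa\sim\langle A_n,a\rangle$).

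Next I would pass to the limit in each factor $\bV^\kappa_\lambda\otimes T^{\kappa^*}_{\lambda,\mu}$ appearing on the right. For $\bV^\kappa_\lambda$ with $\lambda$ not rescaled, Section~\ref{sec_LimitAff} gives the limit as a bundle over the regular $\g$-connections with zero-fibre $\LieL_\lambda(\g)$. For the $\rW$-algebra module $T^{\kappa^*}_{\lambda,\mu}$, note that under $\tfrac1{\kappa+h^\vee}+\tfrac1{\kappa^*+h^\vee}=1$ the limit $\kappa\to\infty$ forces $\kappa^*+h^\vee\to$ a finite value only if... — more precisely $\kappa^*\to -h^\vee+1$ stays finite, so $T^{\kappa^*}_{\lambda,\mu}$ does \emph{not} degenerate but is carried along as a fixed (finite-level) $\rW$-module; equivalently, in the bundle picture it becomes the constant-along-the-base fibre datum. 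Hence the termwise limit of $\bigoplus_\lambda \bV^\kappa_\lambda\otimes T^{\kappa^*}_{\lambda,\mu}$ is a bundle over the space of regular $\g$-connections with zero-fibre $\bigoplus_{\lambda\in\mu+\nu+Q,\ \lambda\in P^+}\LieL_\lambda(\g)\otimes T^{\kappa^*}_{\lambda,\mu}$, which is the asserted statement. One should also check compatibility of the deformed Virasoro action: on the left it is the coset Virasoro $L_n^{Coset}$ whose limit over $\d+A$ is given by Theorem~\ref{thm_coset}; on the right it is the sum of the Sugawara Virasoro of $\rV^\kappa$ (which in the limit becomes the $\LieL_\lambda$-action plus the connection-dependent deformation) and the unchanged $\rW^{\kappa^*}$-Virasoro, and these match by construction of the coset.

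\textbf{The main obstacle} is exactly the \emph{warning} flagged just before the corollary: the integral form chosen on the left, $\bV^{\kappa-1}_\mu\otimes\bL^1_\nu$ with $a_m\kappa^{-1}$-generators, need not coincide with the ``obvious'' integral form on the right-hand side obtained by giving each summand $\bV^\kappa_\lambda\otimes T^{\kappa^*}_{\lambda,\mu}$ its own natural $\kappa\to\infty$ form. The two $\C[\kappa^{-1}]$-lattices agree after tensoring with $\C(\kappa)$, but their limits — the associated graded over $\kappa^{-1}=0$ — can differ, so the naive termwise limit computes a bundle that is only isomorphic to the true limit up to a possible rescaling/filtration mismatch. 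I would handle this by stating the corollary as a matching of \emph{zero-fibres as graded vector spaces with their residual $\g$- and Virasoro-action}, which is insensitive to the discrepancy, and by explicitly noting (as the text already does) that the decomposition isomorphism is not claimed to be an isomorphism of deformable families; a cleaner identification, if wanted, would require comparing the two integral forms on the multiplicity spaces, e.g. via an $\eta$-twist analogous to the isomorphism $f$ in Lemma~\ref{lm_twistedrepaffine}, and I would leave that refinement to a remark rather than fold it into the proof.
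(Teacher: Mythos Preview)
Your proposal is correct and follows the same approach the paper takes: the corollary is obtained by naively applying the limit $\kappa\to\infty$ termwise to the decomposition formula of Theorem~\ref{thm_decomposition}(1), using the limit of $\bV^\kappa_\lambda$ from Section~\ref{sec_LimitAff} and the fact that $\kappa^*\to -h^\vee+1$ stays finite so $T^{\kappa^*}_{\lambda,\mu}$ is carried along unchanged. Your explicit discussion of the integral-form mismatch is precisely the \textbf{warning} the paper places before the corollary, and your decision to read the statement at the level of the zero-fibre as a graded vector space with its residual actions is exactly how the paper treats it (cf.\ the subsequent Grothendieck-ring/character comparison).
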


For $\mu=\nu=0$ this is the vertex algebra 
$\rV^{\kappa-1}(\g)\otimes \bL^1(\g)$, 
which acquires as big center the ring of functions over regular connections. The vertex algebra with big center becomes a bundle over the space of regular connections with zero-fibre the vertex algebra 
$$\bL^1(\g)= \bigoplus_{\substack{ \lambda\in Q \\ \lambda\in P^+ }}  \LieL_\lambda \otimes T^{\kappa^*}_{\lambda,0}$$
which it the Feigin-Tipunin algebra for simply-laced $\g$ in the limiting case $p=1$.\\

We now discuss the case $\mu\neq 0$. On the Grothendieck ring resp. on the level of characters, we have indeed an equality
\begin{align*}
\LieL_\mu(\g)\otimes \bL^1_\nu(\g)
&= \bigoplus_{\substack{ \lambda\in \mu+\nu+ Q \\ \lambda\in P^+ }} \LieL_\lambda \otimes T^{\kappa^*}_{\lambda,\mu}\\
\intertext{as we can see from computing with $\g$ fusion rules}
    \LieL_\mu(\g)\otimes \bL^1_\nu(\g)
    &=
    \LieL_\mu \otimes \left( \bigoplus_{\substack{ \eta\in \nu+ Q \\ \eta\in P^+ }} \LieL_\eta \otimes T^{\kappa^*}_{\eta,0}\right)\\
    &=\bigoplus_{\substack{ \lambda\in \mu+\nu+ Q \\ \lambda\in P^+ }} \LieL_\lambda \otimes
    \left(
    \bigoplus_{\eta\in P^+ } \binom{\mu\; \eta}{\lambda} T^{\kappa^*}_{\eta,0} \right)\\
\intertext{where we use the fusion rules coefficients $\binom{\mu\; \eta}{\lambda}$, which are nonzero only for $\mu+\eta\in \lambda+Q$, and comparing this to the formula in the limit $\kappa^*+(h^\vee)^\vee \to 1$}
T^{\kappa^*}_{\lambda,\mu}
&=\bigoplus_{\eta\in P^+} \binom{\mu\; \eta}{\lambda} T^{\kappa^*}_{\eta,0}
\end{align*}
Note however that the limit of $T^{\kappa^*}_{\lambda,\mu}$ depends on the choice of an integral form, that has to be compatible with the decomposition formula. We can read off from the previous calculation, that apparently this limit sends $T^{\kappa^*}_{\lambda,\mu}$ to a direct sum of irreducible  $T^{\kappa^*}_{\eta,0}$, and not the  respective indecomposable. 

\begin{example}
For $\sl_2$ we have $\rW^{\kappa^*}$ the Virasoro algebra at central charge $1$ and $T_{\lambda,\mu}=\varphi_{2\lambda+1,2\mu+1}$. On the other the Clebsch-Gordan coefficients are $\binom{\mu\; \eta}{\lambda}=1$ for $|\mu-\eta|\leq \lambda\leq \mu+\eta$, which is equivalent to $|\mu-\lambda|\leq \eta\leq \mu+\lambda$, and $\mu+\eta\in \lambda+2\Z$. Altogether the previous formula reads
\begin{align*}
\varphi_{2\lambda+1,2\mu+1}
&=\bigoplus_{\substack{|\mu-\lambda|\leq \eta\leq \mu+\lambda \\ \mu+\eta\in \lambda+2\Z}}  \varphi_{2\eta+1,1}
\end{align*}
Here, it is clear that there are no non-split extensions on the right-hand side.
\end{example}

\begin{question}
Which integral forms are compatible with the decomposition formula? Can we obtain similar results as the left-hand side for $p>1$? This should be attempted by carefully following the ideas in \cite{CN22}.
\end{question}
\begin{question}
We would like to study the case where $\mu/b^2$ is fixed in the limit. These correspond to regular singular connections.
\end{question}

\subsection{Bundle associated to the spectral flow representations \texorpdfstring{$\sigma^\ell(\bV^{\kappa-1}_\mu)\otimes \bL^{1}_\nu$}{}}\label{sec_bun_spectralflow}

We also consider spectral flow $\sigma^\ell$, which is an automorphism of the affine Lie algebra associated to an element $\ell\in P$, viewed as subgroup of the affine Weyl group. As such it does not preserve the choice of Borel algebra used to define highest-weight modules.

Consider the previous decomposition
%$$\rV^\kappa(\sl_2)\otimes \Vir^{\kappa^*}\hookrightarrow \rV^{\kappa-1}(\sl_2)\otimes \bL^1(\sl_2)$$
$$\bV^{\kappa-1}_\mu(\sl_2)\otimes \bL^1_\nu(\sl_2)=\bigoplus_{\substack{ \lambda\in \mu+\nu+ Q \\ \lambda\in P^+ }}  \bV^\kappa_{\lambda}\otimes T^{\kappa^*}_{\lambda,\mu}$$ 
We apply $\sigma^\ell$ and use that coset is preserved \CommentsForMe{TC?} and that $\sigma^\ell(\bL^1_\nu)=\bL^1_{\nu-\ell}$ and substitute $\nu$
$$\sigma^\ell(\bV^{\kappa-1}_\mu)\otimes \bL^1_{\nu}=\bigoplus_{\substack{ \lambda\in \mu+\nu+\ell+ Q \\ \lambda\in P^+ }}  \sigma^\ell(\bV^\kappa_{\lambda})\otimes T^{\kappa^*}_{\lambda,\mu}$$ 

 For simplicity we now restrict ourselves to the case $\sl_2$: The spectral flow automorphism is defined as follows, which we quote from \cite{CG17} Section 9.1.1\CommentsForMe{Relates to extra-Term in g-twisted module?}

$$\sigma^\ell(e_n)=e_{n-\ell},\qquad
\sigma^\ell(f_n)=f_{n+\ell},\qquad
\sigma^\ell(h_n)=h_n-\delta_{n,0}\ell K
$$
$$\sigma^\ell(L_0)=L_0-\frac{1}{2}\ell h_0+\frac{1}{4}\ell^2 k$$

\begin{example}
The first spectral flow of the vacuum module $\sigma^1(\bV^{\kappa-1}_0)$ is generated by a vector $v$ with 
$$e_nv=0,\;n\geq 1,\qquad 
f_nv=0,\;n\geq -1,\qquad 
h_nv=-\delta_{n,0}\kappa,\;n\geq 0$$
For the obvious integral form, the limit $\kappa\to\infty$ is hence the ring of functions on the space of regular singular $\sl_2$-connections 
$$\d+A_0z^{-1}+A_{-1}z^{0}+\sum_{n\leq -1} A_{n} z^{-1-n},\qquad A_0\in \sl_2^{\geq 0},\qquad A_{-1}\in \sl_2^{\geq 0} $$
The first spectral flow of other modules with finite-dimensional groundstates have $e_1$ nonzero, but nilpotent, so this again corresponds to regular singular connections. 
\end{example}

\begin{example}\label{ex_Affirregular}
The first spectral flow of a Verma module $M_\mu^{\kappa}$ corresponds to an irregular singular connection 
$$\d+A_1z^{-2}+\cdots ,\qquad A_1\in \sl_2^{>0}$$
Note that for non-generic weights the direct decomposition formula does presumably not hold. A second spectral flow on the vacuum module corresponds to an irregular singular connection 
$$\d+A_1z^{-2}+\cdots ,\qquad A_1\in \sl_2^{>0}$$
and the irreducible quotient corresponds to certain equations to be fulfilled by the $A_n$. It is to be expected that irreducible representations lead to connections with a particular easy behavior in contrast to the generic case, somewhat similar to Section \ref{sec_LimitVir}. For example, $A_1\in\sl_2^{\geq 0}$ nilpotent leads to a collapse of the essential singularity $\exp(-A_1z^{-1})$.
\end{example}

\subsection{Bundle associated to the representations \texorpdfstring{$T^{\kappa-1}_{\mu,\mu'}\otimes \bL^1_\nu$}{T}}\label{sec_bun_varphi}

The quantum Hamiltonian reduction resp. reduction after spectral flow $\mu'\neq 0$ of the decomposition in the previous section is by Theorem \ref{thm_decomposition} 
$$T^{\kappa-1}_{\mu,\mu'}\otimes \bL^1_\nu
=\bigoplus_{\substack{ \lambda\in \mu+\mu'+\nu+ Q \\ \lambda\in P^+ }}   T^\kappa_{\lambda,\mu'}\otimes T^{\kappa^*}_{\lambda,\mu}$$

We have no analog of Theorem \ref{thm_coset} for the reduction, but we can naively read off the limit of the right-hand side: It is a bundle over the $\g$-opers with diagonalizable monodromy $\mu'$ and with zero-fibre $$\bigoplus_{\substack{ \lambda\in \mu+\mu'+\nu+ Q \\ \lambda\in P^+ }}  \LieL_\lambda\otimes T^{\kappa^*}_{\lambda,\mu}$$
We observe that this is the very same decomposition as for $\mu'=0$ with $\nu+\mu'$ in order to account for the shift in the summation condition. The $\mu\neq 0$ has a similar effect as in the last section, so we can concentrate on the case $\mu=0$.

\begin{example}
For $\sl_2$ the limit of $\varphi_{2\lambda+1,2\mu'+1}$ is for $\lambda=0$ the ring of functions on the space of Sturm-Liouville operators $\frac{\d^2}{\d z^2}+q(z)$ with $q_0=-\frac{n^2-1}{4}$ for $n=2\mu'+1$ and diagonal monodromy, which is then the monodromy of $z^s$ for $s=\frac{1\pm n}{2}$. More generally for $m=2\lambda+1$ it is a bundle over this set with zero-fibre $\C^{m}=\LieL_\lambda$ as $\sl_2$-module. So for $\mu=0$ the limit of the decomposition is, setting also $l=2\nu+1$,
$$\bigoplus_{\substack{ \lambda\in m+n+l\in 2\Z \\ m\geq 0 }}  \C^m\otimes \varphi_{m,1}$$
\end{example}

Again, as a warning, it is not clear whether the limit of the decomposition agrees with the limit of the actual reduction (which we cannot compute) as modules or just on the level of Grothendieck rings.

\subsection{Bundle associated to the representations \texorpdfstring{$\bW^{\kappa-1}_\mu \otimes \bL^1_\nu$}{W}, generic \texorpdfstring{$\mu$}{mu}}\label{sec_bun_bW}

For generic $\mu$ the Wakimoto module $\bW^{\kappa-1}_\mu$ coincides with the Verma module 
$$\bM^{\kappa-1}_\mu = \hat{\sl}_2 \otimes_{\langle h_0,e_0,\;f_{1},h_{1},e_1,\ldots \rangle} v\C$$ 
with action $h_0v=\mu v$ and all others zero. Assume as first case that $\mu$ does not scale with $b^2$, and as second case that $\mu/b^2$ is kept constant. Then by Theorem \ref{thm_decomposition} we have: The limit of $\bW^{\kappa-1}_\mu \otimes \bL^1_\nu$ is a bundle over the space of connection 
$$\d+A_{0}z^{-1}+\sum_{k\geq 0} A_{-k-1}z^k,\qquad A_0=t\cdot e_0
\hphantom{+\mu/b^2\cdot h_0}$$
$$\d+A_{0}z^{-1}+\sum_{k\geq 0} A_{-k-1}z^k,\qquad A_0= t\cdot e_0+\mu/b^2\cdot h_0$$
for any $t\in\C$ and with fiber $\bL^1_\nu=\V_{\sqrt{2}\Z}$ with modified conformal structure 
$$\tilde{L}_n^{Coset}=L^{\sqrt{2}\Z}_n+\left(t\cdot e_n+\mu/b^2\cdot h_n\right)-\delta_{n,0}\frac{1}{2}\frac{(\mu/b^2)^2}{8}+\sum_{k\geq 0} (A_{-k-1})_{n+k+1}$$
On the other hand we may employ the decomposition formula in Theorem \ref{thm_decomposition}
$$\bW^{\kappa-1}_\mu \otimes \bL^1_\nu=\bigoplus_{\lambda\in \mu+\nu+ Q}  \bW^\kappa_{\lambda}\otimes \pi^{\kappa^*}_{\lambda-({\kappa^*}+2)\mu}$$
In the limit the right-hand side is a bundle over the same space of connections and with fiber
$$\bigoplus_{\lambda\in \mu+\nu+ Q} \pi^{{\kappa^*}+(h^\vee)^\vee}_{\lambda-({\kappa^*}+2)\mu}$$
\CommentsForMe{TC: should be $\lambda-\mu$ for $\sl_2$, then the limit is fine. TC?}

\begin{remark}
We could also choose a different integral form generated by $e_0$ and $e_n/\kappa,f_n/\kappa$ as before. Then instead of a regular singular connection we again get a regular connection, since $e_0/\kappa,f_0/\kappa\to 0$, but the bundle associated to $\mathcal{M}_\lambda^\kappa$ is suddenly the Verma module $\mathcal{M}_\lambda$ over $\g$ and we get as limit
$$\bigoplus_{\lambda\in \mu+\nu+ Q}  \mathcal{M}_{\lambda}\otimes \pi^{{\kappa^*}+(h^\vee)^\vee}_{\lambda-({\kappa^*}+2)\mu}$$
\end{remark}

\subsection{A second limit and the affine variant of the Feigin-Tipunin algebra}

In Question \ref{quest_AffineTriplet} we asked for the affine variant of the Feigin-Tipunin algebra, which should extend $\rV^p(\g)$, should have a decomposition
$$\bigoplus_{\lambda\in Q^+} \bV^{-h^\vee+p}_\lambda\otimes \LieL_{\lambda^*}, $$
have as Hamiltonian reduction the Feigin-Tipunin algebra, and should appear in the present work as the zero-fibre of the respective bundle in the limit of $\HA^{(p)}[\g,\kappa]$.\\

In the case $p=1$ we can calculate this in our explicit coset model as the second possible limit of 
$$\rV^{\kappa-1}(\g)\otimes \bL^1(\g)=\bigoplus_{\lambda\in Q^+} \bV^k_\lambda\otimes T_{\lambda^*,0}^{\kappa^*}$$
$$\kappa\to -h^\vee+1,\quad {\kappa^*}\to \infty$$
The result is clearly $\rV^{crit}(\g)\otimes \bL^1(\g)$ being a bundle over the regular $\g$-opers via the coset embedding of $\rW^{\kappa^*}$ and extending $\rV^{crit}(\g)$ via the diagonal embedding. Note that as discussed in the introduction, the center of $\rV^{crit}(\g)$ is also the ring of functions on regular $\g$-opers. \\

From the explicit formulas in the previous subsection we see that $L_n^{Coset}/\ell=\frac{1}{2}L_n^{crit, rescaled}$ plus terms of order $\ell^{-1}$ if all weights scale with order $1$.

\begin{corollary}
The fibre of $\rV^{crit}(\g)\otimes \bL^1(\g)$ over a regular $\g$-oper is, as a vertex algebra or twisted module, identified with the fibre of  $\rV^{crit}$ tensored with $\bL^1(\g)$. 
\end{corollary}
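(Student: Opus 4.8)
The plan is to read the statement off the explicit limit of the preceding subsection; the only genuinely new point is that the two a priori different copies of $\mathcal{O}(\text{regular }\g\text{-opers})$ in play --- the one realized by the coset subalgebra $\rW^{\kappa^*}(\g)$ and the one given by the Feigin--Frenkel center of $\rV^{crit}(\g)$ --- coincide, and that the former sits inside the first tensor factor. Concretely I would set $\ell:=\kappa^*+h^\vee$, so that $\tfrac1{\kappa+h^\vee}+\tfrac1{\kappa^*+h^\vee}=1$ forces $\kappa+h^\vee=\tfrac{\ell}{\ell-1}$ and $\kappa-1+h^\vee=\tfrac1{\ell-1}$, and take $\ell\to\infty$. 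In the obvious integral form the first factor $\rV^{\kappa-1}(\g)$ tends to $\rV^{crit}(\g)=\rV^{-h^\vee}(\g)$, while $\rW^{\kappa^*}(\g)$ tends to its semiclassical limit, which by Section~\ref{sec_LimitVir} and its higher-rank analogue is the commutative algebra of functions on regular $\g$-opers; this commutative algebra is by construction the big center $\mathcal{Z}$ of $\rV^{crit}(\g)\otimes\bL^1(\g)$.

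The first step is to locate $\mathcal{Z}$ inside $\rV^{crit}(\g)\otimes 1$. From $L_n^{Coset}=L_n^{(\kappa-1)}+L_n^{(1)}-L_n^{(\kappa)}$, and since $L_n^{(1)}$ and $L_n^{(\kappa)}=\tfrac1{2(\kappa+h^\vee)}\sum_{n'+n''=n}\colon v^{(i)(\kappa)}_{n'}w^{(i)(\kappa)}_{n''}\colon$ stay bounded as $\ell\to\infty$ while $\tfrac1{2(\kappa-1+h^\vee)}=\tfrac{\ell-1}{2}$ blows up, one gets
\[
\tfrac1\ell L_n^{Coset}=\tfrac{\ell-1}{2\ell}\sum_{n'+n''=n}\colon v^{(i)(\kappa-1)}_{n'}w^{(i)(\kappa-1)}_{n''}\colon+O(\ell^{-1})\ \xrightarrow{\ \ell\to\infty\ }\ \tfrac12\,L_n^{crit,\mathrm{rescaled}},
\]
the rescaled critical Sugawara operators of the first factor, i.e. the quadratic generators of the Feigin--Frenkel center $\mathcal{Z}(\rV^{crit}(\g))\otimes 1$ (this is the identity quoted in the remark preceding the corollary). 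For $\g=\sl_2$, where this is the only $\rW$-generator, $\mathcal{Z}$ is generated by these elements and the claim of this step is proven outright. For general $\g$ one runs the same rescaling on the higher $\rW^{\kappa^*}(\g)$-generators: by the structure of the GKO coset the singular scaling again originates from the first factor, so these limits are expressions in first-factor modes only and land in $\mathcal{Z}(\rV^{crit}(\g))\otimes 1$; that they fill out the whole center, with the coset oper-parametrization matching the Feigin--Frenkel one, is then exactly \cite{Ara11} together with Feigin--Frenkel duality, as recalled in the introduction. Hence $\mathcal{Z}=\mathcal{Z}(\rV^{crit}(\g))\otimes 1$.

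Given this, I would finish as follows. Taking the fibre over a regular $\g$-oper $\chi$ means quotienting $\rV^{crit}(\g)\otimes\bL^1(\g)$ by the ideal generated by the elements $Z-\chi(Z)$, $Z\in\mathcal{Z}$; since these lie in the first factor, this ideal is $\mathcal{I}_\chi\otimes\bL^1(\g)$ with $\mathcal{I}_\chi\subset\rV^{crit}(\g)$ the defining ideal of $\chi$, whence
\[
\bigl(\rV^{crit}(\g)\otimes\bL^1(\g)\bigr)\big/(Z-\chi(Z))_{Z\in\mathcal{Z}}\ \cong\ \bigl(\rV^{crit}(\g)/(\mathcal{I}_\chi)\bigr)\otimes\bL^1(\g),
\]
which is precisely the fibre of $\rV^{crit}(\g)$ over $\chi$ tensored with $\bL^1(\g)$ (here one uses that $\bL^1(\g)$ is a simple lattice vertex algebra, so $\mathcal{Z}(\rV^{crit}(\g)\otimes\bL^1(\g))=\mathcal{Z}(\rV^{crit}(\g))\otimes 1$ and no $\bL^1(\g)$-components can appear). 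The conformal, resp. twisted-module, structure transports along the same way: the surviving action is $\lim_{\ell\to\infty}\tfrac1\ell L_n^{Coset}$ together with the order-$\ell^{-1}$ corrections, which recombine into the (Urod-)modified Virasoro action on the $\bL^1(\g)$-factor; for the trivial oper this is the zero-fibre vertex-algebra structure, and for a general regular oper it presents the fibre as a twisted module over the zero-fibre, matching the corresponding structure on $(\rV^{crit}(\g)/\mathcal{I}_\chi)\otimes\bL^1(\g)$ by the same limit applied inside the first factor.

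The main obstacle is exactly the coincidence of parametrizations asserted in the second paragraph. For $\g=\sl_2$ it reduces to the single explicit identity $\tfrac1\ell L_n^{Coset}\to\tfrac12 L_n^{crit,\mathrm{rescaled}}$ and is immediate. For general $\g$ one must either produce the analogous rescaled-limit formulas for all $\rW$-generators --- which is not carried out in this article --- or argue structurally that the limiting commutative subalgebra obtained from the coset is forced to be the full Feigin--Frenkel center of $\rV^{crit}(\g)$ with the matching oper structure; this is where \cite{Ara11} and Feigin--Frenkel duality do the real work, and it is the point at which a fully rigorous proof beyond $\sl_2$ would have to be pinned down.
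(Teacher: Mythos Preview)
Your argument is correct and follows the same route as the paper: the paper's entire justification is the one-line observation preceding the corollary, namely that $L_n^{Coset}/\ell=\tfrac12 L_n^{crit,\mathrm{rescaled}}$ plus $O(\ell^{-1})$ terms, which is exactly the limit you derive, and from which the identification of fibres follows since the big center sits in the first tensor factor. Your write-up is in fact more detailed than the paper's, and your honest flagging of the gap for higher $\rW$-generators in general $\g$ (and the appeal to \cite{Ara11} and Feigin--Frenkel duality to close it) goes beyond what the paper makes explicit; the paper treats the corollary as an immediate consequence of the Virasoro-level identity without separately discussing the remaining generators.
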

As already stated, these fibres decompose over $\rV^{-h^\vee+1}$ as  
$$\bigoplus_{\lambda\in Q^+} \bV^{-h^\vee+1}_\lambda\otimes \LieL_{\lambda^*} $$

\begin{question}
Compute the (twisted) representations of this algebra and compare it to the respective limits $\kappa^*\to \infty$ in the general setting of the previous section, and in particular for $p=1$.
\end{question}

In the further case $(\sl_2,p=2)$, the result should be the small N=4 superconformal algebra as discussed in the end of section \ref{sec_explicit_p2_N4}.

\section{Explicit computations for \texorpdfstring{$\sl_2,p=1$}{sl2,p=1}}\label{sec_explicit_p1}

In this section we follow closely the explicit computations in \cite{BFL16}, but to be consistent with the general formulae in Section \ref{sec_couple} we let $\kappa$ parametrizes $b_2$ instead of $b$, so our $\kappa$ corresponds to $\kappa+1$ in \cite{BFL16}. Note also that the sides and the roles of $b_1,b_2$ and are reversed and the factor with $b_1$ is dualized in comparison to the general formulae in Section \ref{sec_couple}. \\

The quantum Hamiltonian reduction of $\rV^\kappa(\sl_2)$ is the Virasoro algebra with parameter (see Section \ref{sec_Vir})
$$b_2=\sqrt{-\kappa-2}$$
Accordingly, the coset and its reduction are
$$\rV^{\kappa^*}(\sl_2)\otimes \Vir^{\kappa}(\sl_2)\hookrightarrow \rV^{\kappa}(\sl_2)\otimes \bL_1(\sl_2)$$
$$\Vir^{b_1^{-1}}\otimes \Vir^{b_2}\hookrightarrow \Vir^{b}\otimes \bL_1(\sl_2)$$
$$ b_1^{-1}=\sqrt{-{\kappa^*}-2}=\sqrt{-\frac{\kappa+2}{\kappa+1}},\qquad {\kappa^*}=-\frac{\kappa}{\kappa+1},\qquad 
b=\sqrt{-\kappa-1},\qquad$$

%OLD CONVENTIONS
%$$\rV^{\kappa+1}(\sl_2)\otimes \Vir^{\kappa^*}(\sl_2)\hookrightarrow \rV^{\kappa}\otimes \bL_1(\sl_2)$$
%$$\Vir^{b_1}\otimes \Vir^{b_2}\hookrightarrow \Vir^{b}\otimes \bL_1(\sl_2)$$
%$$b_1=\sqrt{-\kappa-3},\qquad  b_2=\sqrt{-{\kappa^*}-2}=\sqrt{-\frac{\kappa+3}{\kappa+2}},\qquad {\kappa^*}=-\frac{\kappa+1}{\kappa+2}$$
We check the following relations hold: 
\begin{align*}
    1&=\frac{1}{\kappa+2}+\frac{1}{\kappa^*+2} \\
    1&=b_1^{2}+b_2^{-2}\\
    b_1&=b/\sqrt{1-b^2}\\
    b_2&=\sqrt{b^2-1}
\end{align*}
The Urod algebra is the vertex algebra $\bL_1(\sl_2)$ with generators $f_{-1},e_{-1},h_{-1}$, or equivalently the vertex algebra $\V_{\sqrt{2}\Z}$ with generators $e^{-\sqrt{2}},e^{\sqrt{2}},\partial\varphi_{\sqrt{2}}$, with modified conformal element
$$T^{Urod}=\frac{1}{4}(\partial\varphi\sqrt{2})^2+\frac{1}{2}\partial^2\varphi\sqrt{2}+\epsilon \partial^2e^{\sqrt{2}\varphi}$$
\cite{ACF22} state in Lemma 6.1 and Example 6.2 similar formulae for an Urod algebra $\bL_1(\g)$ with modified Virasoro action, corresponding to the case $\g,p=1$.

\begin{remark}
It might be interesting to try to understand the Urod algebra as affine Lie algebra at level $1$ again deformed by a connection. 
\end{remark}

 In \cite{BFL16}  Theorem 3.3 a) and  Theorem 2.1 b) explicit instances of the defining decompositions in Section \ref{sec_couple} are proven for this model
 $$\Vir^b \otimes \V_{\sqrt{2}\Z}
    =\bigoplus_{n\geq 0} \varphi_{n,1}^{b_1}\otimes \varphi_{1,n}^{b_2} 
    $$
 In \cite{BFL16} Remark 3.1 explicit conformal elements realizing the embedding of $\Vir^{b_1},\;\Vir^{b_1}$ in $M^b\otimes \V_{\sqrt{2}\Z}$ are computed 
 \begin{align*}
    T^{(b_1)}
    &=\frac{b+b^{-1}}{2(b-b^{-1})\epsilon}e^{-\sqrt{2}\varphi}
    +\frac{b}{2(b-b^{-1})}(\partial\varphi)^2
    -\frac{b^{-1}}{\sqrt{2}(b-b^{-1})} \partial^2\varphi
    -\frac{(1+2b^{-2})\epsilon}{b^2-b^{-2}}(\partial\varphi)^2e^{\sqrt{2}\varphi}\\
    &-\frac{\sqrt{2}b^{-1}\epsilon}{b-b^{-1}} \partial^2\varphi e^{\sqrt{2}\varphi}
    -\frac{2\epsilon^2}{b^2-b^{-2}}e^{2\sqrt{2}\varphi}
    -\frac{b^{-1}}{b-b^{-1}}T^b
    -\frac{2\epsilon}{b^2-b^{-2}}T^b e^{\sqrt{2}\varphi}\\
    T^{(b_2)}
    &=-\frac{b+b^{-1}}{2(b-b^{-1})\epsilon}e^{-\sqrt{2}\varphi}
    -\frac{b^{-1}}{2(b-b^{-1})}(\partial\varphi)^2
    +\frac{b}{\sqrt{2}(b-b^{-1})} \partial^2\varphi
    +\frac{(1+2b^{2})\epsilon}{b^2-b^{-2}}(\partial\varphi)^2e^{\sqrt{2}\varphi}\\
    &+\frac{\sqrt{2}b\epsilon}{b-b^{-1}} \partial^2\varphi e^{\sqrt{2}\varphi}
    +\frac{2\epsilon^2}{b^2-b^{-2}}e^{2\sqrt{2}\varphi}
    +\frac{b}{b-b^{-1}}T^b
    +\frac{2\epsilon}{b^2-b^{-2}}T^b e^{\sqrt{2}\varphi}
\end{align*}

The limit $b,b_2\to \infty$ acquires a big central subalgebra generated by $L_n^{b_2}/b_2^2$. In the formula for $T^{b_2}/b_2^2$ all term except $T/b^2$ vanish in the limit (assumed that we do not rescale the module with $b$), and we hence recover a similar result as in the proof of Theorem \ref{thm_coset}
$$L_n^{(b_2)}/b_2^2=L_n^{(b)}/b^2$$

\begin{corollary}\label{cor_expl_p1}
The limit of $\V^\kappa$  fibres over the space of regular Sturm-Liouville operators $\frac{\d^2}{\d z^2}+q(z)$ with $q(z)=\sum_{n\in\Z} \ell_nz^{-2-n}$. the fibre $\V^\infty|_{q(z)}$ is nonzero for ${\ell}_n=0,n\geq -1$  and equal to the lattice vertex algebra $\V_{\sqrt{2}\Z}$ with  deformed Virasoro action 
\begin{align*}
L_n^{(b_1)}
&=\frac{1}{2\epsilon}f_{n+1} %z^{-1-(n+1)}
+L_n^{\V_{\sqrt{2}\Z}} %\frac{1}{2}(\partial\varphi)^2_n %{z^{-2-n}}
-{\ell}_n
-2\epsilon\sum_{k\in \Z}{\ell}_k e_{n-k-1} \\ %z^{-2-k-1-(n-k-1)}
\end{align*}
 Note that for regular connections all contributions are degree-lowering and hence not essentially change the structure as a Virasoro module. \\
 
In particular the zero-fibre $\V^\infty|_0$ is isomorphic to $\V_{\sqrt{2}\Z}$ deformed by a connection $\d+\frac{1}{2\epsilon}fz^0$, which is what (in hindsight) we expect from the classical Hamiltonian reduction.\\

We also have an $\sl_2$-action by $L_n^{(b_2)},n=-1,0,1$, which defined only for  ${\ell}_{-1},{\ell}_0,{\ell}_{1}=0$

\begin{align*}
L_n^{(b_2)}
&=-\frac{1}{2\epsilon}f_{n+1}
+\frac{1}{\sqrt{2}}\partial^2\varphi_n %{z^{-2-n}}
%+\epsilon Y(\partial^2e^{+\sqrt{2}\varphi})_{-2-n}
%$(-k-1)(-k-2)e_k z^{-k-3}=Y(\partial^2 e^\sqrt{2},z), n=k+1
+(-n)(-n-1)e_{n+1}
+L_n^{(b)}
+2\epsilon\sum_{k\in \Z}{\ell}_k e_{n-k-1}\\
\end{align*}
\end{corollary}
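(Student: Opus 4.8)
The plan is to take the limit $b\to\infty$ of the explicit conformal elements $T^{(b_1)},T^{(b_2)}$ quoted from \cite{BFL16}, after first fixing the correct integral form, and to identify the surviving terms. First I would record the asymptotic behaviour of the parameters: from $b=\sqrt{-\kappa-1}$, $b_1=b/\sqrt{1-b^2}$, $b_2=\sqrt{b^2-1}$ we have $b_1\to\i$ (so $\Vir^{b_1}$ stays at a finite central charge), while $b_2\sim b\to\infty$ and $b^2-b^{-2}\sim b^2$, $b-b^{-1}\sim b$, $b/(b-b^{-1})\to 1$, $b^{-1}/(b-b^{-1})\to 0$. The big center is generated by ${\ell}_n:=L_n^{(b_2)}/b_2^2$, exactly as in the proof of Theorem~\ref{thm_coset}; since in $T^{(b_2)}/b_2^2$ every summand except $\frac{b}{b-b^{-1}}T^b\cdot b_2^{-2}=T^b/b^2\cdot(1+o(1))$ carries a negative power of $b$, we get ${\ell}_n = \lim L_n^{(b)}/b^2$, the rescaled Virasoro of $\Vir^b$. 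Thus the fibre is the coinvariants of $\V_{\sqrt{2}\Z}$ by the relations $L_n^{(b)}/b^2 \sim {\ell}_n$, which as a vector space is just $\V_{\sqrt{2}\Z}$, and the fibre is nonzero precisely when the relations defining a $\Vir^{b}$-module in $\V_{\sqrt{2}\Z}$ are compatible with the prescribed values ${\ell}_n$; since $\varphi_{1,1}$ forces ${\ell}_n=0$ for $n\ge -1$ (the vacuum vector is annihilated by $L_{-1}^{(b)},L_0^{(b)},L_1^{(b)},\dots$), this gives the stated vanishing condition.

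Next I would substitute $L_n^{(b)}\rightsquigarrow b^2{\ell}_n$ (more precisely: replace all modes of $T^b$ acting in $\V_{\sqrt2\Z}$ by their images in the coinvariants) into $T^{(b_1)}$ and read off the $b^0$-coefficient. The term $\frac{b+b^{-1}}{2(b-b^{-1})\epsilon}e^{-\sqrt2\phi}$ has coefficient $\to \frac{1}{2\epsilon}$; since $e^{-\sqrt2\phi}=f_{-1}1$ in the $\bL_1(\sl_2)$-description, its mode-$n$ component contributes $\frac{1}{2\epsilon}f_{n+1}$. The terms $\frac{b}{2(b-b^{-1})}(\partial\phi)^2$ and $-\frac{b^{-1}}{\sqrt2(b-b^{-1})}\partial^2\phi$ combine in the limit (the second coefficient $\to 0$, the first $\to \frac12$) to give the undeformed $\frac14(\partial\phi\sqrt2)^2=L^{\V_{\sqrt2\Z}}$, i.e. the mode operator $L_n^{\V_{\sqrt2\Z}}$. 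The term $-\frac{b^{-1}}{b-b^{-1}}T^b$ has coefficient $\to 0$ but $T^b$ itself scales like $b^2{\ell}$, so $-\frac{b^{-1}}{b-b^{-1}}T^b = -b^{-2}\cdot(1+o(1))\cdot b^2\,{\ell}_\bullet \to -{\ell}_n$ in mode $n$. The term $-\frac{2\epsilon}{b^2-b^{-2}}T^b e^{\sqrt2\phi}$ similarly has prefactor $\sim -2\epsilon b^{-2}$ and $T^b\sim b^2{\ell}_\bullet$, so it survives as $-2\epsilon\sum_k {\ell}_k (e^{\sqrt2\phi})_{\,\bullet}$; writing $e^{\sqrt2\phi}=e_{-1}1$ this normally-ordered product contributes $-2\epsilon\sum_{k}{\ell}_k e_{n-k-1}$ in mode $n$. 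All remaining summands of $T^{(b_1)}$ — those with $(\partial\phi)^2e^{\sqrt2\phi}$, $\partial^2\phi\,e^{\sqrt2\phi}$, $e^{2\sqrt2\phi}$ — carry a prefactor that is $O(b^{-1})$ or $O(b^{-2})$ while the vertex-algebra element involved does not scale with $b$, so they vanish. Collecting,
\begin{align*}
L_n^{(b_1)} &= \frac{1}{2\epsilon}f_{n+1} + L_n^{\V_{\sqrt2\Z}} - {\ell}_n - 2\epsilon\sum_{k\in\Z}{\ell}_k e_{n-k-1},
\end{align*}
which is the claimed formula. For the zero-fibre all ${\ell}_n=0$, leaving $L_n^{(b_1)}=L_n^{\V_{\sqrt2\Z}}+\frac{1}{2\epsilon}f_{n+1}$, i.e. $\V_{\sqrt2\Z}$ deformed by the connection $\d+\frac{1}{2\epsilon}f z^0$ in the sense of Theorem~\ref{thm_coset} (take $A(z)=\frac{1}{2\epsilon}f z^0$, so $A_{-1}=\frac{1}{2\epsilon}f$, all other $A_m=0$, and $\langle f,f\rangle=0$ kills the quadratic term); the same substitution into $T^{(b_2)}$, keeping only $n\in\{-1,0,1\}$ where ${\ell}_{-1}={\ell}_0={\ell}_1=0$ must be imposed for the action to close, yields the stated $\sl_2$-formula — here one must be slightly careful because $T^{(b_2)}$ has the term $+\frac{b}{b-b^{-1}}T^b$ with coefficient $\to 1$ times $T^b$ which does \emph{not} vanish, producing the $L_n^{(b)}$ appearing in the displayed formula, and the $(\partial\phi)^2 e^{\sqrt2\phi}$ and $T^b e^{\sqrt2\phi}$ terms now have prefactors $\sim +b^2\epsilon\cdot b^{-2}$ resp. $+2\epsilon b^{-2}\cdot b^2{\ell}$ whose products do survive, giving the $(-n)(-n-1)e_{n+1}$ and $2\epsilon\sum_k{\ell}_k e_{n-k-1}$ terms together with the $\frac{1}{\sqrt2}\partial^2\varphi_n$ from the $\partial^2\phi\,e^{\sqrt2\phi}$ contribution.

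The main obstacle I anticipate is bookkeeping the \emph{interaction} between the $b$-scaling of the numerical prefactors and the $b$-scaling of $T^b$ when the latter is evaluated on the module $\V_{\sqrt2\Z}$: one must be disciplined that $T^b$ does not act as an independent operator but is tied via the coset relation to $b^2{\ell}_\bullet$, so a prefactor that naively vanishes like $b^{-2}$ can pair with $T^b\sim b^2{\ell}$ to give a finite limit, whereas the \emph{same} prefactor paired with a $b$-independent lattice element genuinely vanishes. A secondary subtlety is the normal ordering in products like $T^b e^{\sqrt2\phi}$: in the limit, since one factor becomes the central element ${\ell}_k$, the normal ordering is immaterial and the product becomes the plain sum $\sum_k {\ell}_k e_{n-k-1}$, exactly as in the proof of Theorem~\ref{thm_coset}; I would state this explicitly to justify dropping the colons. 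Everything else is a routine limit of rational functions of $b$, and the identification of the zero-fibre with a connection-deformed lattice algebra is then immediate from Theorem~\ref{thm_coset}.
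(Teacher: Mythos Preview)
Your approach is the same as the paper's: it too identifies the big center via $L_n^{(b_2)}/b_2^2\to L_n^{(b)}/b^2$ and then reads off the fibre formulae by taking $b\to\infty$ in the explicit expressions for $T^{(b_1)},T^{(b_2)}$ from \cite{BFL16}. Your term-by-term analysis of $T^{(b_1)}$ is correct and matches the stated $L_n^{(b_1)}$ exactly.

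Your treatment of $L_n^{(b_2)}$ is a bit garbled, however. The summand $\tfrac{1}{\sqrt2}\partial^2\varphi_n$ comes from the plain $\tfrac{b}{\sqrt2(b-b^{-1})}\partial^2\phi$ term (coefficient $\to\tfrac{1}{\sqrt2}$), not from $\partial^2\phi\,e^{\sqrt2\phi}$ as you write; and the $(-n)(-n-1)e_{\cdots}$ term arises because the two surviving contributions $2\epsilon(\partial\phi)^2e^{\sqrt2\phi}$ and $\sqrt2\epsilon\,\partial^2\phi\,e^{\sqrt2\phi}$ combine into $\epsilon\,\partial^2 e^{\sqrt2\phi}$, whose mode expansion produces the quadratic-in-$n$ coefficient --- it is not the mode of $(\partial\phi)^2e^{\sqrt2\phi}$ alone. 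More importantly, you should explain why the raw $L_n^{(b)}$ is allowed to appear in the limit formula for $n\in\{-1,0,1\}$: this is exactly the alternative integral form of Section~\ref{sec_LimitVir}, where $L_{-1},L_0,L_1$ are kept unscaled while only $\ell_n=L_n/b^2$ for $|n|\ge 2$ are taken as generators; otherwise ``coefficient $\to 1$ times $T^b$'' would blow up rather than stay finite. These are bookkeeping issues rather than a gap in the method.
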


%%%%%%
%We find an eigenvector for $(\frac{1}{4},\pm\frac{1}{2})$
%$$\left(\frac{1}{2\epsilon}f_{1}
%+\frac{1}{2}(\partial\varphi)^2-\frac{1}{4}
%-\lambda_n
%-2\epsilon\sum_{k=-2}^\infty\lambda_k e_{-k-1}\right)v=\frac{1}{4}v$$
%$$\left(-\frac{1}{2\epsilon}f_{1}
%+\frac{1}{\sqrt{2}}\partial^2\varphi
%+Y(\partial^2e^{+\sqrt{2}\varphi})_{-2}
%+L_0
%+2\epsilon\sum_{k=-2}^\infty\lambda_k e_{-k-1}\right)v=\pm\frac{1}{2}v
%$$
%All terms $e_1,f_1$ are zero so the eigenvectors are simply as follows.
\begin{example}
In particular the conformal weights resp. $L_0^{(b_1)}$-eigenvalues coincide with the standard choice of $\V_{\sqrt{2}\Z}$, and the $\sl_2$-weights of $L_0^{(b_2)}$ are according to the $\sqrt{2}\Z$-grading. $0,\;\frac{1}{4},\frac{1}{4},\;1,1,1,\ldots$ simultaneous with $0,\;-\frac{1}{2},+\frac{1}{2},\;-1,0,+1,\ldots$. 
The first eigenvectors are 
\begin{align*}
v_{(1,1)\times(1,1)}
&:=1\\
v_{(2,1)\times(1,2)}
&:=e^{-\frac{1}{\sqrt{2}}\varphi}\\
L_1^{(b_2)}	v_{(2,1)\times(1,2)}
%&:=(-1)(-2)\epsilon e^{+\frac{1}{\sqrt{2}}\varphi} 
%as twice differetiated $\epsilon Y(e^{\sqrt{2}\varphi})_{-1}e^{-\frac{1}{\sqrt{2}}\varphi}$
&:=2\epsilon \;e^{+\frac{1}{\sqrt{2}}\varphi}
\end{align*}
and here $L^{(b_1)}_{-1}=-\frac{1}{2\epsilon}f_0$.
\end{example}

%The situation changes if we consider modules with ${\ell}_{-1}\neq 0$ ($L_{-1}$ acts not nilpotently), and/or ${\ell}_0\neq 0$ (conformal weight scaling with $b^2$). For a generic Virasoro module $M^b_h$ where $h$ does not scale with $b$ we get a bundle over the space of Sturm-Liouville operators with ${\ell}_0=0,{\ell}_{-1}\neq 0$ and conformal structure by the formula above, which now involves an additional term $-2\epsilon \cdot {\ell}_{-1}e_n$, which  precisely matches the formula for $e_0$-twisted modules of the triplet algebra, up to nilpotent terms.\marginpar{BORIS?}. In particular we find $L_0$ Jordan blocks.\\
%Again, for a generic Virasoro module $M^b_h$ with $h/b^2={\ell}_0$ finite we get a bundle over the space of Sturm-Liouville operators with fixed ${\ell}_0$.

\section{Explicit computations for \texorpdfstring{$\sl_2,p=2$}{sl2,p=2}} \label{sec_explicit_p2}

\subsection{The limit of the N=4 superconformal algebra}\label{sec_explicit_p2_N4}

The large N=4 superconformal algebra $\sVir_{N=4}^{k,a}$ as discussed in \cite{CGL20} Section 2 is a quantum Hamiltonian reduction of $\hat{D}(2,1;a)_k$ and is generated by two commuting sets of $\hat{\sl}_2$ with generators $e,f,h$ and $e',f',h'$ at level

$$-\frac{a+1}{a}k-1,\qquad -(a+1)k-1$$

and fermionic generators $G^{\pm\pm}$. There are two commuting Virasoro actions $L_{\sl_2'}$ and $L_C:=L-L_{\sl_2'}$. All OPEs are explicitly given at the cited chapter.

In Section 2.4 the explicit case $k=1/2$ is discussed
\begin{example}[$k=1/2$]
The decomposition as $\sl_2'\times \sl_2$-module

$$\sVir_{N=4}^{k,a}=\bigoplus_{\lambda\in Q^+} \bV_{\lambda}^{-(a+3)/2}\otimes \bV_{\lambda}^{-(a^{-1}+3)/2}$$

and from the proof of Theorem 2.5 following \cite{Ad16} we find that the  highest weight vectors are $X_n=(G^{++})(\partial G^{++})\cdots(\partial^{n-1}G^{++})$.
\end{example}

The limit $a\to \infty$ of $\sVir_{N=4}^{(k,a)}$ as deformable family defined by the generators $e'/\kappa,f'/\kappa,h'/\kappa$ and $L_C,G^{\pm\pm}$ is computed in \cite{CGL20} Section 2.2. It acquires a large commutative subalgebra generated by $e'/\kappa,f'/\kappa,h'/\kappa$ isomorphic to the ring of functions on the space of $\sl_2$-connections. The zero-fibre coincides for $k=1/2$ with the so-called small N=4 superconformal algebra studied by Adamovic \cite{Ad16} using free-field realization.  \\

\subsection{The limit of \texorpdfstring{$\hat{\osp}(2|1)_\kappa$}{osp(1|2)}}

The Hamiltonian reduction of $\sVir_{N=4}^{1/2,a}$ with respect to $\hat{\sl}_2$  is calculated in \cite{CGL20} Section 2.5:

\begin{theorem}
$\HA^{(2)}[\sl_2,\kappa]$ with $\kappa=-(a+3)/2$ is isomorphic to $\rV^{-(a+3)/2}(\osp(1|2))$ with generators
$$e',f',h',x':=\frac{a+1}{\sqrt{2a}}G^{++},y':=-\frac{a+1}{\sqrt{2a}}G^{-+}$$
and the additional cohomology relation $e+1=0$. 
\end{theorem}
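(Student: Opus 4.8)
The plan is to deduce this from the realization of $\sVir_{N=4}^{1/2,a}$ recalled in Section \ref{sec_explicit_p2_N4}, together with the general principle on quantum Hamiltonian reduction of coupled vertex algebras (Theorem \ref{thm_couple} and the Urod-translation result quoted from \cite{ACF22}). Concretely, $\sVir_{N=4}^{1/2,a}$ decomposes over $\hat{\sl}_2' \times \hat{\sl}_2$ as $\bigoplus_{\lambda\in Q^+} \bV_\lambda^{-(a+3)/2}\otimes \bV_\lambda^{-(a^{-1}+3)/2}$, so after performing the quantum Hamiltonian reduction with respect to the $\hat{\sl}_2$ factor (the unprimed one, sitting at level $-(a^{-1}+3)/2$), the decomposition formula gives $\HA^{(2)}[\sl_2,\kappa] = \bigoplus_{\lambda\in Q^+} \bV_\lambda^{\kappa}\otimes T^{\kappa^*}_{\lambda,0}$ with $\kappa=-(a+3)/2$, $\kappa^* = -(a^{-1}+3)/2$; one checks $\tfrac{1}{\kappa+h^\vee}+\tfrac{1}{\kappa^*+h^\vee} = 2$ with $h^\vee=2$, consistently with $p=2$ in Theorem \ref{thm_couple}. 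So the vertex algebra one obtains is of the right abstract shape; the task is to identify it with $\rV^\kappa(\osp(1|2))$ by exhibiting explicit generators.

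The key steps are: first, write down the images of the candidate generators $e',f',h',x',y'$ under the reduction functor $\mathrm{H}^0(-)$ and verify they survive (are $\mathrm{d}$-closed and not exact). The bosonic $e',f',h'$ come from the commuting $\hat{\sl}_2'$ in $\sVir_{N=4}$, which commutes with the $\hat{\sl}_2$ being reduced, so they pass to cohomology untouched and still generate $\hat{\sl}_2'$ at level $\kappa=-(a+3)/2$; this is the even part $\osp(1|2)^{\bar 0}=\sl_2$. Second, for the odd generators: $G^{++}, G^{-+}$ are the two of the four supercurrents $G^{\pm\pm}$ carrying a definite ($+$) charge under the reduced $\sl_2$; in the Hamiltonian reduction by a principal nilpotent $f$ of that $\sl_2$, the $+1$-isospin combination is precisely the one that survives as a primary in the reduced algebra (the $-1$ component becomes exact or trivial). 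The normalization constants $\pm\frac{a+1}{\sqrt{2a}}$ are fixed by demanding the correct OPE normalization for $\osp(1|2)$ generators. Third, compute the OPEs among $e',f',h',x',y'$ in the reduced algebra — using the explicit $\sVir_{N=4}$ OPEs from \cite{CGL20} Section 2 and the standard rules for OPEs in quantum Hamiltonian reduction (replacing $e$ by its constraint value, tracking BRST-exact terms) — and match them against the defining OPEs of $\rV^{\kappa}(\osp(1|2))$: $x'(z)y'(w) \sim \tfrac{\kappa}{(z-w)^2} + \tfrac{h'(w)}{z-w}$ up to normalization, $\{x',x'\} \propto e'$, $\{y',y'\}\propto f'$, and the correct $\sl_2$-covariance of $x',y'$. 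Fourth, invoke that $\rV^\kappa(\osp(1|2))$ is freely generated (a PBW-type statement), so a homomorphism from it matching generators and OPEs, which is surjective by construction and injective by a character/graded-dimension comparison using the decomposition above, is an isomorphism. The cohomology relation $e+1=0$ is simply the statement that the constraint imposed in the reduction sets the current $e$ equal to $-1$ (or equivalently, in the convention of Section \ref{sec_explicit_p2}, $e/\kappa$ or a shift thereof), recorded as a relation in $\mathrm{H}^0$.

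The main obstacle I expect is the bookkeeping in the third step: the large $N=4$ algebra has a genuinely involved OPE structure, the supercurrents $G^{\pm\pm}$ have singular OPEs among themselves producing all the bosonic currents plus stress tensor plus derivative terms, and performing the Hamiltonian reduction means carefully discarding the BRST-exact pieces, tracking which normal-ordered corrections are needed, and confirming that the surviving $x', y'$ close onto $e', f', h'$ \emph{exactly} (no extra fields) with the $\osp(1|2)$-prescribed coefficients — this is where the specific value $k=1/2$ is essential, since for other $k$ the reduction of $\sVir_{N=4}$ is a larger algebra. A secondary subtlety is pinning down the precise convention for the cohomology relation ($e+1=0$ versus $e/\kappa -1 = 0$, cf.\ the remark in Section \ref{sec_explicit_p2} that the authors prefer $e/\kappa-1$ for better limit behaviour); one should state clearly which constraint current and which shift is used so that the subsequent limit $\kappa\to\infty$ in Lemma \ref{lm_deformedModeAlgebra} comes out cleanly.
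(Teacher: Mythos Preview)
The paper does not actually prove this theorem: it is stated with the introductory sentence ``The Hamiltonian reduction of $\sVir_{N=4}^{1/2,a}$ with respect to $\hat{\sl}_2$ is calculated in \cite{CGL20} Section 2.5'' and then quoted, with no \texttt{proof} environment following. So there is no ``paper's own proof'' to compare against; the result is imported from \cite{CGL20}.

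Your proposal is a reasonable reconstruction of the argument one would expect to find in that reference: reduce the $k=1/2$ large $N=4$ algebra on the unprimed $\hat{\sl}_2$ factor, observe that the primed $\hat{\sl}_2'$ at level $\kappa=-(a+3)/2$ survives intact as the even part, identify the odd generators as the surviving supercurrents $G^{++},G^{-+}$ (those with $+$ charge under the reduced $\sl_2$) with the stated normalizations, and then verify the $\osp(1|2)$ OPEs. Your diagnosis that the third step is where the genuine work lies, and that $k=1/2$ is what forces the reduction to close on $\osp(1|2)$ rather than something larger, is correct. The one point to be careful about is that the subsequent theorem in the paper (the version with modified BRST differential $\underline{\d}'$) \emph{is} proved there, by transporting the \cite{CGL20} result along an explicit rescaling automorphism $\rho$; so if you wanted a self-contained argument you could reverse-engineer the \cite{CGL20} computation from that proof rather than redoing the OPE check from scratch.
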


We now turn to the limit $\kappa\to \infty$ of 
$$\HA^{(2)}[\sl_2,\kappa]\cong \rV^{\kappa}(\osp(1|2))$$
where $\kappa$ is related to the parameters in the previous section by $\kappa=-(a+3)/2,k=1/2$. It has generators $e'/\kappa,\;f'/\kappa,\;h'/\kappa$ forming $(\hat{\sl}_2)_\kappa$ and fermionic generators $x'/\sqrt{\kappa},y'/\sqrt{\kappa}$. Their OPEs can be read off \cite{CGL20} p. 16 to be
\begin{align*}
    (x'/\sqrt{\kappa})(z)(x'/\sqrt{\kappa})(w)&\sim-(e'/\kappa)(w)(z-w)^{-1}\\
    (y'/\sqrt{\kappa})(z)(y'/\sqrt{\kappa})(w)&\sim(f'/\kappa)(w)(z-w)^{-1}\\
    (x'/\sqrt{\kappa})(z)(y'/\sqrt{\kappa})(w)&\sim
    (z-w)^{-2}
    +\frac{1}{2}(h'/\kappa)(w)(z-w)^{-1}
\end{align*}

An arbitrary fibre is defined by scalars $(e'/\kappa)_n,(f'/\kappa)_n,(h'/\kappa)_n$ and as in section \ref{sec_coset} we define a $\sl_2$-connection $\d+A(z)$ with $A(z)=\sum_{n\in \Z} A_nz^{-n-1}$ using the Killing form $(a/\kappa)_n=\langle A_n,a\rangle$, more explicitly for $\sl_2$
$$A_n= \frac{1}{4}f' (e'/\kappa)_n+\frac{1}{8}h' (h'/\kappa)_n+\frac{1}{4}e' (f'/\kappa)_n$$
We now introduce in the fibre $\d+A$ the variables 
$$\psi^{\d+A}:=x'/\sqrt{\kappa},\qquad \bpsi^{\d+A}:=y'/\sqrt{\kappa}$$
Then we find, by translating the singular part of OPE to (anti-)commutators 
$$[A_{-m-h_A},B_{-n-h_B}]_\pm=\sum_{l\leq 0} \binom{-m-1}{-l-1} (A_{-l-h_A}B)_{-(n+m-l)-h_{A_{-l-h_A}B}}$$

\begin{lemma}\label{lm_deformedModeAlgebra}
The limit of $\HA^{(2)}[\sl_2,\kappa]$ fibres over the space of regular $\sl_2$-connections and the fibre over the point $\d+\sum_{n\in\Z} A_{-n-1}z^n$ is given by the deformed mode algebra 
\begin{align*}
  %[(x'/\sqrt{a})_{-n-1},  (x'/\sqrt{a})_{-m-1} 
  %&=-(e'/a)_{-(n+m+1)-1}\\
  \{\psi_m^{\d+A}, \psi_n^{\d+A}\}
  &=-\langle A_{n+m},f'\rangle\\
  \{\bpsi_m^{\d+A}, \bpsi_n^{\d+A}\}
  &=\hphantom{-}\langle A_{n+m},e'\rangle\\
  %[(x'/\sqrt{a})_{-n-1},(y'/\sqrt{a})_m] 
  %&=-\frac{1}{2}(-n-1)1_{-(n+m+2)-0}+\frac{1}{2} (h'/a)_{-(n+m+1)-1}(w)\\
  \{\psi_m^{\d+A},\bpsi_n^{\d+A}\}&=\frac{1}{2}\langle A_{n+m},h'\rangle+
  m \delta_{m,-n}
\end{align*}
\end{lemma}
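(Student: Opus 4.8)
The plan is to read off everything directly from the explicit OPEs of $\rV^\kappa(\osp(1|2))$ displayed above, organised around the semiclassical limit of affine Lie algebras from Section~\ref{sec_LimitAff}. First I would recall that the even subalgebra of $\rV^\kappa(\osp(1|2))$ is the affine vertex algebra $(\hat{\sl}_2)_\kappa$ on $e',f',h'$, so that the rescaled currents $e'/\kappa,\ f'/\kappa,\ h'/\kappa$ obey relations with all structure constants of order $\kappa^{-1}$; hence as $\kappa\to\infty$ they generate a commutative Poisson vertex algebra, which by Section~\ref{sec_LimitAff} is the ring of functions on the space of (regular) $\sl_2$-connections $\d+A$, $A(z)=\sum_nA_nz^{-n-1}$, under $(a/\kappa)_n\leftrightarrow\langle A_n,a\rangle$. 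Passing to the fibre over a fixed regular connection means forming coinvariants with respect to $(a/\kappa)_n-\langle A_n,a\rangle$, i.e.\ replacing each mode $(a/\kappa)_n$ by the scalar $\langle A_n,a\rangle$, which in the given basis reads $A_n=\frac14 f'(e'/\kappa)_n+\frac18 h'(h'/\kappa)_n+\frac14 e'(f'/\kappa)_n$. The same substitution applies verbatim inside any module, so the module statement comes along for free.

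Next I would translate the three OPEs among $\psi:=x'/\sqrt{\kappa}$ and $\bpsi:=y'/\sqrt{\kappa}$ into (anti-)commutators of modes using the commutator formula recalled just before the statement, noting that $\psi,\bpsi$ are odd primary fields of conformal weight $1$, so $\psi(z)=\sum_m\psi_mz^{-m-1}$ and likewise for $\bpsi$. The only singular terms occurring are simple poles whose residues are modes of the even currents $e'/\kappa,\ f'/\kappa,\ h'/\kappa$, together with a double pole of unit coefficient in the $\psi\bpsi$ channel; the infinite sum in the commutator formula therefore truncates to one or two terms. Reading off which mode of which current is the residue of each simple pole, and substituting the central scalars $(a/\kappa)_{m+n}\mapsto\langle A_{m+n},a\rangle$ inside the fibre, produces
\begin{align*}
\{\psi_m^{\d+A},\psi_n^{\d+A}\}&=-\langle A_{m+n},f'\rangle,\\
\{\bpsi_m^{\d+A},\bpsi_n^{\d+A}\}&=\langle A_{m+n},e'\rangle,\\
\{\psi_m^{\d+A},\bpsi_n^{\d+A}\}&=\frac{1}{2}\langle A_{m+n},h'\rangle+m\,\delta_{m,-n},
\end{align*}
where the term $m\,\delta_{m,-n}$ is exactly the residue of $z^m$ at the double pole and is the point at which the weight-$1$ normalisation of the odd fields enters.

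Finally I would verify that the limit is legitimate: the $\C[\kappa^{-1}]$-span of $e'/\kappa,\ f'/\kappa,\ h'/\kappa,\ x'/\sqrt{\kappa},\ y'/\sqrt{\kappa}$ is closed under $\Y$, since every structure constant appearing in the OPEs above is polynomial in $\kappa^{-1}$; thus $\rV^\kappa(\osp(1|2))$ equipped with this integral form is a deformable family in the sense used earlier, and its $\kappa^0$-coefficient is the asserted limit. The cohomology relation $e/\kappa-1=0$ is irrelevant here, since only OPEs internal to the $\osp(1|2)$ generators are used. I expect the only genuine care needed to be bookkeeping the fermionic signs and the $\kappa^{-1/2}$-normalisations of $\psi,\bpsi$, and confirming that no higher-order pole contributes; the conceptual point — that a well-behaved $\kappa\to\infty$ limit exists at all — is already secured by the identification $\HA^{(2)}[\sl_2,\kappa]\cong\rV^\kappa(\osp(1|2))$ established above.
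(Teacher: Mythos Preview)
Your proposal is correct and follows essentially the same route as the paper: the lemma is stated immediately after the paper has written out the rescaled OPEs from \cite{CGL20}, identified the limit of the even $(\hat{\sl}_2')_\kappa$ with functions on regular $\sl_2$-connections via Section~\ref{sec_LimitAff}, and recalled the commutator formula translating OPEs into mode anticommutators---exactly the three ingredients you use. Your additional remarks (closure of the integral form, irrelevance of the cohomology relation $e+1=0$) are sound and more thorough than what the paper makes explicit.
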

We now study the operators  $e'_0,f'_0,h'_0$. They are not part of the integral form, since $e'_0=\kappa(e'_0/\kappa)\to \infty$, but they can be considered if $(e'_0/\kappa),(f'_0/\kappa),(h'_0/\kappa)=0$, or respectively for a subset. From the remaining OPEs we read off easily that they form a well-defined Lie algebra $\sl_2$ and that $x',y'$ transform as a douplet under this symmetry. \\

We now discuss Virasoro actions. The Virasoro action provided by the Sugawara construction of $\osp(2|1)$ is 
$$(3+2\kappa)L^{\osp(1|2)}=
\frac{1}{2}:h'h': + :e'f': + :f'e': - :x'y': + :y'x':
\qquad \kappa=-(a+3)/2$$
By construction we have the embedding
$$(\hat{\sl}_2')_\kappa \times \Vir_c
\quad \hookrightarrow\quad 
\osp(2|1)_\kappa$$
where $\Vir_c$ is the reduction of $(\hat{\sl}_2)_{-\frac{a+1}{a}k-1}$ for $k=1/2$ and in the limit $(\hat{\sl}_2)_{-\frac{3}{2}}$, so the central charge is $c=-2$. Explicitly, this embedding is given by the coset Virasoro action 
$$L_n^C=L^{\osp(1|2)}_n-L^{\sl_2'}_n
=\left(\frac{2(\kappa+2)}{3+2\kappa}-1\right)L^{\sl_2'}_n
+\frac{2}{3+2\kappa}:y'x':_n$$
In the limit $\kappa\to \infty$ the first term disappears which leaves the second term $\sum_{i+j=n}:\bpsi_i\psi_j:$. Hence we precisely recover the Virasoro structure given by the energy-stress tensor of the symplectic fermions, for all fibres $\d+A(z)$. 

\begin{corollary}
The limit of $\HA^{(2)}[\sl_2,\kappa]$ fibres over the space of regular $\sl_2$-connections, and the fibre over each connection $\d+A(z)$ is isomorphic to a twist of symplectic fermions with the twisted mode algebra given above.
\end{corollary}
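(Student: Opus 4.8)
The plan is to read the statement off three facts already in place: the deformable-family structure of $\rV^\kappa(\osp(1|2))$ under the rescaling of generators, the identification of the limiting central subalgebra with functions on regular $\sl_2$-connections (Section \ref{sec_LimitAff}), and Lemma \ref{lm_deformedModeAlgebra} together with the limit of the coset stress tensor computed just above. First I would check that the $\C[\kappa^{-1}]$-span of normally ordered monomials in the rescaled generators $e'/\kappa,f'/\kappa,h'/\kappa,x'/\sqrt\kappa,y'/\sqrt\kappa$ is closed under all products and hence a deformable family: this is immediate from the displayed OPEs, since the $\hat{\sl}_2'$ currents sit at level $\kappa=-(a+3)/2$, so that $[a_m/\kappa,b_n/\kappa]=\kappa^{-1}([a,b]_{m+n}/\kappa)+\kappa^{-1}m\delta_{m,-n}\langle a,b\rangle$, and the three fermionic OPEs have $\kappa$-independent coefficients in the rescaled fields. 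Consequently the $\kappa\to\infty$ limit $\V^\infty$ of $\HA^{(2)}[\sl_2,\kappa]\cong\rV^\kappa(\osp(1|2))$ is a Poisson vertex algebra in which $e'/\kappa,f'/\kappa,h'/\kappa$ generate a commutative subalgebra; invoking the discussion of Section \ref{sec_LimitAff} I identify it with the ring of functions on the space of regular $\sl_2$-connections $\d+A$, $A(z)=\sum_{k\geq0}A_{-k-1}z^{k}$, via $(a/\kappa)_n\leftrightarrow\langle A_n,a\rangle$, so that $\V^\infty$ and its modules become a bundle over that space, the fibre over $\d+A$ being the quotient by the ideal generated by $(a/\kappa)_n-\langle A_n,a\rangle$.

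Next I would identify the fibre. Setting $\psi^{\d+A}:=x'/\sqrt\kappa$ and $\bpsi^{\d+A}:=y'/\sqrt\kappa$ and translating the singular part of the OPEs into (anti)commutators gives exactly the relations of Lemma \ref{lm_deformedModeAlgebra}, which for $A=0$ are the defining relations of symplectic fermions and for $A=A_0z^{-1}$ with $A_0=\begin{psmallmatrix} r & 0\\ 2t & -r\end{psmallmatrix}$ reproduce the twisted relations of Corollary \ref{cor_twistedSF}. For the conformal structure I use the coset stress tensor $L^C_n=L^{\osp(1|2)}_n-L^{\sl_2'}_n=\tfrac{1}{3+2\kappa}L^{\sl_2'}_n+\tfrac{2}{3+2\kappa}{:}y'x'{:}_n$: the second summand tends to $\sum_{i+j=n}{:}\bpsi^{\d+A}_i\psi^{\d+A}_j{:}$, and although the first summand carries a prefactor $\tfrac{1}{3+2\kappa}\to0$ it does not literally drop out on a fibre with $A\neq0$, where $L^{\sl_2'}_n$ grows like $\kappa$; what survives is precisely the connection-dependent central term that converts the naive normally ordered bilinear into the twisted normally ordered one. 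The outcome I expect is that $L^C_n$ acts on the fibre over $\d+A$ as the twisted Sugawara bilinear of the deformed fermions, which for $A=A_0z^{-1}$ coincides with the twisted Virasoro action of Corollary \ref{cor_twistedSF}, $\binom{s_0}{2}$ shift of $L_0$ included. I would also record, paralleling the affine computation of Section \ref{sec_coset} and the example there, that $[L^C_{-1},-]$ acts on the field $(\psi^{\d+A},\bpsi^{\d+A})$ as $\tfrac{\d}{\d z}+A(z)$, which pins down $A$ as the twisting datum.

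Putting these together, the fibre over $\d+A$ is the vertex algebra generated by $\psi^{\d+A},\bpsi^{\d+A}$ with the relations of Lemma \ref{lm_deformedModeAlgebra} and the above conformal vector, hence a $\d+A$-twist of the symplectic fermions in the sense of Section \ref{sec_twisted} --- literally the $e^{-2\pi\i A_0}$-twisted module of Section \ref{sec_twistedTriplet} when $A=A_0z^{-1}$, and symplectic fermions itself when $A=0$. The main obstacle is the middle step: one must control the $\kappa\to\infty$ limit of the normally ordered coset expression and show that, after multiplication by $\tfrac{1}{3+2\kappa}$, the term $L^{\sl_2'}_n$ contributes exactly the central correction needed for $L^C_n$ to close onto the Virasoro algebra on every fibre. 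This is a finite computation using the commutator (Wick) formula quoted above, but it requires tracking carefully the central terms that become Poisson brackets in the limit and matching the resulting normal ordering with Bakalov's twisted normally ordered product used in Corollary \ref{cor_twistedSF}.
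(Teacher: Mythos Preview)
Your proposal is correct and follows the paper's approach: the corollary is drawn directly from Lemma \ref{lm_deformedModeAlgebra} together with the limit of the coset Virasoro $L^C$ computed in the preceding paragraph, and the connection interpretation of $L_{-1}$ that you anticipate is exactly what the paper records next as Lemma \ref{lm_p2_translationconnection}. You are in fact more careful than the paper about the term $\frac{1}{3+2\kappa}L^{\sl_2'}_n$, which the paper says simply ``disappears'' in the limit; your observation that on a fibre with $A\neq 0$ this survives as a connection-dependent central scalar is correct, but since it is central it does not affect the identification of the fibre as a twist of symplectic fermions.
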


\begin{lemma}\label{lm_p2_translationconnection}
We have 
\begin{align*}
    \{L_n^{\d+A},\psi_k^{\d+A} \}
    &=\sum_{i+j=n} \bpsi_i^{\d+A}\psi_j^{\d+A}\psi_k^{\d+A}
    -\psi_k^{\d+A}\bpsi_i^{\d+A}\psi_j^{\d+A}\\
    &=\sum_{i+j=n} -\bpsi_i^{\d+A}\langle A_{j+k},f'\rangle
    -\bpsi_i^{\d+A}\psi_k^{\d+A}\psi_j^{\d+A}
    -\psi_k^{\d+A}\bpsi_i^{\d+A}\psi_j^{\d+A}\\
   &=\sum_{i+j=n} -\bpsi_i^{\d+A}\langle A_{j+k},f'\rangle
   -\frac{1}{2}\langle A_{i+k},h'\rangle\psi_j^{\d+A}-k\delta_{i,-k}\psi_j^{\d+A}\\
   &=-k\psi_{k+n}^{\d+A}    -\sum_l -\langle A_{l},f'\rangle\bpsi_{k+n-l}^{\d+A}
   -\frac{1}{2}\langle A_{l},h'\rangle\psi_{k+n-l}^{\d+A}
\end{align*}
and similarly for $\bpsi$, where we can ignore the normally ordered product because the commutators are central. Differently spoken, the new Virasoro action acts on the field $\psi(z)=\sum_k \psi_kz^{-1-k}$ by
$$L_n^{\d+A}=\frac{\partial}{\partial z}z^{n+1}+A(z)z^{n+1}$$
%(n-k)\psi_k z^{n-k-1}+ A_l.\psi_k z^{n-l-1-k}
%k'=k-n, k''=k+l-n
%-k'\psi_{k'+n}z^{-1-k'}+z^{-1-k''} A_l.\psi_{k''+n-l}
where $A(z)$ is an $\sl_2$-valued function acting on the vector-valued field with components $\psi(z),\bpsi(z)$. In particular $L_{-1}$ acts as the connection $\d+A(z)$.
\end{lemma}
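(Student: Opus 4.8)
The starting point is the explicit description of the limiting Virasoro operator obtained just above: on every fibre $\d+A$ one has $L_n^{\d+A}=\sum_{i+j=n}:\bpsi_i^{\d+A}\psi_j^{\d+A}:$, the energy--momentum field of symplectic fermions written in terms of the deformed modes of Lemma~\ref{lm_deformedModeAlgebra}. The plan is to compute the commutator of this operator with the odd modes $\psi_k^{\d+A}$ and $\bpsi_k^{\d+A}$ directly from the anticommutation relations, and then to repackage the resulting mode identity as a statement about fields. The single structural observation that makes this painless is that the three brackets $\{\psi_m^{\d+A},\psi_n^{\d+A}\}=-\langle A_{m+n},f'\rangle$, $\{\psi_m^{\d+A},\bpsi_n^{\d+A}\}=\tfrac{1}{2}\langle A_{m+n},h'\rangle+m\delta_{m,-n}$ and $\{\bpsi_m^{\d+A},\bpsi_n^{\d+A}\}=\langle A_{m+n},e'\rangle$ are \emph{central}, so the normal ordering in $L_n^{\d+A}$ plays no role in the commutator and the whole computation reduces to a finite fermionic reordering.

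Concretely, I would write $[L_n^{\d+A},\psi_k^{\d+A}]=\sum_{i+j=n}\big(\bpsi_i^{\d+A}\psi_j^{\d+A}\psi_k^{\d+A}-\psi_k^{\d+A}\bpsi_i^{\d+A}\psi_j^{\d+A}\big)$ (the normal-ordering correction, being a scalar, cancels between the two terms), then commute $\psi_k^{\d+A}$ to the left through $\bpsi_i^{\d+A}\psi_j^{\d+A}$. Each pass generates a central coefficient: the pairing $\{\psi_k^{\d+A},\psi_j^{\d+A}\}$ produces a $\langle A_{j+k},f'\rangle\,\bpsi_i^{\d+A}$ term, and the pairing $\{\psi_k^{\d+A},\bpsi_i^{\d+A}\}$ produces a $\tfrac{1}{2}\langle A_{i+k},h'\rangle\,\psi_j^{\d+A}$ term together with the ``free-field'' piece $k\delta_{i,-k}\,\psi_j^{\d+A}$. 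Summing over $i+j=n$ and relabelling $l=i+k$ resp.\ $l=j+k$, the $\delta$-piece collapses to $-k\,\psi_{k+n}^{\d+A}$ and the $A$-pieces become the convolutions $\sum_l\langle A_l,f'\rangle\,\bpsi_{k+n-l}^{\d+A}$ and $\sum_l\langle A_l,h'\rangle\,\psi_{k+n-l}^{\d+A}$, with overall signs as displayed in the statement; the computation for $\bpsi_k^{\d+A}$ is the mirror image with $e'$ in place of $f'$. Keeping the fermionic signs straight in this reordering is the only genuinely error-prone part, but it is mechanical.

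The last step is to translate the two mode identities into field language. Writing $\psi(z)=\sum_k\psi_k^{\d+A}z^{-1-k}$, $\bpsi(z)=\sum_k\bpsi_k^{\d+A}z^{-1-k}$ and $A(z)=\sum_l A_l z^{-1-l}$, the $-k\,\psi_{k+n}^{\d+A}$ contribution sums to $\tfrac{\partial}{\partial z}z^{n+1}$ applied to $\psi(z)$, while the two convolution contributions sum to $z^{n+1}$ times the action of $A(z)$ on $\psi(z)$, where $\g=\sl_2$ is realized in its defining two-dimensional representation on the doublet $(\psi(z),\bpsi(z))$: this is exactly the representation in which $A_l\in\g$ becomes the matrix with entries $\langle A_l,e'\rangle,\langle A_l,h'\rangle,\langle A_l,f'\rangle$, consistent with the normalization $A_l=\tfrac14 f'(e'/\kappa)_l+\tfrac18 h'(h'/\kappa)_l+\tfrac14 e'(f'/\kappa)_l$ used to define the connection. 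Combining the $\psi$- and $\bpsi$-identities then gives $L_n^{\d+A}=\tfrac{\partial}{\partial z}z^{n+1}+A(z)z^{n+1}$ on the doublet field, and $n=-1$ specializes to $L_{-1}^{\d+A}=\tfrac{\partial}{\partial z}+A(z)=\d+A(z)$.

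I expect the main obstacle to be precisely this identification in the final paragraph: checking that the scalars thrown off by the deformed anticommutators are exactly the matrix entries of the connection acting on the fermion doublet --- i.e.\ that the Killing-form normalization built into the definition of $A(z)$ for $\rV^\kappa(\osp(1|2))$ matches the standard $\sl_2$-module structure on $x'/\sqrt{\kappa},y'/\sqrt{\kappa}$. Everything else is bookkeeping.
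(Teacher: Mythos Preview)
Your proposal is correct and follows essentially the same path as the paper: the displayed equalities in the lemma \emph{are} the paper's proof, and you reproduce them step by step --- expand $L_n^{\d+A}$ as $\sum_{i+j=n}\bpsi_i^{\d+A}\psi_j^{\d+A}$, invoke centrality of the deformed anticommutators to discard the normal ordering, push $\psi_k^{\d+A}$ through using Lemma~\ref{lm_deformedModeAlgebra}, relabel, and reinterpret in field form. Your closing remark about matching the Killing-form normalization to the doublet representation is a fair caution but not a gap: the paper simply takes this identification for granted.
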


\begin{example}
The zero fibre coincides with symplectic fermions. The fibre over $\d+A_0z^{-1}$, assuming by conjugation that $A_0$ is contained in $\g^{\geq0}$, is isomorphic to the twisted sector in section \ref{sec_twistedTriplet}.  
\end{example}
\begin{example}\label{ex_SFirregular}
Let us consider the first type of {\bf irregular} singularity $$A(z)=A_1z^{-2}+A_0z^{-1}+\cdots$$
where we assume $A_1\in\mathfrak{h}$ (called compatible framing in literature) and explicitly $A_1=4\xi h'$. As the easiest case let us assume $A_n=0,n\leq 0$, then the twisted mode algebra is 
\begin{align*}
  \{\psi_m^{\d+A}, \psi_n^{\d+A}\}
  &=0\\
  \{\bpsi_m^{\d+A}, \bpsi_n^{\d+A}\}
  &=0\\
  \{\psi_m^{\d+A},\bpsi_n^{\d+A}\}&= \xi\delta_{m+n,1}+m \delta_{m+n,0}
\end{align*}
In particular, in contrast to the regular singular case  the $2^4$-dimensional  subalgebra generated by $\psi_0,\bpsi_0,\psi_1,\bpsi_1$ is not any more (super-)commutative, rather
$$\{\psi_0,\bpsi_1\}=\{\bpsi_0,\psi_1\}=\xi$$
so that it admits no $1$-dimensional representations, in particular no highest-weight vectors. To determine the category of representations as in Lemma \ref{lm_repTwistedClifford}, we note that $\psi_1,\bpsi_1$ and $\psi_{-1},\bpsi_{-1}$ have unchanged commutator relations. A simple base change gives  modified base pairs 
\begin{align*}
    \psi_2'&:=\psi_2-\xi \psi_{1} \\
    \bpsi_2'&:=\bpsi_2+\xi \bpsi_{1} 
\end{align*}
which together with $\psi_{-2},\bpsi_{-2}$ fulfill again the untwisted relations among themselves and with $\psi_1,\bpsi_1$. Inductively  a simple base transformation give modified base pairs $\psi_n',\bpsi_n'$ and $\psi_{-n}',\bpsi_{-n}'$ with the untwisted relations. Next we introduce modified zero-mode operators
\begin{align*}
    \psi_0'&:=\psi_0+\xi \psi_{-1} \\
    \bpsi_0'&:=\bpsi_0-\xi \bpsi_{-1} 
\end{align*}
that commute with all other modified $\psi_n',\bpsi_n'$. Now again the representations of the twisted mode algebra is given by highest-weight representations and thus by representations of the zero-mode algebra. In the case at hand the relations after base change are also unchanged
\begin{align*}
    \{\psi_0',\psi_0'\}&=0 \\
    \{\bpsi_0',\bpsi_0'\}&=0 \\
    \{\psi_0',\bpsi_0'\}&= 0
\end{align*}
(this is not true anymore if further regular terms are present!) Hence the simple representations of the zero-mode algebra are $1$-dimensional objects $v\C$ with $\psi_0'v=\bpsi_0'v=0$ and the projective objects are the typical diamond freely generated by $v$. Differently spoken, the simple object is generated by a generalized highest-weight vector $v$ with $\psi_nv=\bpsi_nv=0$ for $n>0$ and 
$$\psi_0v=-\xi\psi_{-1}v,\qquad \bpsi_0v=\xi\bpsi_{-1}v$$
Note however that the action of the Virasoro algebra has changed. In particular on the highest weight vector $v$ with $\psi_n'v=0,n>0$ and at the same $\psi_n v=0,n>0$

\begin{align*}
L_0v &= (\bpsi_0\psi_0) v\\
&=\bpsi_0'\psi_0' +\xi(\bpsi_{-1}\psi_0'+\psi_{-1}\bpsi_0')-\xi^2 \bpsi_{-1}\psi_{-1}\\
L_1v &= (\bpsi_0\psi_1  - \psi_0\bpsi_1)v \\
&=0
\end{align*}
while the vectors $\psi_0v,\bpsi_0$ in the top space we find 
\begin{align*}
L_1\,\bpsi_0v &= (\bpsi_0\psi_1  - \psi_0\bpsi_1)\bpsi_0 v \\
&=\xi \bpsi_0 \\
L_1\,\psi_0v &= (\bpsi_0\psi_1  - \psi_0\bpsi_1)\psi_0 v \\
&=-\xi\psi_0
\end{align*}
Hence the highest weight-vector even for the simple modules is not $L_0$-eigenvector, and we find Whittaker vectors related to the irregular singularity. Note however that because there are no regular terms present, there are only additional contributions of higher degree and the  module is still filtered. \\

\begin{comment}
Starting from the assumption of a vector $v$ with $\psi_1v=\bpsi_1v=0$ we find for $\xi\neq 0$ a $4$-dimensional irreducible representation of this algebra (which is hence a full matrix algebra)
\begin{center}
    \begin{tikzcd}
    & v 
    \ar[dl,"\psi_0", shift right=-0.5ex]
    \ar[dr,swap,"\bpsi_0", shift left=-0.5ex] & \\
    \psi_0v 
    \ar[dr, "\bpsi_0", shift right=-0.5ex] 
    \ar[ur, "\bpsi_1", shift right=-0.5ex] && 
    \bpsi_0v 
    \ar[dl, swap, "\bpsi_0", shift left=-0.5ex] 
    \ar[ul, swap, "\psi_1", shift left=-0.5ex] \\
    & \psi_0\bpsi_0v 
    \ar[ul, "\psi_1", shift right=-0.5ex]
    \ar[ur, swap, "\bpsi_1", shift left=-0.5ex]
\end{tikzcd}  
\end{center}
This can be trivially extended to all $\psi_n,\bpsi_n,n\geq 0$, since all remaining generators are central, and induced up to full twisted representation. If we evaluate $L_n,n\geq 0$ on this top space we find 
\begin{align*}
    L_n &=0,\qquad\qquad\qquad  n\geq 2 \\
    L_1 &= \bpsi_0\psi_1 -\psi_0\bpsi_1 \\
    L_0 &= \bpsi_0\psi_0+(\bpsi_{-1}\psi_1-\psi_{-1}\bpsi_1)
\end{align*}
Hence $L_1$ acts on $v,\psi_0\bpsi_0v$ by zero and on $\psi_0v,\bpsi_0v$, 
by eigenvalues $\mp \xi$ (a Whittaker-type behaviour), and $L_0$ acts on $v,\psi_0\bpsi_0v$ as a Jordan block (a logarithmic behaviour) and terms involving $\psi_{-1},\bpsi_{-1}$, so the top space is not preserved, however these contributions are nilpotent, but they add higher correction terms on the true groundstates in a similar way the regular connection adds lower correction terms. \\
\end{comment}

We also briefly describe the perspective of differential equations. Consider
$$\left(\frac{\d}{\d z}
-\xi\begin{pmatrix} 1 & 0 \\ 0 & -1\end{pmatrix} z^{-2}
-\Lambda z^{-1}
- A_{reg}(z)\right)
\begin{pmatrix} x(z) \\ y(z) \end{pmatrix}=0$$
A substitution $\varphi(z) z^\Lambda \exp\left(-\xi z^{-1}\begin{psmallmatrix} 1 & 0 \\ 0 & -1\end{psmallmatrix}\right)$ gives a power series approach for $\varphi(z)$, with $z^\Lambda$ causing monodromy around $z=0$ prescribed by the regular singular term and $\exp(\mp \xi z^{-1})$ causing an essential singularity at $z=0$. In the easiest case $\Lambda=0,A_{reg}(z)=0$, corresponding to the twisted mode algebra above we have explicit solutions  
$$\begin{pmatrix} x(z) \\ y(z) \end{pmatrix}
=A\begin{pmatrix} e^{-\xi/z}  \\ 
0  \end{pmatrix}
+B\begin{pmatrix} 0 \\ 
e^{\xi/z}   \end{pmatrix}$$
And similarly for diagonal $\Lambda$ we get distinguished solutions $e^{\pm\xi/z}z^{\mp\lambda}$. 

%The next interesting and still explicit case is $\Lambda=\begin{psmallmatrix} \lambda & b \\ 0 & -\lambda\end{psmallmatrix}$, in which we can still quite explicitly determine, with substituting $t=z^{-1}$
%$$\begin{pmatrix} x(z) \\ y(z) \end{pmatrix}
%=\begin{pmatrix} \exp(-\xi z^{-1})z^{\lambda}
%&  \exp(-\xi z^{-1})z^{\lambda} (-b)\int \exp(2\xi t)t^{2\lambda-1} \; \mathrm{d} t\\ 
%0 &  \exp(\xi z^{-1})z^{-\lambda} 
%\end{pmatrix}$$
%For $\lambda=1/2$ 
%$$\begin{pmatrix} x(z) \\ y(z) \end{pmatrix}
%=\begin{pmatrix} \exp(-\xi z^{-1})z^{1/2}
%&  \exp(\xi z^{-1})z^{1/2} (-\frac{b}{2\xi})\\ 
%0 &  \exp(\xi z^{-1})z^{-1/2} 
%\end{pmatrix}$$

On the other hand, if we add a regular term, e.g. constant and nilpotent $A_{reg}=\begin{pmatrix} 0 & e \\ 0 & 0\end{pmatrix}$:

$$\begin{pmatrix} x(z) \\ y(z) \end{pmatrix}
=A\begin{pmatrix} e^{-\xi/z} \\ 0 \end{pmatrix}
+B\begin{pmatrix} (-ez)\mathrm{E_2}(-2\xi/z) e^{-\xi/z}\\ e^{\xi/z} \end{pmatrix} $$
by first solving the second equation and plugging into the first, where we encounter an exponential integral 
$$(-e)\int_{0}^z \cdot e^{-2\xi / t} \mathrm{d}t=-(2\xi e) \int_{\infty}^{-2\xi/z}  \frac{e^{-x}}{x^2} \mathrm{d}x=(-ez)\mathrm{E_2}(-2\xi/z)$$
It has monodromy $(2\xi e)$ around $z=0$, so the monodromy matrix in this basis of solutions is a Jordan block.\\

With these examples in mind, we now briefly review the Stokes phenomenon and the irregular Riemann Hilbert correspondence \cite{Boa14} in our case $\sl_2$ and an irregular singularity of order $z^{-2}$: All connections with fixed $(\xi,\Lambda)$ (called formal type) are equivalent under formal coordinate transformations. However for $\xi\neq 0$ they are not equivalent under holomorphic coordinate transformations, i.e. different choices of $A_{reg}$ can have essentially different solutions. The holomorphic equivalence classes are parametrized by Stokes matrices, see \cite{Boa02} Section 2: 
The anti-Stokes directions in $\C^\times$ separate the sectors which the differential equation has a defined asymptotic $z\to 0$, in our case the positive and negative real axis where the asymptotics of the essential singularity chances. The associated subgroups of Stokes multipliers are the unipotent subgroups $U^\pm$ and describe the transformation behaviour of the solution between adjacent sectors. \\

In the example $A_{reg}=0$ above the Stokes matrices are trivial, and accordingly there is a distinguished base of solutions. Adding regular terms produces connections which involve the exponential integral. In the example above with $A_{reg}$ constant and nilpotent, the Stokes phenomenon is well studied and directly visible from the asymptotic expansions overlapping for $\frac12 \pi < \arg < \frac32$
\begin{align*}
%\mathrm{Ei_1}(x) &= \frac{e^{-x}}{x}\sum_{s=0}^\infty \frac{(-1)^s s!}{x^s} 
\mathrm{E_2}(x) &= \frac{e^{-x}}{x}\sum_{s=0}^\infty \frac{(-1)^s (s+1)!}{x^s} 
\qquad & -\frac32 \pi < \arg < \frac32 \pi\\
%\mathrm{Ei_1}(x) &= \frac{e^{-x}}{x}\sum_{s=0}^\infty \frac{(-1)^s s!}{x^s} +(-2\pi\i)
\mathrm{E_2}(x) &=  \frac{e^{-x}}{x}\sum_{s=0}^\infty \frac{(-1)^s (s+1)!}{x^s}
+2\pi\i x
\qquad & \frac12 \pi < \arg < \frac72 \pi\\
\end{align*}
In particular there is no preferred pair of bases defined in terms of asymptotics around $z=0$, and there transformation, the Stokes matrices, correspond to the monodromy. The example relates by a substitution $t=z^{-1}$ to the \emph{dynamical KZ-equation} \cite{FMTV00}, for which there seems so be a deep connection to the quantum group with big center \cite{TLX23}.\\ 

We would hope that our approach makes the Stokes matrices visible in the representation theory of the twisted mode algebra (which depends on the regular terms) and more systematically explains the observed relation of the quantum Weyl group to the quantum group with a big center and to the negative definite case in Question \ref{quest_Liouville}. 
\end{example}

\subsection{The limit of the N=1 superconformal algebra and fermion}

The further Hamiltonian reduction of $\sVir_{N=4}^{1/2,a}$ with respect to $\hat{\sl}_2$ and then $\hat{\sl}_2'$ is calculated in \cite{CGL20} Section 2.5:

\begin{theorem}
$\HHA^{(2)}[\sl_2,\kappa]$ with respect to the BRST differential $\d'=b'(z)e'(z)+b'(z)$  is isomorphic to the N=1 superconformal algebra $\sVir_{N=1}$ with central charge $c=3/2+3(a+2+a^{-1})^2$ and generators the cohomology classes of 
\begin{align*}
L'&=L^{\osp(2|1)}+\frac{1}{2}\partial h'-:b'\partial c':-\frac{1}{2}:(\partial x')x':\\
G_{-3/2}&=\frac{\sqrt{-1}}{\sqrt{3+2\kappa}}\left( :h'x':+2:e'y':-(1+2\kappa)e'\partial x'\right)
\end{align*}
times a free fermion with OPE
$$x'(z)x'(w)\sim -e'(z-w)^{-1}=1(z-w)^{-1}$$
and the additional relation $e'+1=0$ up to coboundary. 
\end{theorem}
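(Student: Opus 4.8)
This is established in \cite{CGL20} Section~2.5; I indicate the line of argument. By the previous theorem, $\HA^{(2)}[\sl_2,\kappa]\cong \rV^\kappa(\osp(1|2))$ modulo the cohomological relation $e+1=0$, and $\HHA^{(2)}[\sl_2,\kappa]$ is by definition the further BRST reduction whose differential is the zero mode of $\d'=b'(z)e'(z)+b'(z)$. This is exactly the quantum Drinfeld--Sokolov reduction of $\rV^\kappa(\osp(1|2))$ along the principal $\sl_2$-triple $(e',h',f')$ of the even subalgebra $\sl_2'\subset\osp(1|2)$, i.e.\ the principal affine $\mathrm W$-superalgebra $\mathrm W^\kappa(\osp(1|2),f_{\mathrm{prin}})$. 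The plan is to identify this $\mathrm W$-superalgebra explicitly and to split off a free-fermion tensor factor.

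First I would invoke the general vanishing theorem for quantum Drinfeld--Sokolov functors, valid for affine vertex superalgebras, to conclude that the cohomology is concentrated in degree $0$, so that $\mathrm H^\bullet=\mathrm H^0$ is an honest conformal vertex algebra, strongly generated by one field for each lowest $\ad(f')$-weight vector in the $\sl_2'$-decomposition of $\osp(1|2)$. Since $\osp(1|2)=\sl_2'\oplus V$ with $\sl_2'$ the adjoint (spin $1$) representation and $V=\mathbb{C}x'\oplus\mathbb{C}y'$ the $2$-dimensional (spin $\tfrac12$) representation, one obtains one bosonic generator of conformal weight $2$, organised into a Virasoro field $L'$, and one fermionic generator, whose lowest component $x'$ survives at conformal weight $\tfrac12$ while its top component together with descendants of $f',h'$ is organised into a weight-$\tfrac32$ odd field $G_{-3/2}$. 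This already yields the generating set $\{L',G_{-3/2},x'\}$ of the statement.

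Next I would produce the explicit cocycle representatives by the tic-tac-toe algorithm: start from the naive candidates $f'$, resp.\ $y'$, add ghost-dependent and $\partial$-corrections until they are $\d'$-closed, and fix the residual freedom by imposing that $L'$ be a conformal vector and $G_{-3/2}$ be $L'$-primary. In the formula for $L'$ the term $\tfrac12\partial h'$ is the usual $\rho^\vee$-improvement, $-{:}b'\partial c'{:}$ is the ghost stress tensor, and $-\tfrac12{:}(\partial x')x'{:}=-L^{\mathcal F}$ is subtracted precisely so that $L'$ has regular operator product with the free fermion $x'$; the three terms of $G_{-3/2}$ are pinned down, up to the stated overall constant $\sqrt{-1}/\sqrt{3+2\kappa}$, by $\d'$-closedness and primarity. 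One then verifies, by direct operator-product computations with the affine $\osp(1|2)$ operator products recalled above and Wick's theorem, that: (i) $x'(z)x'(w)\sim -e'(z-w)^{-1}=1(z-w)^{-1}$, using that $e'\equiv -1$ because $e'+1=\d'c'$ is a coboundary; (ii) $L'(z)L'(w)$ is a Virasoro operator product whose central charge is the sum of the Sugawara charge of $\rV^\kappa(\osp(1|2))$, the ghost charge and the $\rho^\vee$-improvement correction, minus $c_{\mathcal F}=\tfrac12$, and equals the stated $c$ after inserting $\kappa=-(a+3)/2$; (iii) $G_{-3/2}$ is $L'$-primary of weight $\tfrac32$ with $G_{-3/2}(z)G_{-3/2}(w)\sim \tfrac{2c/3}{(z-w)^3}+\tfrac{2L'(w)}{z-w}$, i.e.\ $L'$ and $G_{-3/2}$ generate the N=1 super-Virasoro algebra $\sVir_{N=1}$; and (iv) $x'$ has regular operator product with both $L'$ and $G_{-3/2}$. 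By (iv) the subalgebra generated by $L',G_{-3/2}$ commutes with the free fermion $\mathcal F$ generated by $x'$, and together they exhaust the strong generators, so $\HHA^{(2)}[\sl_2,\kappa]\cong\sVir_{N=1}\otimes\mathcal F$.

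The main obstacle is step (iii): the operator product $G_{-3/2}(z)G_{-3/2}(w)$ requires expanding the normally ordered quadratic field ${:}h'x'{:}+2{:}e'y'{:}-(1+2\kappa)e'\partial x'$ against itself, keeping all single- and double-contraction terms, and checking that the anomaly coefficient and the $(z-w)^{-1}$ term recombine exactly into $\tfrac{2c}{3}$ and $2L'$ with the precise correction terms displayed in $L'$ --- this is where the normalization $\sqrt{-1}/\sqrt{3+2\kappa}$ is forced. A secondary point, which must be settled first, is the vanishing of the higher BRST cohomology, where the exactness input for the shifted Drinfeld--Sokolov reduction of $\rV^\kappa(\osp(1|2))$ is used.
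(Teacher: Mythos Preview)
The paper does not give its own proof of this statement: it is quoted directly from \cite{CGL20} Section~2.5, with the preceding sentence ``The further Hamiltonian reduction \ldots\ is calculated in \cite{CGL20} Section 2.5'' serving as the entire justification. Your proposal correctly identifies this and then goes further, supplying a detailed outline of the principal Drinfeld--Sokolov reduction of $\rV^\kappa(\osp(1|2))$ that underlies the cited result; this is consistent with, and considerably more explicit than, the paper's treatment.

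Your sketch also matches the internal evidence the paper provides: the proof of the \emph{next} theorem (the modified BRST differential $\underline{\d}'$) explicitly refers to cocycles and cohomology classes ``given in the proof of \cite{CGL20} Lemma~2.11'', and transports them via an automorphism $\rho$ --- exactly the kind of BRST/tic-tac-toe data you describe. So there is no discrepancy in approach; you have simply unpacked what the paper takes as a black box.
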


\begin{comment}%More calculations for this limit
We can compare in the previous integral form the limits 
\begin{align*}
    L^{\sl_2'}/\kappa
    &=\frac{1}{4}:(h'/\kappa)(h'/\kappa):+\frac{1}{2}:(e'/\kappa)(f'/\kappa):+\frac{1}{2}:(f'/\kappa)(e'/\kappa):\\&+\frac{1}{2}\partial(h'/\kappa)-\cancel{\frac{1}{\kappa}:b'\partial c':} \\
    L'/\kappa
    &=\frac{1}{4}:(h'/\kappa)(h'/\kappa):+\frac{1}{2}:(e'/\kappa)(f'/\kappa):+\frac{1}{2}:(f'/\kappa)(e'/\kappa):\\
    &-\cancel{\frac{1}{2\kappa}:(x'/\sqrt\kappa)(x'/\sqrt\kappa):}-\cancel{\frac{1}{2\kappa}:(y'/\sqrt\kappa)(y'/\sqrt\kappa):}\\
    &+\frac{1}{2}\partial(h'/\kappa)-\cancel{\frac{1}{\kappa}:b'\partial c':}
    -\frac{1}{2}:(\partial x'/\sqrt\kappa)(x'/\sqrt\kappa):
\end{align*}
We can solve to obtain a new integral form (up to scale)
$$y'/\sqrt\kappa=G_{-3/2}-\kappa((h'/\kappa)(x'/\sqrt\kappa)+(\partial x'/\sqrt\kappa))$$
\end{comment}

It is not difficult to compute from this and the previous integral form the limit, the fibration and zero-fibre. We feel, however that this second reduction maybe not precisely the right one for our purposes. In the following we try to use a rescaled character to impose the relation $e'/\kappa+1=0$, which makes not difference for finite $\kappa$, but in the limit. The deeper reason for this is probably that we wish to have a reduction compatible with the classical Hamiltonian reduction in then limit, and this requires for $\sl_2$ a connection 

$$\d-\begin{pmatrix} 0 & q(z) \\1 & 0\end{pmatrix}
\qquad\Longleftrightarrow\qquad  
\frac{\partial^2}{\partial z^2}-q(z)$$
\CommentsForMe{$h$-condition? Adapt and compare p=1 and fibration results!!}

which corresponds choosing a subset of fibres with $(h'/\kappa)_n=0$
and $(e'/\kappa)_{n}=\delta_{n,-1}(-1)$, which translates into the cohomological condition $e'=1$. Let us modify the previous result accordingly and add some further information we require

\begin{theorem}
$\HHA^{(2)}[\sl_2,\kappa]$ with the modified BRST differential $\underline{\d}'=b'(z)(e'/\kappa)(z)+b'(z)$ is isomorphic to the N=1 superconformal algebra $\sVir_{N=1}$ with central charge $c=3/2+3(a+2+a^{-1})^2=-\frac{3(1+2\kappa)(5+4\kappa)}{2(3+2\kappa)}$, with generators $G,L'$ given in the proof, times a free fermion with generator 
$$(x'/\sqrt{\kappa})(z)(x'/\sqrt{\kappa})(w)\sim (-e'/\kappa)(z-w)^{-1}=1(z-w)^{-1}$$
where up to coboundary $e'/\kappa+1=0$. 
\end{theorem}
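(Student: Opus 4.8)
The plan is to obtain this as a minor variant of the preceding theorem from \cite{CGL20}, tracking how rescaling the BRST differential by $\kappa^{-1}$ changes the reduction. First I would recall the double quantum Hamiltonian reduction: $\HHA^{(2)}[\sl_2,\kappa]$ is $\mathrm{H}^0$ of $\HA^{(2)}[\sl_2,\kappa]\cong \rV^\kappa(\osp(1|2))$ with respect to the remaining $\hat{\sl}_2'$, using a BRST complex with $bc$-system $b',c'$ attached to the nilpotent $e'$. In \cite{CGL20} the differential is $\d'=b'(z)e'(z)+b'(z)$, whose cohomology imposes $e'+1=0$; I would instead use $\underline{\d}'=b'(z)(e'/\kappa)(z)+b'(z)$, which imposes $e'/\kappa+1=0$. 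For any fixed finite $\kappa\neq 0$ the two differentials differ only by the invertible rescaling $b'\mapsto \kappa^{-1}b'$ (equivalently a shift of the character/grading by $\kappa$), so the cohomologies are canonically isomorphic as vertex algebras; hence for finite $\kappa$ the modified reduction is again $\sVir_{N=1}$ times a free fermion, with the \emph{same} central charge. The point of the rescaling is only that it survives the limit $\kappa\to\infty$ and is compatible with the classical Hamiltonian reduction $\d-\begin{psmallmatrix} 0 & q(z) \\ 1 & 0\end{psmallmatrix}$.

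Next I would identify the generators and the central charge explicitly. The $N=1$ generators $L',G$ are obtained from the $\osp(1|2)$-Sugawara vector and its odd superpartner by the standard reduction recipe; one takes
\begin{align*}
L'&=L^{\osp(2|1)}+\tfrac{1}{2}\partial h'-:b'\partial c':-\tfrac{1}{2}:(\partial x')x':,\\
G_{-3/2}&=\frac{\sqrt{-1}}{\sqrt{3+2\kappa}}\left(:h'x':+2:e'y':-(1+2\kappa)e'\partial x'\right),
\end{align*}
and checks that, modulo $\underline{\d}'$-coboundaries and after imposing $e'/\kappa+1=0$, these satisfy the $N=1$ superconformal OPEs together with the decoupled free fermion $x'/\sqrt{\kappa}$ with $(x'/\sqrt\kappa)(z)(x'/\sqrt\kappa)(w)\sim(-e'/\kappa)(z-w)^{-1}=(z-w)^{-1}$. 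The central charge is computed from the $L^{\osp(2|1)}$ central charge $\frac{-3(1+2\kappa)}{3+2\kappa}$ (from $\osp(1|2)_\kappa$), the $bc$-contribution, and the $-1/2\,\partial x'\,x'$ term; matching this to the formula $c=3/2+3(a+2+a^{-1})^2$ with $\kappa=-(a+3)/2$ (i.e. $a=-2\kappa-3$, $a^{-1}=-1/(2\kappa+3)$) and simplifying gives $c=-\frac{3(1+2\kappa)(5+4\kappa)}{2(3+2\kappa)}$; this is an algebraic identity I would verify directly.

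Finally I would spell out the compatibility claim with the classical picture, which justifies the choice of $\underline{\d}'$. Imposing $e'/\kappa+1=0$ selects, in the limit, the fibres with $(e'/\kappa)_n=\delta_{n,-1}(-1)$, i.e. $\sl_2$-connections of the companion-matrix form $\d-\begin{psmallmatrix} 0 & q(z)\\ 1 & 0\end{psmallmatrix}$, which correspond to Sturm--Liouville operators $\partial_z^2-q(z)$; this matches Section~\ref{sec_LimitVir} and the $p=1$ behaviour in Corollary~\ref{cor_expl_p1}. The main obstacle I anticipate is not the limit but the finite-$\kappa$ bookkeeping: one must check that the rescaling $b'\mapsto \kappa^{-1}b'$ is genuinely an isomorphism of the BRST complexes (so that cohomology, central charge, and the $N=1$ structure are literally unchanged for finite $\kappa$), and one must carry the $\kappa$-dependent normalizations through $G_{-3/2}$ and the free fermion so that the $\kappa\to\infty$ integral form is well-defined and non-degenerate — in particular that $G_{-3/2}/\sqrt{\kappa}$-type rescalings produce the correct surviving generators rather than something that collapses. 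Everything else is the routine reduction computation already essentially contained in \cite{CGL20}.
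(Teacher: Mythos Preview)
Your overall strategy matches the paper's: transport the known \cite{CGL20} reduction to the modified differential via an automorphism of the BRST complex. However, the specific automorphism you propose is wrong, and this propagates to your generators.

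You claim $\d'$ and $\underline{\d}'$ differ by ``the invertible rescaling $b'\mapsto\kappa^{-1}b'$''. They do not: that rescaling sends $\d'=b'(e'+1)$ to $\kappa^{-1}b'(e'+1)$, whereas $\underline{\d}'=b'(e'/\kappa+1)$. Only the $b'e'$ term is rescaled, not the character term. What actually intertwines the two differentials is the Lie algebra automorphism $\rho$ of $\hat{\osp}(1|2)_\kappa$ given by
\[
\rho(e')=e'/\kappa,\quad \rho(h')=h',\quad \rho(f')=\kappa f',\quad \rho(x')=x'/\sqrt{\kappa},\quad \rho(y')=\sqrt{\kappa}\,y',
\]
extended trivially to $b',c'$; then $\rho\,\d'\rho^{-1}=\underline{\d}'$, and in particular $\underline{\d}'(c')=e'/\kappa+1$ is a coboundary as required. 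This is exactly the ``rescaling of the character'' you allude to parenthetically, but it forces compensating rescalings on $f',x',y'$ that you omit.

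Because of this, the explicit cocycles you write down are for the wrong complex. The $\underline{\d}'$-cocycles are the $\rho$-images of the \cite{CGL20} ones: the free fermion is $x'/\sqrt{\kappa}$ (which you do have), but
\[
G=\frac{\sqrt{-1}}{\sqrt{3+2\kappa}}\Bigl(:h'x':/\sqrt{\kappa}+2:e'y':/\sqrt{\kappa}-(1+2\kappa):e'\partial x':/\kappa\sqrt{\kappa}\Bigr)
=\frac{\sqrt{-1}}{\sqrt{3+2\kappa}}\Bigl(:h'x':/\sqrt{\kappa}-2\sqrt{\kappa}\,y'+\tfrac{1+2\kappa}{\kappa}\sqrt{\kappa}\,\partial x'\Bigr),
\]
and similarly $L'=\tilde{L}^{\osp(1|2)}-\tfrac{1}{2}:(\partial x'/\sqrt{\kappa})(x'/\sqrt{\kappa}):$ with the free-fermion stress tensor now built from $x'/\sqrt{\kappa}$. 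Since $\rho$ is a genuine automorphism it transports OPEs verbatim, so the $\sVir_{N=1}\times\mathcal{F}$ structure and the central charge are unchanged, confirming your heuristic; but the generators stated in the theorem are these $\rho$-transported ones, and it is precisely this form that makes the $\kappa\to\infty$ integral form in $x'/\sqrt{\kappa},\,y'/\sqrt{\kappa}$ well-behaved.
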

\begin{proof}
Let $\rho$ be the Lie algebra automorphism of $\hat{\osp(1|2)}_\kappa$ defined by
$$\rho(e')=e'/\kappa,\qquad
\rho(h')=h',\qquad
\rho(f')=f'\kappa,\qquad
\rho(x')=x'/\sqrt{\kappa},\qquad
\rho(y')=y'\sqrt{\kappa}
$$
which we see also preserves the three Virasoro elements $L^{\osp(1|2)}, \; L^{\sl_2'},\;L^{C}$. We extend $\rho$ trivial on $b',c'$ to the BRST complex $\bV^\kappa(\osp(1|2))\otimes \langle b',c'\rangle$, then $\rho$ sends the differential to the modified differential $\rho\,\d'\rho^{-1}=\underline{\d}'$. In particular $\underline{\d}'(c')=\rho(e'+1)=e'/\kappa+1$ is a coboundary as intended. In this way we can apply $\rho$ to elements in the kernel of $\d'$, or cochains, which are given in the proof of \cite{CGL20} Lemma 2.11 and obtain cochains for $\underline{\d}'$ and their OPEs:
\begin{itemize}
    \item $x'/\sqrt{\kappa}$ is a cocycle for $\underline{\d}'$, since because $e'_nx'=0,n\geq 0$, and the cohomology class $[x'/\sqrt{\kappa}]$ has with itself the OPE of a free fermion:
    $$[x'/\sqrt{\kappa}](z)[x'/\sqrt{\kappa}](z)=[-e'/\kappa](w)(z-w)^{-1}=(z-w)^{-1}$$
    \item The usual corrected Sugawara Virasoro elements 
    %note the sl2 source http://arxiv.org/abs/hep-th/9308037v1 has X^0=\frac{1}{2} h' and b,c switched.
    \begin{align*}
        \tilde{L}^{\sl_2'}&:=L^{\sl_2'}+\frac{1}{2}\partial h'-:b'\partial c':\\
        \tilde{L}^{\osp(1|2)}&:=L^{\osp(1|2)}+\frac{1}{2}\partial h'-:b'\partial c':\\
        \intertext{remain unchanged and cocycles, as does without correction the coset Virasoro element}
        \tilde{L}^{C}&=L^{\osp(1|2)}-L^{\sl_2'}
        \qquad \to :(y'\sqrt{\kappa})(x'/\sqrt{\kappa}):
    \end{align*}
    \item The Virasoro element for the free fermion changed accordingly to $\frac{1}{2}:(\partial x'/\sqrt{\kappa})(x'/\sqrt{\kappa}):$, as a consequence we have a definition slightly different from \cite{CGL20}
    $$L':=\tilde{L}^{\osp(1|2)}-\frac{1}{2}:(\partial x'/\sqrt{\kappa})(x'/\sqrt{\kappa}):$$ 
    The new $[L']$ again commutes with $[x']$ up to $\underline{\d}'$-coboundary.
    \item The following element is a $\underline{\d}'$-cocycle (the according modification of what is called $\psi$ in \cite{CGL20}) 
    \begin{align*}
    G&:= \frac{\sqrt{-1}}{\sqrt{3+2\kappa}}\left(:h'x':/\sqrt{\kappa}+
    2:e'y':/\sqrt{\kappa}-(1+2\kappa):e'\partial x':/\kappa\sqrt{\kappa}\right)\\
    &=\frac{\sqrt{-1}}{\sqrt{3+2\kappa}}\left(:h'x':/\sqrt{\kappa}-
    2y'\sqrt{\kappa}+\frac{1+2\kappa}{\kappa}\partial x'\sqrt{\kappa}\right)
    \end{align*}
    and its class commutes with $[x]$ and generates with $[L']$ a copy of $\sVir_{N=1}$ with central charge $-\frac{3(1+2\kappa)(5+4\kappa)}{2(3+2\kappa)}$
\end{itemize}
We have in the familiar way a Virasoro element cohomologous to $\tilde{L}^{\sl_2'}$ that is not unchanged under $\rho$ and now reads

%$$\underline{\tilde{L}}^{\sl_2'}
%=\frac{1}{\kappa+2}(\frac{1}{2}h'+bc)^2+(1-\frac{1}{\kappa+2})\partial(\frac{1}{2}h'+bc)+\frac{1}{\kappa+2}f'\kappa$$
$$\tilde{L}^{\sl_2'}_n
=\frac{1}{\kappa+2}\sum_{i+j=n}(\frac{1}{2}h'_i+(bc)_i)(\frac{1}{2}h'_j+(bc)_j)+(1-\frac{1}{\kappa+2})(-1-n)(\frac{1}{2}h'_{n}+(bc)_{n})+\frac{\kappa}{\kappa+2}f'_{n+1}$$
Assume now we are acting on monomials not containing $b',c'$ or $h'$ then $(bc)_i=0$ unless $i<0$. and $h'_i=0$ unless $i\leq 0$
\begin{align*}
    {\tilde{L}}^{\sl_2'}_1
    &=\frac{\kappa}{\kappa+2}f'_{2}\\
    {\tilde{L}}^{\sl_2'}_0
    &=\frac{1}{\kappa+2}\frac{1}{4}(h'_0)^2
    +(1-\frac{1}{\kappa+2})\frac{1}{2}(-1)h'_{0}
    +\frac{\kappa}{\kappa+2}f'_{1}\\
    {\tilde{L}}^{\sl_2'}_{-1}
    &=\frac{1}{\kappa+2}2(\frac{1}{2}h'_{-1}+(bc)_{-1})h'_0
    +\frac{\kappa}{\kappa+2}f'_{0}
\end{align*}
%We note that $y'_{-1}+c_0b_{-1}x'_{-1}......$ is a cochain, since
%$$\underline{\d}y'_{-1}=(b_{-1}(e'/\kappa)_0+...+b_0).y'_{-1}=b_{-1}x'_{-1}/\kappa$$
%$$\underline{\d}(c_0b_{-1}x'_{-1}/\kappa)=(b_{-1}(e'/\kappa)_0+...+b_0)c_0b_{-1}x'_{-1}/\kappa=b_{-1}x'_{-1}/\kappa$$
\end{proof}

We now compute the limit $\kappa\to \infty$ in the integral form generated by $x'/\sqrt{\kappa},y'/\sqrt{\kappa}$. Note that $\underline{d}'(y'/\sqrt{\kappa})=b_{-1}x'_{-1}/\kappa^{3/2}\to 0$ so up to asymptotically vanishing terms also $y'/\sqrt{\kappa}$ is a cocycle. Note also $L_n^{\sl_2'}/\kappa-f'_{n+1}/\kappa\to 0$. For $h'/\kappa=0$ we have
\begin{align*}
    [x'/\sqrt{\kappa}](z)[x'/\sqrt{\kappa}](z)
    &=(z-w)^{-1}\\
    [y'/\sqrt{\kappa}](z)[y'/\sqrt{\kappa}](z)
    &=(L_{n-1}^{\sl_2'}/\kappa)(z-w)^{-1} \\
     [x'/\sqrt{\kappa}](z)[y'/\sqrt{\kappa}](z)
    &=(z-w)^{-2}\\
\end{align*}
which is consistent with $\frac{\partial^2}{\partial z^2}+q(z)$ with $q(z)=\sum_{n\in\Z}L_{n}^{\sl_2'}z^{-2-n}$. The zero-fibre or $L_{n-1}^{\sl_2'}/\kappa=0$ is symplectic fermions 

\begin{align*}
\tilde{L}^{\sl_2'}_{0}x'&=-\frac{1}{2}x' \\
\tilde{L}^{\sl_2'}_{0}y'&=+\frac{1}{2}y' \\
&\\
\tilde{L}^{\sl_2'}_{-1}x'&=y'
\end{align*}
\CommentsForMe{contributions from $\partial x'$ ?}
\CommentsForMe{Hamiltonian reduction means: some nonzero fibre $e'/h$ is again a vertex algebra when restricting to a subalgebra. }

To match this with $\sVir_{N=1}\times \mathcal{F}$, note that $x'/\sqrt{\kappa}$ is the fermion and 
$$y'/\sqrt{\kappa}=-\sqrt{-1/2}G/\sqrt{\kappa}-\partial (x'/\sqrt{\kappa})$$
\CommentsForMe{Check OPE and L0y and match to II}
Note that the OPE cannot be preserved by the action of $L_1$, which is (in hindsight) what one should expect from a Hamiltonian reduction.

\begin{remark}
Note that we could also use the reduction $e'+1=0$, at the price of loosing the classical Hamiltonian reduction picture including the remaining $\sl_2 ^\geq$-action. On the other hand, then the fibre would again be symplectic fermions without any deformations, since $e'/\kappa\to 0$. We have no guess as to if this is interesting to consider.  
\end{remark}

\subsection{The limit of the N=1 superconformal algebra and fermion II}

\renewcommand{\P}{|P\rangle}
\newcommand{\Null}{|0\rangle}
\newcommand{\ferm}{f}
We consider the same case as in the previous section but now follow closely the explicit computation by joint work of the first author in \cite{BBFLT13} Section 3.2.1, which is related to the supersymmetric version of the GKO construction. 

Consider the vertex algebra $\V=\sVir_{N=1}\otimes \mathcal{F}$, where $\sVir_{N=1}$ is the Neveu-Schwarz sector of the super-Virasoro algebra with generators $L_n,G_r$, where $r\in\frac{1}{2}+\Z$ in the Neveu-Schwartz sector, and relations
\begin{align*}
[L_n,L_m]&=(n-m)L_{n+m}+\frac{1}{8}c(n^3-n)\delta_{n+m}\\
\{G_r,G_s\}&=2L_{r+s}+\frac{1}{2}c(r^2-\frac{1}{4})\delta_{r+s}\\
[L_n,G_r]&=(\frac{1}{2}n-r)G_{n+r}
\intertext{and $\mathcal{F}$ an additional fermion ($\ferm_r,r\in\frac{1}{2}+\Z$))}
\{\ferm_r,\ferm_s\}&=\delta_{r+s}\\
[L_n,\ferm_s]&=0\\
\{G_r,\ferm_s\}&=0				
\end{align*}
For the representation theory of $\sVir_{N=1}$ see \cite{IK03}. We consider highest weight modules of $\sVir_{N=1}$, generated by a vector $\P$ with $L_0\P=\Delta_P\P$ for  $\Delta_P=\frac{1}{2}(Q^2/4-P^2)(Q/2-P)$ and
    $$L_n\P=0,\;n> 0,\qquad 
    G_r\P=0,\;r>0,\qquad 
    \ferm_n\P=0,\;r>0$$

Now, similar to the previous chapter, we have two explicit commuting actions of on  $\Vir^{b_1},\Vir^{b_2}$ on $\sVir_{N=1}^{b}\otimes \mathcal{F}$ with 
$$ b_1=\frac{2b}{\sqrt{2-2b^2}},
\quad b_2^{-1}=\frac{2b^{-1}}{\sqrt{2-2b^{-2}}}$$ 
fulfilling the relation $b_1^2+b_2^{-2}=-2$. \\
%In [44] we find 
%\begin{align*}
%	T_k&=\frac{1}{2}\frac{k}{k+3} T_1 + \frac{1}{2}\frac{k+4}{k+3} T_k^S+\frac{\sqrt{(k+2)(k+4)}}{2(k+3)} G f\\
%	T_{k+1}&=\frac{1}{2}\frac{k+6}{k+3} T_1 + \frac{1}{2}\frac{k+2}{k+3} T_k^S-\frac{\sqrt{(k+2)(k+4)}}{2(k+3)} G f\\
%	&T_k+T_{k+1}=T_1+T_k^S\\
%	c&=\frac{3}{2}-\frac{12}{(k+2)(k+4)}=1+6(b+b^{-1})^2....\\
%\end{align*}
%In the Limit $k\to -2$ 
%\begin{align*}
%T_k&=-T_1 + T_k^S+\frac{1}{\sqrt{2}} G f\sqrt{k+2}\\
%T_k(k+2)&=T_k^S(k+2)\\
%T_{k+1}&=2 T_1+\frac{1}{2}T_k^S(k+2)-\frac{1}{\sqrt{2}} G f\sqrt{k+2}\\
%\end{align*}
%and we have $L_0$ eigenvalues $-\frac{1}{2},1$ on $f_{-\frac{1}{2}}$ and $+\frac{1}{2},0$ on $G_{-\frac{1}{2}}$. We note $(T_k)_0=-L_0^{ferm}+L_0^{NS}$. 
%
%IN FEIGIN ARTICLE
\begin{align*}
L_n^{(b_1)}&
=\frac{1}{1-b^{2\hphantom{-}}}L_n^{(b)}
-\frac{1+2b^{2}}{2(1-b^{2\hphantom{-}})}\sum_{r\in\Z+\frac{1}{2}} r :\ferm_{n-r}\ferm_r:
+\frac{b}{1-b^{2\hphantom{-}}}\sum_{r\in\Z+\frac{1}{2}}  :\ferm_{n-r}G_r:\\
L_n^{(b_2)}&
=\frac{1}{1-b^{-2}}L_n^{(b)}
-\frac{1+2b^{-2}}{2(1-b^{-2})}\sum_{r\in\Z+\frac{1}{2}} r :\ferm_{n-r}\ferm_r:
+\frac{b^{-1}}{1-b^{-2}}\sum_{r\in\Z+\frac{1}{2}}  :\ferm_{n-r}G_r:
\end{align*}
We note $L_n^{(b_1)}+L_n^{(b_2)}=L_n^{(b)}+L_n^{\mathcal{F}}$, where $L_n^{\mathcal{F}}=\frac{1}{2}\sum_{r\in\Z+\frac{1}{2}} r :\ferm_{n-r}\ferm_r:$ is the usual Virasoro action for $\mathcal{F}$.

\begin{example}\label{ex_p2}
\allowdisplaybreaks
    We compute $L_n^{(b_1)}$ on the first few elements, the respective expressions for $L_n^{(b_1)}$ are obtained by replacing $b\leftrightarrow b^{-1}$
	\begin{align*}
	L_0^{(b_1)}\P&=\frac{1}{1-b^{2\hphantom{-}}}\Delta_P\P\\
	L_0^{(b_1)}L_{-1}\P&=\frac{1}{1-b^{2\hphantom{-}}}(\Delta_P+1)L_{-1}\P\\
	L_0^{(b_1)}\ferm_{-\frac{1}{2}}\P
	&=-\frac{1+2b^{2}}{2(1-b^{2\hphantom{-}})} (\frac{1}{2}-(-\frac{1}{2})) \ferm_{-\frac{1}{2}}\P
	+\frac{b}{1-b^{2\hphantom{-}}}(-1)G_{-\frac{1}{2}}\P\\
	L_0^{(b_1)}G_{-\frac{1}{2}}\P
	&=\frac{1}{1-b^{2\hphantom{-}}}(\Delta_P+\frac{1}{2})G_{-\frac{1}{2}}\P
	+\frac{b}{1-b^{2\hphantom{-}}}(\frac{1}{2}c0+2\Delta_P)\ferm_{-\frac{1}{2}}\P\\
	%L_0^{(b_1)}G_{-\frac{3}{2}}
	%&=\frac{1}{1-b^{2\hphantom{-}}}\frac{3}{2}G_{-\frac{3}{2}}\\
	%&+\frac{b}{1-b^{2\hphantom{-}}}((\frac{1}{2}c\frac{8}{4}+2\Delta_P)\ferm_{-\frac{3}{2}}\P+2\ferm_{-\frac{1}{2}}L_{-1}\P)\\
	&\\
	L_1^{(b_1)}\P&=0\\
	L_1^{(b_1)}\ferm_{-\frac{1}{2}}\P&=0\\
	L_1^{(b_1)}\ferm_{-\frac{3}{2}}\P&=-\frac{1+2b^2}{2(1-b^2)}(\frac{3}{2}-(-\frac{1}{2}))\ferm_{-1/2}\P+\frac{b}{1-b^2}(-1)G_{-1/2}\P\\
	L_1^{(b_1)}G_{-\frac{1}{2}}\P&=0\\
	L_1^{(b_1)}G_{-3/2}&=\frac{b}{1-b^2}(2\Delta_P+c)\ferm_{-1/2}\P\\ 
	%G_{3/2}G_{-3/2}=2L_0+c1
	&\\
	L_{-1}^{(b_1)}\P
	&=\frac{1}{1-b^{2\hphantom{-}}}L_{-1}\P\\
	&+\frac{b}{1-b^{2\hphantom{-}}}\ferm_{-\frac{1}{2}}G_{-\frac{1}{2}}\P\\
	L_{-1}^{(b_1)}\ferm_{-\frac{1}{2}}\P
	&=\frac{1}{1-b^{2\hphantom{-}}}\ferm_{-\frac{1}{2}}L_{-1}\P\\
	&-\frac{1+2b^{2}}{2(1-b^{2\hphantom{-}})}(\frac{1}{2}-(-\frac{3}{2}))\ferm_{-\frac{3}{2}}\P\\
	&+\frac{b}{1-b^{2\hphantom{-}}}(-1)G_{-\frac{3}{2}}\P \\
	L_{-1}^{(b_1)}G_{-\frac{1}{2}}\P
	&=\frac{1}{1-b^{2\hphantom{-}}}G_{-\frac{1}{2}}L_{-1}\P\\
	&+\frac{b}{1-b^{2\hphantom{-}}}(2\ferm_{-\frac{1}{2}}L_{-1}+\ferm_{-3/2}\Delta_P)\P	
	\end{align*}
\end{example}
\begin{example}
For example, let $\Delta_P=0$, then  highest weight vectors of both $L^{(b_1)},L^{(b_2)}$ are 
$$\Null,\quad G_{-\frac{1}{2}}\Null,\quad \ferm_{-\frac{1}{2}}\Null+\frac{1}{b+b^{-1}}G_{-\frac{1}{2}}\Null$$
 with $L_0^{(b_1)}$-eigenvalues 
 $$0,\qquad \frac{1/2}{1-b^2},\qquad -\frac{1+2b^{2}}{2(1-b^{2})}$$
 and correspondingly $L_0^{b_1}+L_0^{b_2}$-eigenvalues
 $$0,\qquad 
 \frac{1/2}{1-b^2}+\frac{1/2}{1-b^{-2}}=1/2\qquad 
 -\frac{1+2b^{2}}{2(1-b^{2})}-\frac{1+2b^{-2}}{2(1-b^{-2})}
 =
-\frac{1+2b^{2}}{2(1-b^{2})}+\frac{b^2+2}{2(1-b^{2})} 
=1/2
$$
and they are obviously a basis of the $L_{(b_1)}+L_{(b_2)}$-eigenspace with eigenvalue $0$ and $1/2$. They are the highest weight vectors of the defining decomposition of a generic module. In the vacuum module we have $L_{-1}\Null=G_{-1/2}\Null=0$.\\

Note for the following discussion that in the limit 
$$L_0^{(b_1)}:\;0,0,1\qquad
L_0^{(b_1)}: 0,\frac{1}{2},-\frac{1}{2}$$
\end{example}
%We now list decendents that should disappear if we go from Verma to irreducible modules:
%\begin{align*}
%L_{-1}^{b_1}\Null&=...\to L_{-1}/b^2-f_{-1/2}G_{-1/2}/b \\
%L_{-1}^{b_2}\Null&=...\to L_{-1}+f_{-1/2}G_{-1/2}/b \\
%\intertext{before the limit $L_{-1}\Null=f_{-1/2}G_{-1/2}\Null=0$}
%L_{-1}^{b_1}^2+...L_{-2}^{b_1}...&=
%\end{align*}

We now want to consider the limit $b\sim 2b_2\to\infty$. We choose the integral form generated by $\ferm_r,G_r/b$. We first compute the elements in the large central subalgebra generated by $\ell_n=L^{(b_2)}_n/b_2^2$. As in the case $\g,p=1$ we find 
$$L^{(b_2)}/b_2^2=4L_n/b^2$$ 
Hence the fibres $\V^\infty|_{q(z)}$ for $q(z)=\sum_{n\in\Z} \ell_nz^{-n-2}$ are generated by $G_r,\ferm_n$.\\

We observe that in the semiclassical limit of $\sVir_{N=1}$ we have well-defined Lie brackets between $L_{-1},L_0,L_{1}$ and $G_{-1/2},G_{1/2}$ which form Lie superalgebra $\osp(1|2)$. The $\sl_2$-action on the generators, which is well-defined for $\ell_1=\ell_0=\ell_{-1}=0$ can be obtained as limit of the calculations in Example \ref{ex_p2} for the vacuum module, and we find a douplet  

\begin{align*}
    L^{(b_2)}_0 \ferm_{-1/2}
    &=(-\frac{1}{2}) \ferm_{-1/2}\\
    L^{(b_2)}_{-1} \ferm_{-1/2}
    &=-\ferm_{-3/2}+G_{-3/2}/b\\
    L^{(b_2)}_0 (-\ferm_{-3/2}+G_{-3/2}/b)
    &=(+\frac{1}{2})(-\ferm_{-3/2}+G_{-3/2}/b)
\end{align*}
with conformal $1$ with respect to the new Virasoro action $L_0^{(b_1)}$.
\CommentsForMe{check $L^{(b_i)}_0$ for last term}
Let us now abbreviate these doublet elements
\begin{align*}
x&=\ferm_{-1/2},\hspace{2.9cm}
x(z)=\sum_{n\in\Z} \ferm_{n+1/2} z^{-n-1}\\
y&=-\ferm_{-3/2}+G_{-3/2}/b,
\qquad y(z)=\sum_{n\in \Z}\left(n\ferm_{n-1/2}+G_{n-1/2}/2\right)z^{-n-1}
\end{align*}
The relations of the mode algebra are 
\begin{align*}
    \{x_m,x_n\}
    &=\{ \ferm_{m+1/2} , \ferm_{n+1/2} \}
    = \delta_{m+n,-1}\\
    \{x_m,y_n\}
    &=\{ \ferm_{m+1/2} , n\ferm_{n-1/2} \}
    =\delta_{m,-n}n\\
    \{y_m,y_n\}
    &=\{ m\ferm_{m-1/2} , n\ferm_{n-1/2} \}
    +\{ G_{m-1/2}/b , G_{n-1/2}/b \} \\
    &=mn\delta_{m+n,1}+\left(2L_{m+n-1}/b^2+\frac{1}{2}c((m-\frac{1}{2})^2-\frac{1}{4})/b^2\right)\delta_{m+n,1}
    =2\ell_{m+n-1}
\end{align*}\CommentsForMe{This cannot be an automorphism doublet?}
with $c\sim 2b^2$. \CommentsForMe{Factor of $c^{NS}$ strange?}

\newcommand\arxiv[2]      {\href{http://arXiv.org/abs/#1}{#2}}

\small

\end{document}